\documentclass[12pt]{article}
\PassOptionsToPackage{hyphens}{url}
\usepackage[pagebackref=true,colorlinks]{hyperref}
\usepackage{amsmath,amssymb,amsthm,mathtools,amsrefs}
\usepackage{marvosym}
\usepackage{mathpazo} 
\usepackage{eulervm}
\usepackage{csquotes}
\usepackage{tikz}
\usetikzlibrary{arrows.meta}
\usetikzlibrary{decorations.pathmorphing,shapes}
\usetikzlibrary{calc}
\usepackage{pgfplots}
\pgfplotsset{compat=1.13}
\usepackage{comment}
\usepackage{adjustbox}
\usepackage{stmaryrd}
\usepackage{fancyhdr}
\usepackage{soul}
\usepackage{dsfont}
\usepackage[normalem]{ulem}
\usepackage{longtable}
\usepackage{caption}
\usepackage{subfig} 
\usepackage{wrapfig}
\usepackage{pictexwd, dcpic}
\usepackage{float}
\usepackage{enumerate}
\usepackage[all]{xy} 
\usetikzlibrary{circuits.ee.IEC}
\hypersetup{
	colorlinks=true,
    linktoc=all,
    linkcolor=blue,  
    citecolor=blue,
    }
\makeatletter
\renewcommand*\env@matrix[1][\arraystretch]{%
  \edef\arraystretch{#1}%
  \hskip -\arraycolsep
  \let\@ifnextchar\new@ifnextchar
  \array{*\c@MaxMatrixCols c}}
\makeatother

\oddsidemargin = -14pt
\textwidth = 500pt
\textheight = 680pt
\headheight = -40pt
\headsep = 0pt
\marginparsep = 0pt


\theoremstyle{theorem}
\newtheorem{theorem}[equation]{Theorem}
\newtheorem{lemma}[equation]{Lemma}
\newtheorem{proposition}[equation]{Proposition}
\newtheorem{corollary}[equation]{Corollary}
\theoremstyle{definition}
\newtheorem{definition}[equation]{Definition}
\newtheorem{construction}[equation]{Construction}

\newtheorem{question}[equation]{Question}
\newtheorem{problem}[equation]{Problem}
\newtheorem{example}[equation]{Example}
\newtheorem{exercise}[equation]{Exercise}
\newtheorem*{answer}{Answer}
\newtheorem*{solution}{Solution}
\newtheorem{remark}[equation]{Remark}

\newcommand\define[1]{\emph{\textbf{#1}}}

\newcommand\aeequals[1]{\underset{\raisebox{0.3ex}[0pt][0pt]{\scriptsize${#1}$}}{=}}

\numberwithin{equation}{section}

\let\a=\alpha \let\b=\beta \let\g=\gamma \let\de=\delta 
  \let\q=\theta \let\i=\iota 
\let\l=\lambda \let\r=\rho
\let\s=\sigma    \let\c=\chi 
\let\w=\omega       \let\D=\Delta  
  \let\S=\Sigma   
\let\C=\Chi \let\W=\Omega



\newcommand{\be}{\begin{equation}}
\newcommand{\ee}{\end{equation}}
\def\ba{\begin{align}} 
\def\ea{\end{align}}
\newcommand{\bea}{\begin{eqnarray}}
\newcommand{\eea}{\end{eqnarray}}
\newcommand{\bx}{\begin{example}}
\newcommand{\ex}{\end{example}}
\newcommand{\bex}{\begin{exercise}}
\newcommand{\eex}{\end{exercise}}
\newcommand{\ban}{\begin{answer}}
\newcommand{\ean}{\end{answer}}
\newcommand{\bt}{\begin{theorem}}
\newcommand{\et}{\end{theorem}}
\newcommand{\bc}{\begin{corollary}}
\newcommand{\ec}{\end{corollary}}
\newcommand{\blem}{\begin{lemma}}
\newcommand{\elem}{\end{lemma}}
\newcommand{\bp}{\begin{problem}}
\newcommand{\ep}{\end{problem}}
\newcommand{\bn}{\begin{proposition}}
\newcommand{\en}{\end{proposition}}
\newcommand{\bd}{\begin{definition}}
\newcommand{\ed}{\end{definition}}
\newcommand{\bcon}{\begin{construction}}
\newcommand{\econ}{\end{construction}}
\newcommand{\bq}{\begin{question}}
\newcommand{\eq}{\end{question}}
\newcommand{\bprf}{\begin{proof}}
\newcommand{\eprf}{\end{proof}}
\newcommand{\br}{\begin{remark}}
\newcommand{\er}{\end{remark}}
\newcommand{\bs}{\begin{solution}}
\newcommand{\es}{\end{solution}}
\newcommand{\beqs}{\begin{eqnarray}}
\newcommand{\eeqs}{\end{eqnarray}}


 \let\ov=\overline

\newcommand{\<}{\langle}
\renewcommand{\>}{\rangle}

\newcommand{\id}{\mathrm{id}}

\newcommand{\mC}{\mathcal{C}}
\newcommand{\mM}{\mathcal{M}}
\newcommand{\mD}{\mathcal{D}}

\newcommand{\imp}{\qquad \Rightarrow \qquad}

\newcommand{\tr}{{\rm tr} }


\def\C{{{\mathbb C}}}

\def\N{{{\mathbb N}}}
\def\Z{{{\mathbb Z}}}

\def\B{{{\mathbb B}}}

\newcommand{\Ad}{\mathrm{Ad}}
\def\Hi{{{\mathcal{H}}}}

\newcommand{\CAlgSPU}{\mathbf{C\text{*-}AlgSPU}}

\newcommand{\fdCAlg}{\mathbf{fdC\text{*-}Alg}}
\newcommand{\fdCAlgY}{\mathbf{fdC\text{*-}Alg}_{\text{\Yinyang}}}
\newcommand{\fdCAlgP}{\mathbf{fdC\text{*-}AlgP}}
\newcommand{\fdCAlgPU}{\mathbf{fdC\text{*-}AlgPU}}
\newcommand{\fdCAlgU}{\mathbf{fdC\text{*-}AlgU}}
\newcommand{\fdCAlgUY}{\mathbf{fdC\text{*-}AlgU}_{\text{\Yinyang}}}
\newcommand{\fdCAlgCPU}{\mathbf{fdC\text{*-}AlgCPU}}
\newcommand{\fdCAlgSPU}{\mathbf{fdC\text{*-}AlgSPU}}

\def\mA{{{\mathcal{A}}}}

\def\mB{{{\mathcal{B}}}}

\def\mN{{{\mathcal{N}}}}

\newcommand{\op}{\mathrm{op}}

\newcommand{\FinStoch}{\mathbf{FinStoch}}
\newcommand{\FinMeas}{\mathbf{FinMeas}}
\newcommand{\FinMeasP}{\mathbf{FinMeas}_{+}}

\newcommand{\Stoch}{\mathbf{Stoch}}

\newcommand{\stoch}{\;\xy0;/r.25pc/:(-3,0)*{}="1";(3,0)*{}="2";{\ar@{~>}"1";"2"|(1.06){\hole}};\endxy\!}
\newcounter{sarrow}
\newcommand\xstoch[1]{%
\stepcounter{sarrow}%
\mathrel{\begin{tikzpicture}[baseline= {( $ (current bounding box.south) + (0,-0.1ex) $ )}]
\node[inner sep=.5ex] (\thesarrow) {\;$\scriptstyle #1$\;};
\path[draw,{<[scale=1.5,width=3,length=2]}-,decorate,
  decoration={snake,amplitude=0.3mm,segment length=2.1mm,pre=lineto,pre length=1pt}] 
    (\thesarrow.south east) -- (\thesarrow.south west);
\end{tikzpicture}}%
}


\newcommand{\ben}{\renewcommand{\theenumi}{\alph{enumi}} 
\renewcommand{\labelenumi}{(\theenumi)}\begin{enumerate}}
\newcommand{\een}{\end{enumerate}}

\newcommand\blfootnote[1]{%
  \begingroup
  \renewcommand\thefootnote{}\footnote{#1}%
  \addtocounter{footnote}{-1}%
  \endgroup
}

\newlength\stateheight
\setlength\stateheight{0.55cm}
\newlength\minimumstatewidth
\setlength\minimumstatewidth{0.8cm}
\tikzset{width/.initial=\minimummorphismwidth}
\tikzset{colour/.initial=white}

\newif\ifblack\pgfkeys{/tikz/black/.is if=black}
\newif\ifwedge\pgfkeys{/tikz/wedge/.is if=wedge}
\newif\ifvflip\pgfkeys{/tikz/vflip/.is if=vflip}
\newif\ifhflip\pgfkeys{/tikz/hflip/.is if=hflip}
\newif\ifhvflip\pgfkeys{/tikz/hvflip/.is if=hvflip}

\pgfdeclarelayer{foreground}
\pgfdeclarelayer{background}
\pgfsetlayers{background,main,foreground}

\def\thickness{0.4pt}

\makeatletter
\pgfkeys{%
  /tikz/on layer/.code={
    \pgfonlayer{#1}\begingroup
    \aftergroup\endpgfonlayer
    \aftergroup\endgroup
  },
  /tikz/node on layer/.code={
    \gdef\node@@on@layer{%
      \setbox\tikz@tempbox=\hbox\bgroup\pgfonlayer{#1}\unhbox\tikz@tempbox\endpgfonlayer\pgfsetlinewidth{\thickness}\egroup}
    \aftergroup\node@on@layer
  },
  /tikz/end node on layer/.code={
    \endpgfonlayer\endgroup\endgroup
  }
}
\def\node@on@layer{\aftergroup\node@@on@layer}
\makeatother

\makeatletter
\pgfdeclareshape{state}
{
    \savedanchor\centerpoint
    {
        \pgf@x=0pt
        \pgf@y=0pt
    }
    \anchor{center}{\centerpoint}
    \anchorborder{\centerpoint}
    \saveddimen\overallwidth
    {
        \pgf@x=3\wd\pgfnodeparttextbox
        \ifdim\pgf@x<\minimumstatewidth
            \pgf@x=\minimumstatewidth
        \fi
    }
    \savedanchor{\upperrightcorner}
    {
        \pgf@x=.5\wd\pgfnodeparttextbox
        \pgf@y=.5\ht\pgfnodeparttextbox
        \advance\pgf@y by -.5\dp\pgfnodeparttextbox
    }
    \anchor{A}
    {
        \pgf@x=-\overallwidth
        \divide\pgf@x by 4
        \pgf@y=0pt
    }
    \anchor{B}
    {
        \pgf@x=\overallwidth
        \divide\pgf@x by 4
        \pgf@y=0pt
    }
    \anchor{text}
    {
        \upperrightcorner
        \pgf@x=-\pgf@x
        \ifhflip
            \pgf@y=-\pgf@y
            \advance\pgf@y by 0.4\stateheight
        \else
            \pgf@y=-\pgf@y
            \advance\pgf@y by -0.4\stateheight
        \fi
    }
    \backgroundpath
    {
       \begin{pgfonlayer}{foreground}
        \pgfsetstrokecolor{black}
        \pgfsetlinewidth{\thickness}
        \pgfpathmoveto{\pgfpoint{-0.5*\overallwidth}{0}}
        \pgfpathlineto{\pgfpoint{0.5*\overallwidth}{0}}
        \ifhflip
            \pgfpathlineto{\pgfpoint{0}{\stateheight}}
        \else
            \pgfpathlineto{\pgfpoint{0}{-\stateheight}}
        \fi
        \pgfpathclose
        \ifblack
            \pgfsetfillcolor{black!50}
            \pgfusepath{fill,stroke}
        \else
            \pgfusepath{stroke}
        \fi
       \end{pgfonlayer}
    }
}

\makeatletter
\pgfdeclareshape{effect}
{
    \savedanchor\centerpoint
    {
        \pgf@x=0pt
        \pgf@y=0pt
    }
    \anchor{center}{\centerpoint}
    \anchorborder{\centerpoint}
    \saveddimen\overallwidth
    {
        \pgf@x=3\wd\pgfnodeparttextbox
        \ifdim\pgf@x<\minimumstatewidth
            \pgf@x=\minimumstatewidth
        \fi
    }
    \savedanchor{\upperrightcorner}
    {
        \pgf@x=.5\wd\pgfnodeparttextbox
        \pgf@y=-.5\ht\pgfnodeparttextbox 
        \advance\pgf@y by .5\dp\pgfnodeparttextbox 
    }
    \anchor{A}
    {
        \pgf@x=-\overallwidth
        \divide\pgf@x by 4
        \pgf@y=0pt
    }
    \anchor{B}
    {
        \pgf@x=\overallwidth
        \divide\pgf@x by 4
        \pgf@y=0pt
    }
    \anchor{text}
    {
        \upperrightcorner
        \pgf@x=-\pgf@x
        \ifhflip
            \pgf@y=\pgf@y 
            \advance\pgf@y by -0.4\stateheight 
        \else
            \pgf@y=\pgf@y 
            \advance\pgf@y by 0.4\stateheight 
        \fi
    }
    \backgroundpath
    {
       \begin{pgfonlayer}{foreground}
        \pgfsetstrokecolor{black}
        \pgfsetlinewidth{\thickness}
        \pgfpathmoveto{\pgfpoint{-0.5*\overallwidth}{0}}
        \pgfpathlineto{\pgfpoint{0.5*\overallwidth}{0}}
        \ifhflip
            \pgfpathlineto{\pgfpoint{0}{-\stateheight}}
        \else
            \pgfpathlineto{\pgfpoint{0}{\stateheight}}
        \fi
        \pgfpathclose
        \ifblack
            \pgfsetfillcolor{black!50}
            \pgfusepath{fill,stroke}
        \else
            \pgfusepath{stroke}
        \fi
       \end{pgfonlayer}
    }
}

%
%
\pgfdeclareshape{my ground}
{
  \inheritsavedanchors[from=rectangle ee]
  \inheritanchor[from=rectangle ee]{center}
  \inheritanchor[from=rectangle ee]{north}
  \inheritanchor[from=rectangle ee]{south}
  \inheritanchor[from=rectangle ee]{east}
  \inheritanchor[from=rectangle ee]{west}
  \inheritanchor[from=rectangle ee]{north east}
  \inheritanchor[from=rectangle ee]{north west}
  \inheritanchor[from=rectangle ee]{south east}
  \inheritanchor[from=rectangle ee]{south west}
  \inheritanchor[from=rectangle ee]{input}
  \inheritanchor[from=rectangle ee]{output}
  \anchorborder{\csname pgf@anchor@my ground@input\endcsname}

  \backgroundpath{
    \pgf@process{\pgfpointadd{\southwest}{\pgfpoint{\pgfkeysvalueof{/pgf/outer xsep}}{\pgfkeysvalueof{/pgf/outer ysep}}}}
    \pgf@xa=\pgf@x \pgf@ya=\pgf@y
    \pgf@process{\pgfpointadd{\northeast}{\pgfpointscale{-1}{\pgfpoint{\pgfkeysvalueof{/pgf/outer xsep}}{\pgfkeysvalueof{/pgf/outer ysep}}}}}
    \pgf@xb=\pgf@x \pgf@yb=\pgf@y
    \pgfpathmoveto{\pgfqpoint{\pgf@xa}{\pgf@ya}}
    \pgfpathlineto{\pgfqpoint{\pgf@xa}{\pgf@yb}}
    \pgfmathsetlength\pgf@xa{.5\pgf@xa+.5\pgf@xb}
    \pgfmathsetlength\pgf@yc{.16666\pgf@yb-.16666\pgf@ya}
    \advance\pgf@ya by\pgf@yc
    \advance\pgf@yb by-\pgf@yc
    \pgfpathmoveto{\pgfqpoint{\pgf@xa}{\pgf@ya}}
    \pgfpathlineto{\pgfqpoint{\pgf@xa}{\pgf@yb}}
    \advance\pgf@ya by\pgf@yc
    \advance\pgf@yb by-\pgf@yc
    \pgfpathmoveto{\pgfqpoint{\pgf@xb}{\pgf@ya}}
    \pgfpathlineto{\pgfqpoint{\pgf@xb}{\pgf@yb}}
  }
}
\makeatother

\tikzset{inline text/.style =
  {text height=1.2ex,text depth=0.25ex,yshift=0.5mm}}
\tikzset{arrow box/.style =
  {rectangle,inline text,fill=white,draw,
    minimum height=5mm,yshift=-0.5mm,minimum width=5mm}}
\tikzset{bubble/.style =
  {inner sep=0mm,minimum width=3mm,minimum height=3mm,
    draw,shape=circle,fill=white}}
\tikzset{dot/.style =
  {inner sep=0mm,minimum width=1mm,minimum height=1mm,
    draw,shape=circle}}
\tikzset{white dot/.style = {dot,fill=white,text depth=-0.2mm}}
\tikzset{scalar/.style = {diamond,draw,inner sep=1pt}}
\tikzset{square/.style =
  {inner sep=0mm,minimum width=2mm,minimum height=2mm,
    draw,shape=rectangle}}

\tikzset{star/.style = {dot,fill=white,text depth=-0.2mm}}
\tikzset{copier/.style = {dot,fill,text depth=-0.2mm}}
\tikzset{fakecopier/.style = {square,fill,text depth=-0.2mm}}
\tikzset{discarder/.style = {my ground,draw,inner sep=0pt,
    minimum width=4.2pt,minimum height=11.2pt,anchor=input,rotate=90}}

\tikzset{xshiftu/.style = {shift = {(#1, 0)}}}
\tikzset{yshiftu/.style = {shift = {(0, #1)}}}
\tikzset{scriptstyle/.style={font=\everymath\expandafter{\the\everymath\scriptstyle}}}

\usepackage[framemethod=TikZ]{mdframed}
\usepackage{titlesec}

\newmdtheoremenv[%
   outerlinewidth=2,
   roundcorner=10pt,
   leftmargin=0,
   rightmargin=0,
   innertopmargin=4pt,
   backgroundcolor=green!05,
   outerlinecolor=blue!30,
   ntheorem=true,
   ]{fancydefn}{Definition}

\newmdtheoremenv[%
   outerlinewidth=2,
   roundcorner=10pt,
   leftmargin=0,
   rightmargin=0,
   innertopmargin=-4pt,
   backgroundcolor=blue!05,
   outerlinecolor=violet!30,
   ntheorem=true,
   ]{fancyex}{Example}
   
\newmdtheoremenv[%
   outerlinewidth=2,
   roundcorner=10pt,
   leftmargin=0,
   rightmargin=0,
   innertopmargin=-4pt,
   backgroundcolor=red!05,
   outerlinecolor=red!30,
   ntheorem=true,
   ]{fancyprop}{Proposition}

   \newmdtheoremenv[%
   outerlinewidth=2,
   roundcorner=10pt,
   leftmargin=0,
   rightmargin=0,
   innertopmargin=-4pt,
   backgroundcolor=violet!05,
   outerlinecolor=red!30,
   ntheorem=true,
   ]{fancythm}{Theorem}

\newenvironment{theo}[2][]{%
\refstepcounter{equation}%
\ifstrempty{#1}%
{\mdfsetup{%
frametitle={%
\tikz[baseline=(current bounding box.east),outer sep=0pt]
\node[anchor=east,rectangle,fill=red!20]
{\strut Theorem~\theequation};}}
}%
{\mdfsetup{%
frametitle={%
\tikz[baseline=(current bounding box.east),outer sep=0pt]
\node[anchor=east,rectangle,fill=red!20]
{\strut Theorem~\theequation~[#1]};}}%
}%
\mdfsetup{innertopmargin=0pt,linecolor=red!20,%
linewidth=2pt,topline=true,%
frametitleaboveskip=\dimexpr-\ht\strutbox\relax
}
\begin{mdframed}[]\relax%
\label{#2}}{\end{mdframed}}

\newenvironment{lem}[2][]{%
\refstepcounter{equation}%
\ifstrempty{#1}%
{\mdfsetup{%
frametitle={%
\tikz[baseline=(current bounding box.east),outer sep=0pt]
\node[anchor=east,rectangle,fill=red!20]
{\strut Lemma~\theequation};}}
}%
{\mdfsetup{%
frametitle={%
\tikz[baseline=(current bounding box.east),outer sep=0pt]
\node[anchor=east,rectangle,fill=red!20]
{\strut Lemma~\theequation~[#1]};}}%
}%
\mdfsetup{innertopmargin=0pt,linecolor=red!20,%
linewidth=2pt,topline=true,%
frametitleaboveskip=\dimexpr-\ht\strutbox\relax
}
\begin{mdframed}[]\relax%
\label{#2}}{\end{mdframed}}

\newenvironment{rmk}[2][]{%
\refstepcounter{equation}%
\ifstrempty{#1}%
{\mdfsetup{%
frametitle={%
\tikz[baseline=(current bounding box.east),outer sep=0pt]
\node[anchor=east,rectangle,fill=yellow!30]
{\strut Remark~\theequation};}}
}%
{\mdfsetup{%
frametitle={%
\tikz[baseline=(current bounding box.east),outer sep=0pt]
\node[anchor=east,rectangle,fill=yellow!30]
{\strut Remark~\theequation~[#1]};}}%
}%
\mdfsetup{innertopmargin=0pt,linecolor=yellow!30,%
linewidth=2pt,topline=true,%
frametitleaboveskip=\dimexpr-\ht\strutbox\relax
}
\begin{mdframed}[]\relax%
\label{#2}}{\end{mdframed}}

\newenvironment{prop}[2][]{%
\refstepcounter{equation}%
\ifstrempty{#1}%
{\mdfsetup{%
frametitle={%
\tikz[baseline=(current bounding box.east),outer sep=0pt]
\node[anchor=east,rectangle,fill=red!20]
{\strut Proposition~\theequation};}}
}%
{\mdfsetup{%
frametitle={%
\tikz[baseline=(current bounding box.east),outer sep=0pt]
\node[anchor=east,rectangle,fill=red!20]
{\strut Proposition~\theequation~[#1]};}}%
}%
\mdfsetup{innertopmargin=0pt,linecolor=red!20,%
linewidth=2pt,topline=true,%
frametitleaboveskip=\dimexpr-\ht\strutbox\relax
}
\begin{mdframed}[]\relax%
\label{#2}}{\end{mdframed}}

\newenvironment{cor}[2][]{%
\refstepcounter{equation}%
\ifstrempty{#1}%
{\mdfsetup{%
frametitle={%
\tikz[baseline=(current bounding box.east),outer sep=0pt]
\node[anchor=east,rectangle,fill=red!20]
{\strut Corollary~\theequation};}}
}%
{\mdfsetup{%
frametitle={%
\tikz[baseline=(current bounding box.east),outer sep=0pt]
\node[anchor=east,rectangle,fill=red!20]
{\strut Corollary~\theequation~[#1]};}}%
}%
\mdfsetup{innertopmargin=0pt,linecolor=red!20,%
linewidth=2pt,topline=true,%
frametitleaboveskip=\dimexpr-\ht\strutbox\relax
}
\begin{mdframed}[]\relax%
\label{#2}}{\end{mdframed}}

\newenvironment{defn}[2][]{%
\refstepcounter{equation}%
\ifstrempty{#1}%
{\mdfsetup{%
frametitle={%
\tikz[baseline=(current bounding box.east),outer sep=0pt]
\node[anchor=east,rectangle,fill=green!20]
{\strut Definition~\theequation};}}
}%
{\mdfsetup{%
frametitle={%
\tikz[baseline=(current bounding box.east),outer sep=0pt]
\node[anchor=east,rectangle,fill=green!20]
{\strut Definition~\theequation~[#1]};}}%
}%
\mdfsetup{innertopmargin=0pt,linecolor=green!20,%
linewidth=2pt,topline=true,%
frametitleaboveskip=\dimexpr-\ht\strutbox\relax
}
\begin{mdframed}[]\relax%
\label{#2}}{\end{mdframed}}

\newenvironment{conv}[2][]{%
\refstepcounter{equation}%
\ifstrempty{#1}%
{\mdfsetup{%
frametitle={%
\tikz[baseline=(current bounding box.east),outer sep=0pt]
\node[anchor=east,rectangle,fill=green!15]
{\strut Convention~\theequation};}}
}%
{\mdfsetup{%
frametitle={%
\tikz[baseline=(current bounding box.east),outer sep=0pt]
\node[anchor=east,rectangle,fill=green!15]
{\strut Convention~\theequation~[#1]};}}%
}%
\mdfsetup{innertopmargin=0pt,linecolor=green!15,%
linewidth=2pt,topline=true,%
frametitleaboveskip=\dimexpr-\ht\strutbox\relax
}
\begin{mdframed}[]\relax%
\label{#2}}{\end{mdframed}}

\newenvironment{exa}[2][]{%
\refstepcounter{equation}%
\ifstrempty{#1}%
{\mdfsetup{%
frametitle={%
\tikz[baseline=(current bounding box.east),outer sep=0pt]
\node[anchor=east,rectangle,fill=violet!20]
{\strut Example~\theequation};}}
}%
{\mdfsetup{%
frametitle={%
\tikz[baseline=(current bounding box.east),outer sep=0pt]
\node[anchor=east,rectangle,fill=violet!20]
{\strut Example~\theequation~[#1]};}}%
}%
\mdfsetup{innertopmargin=0pt,linecolor=violet!20,%
linewidth=2pt,topline=true,%
frametitleaboveskip=\dimexpr-\ht\strutbox\relax
}
\begin{mdframed}[]\relax%
\label{#2}}{\end{mdframed}}

\newenvironment{ques}[2][]{%
\refstepcounter{equation}%
\ifstrempty{#1}%
{\mdfsetup{%
frametitle={%
\tikz[baseline=(current bounding box.east),outer sep=0pt]
\node[anchor=east,rectangle,fill=blue!20]
{\strut Question~\theequation};}}
}%
{\mdfsetup{%
frametitle={%
\tikz[baseline=(current bounding box.east),outer sep=0pt]
\node[anchor=east,rectangle,fill=blue!20]
{\strut Question~\theequation~[#1]};}}%
}%
\mdfsetup{innertopmargin=0pt,linecolor=blue!20,%
linewidth=2pt,topline=true,%
frametitleaboveskip=\dimexpr-\ht\strutbox\relax
}
\begin{mdframed}[]\relax%
\label{#2}}{\end{mdframed}}

\titleformat{\section}[block]{\Large\bfseries\filcenter}{}{1em}{}


\title{Inverses, disintegrations, and Bayesian inversion in quantum Markov categories}
\author{Arthur J. Parzygnat}
\date{\today}

\newcommand{\Addresses}{{
  \bigskip
  \footnotesize

  A.~Parzygnat, \textsc{Institut des Hautes \'Etudes Scientifiques, 35 Route de Chartres 91440, Bures-sur-Yvette, France}\par\nopagebreak
  \textit{E-mail address}, A.~Parzygnat: \texttt{parzygnat@ihes.fr}
}}


\begin{document}
\emergencystretch 2em
\maketitle 
\vspace{-9mm}

\begin{abstract}
We introduce quantum Markov categories as a structure that refines and extends a synthetic approach to probability theory and information theory so that it includes quantum probability and quantum information theory. 
In this broader context, we analyze three successively more general notions of reversibility and statistical inference: ordinary inverses, disintegrations, and Bayesian inverses. 
We prove that each one is a strictly special instance of the latter for certain subcategories, providing a categorical foundation for Bayesian inversion as a generalization of reversing a process. We unify the categorical and $C^*$-algebraic notions of almost everywhere (a.e.) equivalence.
As a consequence, we prove many results including a universal no-broadcasting theorem for S-positive categories, a generalized Fisher--Neyman factorization theorem for a.e.\ modular categories, a relationship between error correcting codes and disintegrations, and the relationship between Bayesian inversion and Umegaki's non-commutative sufficiency. 

\blfootnote{\emph{2020 Mathematics Subject Classification.} 
62F15, 
46L53 (Primary); 
81R15, 
18D10, 
81P45 (Secondary) 
}
\blfootnote{
\emph{Key words and phrases.} 
Bayes;
$C^*$-algebra; 
categorical quantum mechanics;
completely positive; 
conditional expectation;
conditional probability;
error correcting codes;
Hamming; 
inference;
information theory;
Kadison--Schwarz;
modular; 
optimal hypothesis;
quantum probability;
quantum information; 
regular conditional probability;
reversibility;
Schwarz-positive; 
sufficiency;
}
\end{abstract}

\vspace{-11mm}
\tableofcontents

\section[Introduction, motivation, and outline]{a000}
\label{sec:intro}
\vspace{-12mm}
\noindent
\begin{tikzpicture}
\coordinate (L) at (-8.75,0);
\coordinate (R) at (8.75,0);
\draw[line width=2pt,orange!20] (L) -- node[anchor=center,rectangle,fill=orange!20]{\strut \Large \textcolor{black}{\textbf{1\;\; Introduction, motivation, and outline}}} (R);
\end{tikzpicture}

In his lectures on entropy, Gromov emphasized that concepts in mathematics should frequently be revisited due to our constantly growing and changing perspectives, which may provide new insight on old subjects~\cite{Gr14}. Probability theory is no exception, and a dramatic change in viewpoint on the structural foundations of probability theory has gained enormous momentum recently~\cite{La62,Gi82,Pa99,Go02,Ja11,BFL,Fo12,CuSt14,DDGK16,CDDG17,GaPa18,Ja18,ChJa18,Fr19,Ja19,Ja20,JaSt20,FrRi20,FGPR20}. In particular, many diagrammatic axioms have been put forward that allow one to prove a variety of interesting theorems and constructions from classical statistics using purely diagrammatic arguments~\cite{Fr19}. However, most of the guiding examples towards this perspective have come from \emph{classical} (measure-theoretic) probability theory. To further refine and expand on the synthetic approach, it is important to analyze how \emph{quantum} information theory and probability fit into this scheme, especially since we believe that nature is governed not by the classical theory, but by the laws of quantum theory. In return, an analysis of the quantum setting may uncover new structures and relations that were previously not known classically. The present work provides the first steps in this direction. 

\vspace{2mm}
\noindent
\begin{tikzpicture}
\coordinate (L) at (-8.75,0);
\coordinate (R) at (8.75,0);
\draw[line width=2pt,orange!20] (L) -- node[anchor=center,rectangle,fill=orange!20] {\strut \large \textcolor{black}{{The main goals of this work}}} (R);
\end{tikzpicture}

This paper serves three major purposes. 

First, we provide a categorical language that can be used to describe both classical and quantum probability. We do this by introducing quantum Markov categories. We show that the category of completely positive unital (CPU) maps between $C^*$-algebras is \emph{neither} a quantum nor a classical Markov category because it does not admit a copy map. Nevertheless, it naturally \emph{embeds} into a quantum Markov category that \emph{does} have a copy map, and the latter allows string-diagrammatic reasoning generalizing those of the CD and Markov categories of Cho--Jacobs and Fritz, respectively~\cite{ChJa18,Fr19}. 
Thus, physically implementable operations can be manipulated via string diagrams in a larger enveloping category. 

Second, though the above mentioned copy map in quantum mechanics is not a physically implementable operation (by the no-cloning theorem), it can be used to define disintegrations~\cite{PaRu19}, define Bayesian inversion~\cite{PaRuBayes}, and to extend Frtiz' many additional axioms~\cite{Fr19} to the quantum setting. We relativize these axioms to \emph{subcategories} of quantum Markov categories since the subcategories themselves need not be Markov categories for these definitions to make sense. In doing so, special care needs to be taken regarding how the axioms are formulated due to the subtleties that arise from a lack of commutativity in the quantum setting.
Nevertheless, we accomplish this, and our approach allows one to easily specialize definitions (such as Bayesian inversion) or theorems proved abstractly (such as the Fisher--Neyman factorization theorem) to the language of non-commutative statistics. 

The third, and main, purpose of this paper is to present precise theorems relating the notions of inverses, disintegrations, and Bayesian inverses, which are valid both in the classical and quantum setting. To do this, we (re)introduce the notion of an almost everywhere (a.e.) modular subcategory of a quantum Markov category (called a positive category in an earlier draft of~\cite{Fr19}). An example of such a category is the one of CPU maps. 
The simplest version of the relationship between inverses, disintegrations, and Bayesian inverses holds in any a.e.\ modular category. Briefly, every state-preserving inverse is a disintegration and every disintegration is a Bayesian inverse, but the reverse implications are not generally true. For example, if a state-preserving morphism has a disintegration, then the morphism must be a.e.\ deterministic. However, if the morphism has a Bayesian inverse, it need not be the case that the original morphism is a.e.\ deterministic. This shows that Bayesian inversion is the most general candidate (of these three) as a reversing procedure (sometimes interpreted as an inference) for CPU maps. We summarize this hierarchy of reversibility by the following implications
\[
\text{invertible} 
\imp
\text{disintegrable}
\imp
\text{Bayesian invertible}.
\] 

While accomplishing these three main goals, we discover several interesting consequences. First, the notion of non-commutative a.e.\ equivalence, which was recently introduced in the operator-algebraic framework~\cite{PaRu19}, is shown to agree with the string-diagrammatic definition of Cho and Jacobs when extended to quantum Markov categories~\cite{ChJa18}. Second, we prove a universal no-broadcasting theorem for S-positive subcategories of quantum Markov categories. Finally, our main goals relating disintegrations, a.e.\ determinism, and Bayesian inversion are shown to imply a more general version of the Fisher--Neyman factorization theorem, valid in \emph{both} the classical and quantum settings. A further detailed account of our contributions are described later in this introduction. 

The present work supplements our investigations of \emph{quantum} disintegrations and Bayesian inversion, where we have analyzed the \emph{existence} in the finite-dimensional hybrid classical and quantum setting~\cite{PaRu19,PaRuBayes}. Nevertheless, the present paper can be read completely independently of these two works. 

\vspace{2mm}
\noindent
\begin{tikzpicture}
\coordinate (L) at (-8.75,0);
\coordinate (R) at (8.75,0);
\draw[line width=2pt,orange!20] (L) -- node[anchor=center,rectangle,fill=orange!20] {\strut \large \textcolor{black}{{The broader context and some motivation}}} (R);
\end{tikzpicture}

Recent progress has shown that many concepts of classical information theory, probability theory, and information transfer can be encompassed in string diagrammatic language through the usage of CD and Markov categories~\cite{ChJa18,Fr19}. Such a synthetic reformulation of information theory divorces itself from some of the specific structures associated with measure theory. Instead, one attempts to isolate diagrammatic structures that allow one to obtain categorical analogues of theorems. There are many good reasons to do this, though here we focus mainly on one, which is the ability to transfer these concepts into other settings. This has the benefit of new potential discoveries in the new setting, but it also has the possibility to help refine the abstract theory. Here, we mainly concentrate on how these concepts are instantiated in quantum theory, though it seems reasonable that other contexts exist. 

Some of the axioms that a classical Markov category can satisfy allow one to construct conditionals, disintegrate measure-preserving maps, deduce causal inference, and perform Bayesian updating~\cite{ChJa18,Fr19,Ja19,Ja20,JaKiZa}. Other axioms can be used to reformulate and prove theorems of sufficient statistics for example~\cite{Fr19,FGPR20}. However, not all classical Markov categories satisfy these axioms, and yet some of these constructions and theorems might still hold. In fact, there are also many categories coming from quantum mechanics that satisfy analogous versions of these theorems but they do not fall within the Markov category framework, simply because Markov categories contain a copy map and further demand a commuting condition that seems incompatible with quantum observables~\cite{Bo44}. Nevertheless, at least in the case of quantum theory, almost the same string-diagrammatic manipulations can be done by embedding the category of completely positive maps in an \emph{enveloping quantum} Markov category, which does contain a (not necessarily commutative) copy map. The ambient quantum Markov category need not include only physically implementable processes, but it can nevertheless be used to \emph{formulate} useful axioms and definitions for \emph{subcategories} whose morphisms have a more direct physical meaning. 
We prove that most (though not all!) of the important axioms of sufficient statistics hold for completely positive (in fact, Schwarz positive) unital maps between unital finite-dimensional $C^*$-algebras by viewing them inside the enveloping quantum Markov category.
More generally, it is possible that relativizing axioms in this way to subcategories of classical Markov categories (or mathematical structures generalizing Markov categories, such as quantum Markov categories) may allow problems of inference and information flow to be analyzed diagrammatically, without the requirement that the class of morphisms under study form a Markov category.

An interesting aspect of disintegrations and Bayesian inversion is their interpretation as generalized inverses. If one initially begins in a category of state-preserving deterministic dynamics~\cite{BFL,Pa17}, it is rarely possible to construct state-preserving inverses. To expand the possibilities for a reversal procedure, one might hope to construct an approximate inverse in some sense. This leads to the notion of a disintegration, which is a morphism that optimally reverses  the original deterministic dynamics. But to construct disintegrations, one must enlarge the original category to include probabilistic morphisms.%
\footnote{Although this is reminiscent of what one does in the localization of a category with respect to a class of morphisms, we have not made any explicit connection. It would be interesting to see the relationship, if one exists. The perspective we mention here is where one begins with a category of deterministic processes and uses a monad to construct a Kleisli category, whose new morphisms are thought of as describing \emph{stochastic} dynamics. For classical (quantum) systems, this categorical procedure takes us from evolution described by functions ($^*$-homomorphisms) on phase space (the algebra of observables) to evolution described by Markov kernels~\cite{La62,Gi82} (completely positive unital maps~\cite{We17}). We merely mention this as a remark, but we do not develop this abstract viewpoint here. 
}
Hence, one now has new morphisms describing \emph{stochastic} dynamics. One can then ask if these dynamics can be reversed in a similar way. In one of our main theorems here, we show that this is not possible. More precisely, if a stochastic morphism has a disintegration, then the original stochastic morphism is necessarily a.e.\ deterministic. Bayesian inversion, the third notion of reversibility that we will examine, correctly captures a more robust reversal procedure that reduces to the disintegration case when the original dynamics is a.e.\ deterministic. Although our results are proved quite generally (for certain subcategories of any quantum Markov category), we focus on 
the category of unital $C^*$-algebras and unital $^*$-homomorphisms or, more generally, completely positive unital maps (quantum operations). Most of our results are stated for Schwarz-positive unital maps, which include completely positive unital maps as a subclass. 

In defining Bayesian inversion and reversability, one must take particular care of measure zero events. More specifically, a.e.\ equivalence in classical probability theory plays an important role in uniqueness theorems. In \cite{PaRu19}, Russo and the author introduced the notion of a.e.\ equivalence for maps between $C^*$-algebras equipped with states 
to determine the uniqueness of disintegrations. The definition is simple, intuitive, and is motivated by the Gelfand--Naimark--Segal (GNS) construction. In \cite{ChJa18}, Cho and Jacobs independently introduced an elegant categorical formulation of a.e.\ equivalence valid for any Markov category. In this paper, we will show that these two notions agree when the latter is generalized to quantum Markov categories. 
This notion of a.e.\ equivalence plays an essential role in determining the uniqueness of quantum Bayesian inverses as well~\cite{PaRuBayes}.   
Although, the topic of reversibility in quantum mechanics has been studied in great depth in the literature (a small selection of references include~\cite{Pe84,Pe88,BaKn02,NaSe07,Le07,LeSp13}), the categorical approach we take here is different from alternative categorical approaches in the literature~\cite{Le06,Le07,CoSp12}, but is more closely related to a notion introduced by Accardi and Cecchini in 1982~\cite{AcCe82}.%
\footnote{Incidentally, Accardi and Cecchini introduced \emph{two} types of Bayesian inverses in their work along with their connection to the modular group of Tomita and Takesaki, assuming that the states involved were faithful (though they did not realize the connection to Bayes' theorem at the time, as far as I am aware). Only one of these two maps seems to have been popularized---the one often called the \emph{Petz recovery map} in the literature---due to its close connection to the saturation of relative entropy under open quantum dynamics~\cite{Pe03}. Our Bayesian inverse is a generalization of the other type considered in~\cite{AcCe82}.}
This latter definition is generalized appropriately to the setting of not necessarily faithful states on $C^*$-algebras.
The reason \emph{why} our analysis is done with this latter map as a form of Bayesian inversion, instead of the more standard recovery map in the literature, will be presented elsewhere~\cite{PaJeffrey}. 

Finally, it is also worthwhile mentioning that although the quantum Markov categories we define enable us to reason probabilistically via diagrammatic techniques as a form of two-dimensional algebra, this is certainly not the first of its kind. There is a large body of literature on diagrammatic reasoning in the context of categorical quantum mechanics~\cite{CoKi17,HeVi19}. The approach we take here is a technically different from these approaches, but similar in spirit. Future investigations will have to be done to properly relate the two.

\vspace{2mm}
\noindent
\begin{tikzpicture}
\coordinate (L) at (-8.75,0);
\coordinate (R) at (8.75,0);
\draw[line width=2pt,orange!20] (L) -- node[anchor=center,rectangle,fill=orange!20] {\strut \large \textcolor{black}{{Topics not covered here}}} (R);
\end{tikzpicture}

Although a variety of points are addressed in this work, it has not been possible to include all aspects of potential interest to some readers.  
Though the following list is not meant to be complete, here are some of the topics that we do not cover in this work. 
\begin{itemize}
\itemsep0pt
\item
The formal definition of quantum Markov categories is preliminary. It is mainly introduced to find a rigorous setting where string-diagrammatic manipulations could be applied to the quantum setting even in the absence of a physically implementable copy map. In particular, we do not explore further mathematical structures of graded monoidal categories nor do we explore examples other than the main one involving finite-dimensional unital $C^*$-algebras (for one reason, see Remark~\ref{ex:allCAlg}). Nevertheless, we speculate that it seems plausible that the notion may be extended further to braided monoidal categories, which itself may have interesting consequences for condensed matter and topological quantum computing (see~\cite{RoWa18} and references therein). 
\item
Although our main interest is in applications to quantum information theory, the present paper is meant to provide a \emph{mathematical} basis for results we plan to present elsewhere. In particular, though we try to make our results accessible to the physics community, only a few physical examples are provided. Furthermore, justification for why the Bayesian inversion presented here is a physically reasonable notion of quantum inference will be presented elsewhere~\cite{PaJeffrey}. 
\item
We do not go into great detail regarding the conditions for existence of disintegrations and Bayesian inverses, though we briefly mention very special cases. The reader is referred to~\cite{PaRu19,PaRuBayes}, which are specifically catered to these questions. 
\item
We do not review in detail the background and mathematical justification for why one should view completely positive maps as analogues of conditional probabilities. Categorical justifications are provided in earlier work~\cite{Pa17} (see also~\cite{FuJa15}).
\item
Many \emph{classical} Markov categories arise as the Kleisli category of a symmetric monoidal monad on certain categories~\cite{Fr19,FGPR20}. 
We make no attempt to describe if quantum Markov categories arise as Kleisli categories of some kind of monad. However, we mention the interesting fact that 
Westerbaan has recently shown that the category of completely positive unital maps can be viewed as the Kleisli category of a certain monad on the category of $C^*$-algebras and $*$-homorphisms~\cite{We17}. 
\end{itemize}

\vspace{2mm}
\noindent
\begin{tikzpicture}
\coordinate (L) at (-8.75,0);
\coordinate (R) at (8.75,0);
\draw[line width=2pt,orange!20]  (L) -- node[anchor=center,rectangle,fill=orange!20] {\strut \large \textcolor{black}{{Detailed outline and main results}}} (R);
\end{tikzpicture}

In Section~\ref{sec:whatisbayes}, we provide a diagrammatic reformulation of Bayes' theorem to motivate the definition we adopt later abstractly and in the quantum setting. 

In Section~\ref{sec:QMC}, we define classical and quantum Markov categories and provide the two main examples used in this work: finite sets with stochastic maps and finite-dimensional $C^*$-algebras with unital linear and conjugate-linear maps. The category of completely positive unital (CPU) maps embeds into the latter category so that string diagrammatic manipulations of CPU maps can be done in the larger category. In defining quantum Markov category, an involution morphism $*$ is introduced to allow for an axiom that resembles (but is different from) commutativity in the $C^*$-algebra setting. The morphisms that are natural with respect to this involution (called \emph{$*$-preserving}) form a subcategory with a symmetry that often simplifies many definitions. We prove that there is no quantum Markov category of all $C^*$-algebras in Remark~\ref{ex:allCAlg}, though we speculate a potential resolution in Question~\ref{ques:multilinearCAlg}. 

In Section~\ref{sec:pos}, we adapt Fritz' definition of a positive Markov category (cf.~\cite[Definition~11.22]{Fr19}) to the quantum setting. In Theorem~\ref{thm:fdcalgcpupositive}, we prove that the category of CPU maps forms a positive subcategory of the quantum Markov category of linear and conjugate-linear maps on finite-dimensional $C^*$-algebras. 
We then prove that ordinary (operator-algebraic) positivity (as opposed to complete positivity) in the quantum setting is \emph{not} enough to satisfy Fritz' categorical definition of positivity.
Since Schwarz-positive unital maps satisfy this definition, we call such subcategories S-positive instead. 
As a simple corollary, we prove a general no-broadcasting (no-cloning) theorem for S-positive subcategories in Theorem~\ref{thm:nocloning}. 

Section~\ref{sec:ae} reviews a.e.\ equivalence and contains several new results such as Theorem~\ref{thm:ncaeequivalence}, which shows that the notion of a.e.\ equivalence defined via the nullspace (which appears in the GNS construction for example) introduced in~\cite[Definition~3.16]{PaRu19} coincides with the definition of Cho--Jacobs a.e.\ equivalence~\cite[Definition~5.1]{ChJa18}. Theorem~\ref{thm:ncaeequivalence} is one of our main results relating the operator-algebraic and diagrammatic structures. It is used repeatedly to simplify many proofs in the rest of this work. 

Section~\ref{sec:aedet} is devoted to the a.e.\ determinism, which is particularly subtle in the quantum setting due to the lack of commutativity. In particular, there are subcategories of quantum Markov categories where a morphism is a.e.\ equivalent to a deterministic morphism but is itself not a.e.\ deterministic (Example~\ref{exa:Unotdetreason}). We therefore introduce the notion of a deterministically reasonable subcategory (Definition~\ref{defn:detreason}) and note that all classical Markov categories are deterministically reasonable. Furthermore, we prove that the subcategory of CPU maps between $C^*$-algebras is deterministically reasonable. We also prove a weak a.e.\ Multiplication Lemma (Lemma~\ref{lem:Attalslemmaaegeneral}) that is first used to describe what a.e.\ determinism looks like in the category of CPU maps, but it is also used to prove one of our main theorems in a later section. We conclude the section with a discussion of a.e.\ unitality. 

In Section~\ref{sec:disintbayes}, we define disintegrations and Bayesian inversion in quantum Markov categories. As an example, we show that the error map is a disintegration of the recovery map in the classical Hamming error correcting codes as well as certain quantum error correcting codes. We provide examples of Bayesian inversion for matrix algebras. We also prove some relationships among these different concepts. For example, Proposition~\ref{prop:bayesfunctorial} shows that Bayesian inversion is compositional (with no assumptions), while every $*$-preserving morphism is a Bayesian inverse of its Bayesian inverse.  Proposition~\ref{thm:Bayesianinverseofdeterministicisadisint} shows that a Bayesian inverse of an a.e.\ deterministic morphism is a disintegration. 

Section~\ref{sec:modposcaus} contains our main results, including the theorem relating disintegrations to a.e.\ determinism and Bayesian inversion. We review three of Fritz' additional axioms that classical Markov categories can satisfy. We relativize these notions to subcategories of quantum Markov categories. These three axioms are a.e.\ modularity (previously called positivity in an earlier draft of~\cite{Fr19}), strict positivity, and causality. Theorem~\ref{thm:aemodbayesdisint} provides an equivalent characterization of an a.e.\ modular category in terms of Bayesian inversion and a.e.\ determinism. Not only does this theorem provide a simpler criterion for proving that CPU maps form an a.e.\ modular category (Theorem~\ref{thm:SPUmodular}), but it is also a generalization of the Fisher--Neyman factorization theorem, which, by our result, is valid in both the classical and quantum settings (see Remark~\ref{rmk:fisherneyman} for the relation to the Fisher--Neyman factorization theorem and Example~\ref{exa:Umegakisufficientstatistic} for a relationship to Umegaki's notion of sufficiency in quantum statistical decision theory). In the process of proving Theorem~\ref{thm:SPUmodular}, we prove a relative version of the Multiplication Lemma for Schwarz-positive unital maps (Lemma~\ref{lem:relmulttheorem}) and a certain conditional expectation property (Lemma~\ref{lem:disintaehomproperty}), both of which are either well-known to operator algebraists, or may be interesting new results. Corollary~\ref{cor:binversiondisint} contains the statement that if a CPU map has a CPU disintegration, then it is automatically a Bayesian inverse and the original map is necessarily a.e.\ deterministic. The rest of the section discusses strictly positive and causal subcategories. Strict positivity is shown to fail for CPU maps (Example~\ref{exa:strictpositive}), which is why we have brought back Fritz' older axiom of a.e.\ modularity. In any case, we prove that strict positivity implies a.e.\ modularity and S-positivity in the context of quantum Markov categories (Proposition~\ref{prop:strictimplies}), illustrating that the choice of definitions in the non-commutative setting is very subtle (Remark~\ref{rmk:strictpossubtle}). The section ends by showing that CPU maps form a causal subcategory, but positive unital (PU) maps do not. Although causality is used to prove that a.e.\ equivalence classes of state-preserving morphisms form a category, the latter statement is still true for PU maps. This provides a counterexample to the suspicion made before Definition~13.8 in~\cite{Fr19}. A summary of the main results relating disintegrations and Bayesian inverses in the context of $C^*$-algebras is given in Corollary~\ref{cor:summary}.

Due to the length of this work, some tables of notation have been included in Appendix~\ref{sec:notationtables} for the reader's convenience.

\section[What is Bayes' theorem?]{a001}
\label{sec:whatisbayes}
\vspace{-12mm}
\noindent
\begin{tikzpicture}
\coordinate (L) at (-8.75,0);
\coordinate (R) at (8.75,0);
\draw[line width=2pt,orange!20] (L) -- node[anchor=center,rectangle,fill=orange!20]{\strut \Large \textcolor{black}{\textbf{2\;\; What is Bayes' theorem?}}} (R);
\end{tikzpicture}

To provide the setting and motivation for our results, we would first like to illustrate one version of Bayes' theorem that can be described purely diagrammatically~\cite{Fo12,CuSt14,CDDG17,ChJa18,Fr19}.%
\footnote{In most of these references, Bayes' theorem is formulated as a bijection between joint distributions and conditionals. Our emphasis is on the process of inference from conditionals, which will be used more in the non-commutative setting (the distinction between these points of view is irrelevant in the classical setting). Why this is so is explained in~\cite[Remark~5.96]{PaRuBayes}. To the best of our knowledge, the first reference that explicitly draws the diagram in Theorem~\ref{thm:classicalBayestheorem} is Fong's thesis~\cite{Fo12} (see the section ``Further Directions''), though it is formulated using string diagrams. Here, we have elevated this diagram to encapsulate what the \emph{statement} of Bayes' theorem is.}
 We will presently illustrate it in the case of finite sets and stochastic maps (for the reader unfamiliar with the notation, we will briefly review it after the statement of the theorem). 

\vspace{-1mm}
\begin{theo}[Bayes' theorem]{thm:classicalBayestheorem}
Let $X$ and $Y$ be finite sets, let $\{\bullet\}\xstoch{p}X$ be a probability measure, and let $X\xstoch{f}Y$ be a stochastic map. Then there exists a stochastic map $Y\xstoch{g}X$ 
such that%
\footnote{The equals sign in this diagram indicates that the diagram commutes. The notation is meant to be consistent with higher categorical notation. Namely, we think of this equality as the identity 2-cell. We will not comment on higher categorical generalizations in this paper.
}
\[
\xy0;/r.25pc/:
(0,7.5)*+{\{\bullet\}}="1";
(-25,7.5)*+{Y}="Y";
(25,7.5)*+{X}="X";
(-25,-7.5)*+{Y\times Y}="YY";
(25,-7.5)*+{X\times X}="XX";
(0,-7.5)*+{X\times Y}="XY";
{\ar@{~>}"1";"X"^{p}};
{\ar@{~>}"1";"Y"_{q}};
{\ar"Y";"YY"_{\Delta_{Y}}};
{\ar"X";"XX"^{\Delta_{X}}};
{\ar@{~>}"YY";"XY"_{g\times\id_{Y}}};
{\ar@{~>}"XX";"XY"^{\id_{X}\times f}};
(0,0)*{=\joinrel=\joinrel=};
\endxy
,
\]
where $\{\bullet\}\xstoch{q}Y$ is given by $q:=f\circ p.$ 
Furthermore, for any other $g'$ satisfying 
this condition, 
$g\underset{\raisebox{.6ex}[0pt][0pt]{\scriptsize$q$}}{=}g'$. 
\end{theo}

We quickly recall some notation to explain the theorem (see \cite{Fr19,PaRu19,Pa17} and the ``Stochastic maps'' series in~\cite{PaLinear}).

\vspace{-1mm}
\begin{defn}[Stochastic maps]{defn:stochasticmaps}
If $X$ and $Y$ are finite sets, a \define{stochastic map} 
$X\xstoch{f}Y$ is an assignment sending $x\in X$ to a probability measure $f_{x}$ on $Y$. The value of this probability measure on $y\in Y$ will be denoted by $f_{yx}$. Stochastic maps are drawn with squiggly arrows to distinguish them from deterministic maps (stochastic maps assigning Dirac delta measures), which are drawn with straight arrows $\rightarrow$.%
\footnote{Such straight arrows correspond to functions.}
A single element set will be denoted by $\{\bullet\}$. Stochastic maps $X\xstoch{f}Y\xstoch{g}Z$ can be \define{composed} via the Chapman--Kolmogorov equation 
\[
(g\circ f)_{zx}:=\sum_{y\in Y}g_{zy}f_{yx}. 
\]
Given $X\xstoch{f}Y$ and $X'\xstoch{f'}Y'$, the \define{product}%
\footnote{This is \emph{not} a categorical product (in the sense of limits).}
 $X\times X'\xstoch{f\times f'}Y\times Y'$ is defined by the product of probability measures
\[
(f\times f')_{(y,y')(x,x')}:=f_{yx}f_{y'x'}. 
\]
Given%
\footnote{A stochastic map $\{\bullet\}\xstoch{p}X$ encodes the data of a probability measure on $X$.}
\[
\xy0;/r.25pc/:
(-15,0)*+{\{\bullet\}}="0";
(0,0)*+{X}="X";
(15,0)*+{Y}="Y";
{\ar@{~>}"0";"X"^{p}};
{\ar@{~>}@<0.6ex>"X";"Y"^{f}};
{\ar@{~>}@<-0.6ex>"X";"Y"_{h}};
\endxy
,
\]
$f$ is \define{$p$-a.e.\ equivalent to $h$}, written $f\underset{\raisebox{.6ex}[0pt][0pt]{\scriptsize$p$}}{=}h$, whenever $f_{x}\ne h_{x}$ for all $x\in X\setminus N_{p}$, where 
\[
N_{p}:=\big\{x\in X\;:\;p_{x}=0\big\}
\]
is the \define{nullspace} of $p$. In other words, $f\underset{\raisebox{.6ex}[0pt][0pt]{\scriptsize$p$}}{=}h$ whenever the set on which $f$ and $h$ differ is a set of $p$-measure zero. 
The map $X\xrightarrow{\D_{X}}X\times X$ is determined by the function $\D_{X}(x):=(x,x)$ for all $x\in X$ and is called the \define{duplicate, copy,} or \define{diagonal} map. The morphism $g$ in Theorem~\ref{thm:classicalBayestheorem} is called a \define{Bayesian inverse} of $(f,p,q)$. The diagram in Theorem~\ref{thm:classicalBayestheorem} is called \define{the Bayes diagram}. The pair $(Y,q)$ is called a \define{finite probability space}. A stochastic map $X\xstoch{f}Y$ satisfying $q=f\circ p$ is said to be \define{probability-preserving}%
\footnote{In the context of measure theory, where the probability measures are replaced by arbitrary measures, one often says that $f$ is \emph{measure-preserving} when it satisfies $q=f\circ p$. When using states on $C^*$-algebras (non-commutative analogues of probability measures), the notion of probability-preserving stochastic maps is replaced by that of state-preserving positive unital maps between $C^*$-algebras (cf.\ Definition~\ref{defn:disintegration}).}
and is often written as $(X,p)\xstoch{f}(Y,q)$. 
\end{defn}

\vspace{-1mm}
\begin{exa}[Visualizing probability-preserving functions]{exa:probpreserving}
\begin{wrapfigure}{r}{0.27\textwidth}
  \centering
    \begin{tikzpicture}[decoration=snake]
\node at (-1.75,3.25) {$X$};
\node at (-1.75,0) {$Y$};
\draw[blue,thick,fill=blue,fill opacity=0.4] (-1,3.75) circle (1.0ex);
\draw[blue,thick,fill=blue,fill opacity=0.4] (-1,3.25) circle (0.5ex);
\draw[blue,thick,fill=blue,fill opacity=0.4] (-1,2.75) circle (1.25ex);
\draw[blue,thick,fill=blue,fill opacity=0.4] (0,4.5) circle (0.25ex);
\draw[blue,thick,fill=blue,fill opacity=0.4] (0,4.0) circle (0.75ex);
\draw[blue,thick,fill=blue,fill opacity=0.4] (0,3.5) circle (1.0ex);
\draw[blue,thick,fill=blue,fill opacity=0.4] (0,3) circle (0.5ex);
\draw[blue,thick,fill=blue,fill opacity=0.4] (1,4.05) circle (1.75ex); 
\draw[blue,thick,fill=blue,fill opacity=0.4] (1,3.5) circle (0.75ex); 
\draw[blue,thick,fill=blue,fill opacity=0.4] (1,3) circle (1.0ex); 
\draw[blue,thick,fill=blue,fill opacity=0.4] (1,2.5) circle (1.25ex); 
\draw[blue,thick,fill=blue,fill opacity=0.4] (2,3.75) circle (0.75ex); 
\draw[blue,thick,fill=blue,fill opacity=0.4] (2,3.25) circle (0.25ex); 
\draw[blue,thick,fill=blue,fill opacity=0.4] (2,2.75) circle (0.5ex); 
\draw[blue,thick,fill=blue,fill opacity=0.4] (2,2.25) circle (0.75ex); 
\draw[-{>[scale=2.5,
          length=2,
          width=3]},thick] 
          (0.25,2.0) -- node[left]{$f$} (0.25,0.5);
\draw[blue,thick,fill=blue,fill opacity=0.4] (-1,0) circle (1.677ex);
\draw[blue,thick,fill=blue,fill opacity=0.4] (0,0) circle (1.3693ex);
\draw[blue,thick,fill=blue,fill opacity=0.4] (1,0) circle (2.48746ex);
\draw[blue,thick,fill=blue,fill opacity=0.4] (2,0) circle (1.19895ex);
\end{tikzpicture}
\end{wrapfigure}
In the special case that a stochastic map $X\xstoch{f}Y$ corresponds to a function $X\xrightarrow{f}Y$ (with the same letter used abusively to denote both the function and the stochastic map), it is helpful to visualize a probability-preserving map $(X,p)\xrightarrow{f}(Y,q)$ in terms of combining water droplets as in the figure on the right~\cite{Gr14},~\cite[Section~2.2]{PaRu19}. A Bayesian inverse $g$ of $(f,p,q)$ in this case is sometimes called a \emph{disintegration} of $(f,p,q)$. It is a stochastic map that splits the water droplets back into the form above. More precisely, it is a stochastic map that satisfies $p=g\circ q$ and $f\circ g\aeequals{q}\id_{Y}$. These two equations are what define a disintegration. They will be discussed in more detail in Section~\ref{sec:disintbayes}.
\end{exa}

With all this notation explained, the reader can now verify that the Bayes diagram reads
\be
\label{eq:Bayesonpoints}
g_{xy}q_{y}=f_{yx}p_{x}
\ee
for all values of $x\in X$ and $y\in Y$. This is Bayes' rule for point events.%
\footnote{If we set $P(x|y):=g_{xy},P(y):=q_{y},P(y|x):=f_{yx},$ and $P(x):=p_{x}$,  equation~(\ref{eq:Bayesonpoints}) reads  $P(x|y)P(y)=P(y|x)P(x)$ in more standard (albeit abusive) notation.}
 The case of Bayes' rule for more general events is a simple consequence of this rule~\cite{PaJeffrey,PaRuBayes}. One can also show that both $f$ and $g$ are probability-preserving using the Bayes diagram (we will prove this more abstractly in Lemma~\ref{lem:Bayesianinversespreservestates}). 

Though it is more common to see Bayes' rule written more explicitly as $P(A|B)P(B)=P(B|A)P(A)$ in the probability context, the abstract diagrammatic reformulation illustrates that this is just one instantiation of Bayes' theorem. Classically, it is used to make inferences on outcomes based on evidence, such as diagnosing illnesses~\cite{MeRo55,We10} (see~\cite{Stone13} for a lucid introduction), it is the foundation of many machine learning algorithms~\cite{Ma92,Ma03}, and it drives how intelligent beings make decisions~\cite{Ja19}. The diagrammatic viewpoint allows one to use the more abstract concept as a \emph{definition} (as opposed to a theorem) in a completely new context, where an equation such as $P(A|B)P(B)=P(B|A)P(A)$ might not make any sense, but the diagram itself has meaning~\cite{PaRu19,PaRuBayes}. 

Considering how ubiquitous Bayes' rule is, it is very possible that we have only scratched the surface with its applications. What new insight can such a reformulation teach us? Where else can it be utilized, and how can we interpret it? Before we can answer these questions, we would like to to first provide an appropriate categorical structure in which Bayes' rule (and other concepts) can be defined in such a way so that it makes sense both in classical and quantum theory. 

\pagebreak
\section[Quantum Markov categories]{a003}
\label{sec:QMC}
\vspace{-12mm}
\noindent
\begin{tikzpicture}
\coordinate (L) at (-8.75,0);
\coordinate (R) at (8.75,0);
\draw[line width=2pt,orange!20] (L) -- node[anchor=center,rectangle,fill=orange!20]{\strut \Large \textcolor{black}{\textbf{3\;\; Quantum Markov categories}}} (R);
\end{tikzpicture}

We begin by defining our main categories of study and then working through a few examples. 
The first definition (Definition~\ref{defn:gradedmonoidalcategories}) contains a few technical details and can be skipped on a first reading. These details are merely included to make rigorous sense of the string diagrams that will follow. Quantum Markov categories are defined in Definition~\ref{defn:qmc}. 
In what follows, let $\Z_{2}=\{0,1\}$ be the abelian group where $0$ is the identity and $1+1:=0$ (addition modulo $2$). Given any group $G$, let $\mathbb{B} G$ be the one object category whose set of morphisms equals $G$ with composition given by the group operation (see~\cite{Pa182dgt,BaLa11} for more details). We will always write $0$ for the identity element of $G$. The next definition of a graded monoidal category is due to Fr\"ohlich and Wall~\cite[Chapter~3]{FrWa74}. 

\begin{defn}[Graded monoidal category]{defn:gradedmonoidalcategories}
Let $G$ be a group. A category $\mC$ equipped with a functor $\mathfrak{g}:\mC\to\mathbb{B}G$ is called a \define{$G$-graded category}.%
\footnote{This is not to be confused with the notion of graded fusion category appearing in conformal field theories and topological quantum matter for example~\cite[Section~2.3]{ENO09}, \cite{RoWa18}. A $G$-graded category says that the \emph{composite} of a pair of composable morphisms, where one is of grade $g'$ and another following it of grade $g$, is a morphism of grade $gg'$. A graded fusion category, on the other hand, says that the \emph{tensor product} of a morphism of grade $g$ and another of grade $g'$ is a morphism of grade $gg'$.}
The functor $\mathfrak{g}$ is called a \define{grading} on $\mC$ and $\mathfrak{g}(f)$ of a morphism $f$ in $\mC$ is called the \define{grade} of $f$. 
A grading $\mathfrak{g}$ is \define{stable} iff for any object $X$ in $\mC$ and for any $\g\in G$, there exists an isomorphism in $\mC$ with source $X$ and grade $\g$. 
A collection of morphisms that is of a single grade is said to be \define{homogeneous}. 
If $H$ is another group and $(\mathcal{D},\mathfrak{h}:\mathcal{D}\to\mathbb{B}H)$ is an $H$-graded category, a morphism of graded categories consists of a group homomorphism $\kappa:G\to H$ together with a functor $L:\mC\to\mD$ such that the grades of morphisms are preserved, i.e.\
\[
\xy0;/r.20pc/:
(-10,7.5)*+{\mC}="1";
(10,7.5)*+{\mD}="2";
(-10,-7.5)*+{\mathbb{B}G}="3";
(10,-7.5)*+{\mathbb{B}H}="4";
{\ar"1";"2"^{L}};
{\ar"3";"4"_{\mathbb{B}\kappa}};
{\ar"1";"3"_{\mathfrak{g}}};
{\ar"2";"4"^{\mathfrak{h}}};
\endxy
\]
commutes. Given two $G$-graded categories $(\mC,\mathfrak{g})$ and $(\mC',\mathfrak{g}')$, let $\mC{}_{\mathfrak{g}}\!\!\times_{\!\mathfrak{g}'}\!\mC'$ denote the (strict) pullback 
\[
\xy0;/r.20pc/:
(-10,7.5)*+{\mC{}_{\mathfrak{g}}\!\!\times_{\!\mathfrak{g}'}\!\mC'}="1";
(10,7.5)*+{\mC'}="2";
(-10,-7.5)*+{\mC}="3";
(10,-7.5)*+{\mathbb{B}G}="4";
(-5,4)*{\lrcorner};
{\ar"1";"2"^(0.65){\pi'}};
{\ar"3";"4"_(0.45){\mathfrak{g}}};
{\ar"1";"3"_{\pi}};
{\ar"2";"4"^{\mathfrak{g}'}};
\endxy
\quad,
\]
which is more explicitly given by the category whose objects are pairs $(X,X')$ with $X$ in $\mC$ and $X'$ in $\mC'$ and whose morphisms are pairs of morphisms $(f,f')$ with the \emph{same} grading (the $\pi$ and $\pi'$ functors are the projections onto the respective factors). 
Thus, $\mC{}_{\mathfrak{g}}\!\!\times_{\!\mathfrak{g}'}\!\mC'$ inherits a canonical $G$-grading.
A \define{$G$-monoidal category} consists of a $G$-graded category $(\mC,\mathfrak{g})$ with a stable grading, a morphism $\otimes:\mC{}_{\mathfrak{g}}\!\!\times_{\!\mathfrak{g}}\!\mC\to\mC$ of graded categories, a section $I:\mathbb{B}G\to\mC$ of $\mathfrak{g}$, and natural isomorphisms (of grade $0$) $\a:X\otimes(Y\otimes Z)\to(X\otimes Y)\otimes Z$, $c:X\otimes Y\to Y\otimes X$, and $i:I\otimes X\to X$ satisfying the usual axioms of a symmetric monoidal category. Note that $I$ also refers to the image of the single object in $\mathbb{B}G$ under the functor $I$. In this entire paper, the groups $G$ will always be either the trivial group or $\Z_{2}$. When the group is $\Z_{2}$, we will use \define{even} and \define{odd} to denote grade $0$ and grade $1$, respectively. In this case, $\mC_{\mathrm{even}}$, the collection all objects of $\mC$ and their even morphisms, is a subcategory%
\footnote{The collection of odd morphisms is \emph{not} a subcategory!}
 of $\mC$. 
\end{defn}

The idea behind a $G$-monoidal category $\mC$ is to endow $\mC$ with a \emph{partially defined} tensor product, where one is only allowed to take tensor products of morphisms of equal degrees (see also~\cite[Proposition~10.1]{FrWa74} and the discussion that follows for an alternative viewpoint). The following example illustrates how this works. 

\begin{exa}[Linear and conjugate linear maps as a $\Z_{2}$-monoidal category]{ex:linearandantilineartensor}
The category of complex vector spaces together with the class of linear and conjugate-linear maps can be endowed with a $\Z_{2}$-monoidal structure. Recall, a function $V\xrightarrow{f} W$ is \define{conjugate-linear} iff $f$ is additive and $f(\l v)=\ov\l f(v)$ for all $v\in V$ and $\l\in\C$ ($\ov\l$ denotes the complex conjugate of $\l$). If we declare linear maps to be grade $0$ and conjugate-linear maps to be grade $1$, then the grade of their composites obey modular $2$ arithmetic. The tensor product of linear maps is defined in the usual way. The tensor product of conjugate-linear maps can be defined similarly~\cite[Section~9.2.1]{Uh16}. However, if $V\xrightarrow{f}W$ is linear and $X\xrightarrow{g}Y$ is conjugate-linear, then it does not make sense to define $f\otimes g$ since $(\l v)\otimes x=v\otimes (\l x)$, while $\big(\l f(v)\big)\otimes g(x)\ne f(v)\otimes\big(\ov\l g(x)\big).$ If all linear maps have grade $0\in\Z_{2}$ and all conjugate-linear maps have grade $1\in\Z_{2}$, then this shows that the tensor product is actually defined on the pullback $\mC{}_{\mathfrak{g}}\!\!\times_{\!\mathfrak{g}'}\!\mC'$ from Definition~\ref{defn:gradedmonoidalcategories}. The section $I:\mathbb{B}\Z_{2}\to\mC$ in this case sends $0$ to $\id_{\C}$ and $1$ to $*_{\C}$, the complex conjugation map from $\C$ to itself. The grading is stable because every complex vector space $V$ admits a real structure.%
\footnote{Choose a basis $\{e_{\a}\}$ of $V$ and define the conjugate-linear map $V\to V$ uniquely determined by $\l e_{\a}\mapsto\ov\l e_{\a}$ for all $\l\in\C$ and $\a$ in the index set for the basis. This isomorphism has grade $1$.}
\end{exa}

The definition of a quantum Markov category below will use the language of string diagrams to present its axioms~\cite{Pe71,JSV96,Se10,CoKi17,HeVi19,Fr19} (and the reference closest to our specific usage of these diagrams is that of Cho--Jacobs~\cite{ChJa18} and Fritz~\cite{Fr19}). In particular, we assume the standard string diagram expressions for the composition and monoidal product of morphisms in series and in parallel, respectively. 
Afterwards, we will translate the axioms provided below in the example of $C^*$-algebras so that the reader unfamiliar with how they are manipulated should still be able to follow most of the results developed here. 

\vspace{-1mm}
\begin{conv}[Directionality of our string diagrams]{conv:directionality}
Our convention for string diagrams, regardless of the category, is that \emph{time always goes up} the page. However, the direction of \emph{composition} will \emph{either} go up or down depending on the specific category being used. Composition will go \emph{up} the page in the Schr\"odinger picture (the evolution of states). Composition will also always go up the page for all general definitions. This is to be consistent with the literature on Markov and CD categories because their emphasis is more often on the classical aspects.
In the quantum setting, however, we use the Heisenberg picture (the evolution of observables) so that composition will go \emph{down} the page when working with linear (or conjugate-linear) maps between $C^*$-algebras. 
Our two main reasons for doing this is because the definition of determinism used later (Definition~\ref{eq:deterministicmap}) is meant to agree with the classical one~\cite{Pa17,FuJa15} and because traces/partial traces need not be defined in the general setting of (possibly infinite-dimensional) $C^*$-algebras. 
To avoid potential confusion, we will \emph{never} use the Schr\"odinger picture for $C^*$-algebras. 
\end{conv}

\vspace{-1mm}
\begin{defn}[Quantum Markov category]{defn:qmc} 
A \define{quantum copy-discard (CD) category} is a $\Z_{2}$-monoidal category $\mM_{\text{\Yinyang}}$ together with a family of morphisms $\D_{X}:X\stoch X\times X$, $!_{X}:X\stoch I$, and $*_{X}:X\stoch X$, all depicted in string diagram notation as 
\[
\D_{X}\;\equiv\;
\vcenter{\hbox{%
\begin{tikzpicture}[font=\footnotesize]
\node[copier] (c) at (0,0.5) {};
\draw (c)
to[out=15,in=-90] (0.44,1.0);
\draw (c)
to[out=165,in=-90] (-0.44,1.0);
\draw (c) to (0,0);
\node at (0.25,0.15) {$X$};
\end{tikzpicture}}}
\qquad\text{,}\qquad
!_{X}\;\equiv\;
\vcenter{\hbox{%
\begin{tikzpicture}[font=\footnotesize]
\node[discarder] (d) at (0,0.25) {};
\draw (d) to (0,-0.45);
\node at (0.25,-0.3) {$X$};
\end{tikzpicture}}}
\quad,\;\;\;\text{and}\qquad
*_{X}\;\equiv\;
\vcenter{\hbox{%
\begin{tikzpicture}[font=\footnotesize]
\node[white dot] (s) at (0,0.5) {};
\draw (c)
to (0,1.0);
\draw (c) to (0,0);
\node at (0.25,0.15) {$X$};
\end{tikzpicture}}}
,
\]
for all objects $X$ in $\mM_{\text{\Yinyang}}$. 
These morphisms are required to satisfy the following conditions
\be
\label{eq:markovcatfirstconditions}
\vcenter{\hbox{%
\begin{tikzpicture}[font=\footnotesize]
\node[copier] (c) at (0,0) {};
\coordinate (x1) at (-0.3,0.3);
\node[discarder] (d) at (x1) {};
\coordinate (x2) at (0.3,0.5);
\draw (c) to[out=165,in=-90] (x1);
\draw (c) to[out=15,in=-90] (x2);
\draw (c) to (0,-0.3);
\end{tikzpicture}}}
\quad=\quad
\vcenter{\hbox{%
\begin{tikzpicture}[font=\footnotesize]
\draw (0,0) to (0,0.8);
\end{tikzpicture}}}
\quad=\quad
\vcenter{\hbox{%
\begin{tikzpicture}[font=\footnotesize]
\node[copier] (c) at (0,0) {};
\coordinate (x1) at (-0.3,0.5);
\coordinate (x2) at (0.3,0.3);
\node[discarder] (d) at (x2) {};
\draw (c) to[out=165,in=-90] (x1);
\draw (c) to[out=15,in=-90] (x2);
\draw (c) to (0,-0.3);
\end{tikzpicture}}}
\qquad\qquad
\vcenter{\hbox{%
\begin{tikzpicture}[font=\footnotesize]
\node[copier] (c2) at (0,0) {};
\node[copier] (c1) at (-0.3,0.3) {};
\draw (c2) to[out=165,in=-90] (c1);
\draw (c2) to[out=15,in=-90] (0.4,0.6);
\draw (c1) to[out=165,in=-90] (-0.6,0.6);
\draw (c1) to[out=15,in=-90] (0,0.6);
\draw (c2) to (0,-0.3);
\end{tikzpicture}}}
\quad=\quad
\vcenter{\hbox{%
\begin{tikzpicture}[font=\footnotesize]
\node[copier] (c2) at (-0.5,0) {};
\node[copier] (c1) at (-0.2,0.3) {};
\draw (c2) to[out=15,in=-90] (c1);
\draw (c2) to[out=165,in=-90] (-1,0.6);
\draw (c1) to[out=165,in=-90] (-0.5,0.6);
\draw (c1) to[out=15,in=-90] (0.1,0.6);
\draw (c2) to (-0.5,-0.3);
\end{tikzpicture}}}
\qquad\qquad
\vcenter{\hbox{%
\begin{tikzpicture}[font=\footnotesize]
\node[copier] (c) at (0,0.4) {};
\node[star] (R) at (0.25,0.7) {};
\node[star] (L) at (-0.25,0.7) {};
\draw (c)
to[out=15,in=-90] (R);
\draw (R)
to[out=90,in=-90] (-0.25,1.4);
\draw (c)
to[out=165,in=-90] (L);
\draw (L)
to[out=90,in=-90] (0.25,1.4);
\draw (c) to (0,0.1);
\end{tikzpicture}}}
\quad=\quad
\vcenter{\hbox{%
\begin{tikzpicture}[font=\footnotesize]
\node[copier] (c) at (0,0.3) {};
\node[star] (s) at (0,0) {};
\draw (c)
to[out=15,in=-90] (0.25,0.8);
\draw (c)
to[out=165,in=-90] (-0.25,0.8);
\draw (c) to (s);
\draw (s) to (0,-0.3);
\end{tikzpicture}}}
\tag{QCD1}
\ee
%
%
\be
\label{eq:markovcatsecondconditions}
\vcenter{\hbox{%
\begin{tikzpicture}[font=\footnotesize]
\node[discarder] (d) at (0,0) {};
\draw (d) to +(0,-0.5);
\node at (0.5,-0.3) {$X\otimes Y$};
\end{tikzpicture}}}
=
\vcenter{\hbox{%
\begin{tikzpicture}[font=\footnotesize]
\node[discarder] (d) at (0,0) {};
\node[discarder] (d2) at (0.6,0) {};
\draw (d) to +(0,-0.5);
\draw (d2) to +(0,-0.5);
\node at (0.2,-0.3) {$X$};
\node at (0.8,-0.3) {$Y$};
\end{tikzpicture}}}
\qquad\quad
\vcenter{\hbox{%
\begin{tikzpicture}[font=\footnotesize]
\node[discarder] (d) at (0,0) {};
\draw (d) to +(0,-0.5);
\node at (0.2,-0.3) {$I$};
\end{tikzpicture}}}
=\;
\vcenter{\hbox{%
\begin{tikzpicture}[font=\footnotesize]
\node at (0.2,-0.05) {};
\draw [gray,dashed] (0,0) rectangle (0.45,0.65);
\end{tikzpicture}}}
\qquad\quad
\vcenter{\hbox{%
\begin{tikzpicture}[font=\footnotesize]
\node[copier] (c) at (0,0.4) {};
\draw (c)
to[out=15,in=-90] (0.25,0.7);
\draw (c)
to[out=165,in=-90] (-0.25,0.7);
\draw (c) to (0,0);
\node at (0.5,0.1) {$X\otimes Y$};
\end{tikzpicture}}}
=
\vcenter{\hbox{%
\begin{tikzpicture}[font=\footnotesize]
\node[copier] (c) at (0,0) {};
\node[copier] (c2) at (0.4,0) {};
\draw (c) to[out=15,in=-90] +(0.5,0.45);
\draw (c) to[out=165,in=-90] +(-0.4,0.45);
\draw (c) to +(0,-0.4);
\draw (c2) to[out=15,in=-90] +(0.4,0.45);
\draw (c2) to[out=165,in=-90] +(-0.5,0.45);
\draw (c2) to +(0,-0.4);
\node at (-0.2,-0.3) {$X$};
\node at (0.6,-0.3) {$Y$};
\end{tikzpicture}}}
\qquad\quad
\vcenter{\hbox{%
\begin{tikzpicture}[font=\footnotesize]
\node[copier] (c) at (0,0.4) {};
\draw (c)
to[out=15,in=-90] (0.25,0.7);
\draw (c)
to[out=165,in=-90] (-0.25,0.7);
\draw (c) to (0,0);
\node at (0.2,0.1) {$I$};
\end{tikzpicture}}}
=\;
\vcenter{\hbox{%
\begin{tikzpicture}[font=\footnotesize]
\node at (0.2,-0.05) {};
\draw [gray,dashed] (0,0) rectangle (0.45,0.75);
\end{tikzpicture}}}
\tag{QCD2}
\ee
\be
\label{eq:lastsetformarkovcatidentities}
\vcenter{\hbox{%
\begin{tikzpicture}[font=\small]
\node[star] (s1) at (0,0.15) {};
\node[star] (s2) at (0,0.55) {};
\draw (0,-0.15) to (s1);
\draw (s1) to (s2);
\draw (s2) to (0,0.85);
\end{tikzpicture}}}
\quad=\quad
\vcenter{\hbox{%
\begin{tikzpicture}[font=\footnotesize]
\draw (0,-0.15) to (0,0.85);
\end{tikzpicture}}}
\qquad\qquad
\vcenter{\hbox{%
\begin{tikzpicture}[font=\footnotesize]
\node[star] (s) at (0,0) {};
\draw (s) to +(0,-0.5);
\draw (s) to +(0,0.5);
\node at (0.6,-0.3) {$X\otimes Y$};
\end{tikzpicture}}}
=
\vcenter{\hbox{%
\begin{tikzpicture}[font=\footnotesize]
\node[star] (s) at (0,0) {};
\node[star] (s2) at (0.6,0) {};
\draw (s) to +(0,-0.5);
\draw (s) to +(0,0.5);
\draw (s2) to +(0,-0.5);
\draw (s2) to +(0,0.5);
\node at (0.2,-0.3) {$X$};
\node at (0.8,-0.3) {$Y$};
\end{tikzpicture}}}
\qquad
\quad
\vcenter{\hbox{%
\begin{tikzpicture}[font=\footnotesize]
\node[discarder] (d) at (0,0) {};
\node at (0,0.6) {};
\node[star] (s) at (0,-0.3) {};
\draw (d) to (s);
\draw (s) to (0,-0.6);
\node at (0.25,-0.5) {$X$};
\end{tikzpicture}}}
=\;
\vcenter{\hbox{%
\begin{tikzpicture}[font=\footnotesize]
\node[discarder] (d) at (0,-0.2) {};
\node[star] (s) at (0,0.3) {};
\draw (0,-0.6) to (d);
\draw [gray,dashed] (d) to (s);
\draw [gray,dashed] (s) to (0,0.6);
\node at (0.25,-0.5) {$X$};
\end{tikzpicture}}}
\quad.
\tag{QCD3}
\ee
The morphisms $\id_{X}$, $\D_{X}$, and $!_{X}$ are declared to be even for all $X$. The involutions $*_{X}$ are declared to be odd.%
\footnote{A $\Z_{2}$-monoidal category has the \emph{property} that the grading is stable. In a quantum Markov category, the choice of a representative $*_{X}$ is additional \emph{structure}.}
The map $\D_{X}$ is sometimes called \define{copy} or \define{duplicate} and 
the map $!_{X}$ is sometimes called \define{delete} or \define{ground}. 
If there is a subcategory $\mC$ of $\mM_{\text{\Yinyang}}$ that is also a quantum CD category but satisfies, in addition, 
\be
\label{eq:classicalqmcaxiom}
\vcenter{\hbox{%
\begin{tikzpicture}[font=\footnotesize]
\node[copier] (c) at (0,0.3) {};
\node[star] (s) at (0,0) {};
\node at (-0.2,-0.15) {$X$};
\node at (-0.4,0.55) {$X$};
\node at (0.4,0.55) {$X$};
\draw (c)
to[out=15,in=-90] (0.25,0.7);
\draw (c)
to[out=165,in=-90] (-0.25,0.7);
\draw (c) to (s);
\draw (s) to (0,-0.3);
\end{tikzpicture}}}
\quad=\quad
\vcenter{\hbox{%
\begin{tikzpicture}[font=\footnotesize]
\node[copier] (c) at (0,0) {};
\node[star] (s1) at (-0.25,0.35) {};
\node[star] (s2) at (0.25,0.35) {};
\node at (-0.2,-0.15) {$X$};
\node at (-0.4,0.55) {$X$};
\node at (0.4,0.55) {$X$};
\draw (c)
to[out=15,in=-90] (s2);
\draw (c)
to[out=165,in=-90] (s1);
\draw (0,-0.3) to (c);
\draw (s1) to (-0.25,0.7);
\draw (s2) to (0.25,0.7);
\end{tikzpicture}}}
\qquad\forall\;X 
\tag{CD1}
\ee
then $\mC_{\mathrm{even}}$ is said to be a \define{classical CD subcategory} of $\mM_{\text{\Yinyang}}$. In general, a \define{classical CD category} is a symmetric monoidal category admitting all the structure above except that the grading is trivial (so that the involution is not present) and the commutativity axiom
\be
\label{eq:commutativity}
\vcenter{\hbox{%
\begin{tikzpicture}[font=\small]
\node[copier] (c) at (0,0.4) {};
\draw (c)
to[out=15,in=-90] (0.25,0.65)
to[out=90,in=-90] (-0.25,1.2);
\draw (c)
to[out=165,in=-90] (-0.25,0.65)
to[out=90,in=-90] (0.25,1.2);
\draw (c) to (0,0.1);
\end{tikzpicture}}}
\quad=\quad
\vcenter{\hbox{%
\begin{tikzpicture}[font=\small]
\node[copier] (c) at (0,0.4) {};
\draw (c)
to[out=15,in=-90] (0.25,0.7);
\draw (c)
to[out=165,in=-90] (-0.25,0.7);
\draw (c) to (0,0.1);
\end{tikzpicture}}}
\tag{CD2}
\ee
holds for all objects.  
A \define{quantum Markov category} is a quantum CD category in which 
every even (odd) morphism $X\xstoch{f}Y$ in $\mM_{\text{\Yinyang}}$ satisfies%
\footnote{This axiom is naturality (in the sense of natural transformations) with respect to the assignment that sends each $X$ to the morphism $!_{X}$~\cite[Equation~(2.5) in Definition~2.1]{Fr19}.
}
 the condition that the composite $X\xstoch{f}Y\xstoch{!_{Y}}I$ is equal to $X\xstoch{!_{X}}I$ ($X\xstoch{*_{X}}X\xstoch{!_{X}}I$). In pictures, 
\[
\vcenter{\hbox{%
\begin{tikzpicture}[font=\small]
\node[arrow box] (c) at (0,0) {$f$};
\node[discarder] (d) at (0,0.5) {};
\draw  (c) to (d);
\draw (c) to (0,-0.5);
\end{tikzpicture}}}
\;=\;
\vcenter{\hbox{%
\begin{tikzpicture}[font=\small]
\node[discarder] (d) at (0,0) {};
\node at (0,0.5) {};
\draw (d) to (0,-0.5);
\end{tikzpicture}}}
\qquad
\left(
\vcenter{\hbox{%
\begin{tikzpicture}[font=\small]
\node[arrow box] (c) at (0,0) {$f$};
\node[discarder] (d) at (0,0.5) {};
\draw  (c) to (d);
\draw (c) to (0,-0.5);
\end{tikzpicture}}}
\;=\;
\vcenter{\hbox{%
\begin{tikzpicture}[font=\small]
\node[discarder] (d) at (0,0.1) {};
\node at (0,0.5) {};
\node[star] (s) at (0,-0.2) {};
\draw (d) to (s);
\draw (s) to (0,-0.5);
\end{tikzpicture}}}
\right)
\quad.
\]
Morphisms $f$ satisfying this condition are called \define{unital}.%
\footnote{They are often called ``causal'' in the literature~\cite{KiUi19}. We have chosen to call these maps unital to avoid potential confusion with Definition~\ref{defn:causal}.}
A \define{classical Markov category} is a classical CD category for which every morphism is unital. 
\end{defn}

\vspace{-1mm}
\begin{rmk}[Terminology of CD and Markov categories]{rmk:CDMarkovcausal}
The terminology `Markov category' was first used by Fritz~\cite{Fr19}. The terminology `CD category' was used earlier by Cho--Jacobs, which is also where the axioms were first provided~\cite{ChJa18}. The only distinction between the two is whether grounding is natural for every morphism. 
Similar, though not quite the same, structure appeared earlier in works of Carboni~\cite{Ca87} and Golubtsov~\cite{Go02}. A more thorough historical account can be found in~\cite{Fr19}. 
In this paper, we mostly focus on Markov categories, though on occasion we provide digressions on what happens when unitality is dropped. We also prefer the terminology `Markov category' because this sounds more appropriate for our generalization to the non-commutative context.%
\footnote{In quantum mechanics, the operation copy (C) is not a quantum operation. Hence, if we used `non-commutative CD categories' or `quantum CD categories' in this work, this might cause some speculation from the quantum information community (cf.\ Example~\ref{ex:CPUnotQMC} and Theorem~\ref{thm:nocloning}).} 
\end{rmk}

\vspace{-1mm}
\begin{rmk}[Classical Markov categories and naturality of the swap map]{a004}
The usual commutativity axiom~(\ref{eq:commutativity})
of a classical CD category is a consequence of the axioms of a quantum CD category and~(\ref{eq:classicalqmcaxiom}). This follows from 
\[
\vcenter{\hbox{%
\begin{tikzpicture}[font=\small]
\node[copier] (c) at (0,0.4) {};
\draw (c)
to[out=15,in=-90] (0.25,0.65)
to[out=90,in=-90] (-0.25,1.2);
\draw (c)
to[out=165,in=-90] (-0.25,0.65)
to[out=90,in=-90] (0.25,1.2);
\draw (c) to (0,0.1);
\end{tikzpicture}}}
\quad
\overset{*^2=\id}{=\joinrel=\joinrel=}
\quad
\vcenter{\hbox{%
\begin{tikzpicture}[font=\small]
\node[star] (s1) at (0,-0.2) {};
\node[star] (s2) at (0,0.1) {};
\node[copier] (c) at (0,0.4) {};
\draw (c)
to[out=15,in=-90] (0.25,0.65)
to[out=90,in=-90] (-0.25,1.2);
\draw (c)
to[out=165,in=-90] (-0.25,0.65)
to[out=90,in=-90] (0.25,1.2);
\draw (s2) to (c);
\draw (s1) to (s2);
\draw (0,-0.4) to (s1);
\end{tikzpicture}}}
\quad
\overset{\text{(\ref{eq:classicalqmcaxiom})}}{=\joinrel=\joinrel=}
\quad
\vcenter{\hbox{%
\begin{tikzpicture}[font=\small]
\node (s1) at (0,-0.3) {};
\node[star] (s2) at (0,0.1) {};
\node[copier] (c) at (0,0.4) {};
\node[star] (sR) at (0.25,0.7) {};
\node[star] (sL) at (-0.25,0.7) {};
\draw (c) to[out=15,in=-90] (sR);
\draw (sR) to[out=90,in=-90] (-0.25,1.4);
\draw (c) to[out=165,in=-90] (sL);
\draw (sL) to[out=90,in=-90] (0.25,1.4);
\draw (s2) to (c);
\draw (s1) to (s2);
\end{tikzpicture}}}
\quad
\overset{\text{(\ref{eq:markovcatfirstconditions})}}{=\joinrel=\joinrel=\joinrel=}
\quad
\vcenter{\hbox{%
\begin{tikzpicture}[font=\footnotesize]
\node[copier] (c) at (0,0.3) {};
\node[star] (s) at (0,0) {};
\node[star] (sb) at (0,-0.3) {};
\draw (c)
to[out=15,in=-90] (0.25,0.8);
\draw (c)
to[out=165,in=-90] (-0.25,0.8);
\draw (c) to (s);
\draw (s) to (sb);
\draw (sb) to (0,-0.6);
\end{tikzpicture}}}
\quad
\overset{*^2=\id}{=\joinrel=\joinrel=}
\quad
\vcenter{\hbox{%
\begin{tikzpicture}[font=\small]
\node[copier] (c) at (0,0.4) {};
\draw (c)
to[out=15,in=-90] (0.25,0.7);
\draw (c)
to[out=165,in=-90] (-0.25,0.7);
\draw (c) to (0,0.1);
\end{tikzpicture}}}
\quad.
\]
Conversely, (\ref{eq:markovcatfirstconditions}), (\ref{eq:lastsetformarkovcatidentities}), and (\ref{eq:commutativity}) imply (\ref{eq:classicalqmcaxiom}). 
\end{rmk}

\vspace{-1mm}
\begin{rmk}[The grading combines in composition and restricts the tensor product]{a005}
The choice of a functor $\mM_{\text{\Yinyang}}\to\mathbb{B}\Z_{2}$ means that the composite of two morphisms of parities $p_{1}$ and $p_{2}$ is of parity $(p_{1}+p_{2})\mod2.$ 
Pre- or post-composing with $*$ sets up two bijections 
$\mC_{\mathrm{even}}(X,Y)\to\mC_{\mathrm{odd}}(X,Y)$. 
The distinction between even and odd morphisms seems like it might make it a bit awkward for string diagram computations. However, we will see that all string diagram computations will be done in a manner where they pass a ``horizontal line test,'' namely where the morphisms at any height in the string diagram will always have the same degree. Also note that we have to keep track of $*_{I}$ in computations, especially whenever we pull $*_{X}$ through $!_{X}$ as in the last identity in~(\ref{eq:lastsetformarkovcatidentities}). Fortunately, this will rarely show up in string-diagrammatic computations (an exception is Remark~\ref{rmk:CJdisintcontinued}).  
\end{rmk}

The reason to include the odd involution $*$ is to generalize the computations from ordinary Markov categories and classical probability theory~\cite{Fr19,ChJa18} to categories of quantum probability (cf.\ Example~\ref{exa:fdcalgmarkovcat} below). To see this, we first review the classical example. 

\begin{exa}[Stochastic matrices ($\FinStoch$) and Markov kernels ($\Stoch$)]{ex:FinStoch}
Our main examples of classical Markov categories are $\FinStoch$ and $\Stoch$. An object of $\FinStoch$ is a finite set. A morphism from $X$ to $Y$ is a Markov kernel/stochastic map/conditional probability from $X$ to $Y$. Such a morphism assigns to each element $x\in X$ a probability measure on $Y$. Composition is defined by the Chapman--Kolmogorov equation (i.e.\ summing over all intermediaries). The tensor product is the cartesian product of sets and the product of Markov kernels for morphisms. The tensor unit is the single element set, often denoted by $\{\bullet\}$. The maps $\D_{X}$ and $!_{X}$ are given by $\D_{X}(x):=(x,x)$ and $!_{X}(x)=\bullet$ for all $x\in X$. 
Notice that axiom~(\ref{eq:commutativity}) in Definition~\ref{defn:qmc} holds.
See Section~\ref{sec:whatisbayes} above, \cite[Example~2.5]{Fr19}, and \cite[Section~2.1]{PaRu19} for more details.
One can also drop the condition that a morphism sends each point to a probability measure and instead associate to each point a signed (finite) measure whose total value is $1$. The resulting category is also a classical Markov category (see Example~11.27 in~\cite{Fr19}). One can also weaken this by dropping the condition that the total measure is $1$, and one ends up with a classical CD category, which we denote by $\FinMeas$. Such morphisms are called \define{transition maps/matrices}. The subcategory where the measures are non-negative, denoted by $\FinMeasP$, is also a classical Markov category. The category $\Stoch$ is a generalization of $\FinStoch$ to measurable spaces (see~\cite[Section~4]{Fr19} for details). 
\end{exa}

\vspace{-1mm}
\begin{exa}[Linear and conjugate-linear maps ($\fdCAlgUY^{\op}$)]{exa:fdcalgmarkovcat}
Our main example of a quantum Markov category is $\fdCAlgUY^{\op}$. The objects here are finite-dimensional unital $C^*$-algebras (see~\cite[Section~2.3]{Pa17} for a review of $C^*$-algebras). Henceforth, all $C^*$-algebras will be assumed unital. Every such finite-dimensional $C^*$-algebra is $^*$-isomorphic to a finite direct sum of (square) matrix algebras~\cite[Theorem~5.5]{Fa01}. A matrix algebra will be written as $\mathcal{M}_{n}(\C)$ indicating the $C^*$-algebra of complex $n\times n$ matrices. 
A morphism from $\mA$ to $\mB$ in $\fdCAlgUY^{\op}$ is \emph{either} a linear or \emph{conjugate-linear} unital map%
\footnote{Capital letters for the morphisms will often be used when they describe morphisms between $C^*$-algebras (cf.\ Convention~\ref{conv:translating}).}
 $F:\mB\stoch\mA$ (linear maps are declared even and conjugate-linear maps are declared odd).%
\footnote{The subscript \Yinyang\, is used as a reminder that both linear (yang) and conjugate-linear (yin) maps are included. Dropping the subscript will mean taking all even (linear) morphisms.}
 Notice that the function goes backwards because of the superscript ${}^\op$ (in the physics literature, this convention is known as the \emph{Heisenberg picture}). The tensor product (over $\C$) is the tensor product of finite-dimensional $C^*$-algebras. For example, 
\[
\left(\bigoplus_{x\in X}\mathcal{M}_{m_{x}}(\C)\right)\otimes\left(\bigoplus_{y\in Y}\mathcal{M}_{n_{y}}(\C)\right)\cong\bigoplus_{x,y}\Big(\mathcal{M}_{m_{x}}(\C)\otimes\mathcal{M}_{n_{y}}(\C)\Big), 
\]
where $X$ and $Y$ are finite sets labelling the matrix factors. 
The tensor product for morphisms is defined when both are linear or conjugate-linear (cf.\ Example~\ref{ex:linearandantilineartensor}). 
The $*$ operation is the involution on a $C^*$-algebra, which is conjugate-linear (this shows the grading is stable). 
If $\mB\xstoch{F}\mA$ is linear (conjugate-linear), then $F\circ *$ is conjugate-linear (linear) since $(F\circ*)(\l b)=F(\ov\l b^*)=\ov\l F(b^*)=\ov\l(F\circ*)(b)$ (and similarly if $F$ is conjugate-linear). 
We will ignore associators and unitors in what follows. This is permissible thanks to Mac~Lane's coherence theorem~\cite{Ma63}. We define the copy map $\D_{\mA}$ from $\mA$ to $\mA\otimes\mA$ in $\fdCAlgUY^{\op}$ to be the multiplication map determined on elementary tensors by%
\footnote{The multiplication map of a $C^*$-algebra is defined as a bilinear map $\mA\times\mA\to\mA$. The fact that it extends to a linear map $\mA\otimes\mA\stoch\mA$ follows from the universal property of the tensor product, which is valid for finite-dimensional $C^*$-algebras. Throughout this article, we will assume this without always explicitly saying so. See Remark~\ref{ex:allCAlg} for the subtleties that occur for possibly infinite-dimensional $C^*$-algebras.}
\[
\vcenter{\hbox{%
\begin{tikzpicture}[font=\small,scale=1.5]
\node[copier] (c) at (0,0.5) {};
\draw (c)
to[out=15,in=-90] (0.44,1.0);
\draw (c)
to[out=165,in=-90] (-0.44,1.0);
\draw (c) to (0,0);
\node at (-0.6,0.9) {$\mA$};
\node at (0.6,0.9) {$\mA$};
\node at (0.2,0.1) {$\mA$};
\end{tikzpicture}}}
\quad\equiv\quad
\xy0;/r.25pc/:
(0,6.5)*+{\mA\otimes\mA}="AA";
(0,-6.5)*+{\mA}="A";
{\ar@{~>}"AA";"A"^{\mu_{\mA}}};
\endxy
\quad
\xy0;/r.25pc/:
(0,6.5)*+{\ni}="AA";
(0,-6.5)*+{\ni}="A";
\endxy
\xy0;/r.25pc/:
(0,6.5)*+{A\otimes A'}="AA";
(0,-6.5)*+{AA'}="A";
{\ar@{|->}"AA";"A"};
\endxy
\]
in $\fdCAlgUY$. The map $\mu_{\mA}$ is linear and unital, but it is not a $^*$-homomorphism unless $\mA$ is commutative. In fact, $\mu_{A}$ is not even positive in general (cf.\ Example~\ref{ex:CPUnotQMC}).
Nevertheless, it is coherent with the involution $*$ (in the sense of the last identity in (\ref{eq:markovcatfirstconditions})) because $(A_{1}A_{2})^*=A_{2}^*A_{1}^*$ for all $A_{1},A_{2}\in\mA$. 
Finally, the discard map $!_{\mA}:\mA\to\C$ in $\fdCAlgUY^{\op}$ is defined to be the unit inclusion map
\[
\vcenter{\hbox{%
\begin{tikzpicture}[font=\small,scale=1.5]
\node[discarder] (d) at (0,0.25) {};
\draw (d) to (0,-0.45);
\node at (0.2,-0.35) {$\mA$};
\node at (0,0.6) {};
\end{tikzpicture}}}
\quad\equiv\quad
\xy0;/r.25pc/:
(0,6.5)*+{\C}="C";
(0,-6.5)*+{\mA}="A";
{\ar"C";"A"^{!_{\mA}}};
\endxy
\quad
\xy0;/r.25pc/:
(0,6.5)*+{\ni}="AA";
(0,-6.5)*+{\ni}="A";
\endxy
\xy0;/r.25pc/:
(0,6.5)*+{\l}="AA";
(0,-6.5)*+{\l1_{\mA}}="A";
{\ar@{|->}"AA";"A"};
\endxy
\]
in $\fdCAlgUY$. 
Here are some of the conditions of a quantum Markov category and their corresponding expressions in terms of these morphisms: 
\[
\vcenter{\hbox{%
\begin{tikzpicture}[font=\small]
\node[copier] (c) at (0,0) {};
\coordinate (x1) at (-0.3,0.3);
\node[discarder] (d) at (x1) {};
\coordinate (x2) at (0.3,0.5);
\draw (c) to[out=165,in=-90] (x1);
\draw (c) to[out=15,in=-90] (x2);
\draw (c) to (0,-0.3);
\end{tikzpicture}}}
\quad=\quad
\vcenter{\hbox{%
\begin{tikzpicture}[font=\small]
\draw (0,0) to (0,0.8);
\end{tikzpicture}}}
\quad=\quad
\vcenter{\hbox{%
\begin{tikzpicture}[font=\small]
\node[copier] (c) at (0,0) {};
\coordinate (x1) at (-0.3,0.5);
\coordinate (x2) at (0.3,0.3);
\node[discarder] (d) at (x2) {};
\draw (c) to[out=165,in=-90] (x1);
\draw (c) to[out=15,in=-90] (x2);
\draw (c) to (0,-0.3);
\end{tikzpicture}}}
\qquad\iff\qquad
1_{\mA}A=A=A1_{\mA}
\qquad\forall\;A\in\mA, 
\]
\[
\vcenter{\hbox{%
\begin{tikzpicture}[font=\small]
\node[copier] (c) at (0,0.4) {};
\draw (c)
to[out=15,in=-90] (0.25,0.7);
\draw (c)
to[out=165,in=-90] (-0.25,0.7);
\draw (c) to (0,0);
\node at (0.6,0.1) {$\mA\otimes\mB$};
\end{tikzpicture}}}
=
\vcenter{\hbox{%
\begin{tikzpicture}[font=\small]
\node[copier] (c) at (0,0) {};
\node[copier] (c2) at (0.4,0) {};
\draw (c) to[out=15,in=-90] +(0.5,0.45);
\draw (c) to[out=165,in=-90] +(-0.4,0.45);
\draw (c) to +(0,-0.4);
\draw (c2) to[out=15,in=-90] +(0.4,0.45);
\draw (c2) to[out=165,in=-90] +(-0.5,0.45);
\draw (c2) to +(0,-0.4);
\node at (-0.2,-0.3) {$\mA$};
\node at (0.6,-0.3) {$\mB$};
\end{tikzpicture}}}
\iff
(A\otimes B)(A'\otimes B')=(AA')\otimes(BB')
\quad\forall\;A,A'\in\mA,\;B,B'\in\mB,
\]
and
\[
\vcenter{\hbox{%
\begin{tikzpicture}[font=\small]
\node[discarder] (d) at (0,0) {};
\node at (0,0.6) {};
\node[star] (s) at (0,-0.3) {};
\draw (d) to (s);
\draw (s) to (0,-0.6);
\node at (0.25,-0.45) {$\mA$};
\end{tikzpicture}}}
=\;
\vcenter{\hbox{%
\begin{tikzpicture}[font=\small]
\node[discarder] (d) at (0,-0.2) {};
\node[star] (s) at (0,0.3) {};
\draw (0,-0.6) to (d);
\draw [gray,dashed] (d) to (s);
\draw [gray,dashed] (s) to (0,0.6);
\node at (0.25,-0.45) {$\mA$};
\end{tikzpicture}}}
\qquad\iff\qquad
(\l1_{\mA})^*=\ov\l1_{\mA}
\qquad\forall\;\l\in\C.
\]
One can check that the rest of the axioms of a quantum Markov category are satisfied for $\fdCAlgUY^{\op}$. In fact, the larger category where we drop the unit-preserving assumption on the morphisms is a quantum CD category. In this paper, we will denote this latter category by $\fdCAlgY^{\op}$. 
We will be lax with our notation and from now on not distinguish between the category $\fdCAlgUY$ and its opposite. When we refer to $\fdCAlgY$ (or $\fdCAlgUY$) as a quantum CD (Markov) category, we will always mean its opposite (though all explicit algebraic calculations will be done in $\fdCAlgY$). In all the string diagrams that appear, the only difference is that we will compose from the top to the bottom of the page (rather than from the bottom to the top).
\end{exa}

\vspace{-1mm}
\begin{exa}[Unitality of morphisms]{a006}
It follows from the axioms in Definition~\ref{defn:qmc} that $\id_{X},$ $\D_{X}$, $!_{X}$, and $*_{X}$ are automatically unital for all $X$. 
A morphism in any of the categories of finite sets together with morphisms that associate to each point a signed measure is unital iff the total measure associated to each point is $1$. 
A morphism $F:\mB\stoch\mA$ in any of the categories of $C^*$-algebras we have introduced is unital if and only if it is unital in the usual sense, i.e.\ $F(1_{\mB})=1_{\mA}$.
\end{exa}

We first recall a few definitions. 

\vspace{-1mm}
\begin{defn}[Positive, Schwarz positive, and completely positive maps]{defn:positivemaps}
An element $C$ of a $C^*$-algebra $\mA$ is \define{positive}, written $C\ge0$, iff it equals $A^*A$ for some $A\in\mA.$ A linear map $F:\mB\stoch\mA$ is \define{positive} iff it sends positive elements to positive elements. Let $\fdCAlgPU$ denote the subcategory of $\fdCAlgUY$ consisting of the same objects as $\fdCAlgUY$ but the morphisms are only the positive unital (PU) maps. A linear map $F:\mB\stoch\mA$ is \define{Schwarz positive} (SP) iff it satisfies $F(B^*B)\ge\lVert F(1_{\mB})\rVert\; F(B)^*F(B)$ for all $B\in\mB$. Let $\fdCAlgSPU$ denote the subcategory of $\fdCAlgUY$ consisting of the same objects as $\fdCAlgUY$ but the morphisms are only all the Schwarz positive unital (SPU) maps. A linear map $F:\mB\stoch\mA$ is \define{$n$-positive} iff $\id_{\mathcal{M}_{n}(\C)}\otimes F:\mathcal{M}_{n}(\C)\otimes\mB\stoch\mathcal{M}_{n}(\C)\otimes\mA$ is positive. A linear map $F$ is \define{completely positive} (CP) iff $F$ is $n$-positive for all $n\in\N.$ If $V\in\mA$, let $\Ad_{V}:\mA\stoch\mA$ be the CP map sending $A\in\mA$ to $VAV^{*}$. Let $\fdCAlgCPU$ denote the subcategory of $\fdCAlgUY$ consisting of the same objects as $\fdCAlgUY$ but the morphisms are only the CP unital (CPU) maps. Dropping $\mathbf{U}$ from the notation will be used to denote analogous categories where the morphisms are not necessarily unital (recall, \emph{all} $C^*$-algebras are unital here). 
\end{defn}

\vspace{-1mm}
\begin{rmk}[There is no quantum Markov category of all $C^*$-algebras]{ex:allCAlg}
The previous example \emph{cannot} be generalized to the infinite-dimensional setting. 
Although there are many $C^*$-norms endowing the category of $C^*$-algebras and $*$-homomorphisms with a monoidal structure~\cite{Wa10}, there is no $C^*$-norm endowing the category of $C^*$-algebras and bounded linear maps with a monoidal structure~\cite{Ok70}. However, there is one if one uses completely bounded linear maps~\cite{BlPa91}. Unfortunately, there is no $C^*$-norm for which the bilinear multiplication map $\mB\times\mB\stoch\mB$ on every $C^*$-algebra $\mB$ induces a bounded linear map $\mu_{\mB}:\mB\otimes\mB\stoch\mB$ for which $\mu_{\mB}\left(\sum_{i=1}^{n}B_{i}\otimes B_{i}'\right)=\sum_{i=1}^{n}B_{i}B_{i}'$ on all such finite sums.%
\footnote{This problem cannot be fixed by using von~Neumann algebras. In fact, the situation for von~Neumann algebras is worse because the bilinear multiplication map is not even (jointly) weakly continuous~\cite[Lecture~5]{Lu11}.}

To see this explicitly, for $n\in\N$, set $\mA_{n}:=\mM_{n}(\C)$, and let $m_{n}:\mA_{n}\times\mA_{n}\stoch\mA_{n}$ denote the bilinear multiplication map and let $\mu_{n}:\mA_{n}\otimes\mA_{n}\stoch\mA_{n}$ denote the associated linear map. We will first show that $\lVert m_{n}\rVert\le1$ and $\lVert \mu_{n}\rVert\ge n$ for all $n\in\N$. The first fact follows from one of the Banach algebra axioms, which states $\lVert A_{1}A_{2}\rVert\le\lVert A_{1}\rVert\lVert A_{2}\rVert$ for all $A_{1},A_{2}\in\mA_{n}$. Indeed, since the norm on $\mA_{1}\times\mA_{2}$ is given by $\lVert (A_{1},A_{2})\rVert:=\sup\{\lVert A_{1}\rVert,\lVert A_{2}\rVert\}$ and the operator norm is given by (see~\cite[Section~5.1]{Fo99})
\[
\lVert m_{n}\rVert=\sup\big\{\lVert m_{n}(A_{1},A_{2})\rVert\;:\;\lVert (A_{1},A_{2})\rVert=1,\;A_{1},A_{2}\in\mA\big\}, 
\]
one has 
\[
\lVert A_{1}A_{2}\rVert
\le\lVert A_{1}\rVert\lVert A_{2}\rVert
\le1
=\lVert(A_{1},A_{2})\rVert
\]
for all $A_{1},A_{2}\in\mA_{n}$ such that $\lVert (A_{1},A_{2})\rVert=1$. 

Meanwhile, for each $n\in\N$ and $i,j\in\{1,\dots,n\}$, let $E_{ij}^{(n)}$ denote the $ij$-th matrix unit in $\mM_{n}(\C)$. Namely, the $kl$-th entries are $(E_{ij}^{(n)})_{kl}:=\de_{ik}\de_{jl}$. Set
\[
A^{(n)}:=\sum_{i=1}^{n}E^{(n)}_{1i}\otimes E^{(n)}_{i1}
\equiv\begin{bmatrix}E^{(n)}_{11}&\cdots&E^{(n)}_{n1}\\0&\cdots&0\\\vdots&&\vdots\\0&\cdots&0\end{bmatrix},
\]
where we have used the usual isomorphism $\mM_{n}(\C)\otimes\mM_{n}(\C)\cong\mM_{n^2}(\C)$ to relate the tensor product expression to another matrix.
By the definition of the matrix operator norm, $\lVert A^{(n)}\rVert=1$ for all $n\in\N$. This can be obtained by computing 
\[
\begin{split}
(A^{(n)})^{\dag}A^{(n)}
&=\sum_{i,j=1}^{n}(E^{(n)}_{1i}\otimes E^{(n)}_{i1})^{\dag}(E^{(n)}_{1j}\otimes E^{(n)}_{j1})
=\sum_{i,j=1}^{n}E^{(n)}_{ij}\otimes \big(\de_{ij}E^{(n)}_{11}\big)\\
&=\sum_{i=1}^{n}E^{(n)}_{ii}\otimes E^{(n)}_{11}
=\begin{bmatrix}E^{(n)}_{11}&&0\\&\ddots&\\0&&E^{(n)}_{11}\end{bmatrix}
\end{split}
\] 
and taking the square-root of the largest eigenvalue, which is $1$ (we are using the $C^*$-identity here to compute the norm in this way). 

The multiplication map $\mu_{n}:\mA_{n}\otimes\mA_{n}\stoch\mA_{n}$ is determined by how it acts on the basis $E^{(n)}_{ij}\otimes E^{(n)}_{kl}$, which is just $\mu_{n}(E^{(n)}_{ij}\otimes E^{(n)}_{kl})=E^{(n)}_{ij}E^{(n)}_{kl}=\de_{jk}E^{(n)}_{il}$ for all $i,j,k,l\in\{1,\dots,n\}$. Using this, we immediately obtain
\[
\mu_{n}\left(A^{(n)}\right)
=\sum_{i=1}^{n}\mu_{\mA}\left(E^{(n)}_{1i}\otimes E^{(n)}_{i1}\right)
=\sum_{i=1}^{n}E^{(n)}_{11}
=nE^{(n)}_{11}
\equiv\begin{bmatrix}n&0&\cdots&0\\0&0&\cdots&0\\\vdots&\vdots&&\vdots\\0&0&\cdots&0\end{bmatrix},
\]
where the $n$ is actually the number $n$ inside the first \emph{entry} (and the matrix is zero everywhere else). Thus, $\big\lVert \mu_{n}(A^{(n)})\big\rVert=n$ so that $\lVert\mu_{n}\rVert\ge n$. 

One can then extrapolate this construction to $\mA:=\mB(\Hi)$, where $\Hi$ is a separable Hilbert space. One chooses an orthonormal basis $\{e_{i}\}_{i\in\N}$ for $\Hi$ and defines a family of matrix units $\{E_{ij}\in\mB(\Hi)\}_{i,j\in\N}$ via the same formula as in the matrix case, namely $\<e_{k},E_{ij} e_{l}\>=\de_{ik}\de_{jl}$, where $\<\;\cdot\;,\;\cdot\;\>$ denotes the inner product on $\Hi$. From this, one can define a sequence of operators 
\[
\N\ni n\mapsto A^{(n)}:=\sum_{i=1}^{n}E_{1i}\otimes E_{i1}\in\mA\odot\mA,
\]
where $\mA\odot\mA$ is the algebraic tensor product of unital $*$-algebras. 
To compute the norm of this sequence, we first need to choose a $C^*$-norm on $\mA\odot\mA$ (see~\cite[Appendix B]{RaWi98} and~\cite{Wa10} for details and definitions). 
There exists a \emph{spatial/minimal $C^*$-norm} $\lVert\;\cdot\;\rVert_{\s}$ and a \emph{projective/maximal $C^*$-norm} $\lVert\;\cdot\;\rVert_{\mathrm{max}}$ satisfying the condition that given any other $C^*$-norm $\lVert\;\cdot\;\rVert_{\g}$, 
\[
\lVert A\rVert_{\s}\le\lVert A\rVert_{\g}\le\lVert A\rVert_{\mathrm{max}}\qquad\forall\;A\in\mA\odot\mA. 
\]
Since every $C^*$-norm is majorized by the maximal norm $\lVert\;\cdot\;\rVert_{\mathrm{max}}$, proving $\lVert A^{(n)}\rVert_{\mathrm{max}}\le 1$ for all $n\in\N$ will imply $\lVert A^{(n)}\rVert_{\g}\le 1$ for every $C^*$-norm $\lVert\;\cdot\;\rVert_{\g}$. To prove $\lVert A^{(n)}\rVert_{\mathrm{max}}\le 1$, it helps to recall the useful fact that if $\mB$ is a $C^*$-algebra, then 
\[
\lVert B\rVert\le 1 \quad \text{ if and only if }
\begin{bmatrix}1_{\mB}&B\\B^{*}&1_{\mB}\end{bmatrix}\ge0
\]
in $\mM_{2}(\mB)$~\cite[Lemma~3.1]{Pa02}, where $\mM_{2}(\mB)$ is the $C^*$-algebra of $2\times2$ matrices with entries in $\mB$. Thus, it suffices to prove $\left[\begin{smallmatrix}1&A^{(n)}\\(A^{(n)})^{\dag}&1\end{smallmatrix}\right]\ge0$ (in this case, $\mB=\mA\otimes_{\mathrm{max}}\mA$), where $1:=1_{\mA\odot\mA}$. But this is immediate because
\[
\begin{bmatrix}1&A^{(n)}\\(A^{(n)})^{\dag}&1\end{bmatrix}
=
\begin{bmatrix}1&A^{(n)}\\0&1-\big(\sum_{i=1}^{n}E_{ii}\big)\otimes E_{11}\end{bmatrix}^{\dag}\begin{bmatrix}1&A^{(n)}\\0&1-\big(\sum_{j=1}^{n}E_{jj}\big)\otimes E_{11}\end{bmatrix},
\]
as a quick calculation will show. Thus, $\lVert A^{(n)}\rVert_{\mathrm{max}}\le1$ for all $n\in\N$. 

Now, suppose that $\mu_{\mA}:\mA\odot\mA\stoch\mA$ is the linear map for which 
\[
\mu_{\mA}\left(\sum_{i=1}^{n}A_{i}\otimes A_{i}'\right)=\sum_{i=1}^{n}A_{i}A_{i}'
\]
for arbitrary elements in $A_{i},A'_{i}\in\mA$ (the linear map is uniquely determined by these assignments due to the universal property of the \emph{algebraic} tensor product). Then by the previous analysis, the existence of the sequence $\N\ni n\mapsto A^{(n)}$ shows that 
\[
\lVert\mu_{\mA}\rVert:=\sup\big\{\lVert\mu_{\mA}(A)\rVert\;:\;A\in\mA\odot\mA \text{ such that } \lVert A\rVert_{\mathrm{max}}\le1\big\}=\infty,
\]
since $\mu_{\mA}(A^{(n)})=nE_{11}$ has norm $n$ with respect to the usual (operator) norm on $\mA=\mB(\Hi)$. Hence, $\mu_{\mA}$ cannot be continuous. 

Therefore, there is no quantum Markov category whose objects consist of all $C^*$-algebras. These matters are discussed further in Question~\ref{ques:multilinearCAlg}. 
\end{rmk}

The next few examples provide important subcategories of $\fdCAlgY$ that are \emph{neither} quantum \emph{nor} classical Markov categories. Nevertheless, they are the main categories of interest here and the fact that they \emph{embed} into a quantum Markov category allows for a string diagrammatic calculus to be used. 

\begin{exa}[The hierarchy of positivity for maps of $C^*$-algebras]{exa:positivehierarchy}
The categories introduced in Definition~\ref{defn:positivemaps} have the following hierarchical structure
\[
\fdCAlgCPU\subset\fdCAlgSPU\subset\fdCAlgPU\subset\fdCAlgUY, 
\]
where every inclusion is strict (even when restricted to finite-dimensional algebras).
The inclusion $\fdCAlgCPU\subset\fdCAlgSPU$ follows from the fact that every 2-positive unital map $F:\mB\stoch\mA$ (and hence every CPU map) satisfies the \define{Kadison--Schwarz (KS) inequality} 
\[
F(B)^*F(B)\le F(B^*B)\qquad\forall\;B\in\mB, 
\]
which is exactly the Schwarz-positivity condition specialized to the case where $F$ is unital. A proof of this can be found in \cite[Proposition~6]{Ma10} (see~\cite{Ka52} and~\cite{Da57} for the original references).%
\footnote{In support of the views expressed in Question~\ref{ques:multilinearCAlg}, even in the infinite-dimensional setting of all $C^*$-algebras, the conditions on the morphisms in the appropriate analogues of the categories here are strong enough to imply that they are automatically continuous maps with respect to the norm topologies. This is because the operator norm of any positive map $F:\mB\stoch\mA$ is given by $\lVert F(1_{\mB})\rVert_{\mA}$, where $\lVert\;\cdot\;\rVert_{\mA}$ denotes the $C^*$-algebra norm on $\mA$.}
Examples instantiating the strict inclusions above are well known, and special instances will be recalled when needed in this work. 
\end{exa}

\begin{exa}[Quantum channels do not form a quantum Markov category, but...]{ex:CPUnotQMC}
The subcategory $\fdCAlgCPU$ of $\fdCAlgUY$ is not a quantum Markov category because there is no CPU map $\mA\otimes\mA\stoch\mA$ satisfying the conditions of Definition~\ref{defn:qmc} (not even in finite dimensions). In fact, a version of the no-cloning (no-broadcasting) theorem states that a CPU map 
$\mu_{\mA}:\mA\otimes\mA\stoch\mA$ satisfying the first condition in~(\ref{eq:markovcatfirstconditions}), i.e.\ $\mu_{\mA}(1_{\mA}\otimes A)=A=\mu_{\mA}(A\otimes1_{\mA})$ for all $A\in\mA$, exists if and only if $\mA$ is commutative (cf.~\cite[Theorem~6]{Ma10}). 
We will prove a related no-broadcasting theorem in the abstract setting in Theorem~\ref{thm:nocloning}. In fact, this result holds true for SPU maps as well, and the standard proofs (such as \cite[Theorem~6]{Ma10}) actually only use the Kadison--Schwarz inequality. Hence, $\fdCAlgSPU$ is not a quantum Markov category. 
Nevertheless, these categories \emph{embed} into a quantum Markov category, and the morphisms in the ambient quantum Markov category can be used to formulate many notions such as a.e.\ equivalence, disintegrations, Bayesian inversion, conditionals, modularity, determinism, and so on. We will discuss all of these ideas in subsequent sections.
\end{exa}

We now introduce a few properties that we wish to distinguish for certain morphisms in quantum Markov categories. The first is the notion of a $*$-preserving morphism, which plays an essential role as a symmetry that allows for a more robust string-diagrammatic calculus in quantum theory. 

\begin{defn}[$*$-preserving morphisms]{defn:selfadjointmorphism}
Let $\mM_{\text{\Yinyang}}$ be a quantum Markov category. A morphism $X\xstoch{f}Y$ in $\mM_{\text{\Yinyang}}$ is said to be \define{$*$-preserving} iff $f$ is natural with respect to $*$,%
\footnote{This word `natural' is meant in the categorical sense. The assignment $*$ assigns to each object $X$ a morphism $*_{X}$. This assignment is natural (in the sense of natural transformations) precisely for morphisms that are $*$-preserving.}
meaning
\[
\vcenter{\hbox{%
\begin{tikzpicture}[font=\small]
\node[star] (s) at (0,0) {};
\node at (-0.2,-0.25) {\footnotesize$X$};
\node[arrow box] (f) at (0,0.75) {$f$};
\node at (-0.2,1.40) {\footnotesize$Y$};
\draw (0,-0.4) to (s);
\draw (s) to (f);
\draw (f) to (0,1.55);
\end{tikzpicture}}}
\qquad=\qquad
\vcenter{\hbox{%
\begin{tikzpicture}[font=\small]
\node[arrow box] (f) at (0,0.4) {$f$};
\node at (-0.2,-0.25) {\footnotesize$X$};
\node[star] (s) at (0,1.15) {};
\node at (-0.2,1.40) {\footnotesize$Y$};
\draw (0,-0.4) to (f);
\draw (f) to (s);
\draw (s) to (0,1.55);
\end{tikzpicture}}}
\qquad,\qquad\text{i.e.\ }\quad
f\circ*_{X}=*_{Y}\circ f.
\]
The monoidal%
\footnote{Monoidality follows from the coherence between the tensor and $*$ in Definition~\ref{defn:qmc}.}
 subcategory of all even $*$-preserving morphisms of $\mM_{\text{\Yinyang}}$ will be denoted by $\mM_{*}$. Any subcategory whose morphisms are $*$-preserving (but not necessarily even!) will be called a \define{$*$-preserving subcategory}. 
\end{defn}

\begin{exa}[$*$-preserving morphisms in $\fdCAlgY$]{a007}
A morphism (odd or even!) in $\fdCAlgY$ is $*$-preserving if and only if it takes self-adjoint elements to self-adjoint elements. In particular, every positive map is $*$-preserving~\cite{St13}.%
\footnote{In the literature, the terminology ``self-adjoint'' is often used for what we call ``$*$-preserving.''}
\end{exa}

\begin{exa}[Copy is not generally $*$-preserving]{a008}
In a quantum Markov category $\mM_{\text{\Yinyang}}$, copy $\D$ is $*$-preserving if and only if $\mM_{\mathrm{even}}$ is a classical Markov category.
\end{exa}

\begin{defn}[Deterministic morphisms]{eq:deterministicmap}
An even%
\footnote{A different definition is made for odd morphisms. In this case, there is an additional swapping that needs to be applied. Since we will not be using these morphisms here, we leave this definition out.}
 morphism $X\xstoch{f}Y$ in a quantum Markov category $\mM_{\text{\Yinyang}}$ is called \define{deterministic} iff $f$ is $*$-preserving and natural with respect to $\D$, the latter of which means
\[
\vcenter{\hbox{%
\begin{tikzpicture}[font=\small]
\coordinate (p) at (0,-1.0);
\node at (-0.2,-0.85) {\footnotesize$X$};
\node[arrow box] (f) at (0,-0.3) {$f$};
\node[copier] (copier) at (0,0.3) {};
\coordinate (X) at (-0.5,0.91);
\node at (-0.7,0.75) {\footnotesize$Y$};
\coordinate (X2) at (0.5,0.91);
\node at (0.7,0.75) {\footnotesize$Y$};
\draw (p) to (f);
\draw (f) to (copier);
\draw (copier) to[out=165,in=-90] (X);
\draw (copier) to[out=15,in=-90] (X2);
\end{tikzpicture}}}
\quad
=
\quad
\vcenter{\hbox{%
\begin{tikzpicture}[font=\small]
\coordinate (p) at (0,-0.3);
\node at (-0.2,-0.15) {\footnotesize$X$};
\node[copier] (copier) at (0,0.3) {};
\node[arrow box] (f) at (-0.5,0.95) {$f$};
\node[arrow box] (e) at (0.5,0.95) {$f$};
\coordinate (X) at (-0.5,1.7);
\node at (-0.7,1.55) {\footnotesize$Y$};
\coordinate (Y) at (0.5,1.7);
\node at (0.7,1.55) {\footnotesize$Y$};
\draw (p) to (copier);
\draw (copier) to[out=165,in=-90] (f);
\draw (f) to (X);
\draw (copier) to[out=15,in=-90] (e);
\draw (e) to (Y);
\end{tikzpicture}}}
\quad, 
\quad\text{i.e.\ }\quad
\D_{Y}\circ f=(f\otimes f)\circ\D_{X}.
\]
The subcategory of $\mM_{\text{\Yinyang}}$ consisting of all (even) deterministic morphisms is denoted by $\mM_{\mathrm{det}}$. 
\end{defn}

\begin{rmk}[Tensor and determinism]{rmk:deterministic}
In a quantum Markov category $\mM_{\text{\Yinyang}}$, 
the tensor product of two deterministic maps is deterministic. 
This follows from naturality of the braiding, the definition of determinism, the third identity in~(\ref{eq:markovcatsecondconditions}), and the second identity in~(\ref{eq:lastsetformarkovcatidentities}) in Definition~\ref{defn:qmc}.  
Thus, $\mM_{\mathrm{det}}$ is a symmetric monoidal subcategory of $\mM_{*}$, and hence $\mM_{\text{\Yinyang}}$. 
\end{rmk}

\begin{exa}[Deterministic maps in $\FinStoch$, $\Stoch$, and $\fdCAlgU^{\op}$]{exa:deterministicmaps}
In $\FinStoch$, deterministic maps correspond to functions, assignments where the measures associated to points are Dirac measures~\cite[Theorems~2.82 and~2.85]{Pa17}.
In $\Stoch$, deterministic maps correspond to $\{0,1\}$-valued measures (cf.\ \cite[Example~10.4]{Fr19}). 
In $\fdCAlgU^{\op}$, deterministic maps correspond to $^*$-homomorphisms. Indeed, if $\mB\xstoch{F}\mA$ is a linear unital map of $C^*$-algebras, then 
the $*$-preserving condition says $F(B^*)=F(B)^*$ for all $B\in\mB$ and naturality with respect to $\D$ says $F(BB')=F(B)F(B')$ for all $B,B'\in\mB$.%
\footnote{Although $\fdCAlgCPU^{\op}$ is not a quantum Markov category, since every $*$-homomorphism is automatically CPU, the categories $\fdCAlgCPU^{\op}_{\mathrm{det}}$ and $\fdCAlgU^{\op}_{\mathrm{det}}$ are equal.}
The relationship to this definition of determinism is closely related to the partial trace from quantum mechanics by using the Hilbert--Schmidt inner product. The reader is referred to Example~2.11, Lemma~2.12, and Remark~2.14 in~\cite{PaEntropy} for further details.
In any case, a concrete example illustrating the type of determinism used here is provided in Example~\ref{exa:QEC} in the context of quantum error correcting codes.
\end{exa}

On occasion, we will make use of special morphisms in a quantum Markov category, special instances of which may be called states. We also make use of a dual concept in a quantum CD category known as an effect. They may also add clarity when interpreting certain results. 

\vspace{-1mm}
\begin{defn}[States and effects]{defn:state}
Let $\mathcal{M}_{\text{\Yinyang}}$ be a quantum CD category and let $\mC$ be an even%
\footnote{It is not necessary to restrict to even morphisms. This is done only for simplicity.}
 subcategory of $\mathcal{M}_{\text{\Yinyang}}$. A \define{state} on an object $X$ in $\mC$ is a morphism $I\xstoch{p}X$ in $\mC$. Dually, an \define{effect} on an object $X$ in $\mC$ is a morphism $X\xstoch{\varphi}I$ in $\mC$.%
\footnote{If we had restricted ourselves to quantum \emph{Markov} categories, each object would have a unique effect, namely the delete/grounding map. Also note that our usage of states and effects are far more general than the standard definitions. We hope this extension of terminology is not too confusing.}
A state $p$ and an effect $\varphi$ will be drawn in string-diagrammatic notation as 
\[
\vcenter{\hbox{%
\begin{tikzpicture}[font=\small]
\node[state] (p) at (0,0) {$p$};
\node at (0.25,0.7) {\footnotesize$X$};
\node (X) at (0,1.0) {};
\draw (p) to (X);
\end{tikzpicture}}}
\qquad\text{ and }\qquad
\vcenter{\hbox{%
\begin{tikzpicture}[font=\small]
\node[effect] (e) at (0,0.8) {$\varphi$};
\node at (0.25,0.1) {\footnotesize$X$};
\node (X) at (0,-0.2) {};
\draw (X) to (e);
\end{tikzpicture}}}
\qquad,
\]
respectively. 
\end{defn}

This definition of states is a bit too general for a proper interpretation in terms of probability theory (both commutative and non-commutative) as the following example illustrates. Nevertheless, we will solve this issue later by working with states in particular subcategories of a quantum CD category that have the appropriate interpretations. 

\vspace{-1mm}
\begin{exa}[States and effects in our main examples]{exa:states}
In $\FinMeas$, a state on $X$ is a (signed) transition matrix $\{\bullet\}\xstoch{p}X$, which defines a signed measure on $X$. An effect is a transition matrix $X\xstoch{\varphi}\{\bullet\}$, which associates to each $x\in X$ a signed measure $\varphi_{x}$ on $\{\bullet\}$. But a measure on $\{\bullet\}$ is uniquely determined by its value on $\bullet$, which is a real number. Hence, $\varphi$ defines a real-valued function on $X$. 

As a special case, in $\FinMeasP$, a state on $X$ defines a (non-negative) measure on $X$ and an effect corresponds to a $(0,\infty)$-valued function on $X$. As an even more special case, in $\FinStoch$, a state defines a probability measure on $X$ and  there is only a single effect because $\varphi_{x}$ must be a probability measure on $\{\bullet\}$, of which there only exists one. In $\FinStoch$, a deterministic state on $X$ is a Dirac delta measure at some point $x\in X$. 

In $\fdCAlg$, a state on $\mA$ is a linear functional $\mA\xstoch{\omega}\C$. An effect is a linear map $\C\xstoch{A}\mA$, which uniquely determines (and is uniquely determined by) an element $A\in\mA$ by taking the image of the element $1\in\C$ (hence the notation). 

As a special case, in $\fdCAlg_{*}$ (the subcategory of $\fdCAlgY$ consisting of even $*$-preserving morphisms), a state is a $*$-preserving linear functional, while an effect $\C\stoch\mA$ corresponds to a \emph{self-adjoint} element of $\mA$. Indeed, let us temporarily denote this map by $\C\xstoch{\chi}\mA$ and set $A:=\chi(1)$. Then $A=\chi(1)=\chi\big(\overline{1}\big)=\chi(1)^{*}=A^*$. As a second special case, in $\fdCAlgP$, a state on $\mA$ is a positive linear functional on $\mA$ and an effect on $\mA$ corresponds to a positive element in $\mA$. Finally, in $\fdCAlgPU$, and equivalently in $\fdCAlgCPU$ by~\cite[Theorem~3]{St55}, a state is a positive unital functional, while there is only a unique effect on every algebra (this latter fact is true even if the positivity assumption is dropped---only unitality is needed). The terminology \emph{state} in the $C^*$-algebra context will always mean PU (and hence CPU). Otherwise, we will use the terminology \emph{linear functional}. 

In general, deterministic states do not exist in $\fdCAlgCPU$. For example, the matrix algebra $\mM_{m}(\C)$ has no deterministic states unless $m=1$. In fact, if $Y$ is a finite set and $\mB:=\bigoplus_{y\in Y}\mM_{n_{y}}(\C)$ is a direct sum of matrix algebras, then $\mB$ has no deterministic states unless at least one of the $n_{y}$ satisfies $n_{y}>1$. This follows from the classification of $*$-homomorphisms between finite-dimensional $C^*$-algebras~\cite{Fa01} and is related to the lack of hard evidence in quantum mechanics~\cite{PaJeffrey}. 
\end{exa}

\begin{conv}[Translating from abstract definitions to $C^*$-algebras]{conv:translating}
The reader may have already noticed some slight notational differences used between the definitions presented and the notation used for $C^*$-algebras. For example, a morphism $X\xstoch{f}Y$ in an arbitrary quantum CD category (or the examples of measurable spaces and transition kernels) is almost always written as $\mB\xstoch{F}\mA$ when instantiated in $\fdCAlgY$. More generally, objects $X,Y,Z$ will be replaced by $\mA,\mB,\mC$. Occasionally, $\Theta$ will be used to represent an object that one interprets as a parameter space in the classical setting. In the quantum setting, $\Theta$ will often be replaced with $\mC$ (since a special case often considered is $\C$). Morphisms in the general setting will often be written with the letters $f,g,h,k$, and are replaced by $F,G,H,K$ when viewed as linear maps between $C^*$-algebras. Note also that the directionality of the arrows changes in $C^*$-algebras since we work exclusively in the Heisenberg picture in all subcategories of $\fdCAlgY$ (cf.\ Convention~\ref{conv:directionality}). States $p,q,r:I\stoch X$ will always be written as $\omega,\xi,\zeta:\mA\stoch\C$, respectively, in $\fdCAlgY$.
\end{conv}

\begin{ques}[Involutive categories]{ques:involutivecategories}
It seems possible that one could have also provided an alternative string-diagrammatic framework for capturing $C^*$-algebras, their (linear) multiplication map, as well as the conjugate-linear involution $*$ by using involutive monoids~\cite{Ja12} (see also~\cite{Ya20}).%
\footnote{I thank Aaron Fenyes for suggesting this alternative. I also thank two anonymous referees for additional insight, for bringing my attention to~\cite{Ja12}, and for asking the questions mentioned here.}
Associated to every $C^*$-algebra $\mA$ is a \define{conjugate} $C^*$-algebra $\overline{\mA}$ whose structure is the same as that of $\mA$ (in particular, $\ov{\mA}=\mA$ as a set) except that the scalar multiplication $\C\times\overline{\mA}\to\overline{\mA}$ is defined by $\l\cdot A:=\overline{\l}A$ for all $\l\in\C$ and $A\in\ov{A}$. In this way, the involution $*$ on $\mA$ can be viewed as a \emph{linear} map $\ov{\mA}\to\mA$ (or as $\mA\to\ov{\mA}$). 
Thus, one can view $\mA$ and its associated structure, including the involution, as an involutive monoid in some involutive symmetric monoidal category (see~\cite{Ja12} for definitions and notation). 

More generally, given the category $\mN$ of involutive monoids in an involutive symmetric monoidal category, can one construct an associated quantum Markov category? Conversely, given a quantum Markov category, do the objects form involutive monoids in any sense? 
\end{ques}

\begin{ques}[Are multicategories more suitable for non-commutative probability]{ques:multilinearCAlg}
Remark~\ref{ex:allCAlg} is evidence that the quantum Markov category framework presented here is not robust enough for infinite-dimensional non-commutative probability and statistics. Nevertheless, the reader will notice that almost none of the arguments presented below seem to depend on finite-dimensionality.%
\footnote{A large effort has been made to avoid any structure specific to finite dimensions in order that many definitions and theorems described here apply to arbitrary $C^*$-algebras.}
Therefore, one might suspect that the interpretation of the string diagrams in terms of tensor products is the main issue. Another setting where string diagrams appear outside the context of monoidal categories is in \emph{multicategories}. This leads to the following question. Is there a more faithful representation of non-commutative probability and statistics by formulating the notion of a Markov {\emph{multi}category}? The reason for the potential viability of this point of view is because the multiplication map $m_{\mA}:\mA\times\mA\stoch\mA$ is a bounded \emph{multilinear} map, which need not extend to a bounded \emph{linear} map on any tensor product completion (cf.\ Remark~\ref{ex:allCAlg}). 
\end{ques}

\section[S-Positive subcategories]{a009}
\label{sec:pos}
\vspace{-12mm}
\noindent
\begin{tikzpicture}
\coordinate (L) at (-8.75,0);
\coordinate (R) at (8.75,0);
\draw[line width=2pt,orange!20] (L) -- node[anchor=center,rectangle,fill=orange!20]{\strut \Large \textcolor{black}{\textbf{4\;\; S-Positive subcategories}}} (R);
\end{tikzpicture}

In this section, we describe an axiom recently introduced by Fritz for certain subcategories of quantum Markov categories~\cite[Definition~11.22]{Fr19}. This axiom is used to prove several theorems about (classical) sufficient statistics, and it is of great interest to adapt this definition to the quantum setting in hopes that abstract graphical reasoning may be used to prove quantum analogues of these statements. Here, we prove that $\fdCAlgSPU$ is a positive subcategory of $\fdCAlgUY$ (Theorem~\ref{thm:fdcalgcpupositive}). We also prove that $\fdCAlgPU$ is \emph{not} a positive subcategory of $\fdCAlgUY$ (Example~\ref{ex:2positiveisFritzpositive}), emphasizing the importance of the Kadison--Schwarz inequality. Although the copy map appears in the definition of positivity, it is not necessary for the copy map to be a morphism in the positive subcategory. In fact, if it is, then this imposes severe restrictions on the subcategory, which we explain in Theorem~\ref{thm:nocloning}. These restrictions are analogous to the no-cloning (technically no-broadcasting) theorem from quantum information theory. 

\vspace{-1mm}
\begin{defn}[S-positive subcategory]{defn:positivecategory}
Let $\mathcal{M}_{\text{\Yinyang}}$ be a quantum Markov category. A subcategory%
\footnote{The subcategory here need not be a classical, nor a quantum, Markov category. It need not even be a monoidal subcategory. 
However, demanding morphisms to be even is essential, since otherwise the diagram would not be well-defined.} 
$\mC\subseteq\mathcal{M}$ (recall, $\mM=\mathcal{M}_{\mathrm{even}}$) is said to be \define{S-positive}%
\footnote{The reason we have chosen to call this S-positivity as opposed to just `positivity' will be discussed extensively in detail in this section. The S may stand for `strong' though it may also stand for `Schwarz' (see Question~\ref{ques:Spositivesubcats} for a potentially good reason for the latter terminology).}
iff 
for every pair of composable morphisms $X\xstoch{f}Y\xstoch{g}Z$ in $\mC$ such that $g\circ f$ is deterministic, the equality%
\footnote{One could have also written a similar, a-priori inequivalent, equation reflected across the vertical line passing through the copy map. This equation would read $(\id_{Y}\times g)\circ\Delta_{Y}\circ f=(f\times (g\circ f))\circ\Delta_{X}$. The two conditions are equivalent in $*$-preserving subcategories, though we leave the proof of this to Section~\ref{sec:ae}, specifically Lemma~\ref{prop:hfpkgs}. In general, one should be cautious to distinguish between \emph{left} and \emph{right} S-positive subcategories if they are not necessarily $*$-preserving. We avoid this in the present section because we feel it would be a bit too distracting now, and we will come back to these distinctions in later sections. For the rest of this paper, S-positivity will always refer to the diagram condition above exactly as drawn here.}
\[
\vcenter{\hbox{
\begin{tikzpicture}[font=\footnotesize]
\node[arrow box] (f) at (0,-0.3) {$f$};
\node[copier] (copier) at (0,0.3) {};
\node[arrow box] (g) at (-0.5,0.95) {$g$};
\node at (-0.7,1.45) {$Z$};
\node at (0.7,1.45) {$Y$};
\node at (-0.2,-0.75) {$X$};
\coordinate (X) at (0.5,1.6);
\coordinate (Y) at (-0.5,1.6);
\draw (0,-0.9) to (f);
\draw (f) to (copier);
\draw (copier) to[out=30,in=-90] (X);
\draw (copier) to[out=165,in=-90] (g);
\draw (g) to (Y);
\end{tikzpicture}}}
\quad=\quad
\vcenter{\hbox{%
\begin{tikzpicture}[font=\footnotesize]
\coordinate (q) at (0,-0.2) {};
\node[copier] (copier) at (0,0.3) {};
\node[arrow box] (g2) at (-0.5,0.95) {$f$};
\node[arrow box] (e2) at (-0.5,1.75) {$g$};
\node[arrow box] (g) at (0.5,0.95) {$f$};
\node at (-0.7,2.25) {$Z$};
\node at (0.7,2.25) {$Y$};
\node at (-0.2,-0.05) {$X$};
\coordinate (X) at (-0.5,2.4);
\coordinate (Y) at (0.5,2.4);
\draw (q) to (copier);
\draw (copier) to[out=165,in=-90] (g2);
\draw (g2) to (e2);
\draw (e2) to (X);
\draw (copier) to[out=15,in=-90] (g);
\draw (g) to (Y);
\end{tikzpicture}}}
\]
must also hold. 
\end{defn}

Any subcategory of even deterministic morphisms is S-positive. The fact that $\FinStoch$ is S-positive was proved in~\cite[Example~11.25]{Fr19} (in fact, this was proved for the larger category of Markov kernels between measurable spaces). Here, we prove a non-commutative version of this result. 

\begin{theo}[$\fdCAlgSPU$ is S-positive]{thm:fdcalgcpupositive}
The category $\fdCAlgCPU$ is an S-positive subcategory of $\fdCAlgUY$. In fact, $\CAlgSPU$ 
is S-positive. 
\end{theo}

To prove this theorem, we recall an important fact regarding multiplicative properties of SPU maps. This fact and variants of it will be used on multiple occasions throughout this work.

\begin{lem}[The (standard) Multiplication Lemma]{lem:multiplicationtheorem}
Let $\mB\xstoch{F}\mA$ be an SPU map between $C^*$-algebras. Suppose that 
$F(B^*B)=F(B)^*F(B)$ for some $B\in\mB$. Then
\[
F(B^*C)=F(B)^*F(C)
\quad\text{ and }\quad
F(C^*B)=F(C)^*F(B)
\qquad\forall\;C\in\mB. 
\]
\end{lem}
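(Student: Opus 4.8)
The plan is to reduce everything to the Kadison--Schwarz inequality and then run a polarization argument on an operator-valued sesquilinear form. Since $F$ is unital we have $\lVert F(1_{\mB})\rVert = 1$, so the Schwarz-positivity hypothesis reads precisely $F(B^*B) \ge F(B)^*F(B)$ for all $B \in \mB$. First I would record two preliminary facts: because $F(B^*B) \ge F(B)^*F(B) \ge 0$ and every positive element of $\mB$ has the form $B^*B$, the map $F$ is positive, and hence $*$-preserving, i.e.\ $F(X^*) = F(X)^*$ for all $X \in \mB$.

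Next I would introduce the $\mA$-valued form $D(X,Y) := F(X^*Y) - F(X)^*F(Y)$. The Kadison--Schwarz inequality says exactly that $D(X,X) \ge 0$ for every $X$, while $*$-preservation together with $(X^*Y)^* = Y^*X$ gives the Hermitian symmetry $D(Y,X) = D(X,Y)^*$. The hypothesis of the lemma is that $D(B,B) = 0$, and the two desired conclusions are $D(B,C) = 0$ and $D(C,B) = 0$; by the Hermitian symmetry these are equivalent, so it suffices to prove $D(B,C) = 0$ for every $C \in \mB$.

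The heart of the argument is to apply the positivity $D(\,\cdot\,,\,\cdot\,) \ge 0$ to the element $B + \lambda C$ for an arbitrary scalar $\lambda \in \C$. Expanding $(B+\lambda C)^*(B+\lambda C)$ and $F(B+\lambda C)^*F(B+\lambda C)$ and using $D(B,B) = 0$ yields, after writing $S := D(B,C)$ and $T := D(C,C) \ge 0$, the operator inequality $\lambda S + \bar\lambda S^* + |\lambda|^2 T \ge 0$ for all $\lambda$. Setting $\lambda = r\mu$ with $|\mu| = 1$, dividing by $r > 0$, and letting $r \to 0^+$ gives $\mu S + \bar\mu S^* \ge 0$ for every unit scalar $\mu$. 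Specializing to $\mu = \pm 1$ forces $S + S^* = 0$, and to $\mu = \pm i$ forces $S - S^* = 0$, whence $S = 0$. This is $F(B^*C) = F(B)^*F(C)$, and the Hermitian symmetry then immediately yields $F(C^*B) = F(C)^*F(B)$.

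The step I expect to be the main obstacle is the final one: because $D$ takes values in the (noncommutative) algebra $\mA$, I cannot simply quote the scalar Cauchy--Schwarz bound $\lvert D(B,C)\rvert^2 \le D(B,B)\,D(C,C)$. The scaling-and-limit trick above is what replaces it, and it is essential there that I only use the \emph{order} structure of $\mA$ (positivity of $\mu S + \bar\mu S^*$) rather than any $2\times 2$ matrix amplification, since the latter would require $2$-positivity and hence would not be available for a general Schwarz-positive map.
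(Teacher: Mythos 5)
Your proof is correct and takes essentially the same route as the paper: the paper proves this lemma by deferring to the relative Multiplication Lemma (Lemma~\ref{lem:relmulttheorem}), whose proof is exactly your argument in the special case $\xi=\id$ --- a first-order Kadison--Schwarz perturbation of $B+\lambda C$, killing the quadratic term by scaling $\lambda\to 0$, and then phase rotations ($\pm1,\pm i$) to isolate the cross term. Your packaging via the sesquilinear form $D(X,Y)=F(X^*Y)-F(X)^*F(Y)$ with complex $\lambda=r\mu$ is merely a tidier formulation of the identical idea, with the same (correct) care taken to use only the order structure of $\mA$ and avoid any $2$-positivity.
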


\bprf[Proof of Lemma~\ref{lem:multiplicationtheorem}]
See~\cite[Theorem~4]{Ma10} or the more general result that we will prove later (Lemma~\ref{lem:relmulttheorem}). 
\eprf

\bprf[Proof of Theorem~\ref{thm:fdcalgcpupositive}]
Let $\mC\xstoch{G}\mB\xstoch{F}\mA$ be a pair of composable SPU maps of $C^*$-algebras such that the composite $F\circ G$ is a $^*$-homomorphism. 
Then, 
\be
\begin{split}
F\big(G(C)^*G(C)\big)&\le F\big(G(C^*C)\big)\quad\text{ by Kadison--Schwarz for $G$}\\
&=F\big(G(C)\big)^*F\big(G(C)\big)\quad\text{ since $F\circ G$ is deterministic}\\
&\le F\big(G(C)^*G(C)\big)\quad\text{ by Kadison--Schwarz for $F$}
\end{split}
\ee
holds for all $C\in\mC$. Thus, all inequalities become equalities. In particular, 
\be
F\big(G(C)^*G(C)\big)=F\big(G(C)\big)^*F\big(G(C)\big)\qquad\forall\;C\in\mC.
\ee
By the Multiplicative Lemma (Lemma~\ref{lem:multiplicationtheorem}), this implies
\be
F\big(G(C)^*B\big)=F\big(G(C)\big)^*F(B)\qquad\forall\;C\in\mC,\;B\in\mB.
\ee
Since $F$ and $G$ are $*$-preserving and $*$ is an involution, this reproduces 
\be
F\big(G(C)B\big)=F\big(G(C)\big)F(B)\qquad\forall\;C\in\mC,\;B\in\mB, 
\ee
which is the S-positivity condition in Definition~\ref{defn:positivecategory}.
\eprf

\begin{exa}[$\fdCAlgU$ is not an S-positive subcategory of $\fdCAlgUY$]{a010}
Based on the fact that the category of finite sets together with morphisms assigning signed measures to points embeds fully and faithfully into the (opposite of the) category of finite-dimensional $C^*$-algebras together with linear unital maps, the latter is \emph{not} an S-positive subcategory of $\fdCAlgUY$. This follows immediately from the fact that $\FinStoch_{\pm}$, as defined in \cite[Example~11.27]{Fr19}, is not positive. 
\end{exa}

What is perhaps more surprising is the following result. 

\begin{exa}[Positive unital maps do not form an S-positive category]{ex:2positiveisFritzpositive} 
The subcategory $\fdCAlgPU$ of finite-dimensional $C^*$-algebras together with positive unital maps is \emph{not} an S-positive subcategory of $\fdCAlgUY$. To see this, take $\mA=\mB=\mathcal{M}_{n}(\C)$ (with $n\ge2$) and set $F:=T=:G$, where $T$ is the map that takes the transpose of matrices. This map is known to be positive and unital but not Schwarz-positive. Furthermore, $G\circ F=T^2=\id$, which is deterministic. Nevertheless, there exist matrices $A$ and $B$ such that 
\[
	\xy0;/r.25pc/:
	(0,15)*+{A\otimes B}="aob";
	(0,5)*+{A^{T}\otimes B}="aTob";
	(0,-5)*+{A^{T}B}="ab";
	(0,-15)*+{B^{T}A}="ba";
	{\ar@{|->}"aob";"aTob"};
	{\ar@{|->}"aTob";"ab"};
	{\ar@{|->}"ab";"ba"};
	\endxy
	\qquad
\vcenter{\hbox{
\begin{tikzpicture}[font=\footnotesize,scale=1.25]
\node[arrow box] (f) at (0,-0.3) {$F$};
\node[copier] (copier) at (0,0.3) {};
\node[arrow box] (g) at (-0.5,0.95) {$G$};
\coordinate (X) at (0.5,1.6);
\coordinate (Y) at (-0.5,1.6);
\draw (0,-0.9) to (f);
\draw (f) to (copier);
\draw (copier) to[out=15,in=-90] (X);
\draw (copier) to[out=165,in=-90] (g);
\draw (g) to (Y);
\end{tikzpicture}}}
\quad\ne\quad
\vcenter{\hbox{%
\begin{tikzpicture}[font=\footnotesize,scale=1.25]
\coordinate (q) at (0,-0.2) {};
\node[copier] (copier) at (0,0.3) {};
\node[arrow box] (g2) at (-0.5,0.95) {$F$};
\node[arrow box] (e2) at (-0.5,1.75) {$G$};
\node[arrow box] (g) at (0.5,0.95) {$F$};
\coordinate (X) at (-0.5,2.4);
\coordinate (Y) at (0.5,2.4);
\draw (q) to (copier);
\draw (copier) to[out=165,in=-90] (g2);
\draw (g2) to (e2);
\draw (e2) to (X);
\draw (copier) to[out=15,in=-90] (g);
\draw (g) to (Y);
\end{tikzpicture}}}
\qquad
	\xy0;/r.25pc/:
	(0,15)*+{A\otimes B}="aob";
	(0,5)*+{A^{T}\otimes B}="aTob";
	(0,-5)*+{A\otimes B^{T}}="ab";
	(0,-15)*+{AB^{T}}="ba";
	{\ar@{|->}"aob";"aTob"};
	{\ar@{|->}"aTob";"ab"};
	{\ar@{|->}"ab";"ba"};
	\endxy
\]
so that S-positivity of $\fdCAlgPU$ fails.%
\footnote{It is for this reason that we have not chosen to call the axiom in Definition~\ref{defn:positivecategory} positivity, since this might cause confusion amongst the entire field of operator algebras and quantum information theory.}
\end{exa}

\begin{ques}[What does a generic S-positive subcategory look like?]{ques:Spositivesubcats}
Theorem~\ref{thm:fdcalgcpupositive} and Example~\ref{ex:2positiveisFritzpositive} prompt the following question. Let $\mC$ be the largest $*$-preserving monoidal S-positive subcategory of $\fdCAlgUY$, i.e.\ for any other $*$-preserving monoidal S-positive subcategory $\mD$ of $\fdCAlgUY$, then $\mD\subseteq\mC$. Then does  $\mC=\fdCAlgCPU$? If so, this would provide an alternative characterization of CPU maps in $\fdCAlgUY$.%
\footnote{Current categorical characterizations of CPU maps include one due to Selinger~\cite{Se07} and another one due to Huot and Staton~\cite{HuSt19}.}
More generally, is the largest $*$-preserving S-positive subcategory of $\fdCAlgUY$ equal to $\fdCAlgSPU$? 
\end{ques}

\begin{ques}[A local positivity condition?]{ques:localpositivity}
Currently, the definition of S-positivity is a global condition in the sense that it involves a class of morphisms. 
For example, if for a given morphism $g$ there does not exist a morphism $f$ nor $h$ (with appropriate domains and codomains) such that $g\circ f$ or $h\circ g$ is deterministic, then this seems to suggest that adding the morphism $g$ to the subcategory causes the subcategory to remain S-positive even if the morphism $g$ might have no other good reason to be deemed positive (as in the examples of classical and quantum probability).
Is there a local definition of S-positivity specific to a given morphism rather than an entire subcategory? 
\end{ques}

\begin{ques}[Are S-positive subcategories $*$-preserving?]{ques:Spositivesubcatspreserving}
In this work, we will occasionally assume that our S-positive subcategories will also consist of $*$-preserving morphisms. It is, however, well-known that every positive map between $C^*$-algebras is automatically $*$-preserving. More generally, if $\mC$ is an S-positive subcategory of a quantum Markov category, does $\mC$ necessarily only contain $*$-preserving morphisms? 
\end{ques}

S-positive subcategories of quantum Markov categories have several useful properties. 

\begin{lem}[Invertible morphisms are deterministic in S-positive subcategories]{lem:positivityinvertibilityimpliesdeterminism}
Let $\mC$ be an S-positive subcategory of a quantum Markov category. 
Then every morphism in $\mC$ that has an inverse in $\mC$ is deterministic.
\end{lem}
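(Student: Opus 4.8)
The plan is to feed the composable pair $X\xstoch{f}Y\xstoch{f^{-1}}X$ into the S-positivity axiom (Definition~\ref{defn:positivecategory}) and then cancel the inverse. By hypothesis both $f$ and its two-sided inverse $f^{-1}$ lie in $\mC$, they are composable, and their composite $f^{-1}\circ f=\id_X$ is deterministic, since the identity is trivially $*$-preserving and natural with respect to $\D$. Hence the hypothesis of S-positivity is met with $g:=f^{-1}$, and the axiom yields
\[
(f^{-1}\otimes\id_Y)\circ\D_Y\circ f=\big((f^{-1}\circ f)\otimes f\big)\circ\D_X=(\id_X\otimes f)\circ\D_X
\]
in the ambient quantum Markov category. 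Note that $f,f^{-1},\id_Y$ and the copy maps are all even, so every tensor and composite written here is defined; recall that in a $\Z_2$-monoidal category the tensor is only partially defined (Definition~\ref{defn:gradedmonoidalcategories}), and evenness is exactly what makes these expressions legitimate.

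To extract naturality with respect to $\D$, I would post-compose both sides of the displayed identity with $f\otimes\id_Y$. Using the interchange law together with $f\circ f^{-1}=\id_Y$, one has $(f\otimes\id_Y)\circ(f^{-1}\otimes\id_Y)=(f\circ f^{-1})\otimes\id_Y=\id_{Y\otimes Y}$, so the left-hand side collapses to $\D_Y\circ f$. On the right-hand side, $(f\otimes\id_Y)\circ(\id_X\otimes f)=f\otimes f$ by interchange, so the right-hand side becomes $(f\otimes f)\circ\D_X$. Therefore $\D_Y\circ f=(f\otimes f)\circ\D_X$, which is precisely the $\D$-naturality clause of Definition~\ref{eq:deterministicmap}. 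The same argument applied to the pair $(f^{-1},f)$ shows that $f^{-1}$ is $\D$-natural too, which is convenient for compositional statements later on.

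It remains to verify the other half of determinism, that $f$ is $*$-preserving (Definition~\ref{defn:selfadjointmorphism}), and this is where I expect the real difficulty to lie. The S-positivity diagram constrains only how morphisms interact with the copy map $\D$ and never involves the involution $*$, so no manipulation of the identities above can by itself produce $f\circ*_X=*_Y\circ f$; $*$-preservation must be imported rather than derived. I would therefore obtain it from the standing hypothesis that the S-positive subcategories under consideration are $*$-preserving — a hypothesis that is vacuous in the motivating examples, since every positive map (in particular every SPU or CPU map) is automatically $*$-preserving (Example~\ref{a007}), and whose general necessity is exactly the content of the open Question~\ref{ques:Spositivesubcatspreserving}. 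Granting it, $f$ is both $*$-preserving and $\D$-natural, hence deterministic. In short, the only computational content is the two-line cancellation above; the genuine subtlety is recognizing that the $*$-preservation clause is supplied by the $*$-preserving assumption on $\mC$ and not by the positivity axiom itself.
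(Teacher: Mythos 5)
Your proof is correct and follows essentially the same route as the paper's: a single application of the S-positivity axiom to the pair $(f,f^{-1})$, whose composite $f^{-1}\circ f=\id_X$ is deterministic, followed by cancellation using $f\circ f^{-1}=\id_Y$ (the paper pre-inserts $f\circ f^{-1}$ into the diagram before invoking the axiom rather than post-composing with $f\otimes\id_Y$ as you do, but the computation is the same). Your explicit handling of the $*$-preservation clause is in fact more careful than the paper, whose proof establishes only naturality with respect to $\D$ and leaves the $*$-clause implicit; your reading---that it must be imported from a $*$-preserving hypothesis on $\mC$, vacuous for $\fdCAlgSPU$ and $\fdCAlgCPU$ since positive maps are automatically $*$-preserving, and open in general per Question~\ref{ques:Spositivesubcatspreserving}---is consistent with the paper's own discussion preceding that question.
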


This is proved in \cite[Remark~11.28]{Fr19}, but we include the short proof because determinism (and a.e.\ determinism) will play an important role in this work and because this result is a precursor to one of our main results on Bayesian inversion (Theorem~\ref{thm:aemodbayesdisint}). 

\bprf
Let $X\xstoch{f}Y$ be a morphism in $\mC$ for which there exists a morphism $Y\xstoch{g}X$ in $\mC$ such that $f\circ g=\id_{Y}$. Then
\be
\vcenter{\hbox{%
\begin{tikzpicture}[font=\small]
\coordinate (p) at (0,-1.0);
\node[arrow box] (f) at (0,-0.3) {$f$};
\node[copier] (copier) at (0,0.3) {};
\coordinate (X) at (-0.5,0.91);
\coordinate (X2) at (0.5,0.91);
\draw (p) to (f);
\draw (f) to (copier);
\draw (copier) to[out=165,in=-90] (X);
\draw (copier) to[out=15,in=-90] (X2);
\end{tikzpicture}}}
\quad
\overset{f\circ g=\id_{Y}}{=\joinrel=\joinrel=\joinrel=\joinrel=}
\quad
\vcenter{\hbox{%
\begin{tikzpicture}[font=\small]
\coordinate (p) at (0,-1.0);
\node[arrow box] (f) at (0,-0.3) {$f$};
\node[copier] (copier) at (0,0.3) {};
\node[arrow box] (g2) at (-0.5,0.95) {$g$};
\node[arrow box] (e2) at (-0.5,1.75) {$f$};
\coordinate (X) at (-0.5,2.3);
\coordinate (Y) at (0.5,2.3);
\coordinate (Ypre) at (0.5,1.3);
\draw (p) to (f);
\draw (f) to (copier);
\draw (copier) to[out=165,in=-90] (g2);
\draw (g2) to (e2);
\draw (e2) to (X);
\draw (copier) to[out=15,in=-90] (Ypre);
\draw (Ypre) to (Y);
\end{tikzpicture}}}
\quad
\overset{\text{Defn~\ref{defn:positivecategory}}}{=\joinrel=\joinrel=\joinrel=\joinrel=}
\quad
\vcenter{\hbox{%
\begin{tikzpicture}[font=\small]
\coordinate (p) at (0,-0.3);
\node[copier] (copier) at (0,0.3) {};
\node[arrow box] (f) at (-0.5,0.95) {$f$};
\node[arrow box] (e) at (0.5,0.95) {$f$};
\node[arrow box] (g) at (-0.5,1.75) {$g$};
\node[arrow box] (h) at (-0.5,2.55) {$f$};
\coordinate (X) at (-0.5,3.1);
\coordinate (Y) at (0.5,3.1);
\draw (p) to (copier);
\draw (copier) to[out=165,in=-90] (f);
\draw (f) to (g);
\draw (g) to (h);
\draw (h) to (X);
\draw (copier) to[out=15,in=-90] (e);
\draw (e) to (Y);
\end{tikzpicture}}}
\quad
\overset{f\circ g=\id_{Y}}{=\joinrel=\joinrel=\joinrel=\joinrel=}
\quad
\vcenter{\hbox{%
\begin{tikzpicture}[font=\small]
\coordinate (p) at (0,-0.3);
\node[copier] (copier) at (0,0.3) {};
\node[arrow box] (f) at (-0.5,0.95) {$f$};
\node[arrow box] (e) at (0.5,0.95) {$f$};
\coordinate (X) at (-0.5,1.7);
\coordinate (Y) at (0.5,1.7);
\draw (p) to (copier);
\draw (copier) to[out=165,in=-90] (f);
\draw (f) to (X);
\draw (copier) to[out=15,in=-90] (e);
\draw (e) to (Y);
\end{tikzpicture}}}
\quad,
\ee
where S-positivity applies because $g\circ f=\id_{X}$ is deterministic. 
\eprf

The interpretation of Lemma~\ref{lem:positivityinvertibilityimpliesdeterminism} is that if one has an invertible stochastic map (whose inverse is also stochastic), then the stochastic map was in fact deterministic to begin with. Based on our proof that $\fdCAlgSPU$ is S-positive, this shows that the same is true in quantum theory.

\begin{cor}[Invertible SPU maps are deterministic]{cor:positivityinvertibilityimpliesdeterminism}
Every invertible morphism in $\fdCAlgSPU$ is deterministic. 
\end{cor}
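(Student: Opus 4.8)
The plan is to recognize this corollary as a direct specialization of the abstract Lemma~\ref{lem:positivityinvertibilityimpliesdeterminism} to a concrete category, so that essentially no new work is required. First I would invoke the fact that $\fdCAlgUY$ is a quantum Markov category (Example~\ref{exa:fdcalgmarkovcat}) together with the fact that $\fdCAlgSPU$ is an S-positive subcategory of it, which is precisely the content of Theorem~\ref{thm:fdcalgcpupositive} (whose ``in fact'' clause already subsumes the finite-dimensional SPU case). With these two ingredients in hand, the hypotheses of Lemma~\ref{lem:positivityinvertibilityimpliesdeterminism} are met verbatim: an invertible morphism of $\fdCAlgSPU$ is, by definition, an SPU map $F$ admitting a two-sided inverse $G$ that is again a morphism of $\fdCAlgSPU$, so in particular $F\circ G=\id$ and $G\circ F=\id$, and the lemma then delivers determinism of $F$ at once.

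For the benefit of readers less fluent in the diagrammatic language, I would then translate the conclusion into operator-algebraic terms. Under the dictionary of Example~\ref{exa:deterministicmaps}, the even deterministic morphisms of $\fdCAlgU^{\op}$ are exactly the unital $*$-homomorphisms; hence the substantive assertion is that an SPU map between finite-dimensional $C^*$-algebras which possesses an SPU inverse is automatically a unital $*$-isomorphism. Spelling this out makes the payload of the abstract statement transparent.

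There is no genuine obstacle at the level of the corollary itself, since all of the mathematical content has already been discharged upstream: partly in the purely diagrammatic manipulation of Lemma~\ref{lem:positivityinvertibilityimpliesdeterminism}, and partly in Theorem~\ref{thm:fdcalgcpupositive}, where the Kadison--Schwarz inequality is squeezed between two applications (once for $G$ and once for $F$) to force the chain of inequalities to collapse into equalities, after which the Multiplication Lemma (Lemma~\ref{lem:multiplicationtheorem}) upgrades equality on squares to full multiplicativity. The only point deserving a moment's attention is that ``invertible'' must be read as ``two-sided invertible within $\fdCAlgSPU$,'' so that the inverse matches the morphism $g$ used in the lemma's proof; once this reading is fixed, the deduction reduces to citing the two earlier results.
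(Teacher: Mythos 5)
Your proposal is correct and follows exactly the paper's own proof, which simply combines Lemma~\ref{lem:positivityinvertibilityimpliesdeterminism} with Theorem~\ref{thm:fdcalgcpupositive}; the extra translation into the statement that an SPU map with an SPU inverse is a unital $*$-isomorphism is a faithful unpacking via Example~\ref{exa:deterministicmaps}, not a deviation. Your reading of ``invertible'' as two-sided invertible within $\fdCAlgSPU$ matches the hypothesis of the lemma as stated and used in the paper.
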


\bprf
Combine Lemma~\ref{lem:positivityinvertibilityimpliesdeterminism} with Theorem~\ref{thm:fdcalgcpupositive}. 
\eprf

Corollary~\ref{cor:positivityinvertibilityimpliesdeterminism} is a fairly well-known result in quantum information theory~\cite{Attal,Ma10} (and is originally due to Choi~\cite{Ch72}), though its proof is quite different from the standard proofs of its classical analogue~\cite{DiRh14}. The quantum Markov category approach shows that these two facts are now consequences of the same axiom and have exactly the same diagrammatic proof. 

\begin{rmk}[The transpose map is invertible but not deterministic]{a011}
Note that the transpose map is a positive unital map with a positive unital inverse (itself). Nevertheless, it is clearly not deterministic. This is consistent with our earlier observations that positive unital maps do not form an S-positive subcategory of $\fdCAlgUY$. 
\end{rmk}

Another interesting corollary for S-positive subcategories is the following general no-broadcasting theorem.

\begin{theo}[The no-broadcasting theorem for S-positive subcategories]{thm:nocloning}
Let $\mC$ be an S-positive subcategory of a quantum Markov category containing the morphisms 
$\vcenter{\hbox{%
\begin{tikzpicture}[font=\footnotesize]
\node[discarder] (d) at (0,0.15) {};
\draw (d) to (0,-0.25);
\end{tikzpicture}}}$,
$\vcenter{\hbox{%
\begin{tikzpicture}
\node[discarder] (d) at (0,0.05) {};
\draw (d) to (0,-0.25);
\draw (0.5,-0.25) to (0.5,0.4);
\end{tikzpicture}}}$\;,
and 
$\vcenter{\hbox{%
\begin{tikzpicture}
\node[discarder] (d) at (0,0.05) {};
\draw (d) to (0,-0.25);
\draw (-0.5,-0.25) to (-0.5,0.4);
\end{tikzpicture}}}$\;
for each object in $\mC$. 
In addition, suppose that $\mC$ contains a morphism
$\vcenter{\hbox{%
\begin{tikzpicture}[font=\small]
\node[fakecopier] (c) at (0,0.4) {};
\draw (c)
to[out=15,in=-90] (0.25,0.7);
\draw (c)
to[out=165,in=-90] (-0.25,0.7);
\draw (c) to (0,0.1);
\end{tikzpicture}}}$
satisfying 
\be
\label{eq:fakecopygrounding}
\vcenter{\hbox{%
\begin{tikzpicture}[font=\footnotesize]
\node[fakecopier] (c) at (0,0) {};
\coordinate (x1) at (-0.3,0.3);
\node[discarder] (d) at (x1) {};
\coordinate (x2) at (0.3,0.5);
\draw (c) to[out=165,in=-90] (x1);
\draw (c) to[out=15,in=-90] (x2);
\draw (c) to (0,-0.3);
\end{tikzpicture}}}
\quad=\quad
\vcenter{\hbox{%
\begin{tikzpicture}[font=\footnotesize]
\draw (0,0) to (0,0.8);
\end{tikzpicture}}}
\quad=\quad
\vcenter{\hbox{%
\begin{tikzpicture}[font=\footnotesize]
\node[fakecopier] (c) at (0,0) {};
\coordinate (x1) at (-0.3,0.5);
\coordinate (x2) at (0.3,0.3);
\node[discarder] (d) at (x2) {};
\draw (c) to[out=165,in=-90] (x1);
\draw (c) to[out=15,in=-90] (x2);
\draw (c) to (0,-0.3);
\end{tikzpicture}}}
\tag{\Football}
\ee
for every object in $\mC$. 
Then $\vcenter{\hbox{%
\begin{tikzpicture}[font=\small]
\node[fakecopier] (c) at (0,0.4) {};
\draw (c)
to[out=15,in=-90] (0.25,0.7);
\draw (c)
to[out=165,in=-90] (-0.25,0.7);
\draw (c) to (0,0.1);
\end{tikzpicture}}}$ 
is commutative and in fact equals duplication for every object of $\mC$. 
\end{theo}

\begin{rmk}[Assumptions in the no-broadcasting theorem]{a012}
Before proving the no-broadcasting theorem, we explain the physical meaning of the assumptions. 
The morphism $\vcenter{\hbox{%
\begin{tikzpicture}[font=\footnotesize]
\node[discarder] (d) at (0,0.15) {};
\draw (d) to (0,-0.25);
\end{tikzpicture}}}$ is interpreted as discarding a system. 
The morphisms $\vcenter{\hbox{%
\begin{tikzpicture}
\node[discarder] (d) at (0,0.05) {};
\draw (d) to (0,-0.25);
\draw (0.5,-0.25) to (0.5,0.4);
\end{tikzpicture}}}$
and 
$\vcenter{\hbox{%
\begin{tikzpicture}
\node[discarder] (d) at (0,0.05) {};
\draw (d) to (0,-0.25);
\draw (-0.5,-0.25) to (-0.5,0.4);
\end{tikzpicture}}}$
are interpreted as choosing one of two possible systems in a way that does not alter the other system. 
The morphism $\vcenter{\hbox{%
\begin{tikzpicture}[font=\small]
\node[fakecopier] (c) at (0,0.4) {};
\draw (c)
to[out=15,in=-90] (0.25,0.7);
\draw (c)
to[out=165,in=-90] (-0.25,0.7);
\draw (c) to (0,0.1);
\end{tikzpicture}}}$ 
is an operation that \emph{broadcasts} the information in one system to two copies of that system. A-priori, it is unrelated to the morphism $\vcenter{\hbox{%
\begin{tikzpicture}
\node[copier] (c) at (0,0.4) {};
\draw (c)
to[out=15,in=-90] (0.25,0.7);
\draw (c)
to[out=165,in=-90] (-0.25,0.7);
\draw (c) to (0,0.1);
\end{tikzpicture}}}$. 
The condition~(\ref{eq:fakecopygrounding}) guarantees that once information is transferred to the joint system, each of the two systems has a genuine copy of the system. This means that the marginals of any state (cf.\ Definition~\ref{defn:state}) being broadcast are equal. 
If we interpret the category $\mC$ as one corresponding to admissible operations (say for open system dynamics), then assuming that these morphisms are in $\mC$ means that these are valid physical operations. 
\end{rmk}

The calculation in the following proof is similar to the one in~\cite[Remark~11.29]{Fr19}, though our interpretation of the result is given a more physical meaning. 

\bprf
[Proof of Theorem~\ref{thm:nocloning}]
Set
$g:=\!
\vcenter{\hbox{%
\begin{tikzpicture}
\draw[draw opacity=0, fill=blue, fill opacity=0.1] (-0.35,-0.25) -- (-0.35,0.4) -- (0.6,0.4) -- (0.6,-0.25) -- cycle;
\node[discarder] (d) at (0,0.05) {};
\draw (d) to (0,-0.25);
\draw (0.5,-0.25) to (0.5,0.4);
\end{tikzpicture}}}$\;,
$h:=\!
\vcenter{\hbox{%
\begin{tikzpicture}
\draw[draw opacity=0, fill=yellow, fill opacity=0.2] (-0.65,-0.25) -- (-0.65,0.4) -- (0.3,0.4) -- (0.3,-0.25) -- cycle;
\node[discarder] (d) at (0,0.05) {};
\draw (d) to (0,-0.25);
\draw (-0.5,-0.25) to (-0.5,0.4);
\end{tikzpicture}}}$\;, 
and 
$f:=\!
\vcenter{\hbox{%
\begin{tikzpicture}
\draw[draw opacity=0, fill=red, fill opacity=0.1] (-0.35,0.1) -- (-0.35,0.7) -- (0.35,0.7) -- (0.35,0.1) -- cycle;
\node[fakecopier] (c) at (0,0.4) {};
\draw (c)
to[out=15,in=-90] (0.25,0.7);
\draw (c)
to[out=165,in=-90] (-0.25,0.7);
\draw (c) to (0,0.1);
\end{tikzpicture}}}$.%
\footnote{We have colored the background of these diagrams to better illustrate the calculation in~(\ref{eq:provingnocloning}).}
Then $g\circ f=\id=h\circ f$ by~(\ref{eq:fakecopygrounding}) and is therefore deterministic. Hence, 
\be
\label{eq:provingnocloning}
\vcenter{\hbox{%
\begin{tikzpicture}[font=\small]
\node[fakecopier] (c) at (0,0.4) {};
\draw (c)
to[out=15,in=-90] (0.25,0.7)
to[out=90,in=-90] (-0.25,1.4);
\draw (c)
to[out=165,in=-90] (-0.25,0.7)
to[out=90,in=-90] (0.25,1.4);
\draw (c) to (0,0.1);
\end{tikzpicture}}}
\overset{\text{(\ref{eq:markovcatfirstconditions})}}{=\joinrel=\joinrel=\joinrel=}
\vcenter{\hbox{%
\begin{tikzpicture}
\draw[draw opacity=0, fill=blue, fill opacity=0.1] (-0.85,0.5) -- (-0.85,1.1) -- (0,1.1) -- (0,0.5) -- cycle;
\draw[draw opacity=0, fill=red, fill opacity=0.1] (-0.35,-0.5) -- (-0.35,0) -- (0.35,0) -- (0.35,-0.5) -- cycle;
\node[discarder] (d) at (-0.5,0.7) {};
\node[copier] (c1) at (-0.25,0.2) {};
\node[copier] (c2) at (0.25,0.2) {};
\node[fakecopier] (c3) at (0,-0.2) {};
\node[discarder] (d2) at (0.5,0.7) {};
\draw (0,-0.5) to (c3);
\draw (c3) to[out=15,in=-90] (c2);
\draw (c3) to[out=165,in=-90] (c1);
\draw (c1) to[out=165,in=-90] (d);
\draw (c1) to[out=15,in=-90] (0.15,1.1);
\draw (c2) to[out=165,in=-90] (-0.15,1.1);
\draw (c2) to[out=15,in=-90] (d2);
\end{tikzpicture}}}
\underset{\text{\ref{defn:positivecategory}}}{\overset{\text{Defn}}{=\joinrel=\joinrel=}}
\vcenter{\hbox{%
\begin{tikzpicture}
\draw[draw opacity=0, fill=blue, fill opacity=0.1] (-1,0.5) -- (-1,1.1) -- (-0.025,1.1) -- (-0.025,0.5) -- cycle;
\draw[draw opacity=0, fill=red, fill opacity=0.1] (-1,0) -- (-1,0.5) -- (-0.025,0.5) -- (-0.025,0) -- cycle;
\draw[draw opacity=0, fill=red, fill opacity=0.1] (0.025,0) -- (0.025,0.5) -- (1,0.5) -- (1,0) -- cycle;
\node[discarder] (d) at (-0.7,0.7) {};
\node[discarder] (d2) at (0.7,0.7) {};
\node[fakecopier] (c1) at (-0.4,0.2) {};
\node[fakecopier] (c2) at (0.4,0.2) {};
\node[copier] (c3) at (0,-0.2) {};
\draw (0,-0.5) to (c3);
\draw (c3) to[out=15,in=-90] (c2);
\draw (c3) to[out=165,in=-90] (c1);
\draw (c1) to[out=165,in=-90] (d);
\draw (c1) to[out=15,in=-90] (-0.15,1.1);
\draw (c2) to[out=165,in=-90] (0.15,1.1);
\draw (c2) to[out=15,in=-90] (d2);
\end{tikzpicture}}}
\overset{\text{(\ref{eq:fakecopygrounding})}}{=\joinrel=}
\vcenter{\hbox{%
\begin{tikzpicture}[font=\small]
\node[copier] (c) at (0,0.4) {};
\draw (c)
to[out=15,in=-90] (0.25,0.7);
\draw (c)
to[out=165,in=-90] (-0.25,0.7);
\draw (c) to (0,0.1);
\end{tikzpicture}}}
\overset{\text{(\ref{eq:fakecopygrounding})}}{=\joinrel=}
\vcenter{\hbox{%
\begin{tikzpicture}
\draw[draw opacity=0, fill=yellow, fill opacity=0.2] (-1,0.5) -- (-1,1.1) -- (-0.025,1.1) -- (-0.025,0.5) -- cycle;
\draw[draw opacity=0, fill=red, fill opacity=0.1] (-1,0) -- (-1,0.5) -- (-0.025,0.5) -- (-0.025,0) -- cycle;
\draw[draw opacity=0, fill=red, fill opacity=0.1] (0.025,0) -- (0.025,0.5) -- (1,0.5) -- (1,0) -- cycle;
\node[discarder] (d) at (-0.25,0.7) {};
\node[discarder] (d2) at (0.25,0.7) {};
\node[fakecopier] (c1) at (-0.55,0.2) {};
\node[fakecopier] (c2) at (0.55,0.2) {};
\node[copier] (c3) at (0,-0.2) {};
\draw (0,-0.5) to (c3);
\draw (c3) to[out=15,in=-90] (c2);
\draw (c3) to[out=165,in=-90] (c1);
\draw (c1) to[out=15,in=-90] (d);
\draw (c1) to[out=165,in=-90] (-0.9,1.1);
\draw (c2) to[out=15,in=-90] (0.9,1.1);
\draw (c2) to[out=165,in=-90] (d2);
\end{tikzpicture}}}
\underset{\text{\ref{defn:positivecategory}}}{\overset{\text{Defn}}{=\joinrel=\joinrel=}}
\vcenter{\hbox{%
\begin{tikzpicture}
\draw[draw opacity=0, fill=yellow, fill opacity=0.2] (-0.85,0.5) -- (-0.85,1.1) -- (0,1.1) -- (0,0.5) -- cycle;
\draw[draw opacity=0, fill=red, fill opacity=0.1] (-0.55,-0.5) -- (-0.55,0) -- (0.55,0) -- (0.55,-0.5) -- cycle;
\node[discarder] (d) at (0.25,0.7) {};
\node[copier] (c1) at (-0.45,0.2) {};
\node[copier] (c2) at (0.45,0.2) {};
\node[fakecopier] (c3) at (0,-0.2) {};
\node[discarder] (d2) at (-0.25,0.7) {};
\draw (0,-0.5) to (c3);
\draw (c3) to[out=15,in=-90] (c2);
\draw (c3) to[out=165,in=-90] (c1);
\draw (c1) to[out=165,in=-90] (-0.7,1.1);
\draw (c1) to[out=15,in=-90] (d);
\draw (c2) to[out=15,in=-90] (0.7,1.1);
\draw (c2) to[out=165,in=-90] (d2);
\end{tikzpicture}}}
\overset{\text{(\ref{eq:markovcatfirstconditions})}}{=\joinrel=\joinrel=\joinrel=}
\vcenter{\hbox{%
\begin{tikzpicture}[font=\small]
\node[fakecopier] (c) at (0,0.4) {};
\draw (c)
to[out=15,in=-90] (0.25,0.7);
\draw (c)
to[out=165,in=-90] (-0.25,0.7);
\draw (c) to (0,0.1);
\end{tikzpicture}}}
\;,
\ee
which reproduces the identity~(\ref{eq:commutativity}) in Definition~\ref{defn:qmc} since 
$\vcenter{\hbox{%
\begin{tikzpicture}[font=\small]
\node[copier] (c) at (0,0.4) {};
\draw (c)
to[out=15,in=-90] (0.25,0.7);
\draw (c)
to[out=165,in=-90] (-0.25,0.7);
\draw (c) to (0,0.1);
\end{tikzpicture}}}
\;=\;
\vcenter{\hbox{%
\begin{tikzpicture}[font=\small]
\node[fakecopier] (c) at (0,0.4) {};
\draw (c)
to[out=15,in=-90] (0.25,0.7);
\draw (c)
to[out=165,in=-90] (-0.25,0.7);
\draw (c) to (0,0.1);
\end{tikzpicture}}}$. 
\eprf

The following remarks will relate our no-cloning result to others in the literature. A discussion of the standard no-cloning theorems is provided in Remark~\ref{rmk:standardnocloning}. 

\begin{rmk}[A comparison to other categorical no-cloning theorems]{rmk:categoricalnocloning}
This remark will briefly explain the differences between the assumptions in our no-cloning theorem to some other categorical ones in the literature~\cite{Ab09,Fe12}. 
In~\cite[Theorem~11]{Ab09} (see also~\cite[Theorem~4.84]{CoKi17} and~\cite[Theorem~4.27]{HeVi19}), Abramsky provided a categorical formulation of the no-cloning theorem, whose mathematical conclusions are a bit different from ours because the definition of cloning used there is a bit different from ours. Hence, we will focus more on the assumptions used in stating and proving the theorem. Abramsky's definition of cloning is as follows. A symmetric monoidal category $(\mM,\otimes,I)$ has \define{uniform cloning} iff it has a monoidal natural transformation 
\[
\xy0;/r.25pc/:
(-13,0)*+{\mM}="1";
(13,0)*+{\mM}="2";
{\ar@/_1.25pc/"1";"2"_{\id_{\mC}}};
{\ar@/^1.25pc/"1";"2"^{\otimes\circ\Delta_{\mC}}};
{\ar@{=>}(0,-4);(0,4)};
\endxy
\]
that is coassociative and cocommutative (here, $\otimes\circ\Delta_{\mC}$ takes an object $X$ to $X\otimes X$). In terms of string diagrams, this means that every object $X$ has a morphism 
$\vcenter{\hbox{%
\begin{tikzpicture}
\node[fakecopier] (c) at (0,0.4) {};
\draw (c)
to[out=15,in=-90] (0.25,0.7);
\draw (c)
to[out=165,in=-90] (-0.25,0.7);
\draw (c) to (0,0.1);
\end{tikzpicture}}}$
satisfying the following conditions 
\[
\vcenter{\hbox{%
\begin{tikzpicture}[font=\footnotesize]
\node[fakecopier] (c2) at (0,0) {};
\node[fakecopier] (c1) at (-0.3,0.3) {};
\draw (c2) to[out=165,in=-90] (c1);
\draw (c2) to[out=15,in=-90] (0.4,0.6);
\draw (c1) to[out=165,in=-90] (-0.6,0.6);
\draw (c1) to[out=15,in=-90] (0,0.6);
\draw (c2) to (0,-0.3);
\end{tikzpicture}}}
\quad=\quad
\vcenter{\hbox{%
\begin{tikzpicture}[font=\footnotesize]
\node[fakecopier] (c2) at (-0.5,0) {};
\node[fakecopier] (c1) at (-0.2,0.3) {};
\draw (c2) to[out=15,in=-90] (c1);
\draw (c2) to[out=165,in=-90] (-1,0.6);
\draw (c1) to[out=165,in=-90] (-0.5,0.6);
\draw (c1) to[out=15,in=-90] (0.1,0.6);
\draw (c2) to (-0.5,-0.3);
\end{tikzpicture}}}
\qquad
\vcenter{\hbox{%
\begin{tikzpicture}[font=\footnotesize]
\node[fakecopier] (c) at (0,0.4) {};
\draw (c)
to[out=15,in=-90] (0.25,0.7);
\draw (c)
to[out=165,in=-90] (-0.25,0.7);
\draw (c) to (0,0);
\node at (0.5,0.1) {$X\otimes Y$};
\end{tikzpicture}}}
=
\vcenter{\hbox{%
\begin{tikzpicture}[font=\footnotesize]
\node[fakecopier] (c) at (-0.1,0) {};
\node[fakecopier] (c2) at (0.5,0) {};
\draw (c) to[out=15,in=-90] +(0.45,0.45);
\draw (c) to[out=165,in=-90] +(-0.4,0.45);
\draw (c) to +(0,-0.4);
\draw (c2) to[out=15,in=-90] +(0.4,0.45);
\draw (c2) to[out=165,in=-90] +(-0.45,0.45);
\draw (c2) to +(0,-0.4);
\node at (-0.3,-0.3) {$X$};
\node at (0.7,-0.3) {$Y$};
\end{tikzpicture}}}
\qquad
\vcenter{\hbox{%
\begin{tikzpicture}[font=\footnotesize]
\node[fakecopier] (c) at (0,0.4) {};
\draw (c)
to[out=15,in=-90] (0.25,0.7);
\draw (c)
to[out=165,in=-90] (-0.25,0.7);
\draw (c) to (0,0);
\node at (0.2,0.1) {$I$};
\end{tikzpicture}}}
=\;
\vcenter{\hbox{%
\begin{tikzpicture}[font=\footnotesize]
\node at (0.2,-0.05) {};
\draw [gray,dashed] (0,0) rectangle (0.45,0.75);
\end{tikzpicture}}}
\qquad
\vcenter{\hbox{%
\begin{tikzpicture}[font=\small]
\node[fakecopier] (c) at (0,0.4) {};
\draw (c)
to[out=15,in=-90] (0.25,0.65)
to[out=90,in=-90] (-0.25,1.2);
\draw (c)
to[out=165,in=-90] (-0.25,0.65)
to[out=90,in=-90] (0.25,1.2);
\draw (c) to (0,0.1);
\end{tikzpicture}}}
\quad=\quad
\vcenter{\hbox{%
\begin{tikzpicture}[font=\small]
\node[fakecopier] (c) at (0,0.4) {};
\draw (c)
to[out=15,in=-90] (0.25,0.7);
\draw (c)
to[out=165,in=-90] (-0.25,0.7);
\draw (c) to (0,0.1);
\end{tikzpicture}}}
\]
and 
\[
\vcenter{\hbox{%
\begin{tikzpicture}[font=\small]
\coordinate (p) at (0,-1.0);
\node at (-0.2,-0.85) {\footnotesize$X$};
\node[arrow box] (f) at (0,-0.3) {$f$};
\node[fakecopier] (copier) at (0,0.3) {};
\coordinate (X) at (-0.5,0.91);
\node at (-0.7,0.75) {\footnotesize$Y$};
\coordinate (X2) at (0.5,0.91);
\node at (0.7,0.75) {\footnotesize$Y$};
\draw (p) to (f);
\draw (f) to (copier);
\draw (copier) to[out=165,in=-90] (X);
\draw (copier) to[out=15,in=-90] (X2);
\end{tikzpicture}}}
\quad
=
\quad
\vcenter{\hbox{%
\begin{tikzpicture}[font=\small]
\coordinate (p) at (0,-0.3);
\node at (-0.2,-0.15) {\footnotesize$X$};
\node[fakecopier] (copier) at (0,0.3) {};
\node[arrow box] (f) at (-0.5,0.95) {$f$};
\node[arrow box] (e) at (0.5,0.95) {$f$};
\coordinate (X) at (-0.5,1.7);
\node at (-0.7,1.55) {\footnotesize$Y$};
\coordinate (Y) at (0.5,1.7);
\node at (0.7,1.55) {\footnotesize$Y$};
\draw (p) to (copier);
\draw (copier) to[out=165,in=-90] (f);
\draw (f) to (X);
\draw (copier) to[out=15,in=-90] (e);
\draw (e) to (Y);
\end{tikzpicture}}}
\]
for all objects $X,Y$ and all morphisms $X\xstoch{f}Y$. Moreover, the version of the no-cloning theorem proved in~\cite[Theorem~11]{Ab09} is in the setting of a compact closed category (so that every object has a dual satisfying the zig-zag identities) with uniform cloning. Therefore, the setup is quite different from the one we have. There are four crucial differences worth pointing out. First, Abramsky does not assume an explicit discard map. Second, Abramsky demands that $\vcenter{\hbox{%
\begin{tikzpicture}
\node[fakecopier] (c) at (0,0.4) {};
\draw (c)
to[out=15,in=-90] (0.25,0.7);
\draw (c)
to[out=165,in=-90] (-0.25,0.7);
\draw (c) to (0,0.1);
\end{tikzpicture}}}$ is natural, which says that every morphism is deterministic with respect to $\vcenter{\hbox{%
\begin{tikzpicture}
\node[fakecopier] (c) at (0,0.4) {};
\draw (c)
to[out=15,in=-90] (0.25,0.7);
\draw (c)
to[out=165,in=-90] (-0.25,0.7);
\draw (c) to (0,0.1);
\end{tikzpicture}}}$. In particular, this and the other axioms imply  
\[
\vcenter{\hbox{%
\begin{tikzpicture}[font=\small]
\node[state] (p) at (0,-0.2) {$p$};
\node[fakecopier] (c) at (0,0.3) {};
\coordinate (X) at (-0.5,0.85);
\coordinate (X2) at (0.5,0.85);
\draw (p) to (c);
\draw (c) to[out=165,in=-90] (X);
\draw (c) to[out=15,in=-90] (X2);
\end{tikzpicture}}}
\quad=\quad
\vcenter{\hbox{%
\begin{tikzpicture}[font=\small]
\coordinate (p) at (0,-0.3);
\node[state] (f) at (-0.5,0.95) {$p$};
\node[state] (e) at (0.5,0.95) {$p$};
\coordinate (X) at (-0.5,1.7);
\coordinate (Y) at (0.5,1.7);
\draw (f) to (X);
\draw (e) to (Y);
\end{tikzpicture}}}
\qquad\text{ and }\qquad
\vcenter{\hbox{%
\begin{tikzpicture}[font=\footnotesize]
\node[state] (s) at (-0.7,-0.2) {\;$s$\;};
\node[state] (s2) at (0.7,-0.2) {\;$s$\;};
\draw (s)++(-0.2,0) to (-0.9,0.7);
\draw (s)++(0.2,0) to (-0.5,0.7);
\draw (s2)++(-0.2,0) to (0.5,0.7);
\draw (s2)++(0.2,0) to (0.9,0.7);
\end{tikzpicture}}}
=
\vcenter{\hbox{%
\begin{tikzpicture}[font=\footnotesize]
\node[fakecopier] (c) at (-0.4,0) {};
\node[fakecopier] (c2) at (0.4,0) {};
\node[state] (s) at (0,-0.6) {\;\;$s$\;\;};
\draw (c) to[out=15,in=-90] +(0.6,0.6);
\draw (c) to[out=165,in=-90] +(-0.5,0.6);
\draw (s)++(-0.4, 0) to (c);
\draw (c2) to[out=15,in=-90] +(0.5,0.6);
\draw (c2) to[out=165,in=-90] +(-0.6,0.6);
\draw (s)++(0.4, 0) to (c2);
\node at (-0.6,-0.3) {$X$};
\node at (0.6,-0.3) {$Y$};
\end{tikzpicture}}}
\]
which give all but one of the axioms assumed in~\cite[Theorem~4.84]{CoKi17}. 
Third, the commutativity condition is assumed. Fourth, by assuming compact closure, the result applies (in particular) to the category of finite-dimensional Hilbert spaces, so that the no-cloning result is a statement about pure states in finite dimensions. Because of the first point, our result does not imply Abramsky's version of the theorem. However, because of the second, third, or fourth points, Abramsky's version does not imply our result either. Nevertheless, the axiom of S-positivity can be viewed as a remnant of the deterministic condition (every deterministic subcategory of a quantum Markov category is automatically S-positive), so that one might suspect that there is some common generalization of Theorem~\ref{thm:nocloning} and Abramsky's result. 

Furthermore, if one assumes that
\[
\vcenter{\hbox{%
\begin{tikzpicture}[font=\small]
\node[state] (p) at (0,-0.9) {$p$};
\node[arrow box] (f) at (0,-0.3) {$f$};
\coordinate (X) at (0,0.3);
\draw (p) to (f);
\draw (f) to (X);
\end{tikzpicture}}}
\quad
=
\quad
\vcenter{\hbox{%
\begin{tikzpicture}[font=\small]
\node[state] (p) at (0,-0.9) {$p$};
\node[arrow box] (f) at (0,-0.3) {$g$};
\coordinate (X) at (0,0.3);
\draw (p) to (f);
\draw (f) to (X);
\end{tikzpicture}}}
\qquad\forall\;
\vcenter{\hbox{%
\begin{tikzpicture}[font=\small]
\node[state] (p) at (0,-0.9) {$p$};
\coordinate (X) at (0,-0.3);
\draw (p) to (X);
\end{tikzpicture}}}
\quad\implies\quad
\vcenter{\hbox{%
\begin{tikzpicture}[font=\small]
\coordinate (p) at (0,-0.9);
\node[arrow box] (f) at (0,-0.3) {$f$};
\coordinate (X) at (0,0.3);
\draw (p) to (f);
\draw (f) to (X);
\end{tikzpicture}}}
\quad
=
\quad
\vcenter{\hbox{%
\begin{tikzpicture}[font=\small]
\coordinate (p) at (0,-0.9);
\node[arrow box] (f) at (0,-0.3) {$g$};
\coordinate (X) at (0,0.3);
\draw (p) to (f);
\draw (f) to (X);
\end{tikzpicture}}}
\]
(which is valid in the category of finite-dimensional Hilbert spaces as well as $\fdCAlg$), then 
\[
\vcenter{\hbox{%
\begin{tikzpicture}[font=\footnotesize]
\node[fakecopier] (c) at (0,0) {};
\coordinate (x1) at (-0.3,0.3);
\node[discarder] (d) at (x1) {};
\coordinate (x2) at (0.3,0.5);
\draw (c) to[out=165,in=-90] (x1);
\draw (c) to[out=15,in=-90] (x2);
\draw (c) to (0,-0.3);
\end{tikzpicture}}}
\quad=\quad
\vcenter{\hbox{%
\begin{tikzpicture}[font=\footnotesize]
\draw (0,0) to (0,0.8);
\end{tikzpicture}}}
\quad=\quad
\vcenter{\hbox{%
\begin{tikzpicture}[font=\footnotesize]
\node[fakecopier] (c) at (0,0) {};
\coordinate (x1) at (-0.3,0.5);
\coordinate (x2) at (0.3,0.3);
\node[discarder] (d) at (x2) {};
\draw (c) to[out=165,in=-90] (x1);
\draw (c) to[out=15,in=-90] (x2);
\draw (c) to (0,-0.3);
\end{tikzpicture}}}
\]
follows from the other axioms. This only formally relates one of our axioms, but it remains unclear how (or if) the other axioms are related to S-positivity. Another no-broadcasting theorem is also presented in~\cite[Section~7.3.2]{HeVi19} in the setting of dagger Frobenius objects in braided monoidal dagger categories, which applies to the setting of finite-dimensional Hilbert spaces. Note that we have avoided an explicit $\dagger$-structure in Theorem~\ref{thm:nocloning}. 

Finally, we briefly mention that Fenyes has also formulated two categorical variants of the no-cloning theorem~\cite{Fe12}, one of which states that no-cloning is not possible even in the classical (symplectic) setting depending on the definition used. However, even in the broader definition including cloning machines (which includes the state wished to be cloned, the raw material needed to construct the clone, and the machine needed to actually put the raw material together to build the clone), one still cannot obtain cloning in the quantum setting. Although Fenyes proved symplectic vector spaces allow cloning in this broader sense, it remains an open question whether all symplectic manifolds admit this cloning. Further investigation would be needed to relate all of these results or to view them from some more general perspective. 
\end{rmk}

\begin{rmk}[A comparison to the standard no-broadcasting/no-cloning theorems]{rmk:standardnocloning}
The above categorical formulations of the no-broadcasting/no-cloning theorems all depend on different definitions of broadcasting/cloning. None of these definitions agree \emph{exactly} with the ones typically encountered in the physics literature~\cite{BBLW,Li99,BCFJS}, though many of them can be viewed as special cases thereof. Typically, one is interested in broadcasting a \emph{subset} of states, rather than \emph{all} the states allowed. From this perspective Theorem~\ref{thm:nocloning} and Theorem~\ref{thm:fdcalgcpupositive} provide a \emph{universal} variant%
\footnote{The adjective ``universal'' here is meant in the sense that the broadcasting operation is valid for \emph{all} input states as opposed to a subset of states (this is the terminology used in~\cite{BBLW} for example). It is \emph{not} meant in the categorical sense as in ``universal property''.}
of the standard no-broadcasting theorem for quantum mechanics~\cite{BCFJS} or algebraic quantum theory~\cite{Lu15,KLL15,KLL17}. 
There is also a no-broadcasting theorem proved in the framework of general probabilistic theories~\cite[Theorem~2]{BBLW}. It is interesting to point out that unlike in the proof from~\cite{BBLW}, we have not used any assumptions regarding convexity.%
\footnote{The category of 2-dimensional topological cobordisms~\cite{Ko03} (where copy $S^1\xstoch{\Delta_{S^1}}S^1$ is the pair of pants and discard $S^1\xstoch{!_{S^1}}\varnothing$ is the cap/disc) is a classical CD category that has no obvious notion of a convex structure. Note that this category is not a Markov category since not every morphism is unital (there are non-trivial cobordisms $S^1\stoch\varnothing$).}
\end{rmk}

There are several other notions of positive subcategories introduced by Fritz, one of which appeared in earlier drafts but did not make it to his final work since it was encompassed by another stronger axiom~\cite{Fr19}. One important consequence of these axioms is that they allow one to prove theorems in statistics abstractly. To state these additional axioms, however, we must take a minor, but exceptionally important, digression and discuss almost everywhere (a.e.) equivalence as well as a.e.\ determinism.

\section[Almost everywhere equivalence]{a013}
\label{sec:ae}
\vspace{-12mm}
\noindent
\begin{tikzpicture}
\coordinate (L) at (-8.75,0);
\coordinate (R) at (8.75,0);
\draw[line width=2pt,orange!20] (L) -- node[anchor=center,rectangle,fill=orange!20]{\strut \Large \textcolor{black}{\textbf{5\;\; Almost everywhere equivalence}}} (R);
\end{tikzpicture}

In this section, we define almost everywhere (a.e.) equivalence abstractly in any quantum Markov category. The definition is based on the observation of Cho and Jacobs that measure-theoretic a.e.\ equivalence has a diagrammatic formulation~\cite{ChJa18}. This definition was then later generalized by Fritz to allow for a.e.\ equivalence with respect to morphisms. The latter is useful for applications in statistical models~\cite[Definition~13.1]{Fr19} (see also Example~\ref{exa:Umegakisufficientstatistic}). In our presentation, we distinguish two versions of their definition since they are inequivalent in general quantum Markov categories. 
Lemma~\ref{thm:ncaeequivalence} describes many equivalent definitions of a.e.\ equivalence in $\fdCAlgUY$. It is the main result of this section, as it not only provides a  concrete description of a.e.\ equivalence, but it will also help us prove many of the main theorems for $C^*$-algebras appearing in the rest of this work. The remaining parts of this section describe a.e.\ determinism and prove additional lemmas needed later. 

\begin{defn}[Almost everywhere equivalence abstractly]{defn:aeequivalence}
Let $\Theta,$ $X$, and $Y$ be objects in a quantum Markov category, let $\Theta\xstoch{p}X$ and $f,g:X\stoch Y$ be even morphisms. The morphism $f$ is said to be \define{left/right $p$-a.e.\ equivalent to} $g$ iff 
\[
\left.
\vcenter{\hbox{
\begin{tikzpicture}[font=\small]
\node[arrow box] (p) at (0,-0.3) {$p$};
\node[copier] (copier) at (0,0.3) {};
\node[arrow box] (f) at (-0.5,0.95) {$f$};
\coordinate (X) at (0.5,1.5);
\coordinate (Y) at (-0.5,1.5);
\draw (0,-0.9) to (p);
\draw (p) to (copier);
\draw (copier) to[out=15,in=-90] (X);
\draw (copier) to[out=165,in=-90] (f);
\draw (f) to (Y);
\end{tikzpicture}}}
\quad=\quad
\vcenter{\hbox{
\begin{tikzpicture}[font=\small]
\node[arrow box] (p) at (0,-0.3) {$p$};
\node[copier] (copier) at (0,0.3) {};
\node[arrow box] (g) at (-0.5,0.95) {$g$};
\coordinate (X) at (0.5,1.5);
\coordinate (Y) at (-0.5,1.5);
\draw (0,-0.9) to (p);
\draw (p) to (copier);
\draw (copier) to[out=15,in=-90] (X);
\draw (copier) to[out=165,in=-90] (g);
\draw (g) to (Y);
\end{tikzpicture}}}
\qquad
\middle/
\qquad
\vcenter{\hbox{
\begin{tikzpicture}[font=\small,xscale=-1]
\node[arrow box] (p) at (0,-0.3) {$p$};
\node[copier] (copier) at (0,0.3) {};
\node[arrow box] (f) at (-0.5,0.95) {$f$};
\coordinate (X) at (0.5,1.5);
\coordinate (Y) at (-0.5,1.5);
\draw (0,-0.9) to (p);
\draw (p) to (copier);
\draw (copier) to[out=15,in=-90] (X);
\draw (copier) to[out=165,in=-90] (f);
\draw (f) to (Y);
\end{tikzpicture}}}
\quad=\quad
\vcenter{\hbox{
\begin{tikzpicture}[font=\small,xscale=-1]
\node[arrow box] (p) at (0,-0.3) {$p$};
\node[copier] (copier) at (0,0.3) {};
\node[arrow box] (g) at (-0.5,0.95) {$g$};
\coordinate (X) at (0.5,1.5);
\coordinate (Y) at (-0.5,1.5);
\draw (0,-0.9) to (p);
\draw (p) to (copier);
\draw (copier) to[out=15,in=-90] (X);
\draw (copier) to[out=165,in=-90] (g);
\draw (g) to (Y);
\end{tikzpicture}}}
\right.
\quad.
\]
When $f$ is both right and left $p$-a.e.\ equivalent to $g$, we will say $f$ is \define{$p$-a.e.\ equivalent to} $g$, and the notation $f\underset{\raisebox{.6ex}[0pt][0pt]{\scriptsize$p$}}{=}g$ will be used. 
\end{defn}

\vspace{-1mm}
\begin{exa}[Categorical a.e.\ equivalence agrees with the measure-theoretic one]{exa:classicalae}
In a classical Markov category, left and right equivalence are equivalent~\cite{ChJa18}. Furthermore, they agree with a measure-theoretic notion of a.e.\ equivalence in $\Stoch$ (and $\FinStoch$) when $p$ is a probability measure, i.e.\ a state (cf.\ Remark~\ref{rmk:moregeneralae} below). This does depend on your definition of measure-theoretic a.e.\ equivalence, and the two notions agree for a reasonable class of measure spaces (see ~\cite[Appendix~A]{PaRu19}, \cite[Example~13.3]{Fr19}, and \cite[Proposition~5.3]{ChJa18} for a comparison of the two definitions). If $(X,\Sigma,p)$ and $(Y,\Omega,q)$ are two probability spaces, and if $f,g:X\stoch Y$ are two stochastic maps, then $f\aeequals{p}g$ with respect to the diagrammatic definition of a.e.\ equivalence if and only if $\int_{A}f_{x}(B)\,dp(x)=\int_{A}g_{x}(B)\,dp(x)$ for all measurable subsets $A\in\S$ and $B\in\Omega$. The generalized version when $p$ is another morphism (as opposed to a state) is a parametrized version of a.e.\ equivalence and is described in~\cite[Example~13.3]{Fr19}. 
\end{exa}

Although one can directly apply the abstract definition of a.e.\ equivalence to see what it says in $\fdCAlgUY$, the resulting equation does not seem particularly enlightening. We will find a much more explicit and satisfying realization of this definition in Theorem~\ref{thm:ncaeequivalence} after going through several preliminaries. 

\begin{rmk}[A.e.\ equivalence relative to a state]{rmk:moregeneralae}
As in Example~\ref{exa:classicalae}, $p$ in Definition~\ref{defn:aeequivalence} could be a state. 
In this case, the two notions look like
\[
\left.
\vcenter{\hbox{
\begin{tikzpicture}[font=\small]
\node[state] (p) at (0,0) {$p$};
\node[copier] (copier) at (0,0.3) {};
\node[arrow box] (f) at (-0.5,0.95) {$f$};
\coordinate (X) at (0.5,1.5);
\coordinate (Y) at (-0.5,1.5);
\draw (p) to (copier);
\draw (copier) to[out=30,in=-90] (X);
\draw (copier) to[out=165,in=-90] (f);
\draw (f) to (Y);
\end{tikzpicture}}}
\quad=\quad
\vcenter{\hbox{
\begin{tikzpicture}[font=\small]
\node[state] (p) at (0,0) {$p$};
\node[copier] (copier) at (0,0.3) {};
\node[arrow box] (g) at (-0.5,0.95) {$g$};
\coordinate (X) at (0.5,1.5);
\coordinate (Y) at (-0.5,1.5);
\draw (p) to (copier);
\draw (copier) to[out=30,in=-90] (X);
\draw (copier) to[out=165,in=-90] (g);
\draw (g) to (Y);
\end{tikzpicture}}}
\qquad
\middle/
\qquad
\vcenter{\hbox{
\begin{tikzpicture}[font=\small]
\node[state] (p) at (0,0) {$p$};
\node[copier] (copier) at (0,0.3) {};
\node[arrow box] (f) at (0.5,0.95) {$f$};
\coordinate (X) at (-0.5,1.5);
\coordinate (Y) at (0.5,1.5);
\draw (p) to (copier);
\draw (copier) to[out=150,in=-90] (X);
\draw (copier) to[out=15,in=-90] (f);
\draw (f) to (Y);
\end{tikzpicture}}}
\quad=\quad
\vcenter{\hbox{
\begin{tikzpicture}[font=\small]
\node[state] (p) at (0,0) {$p$};
\node[copier] (copier) at (0,0.3) {};
\node[arrow box] (g) at (0.5,0.95) {$g$};
\coordinate (X) at (-0.5,1.5);
\coordinate (Y) at (0.5,1.5);
\draw (p) to (copier);
\draw (copier) to[out=150,in=-90] (X);
\draw (copier) to[out=15,in=-90] (g);
\draw (g) to (Y);
\end{tikzpicture}}}
\right.
\quad.
\]
In fact, we will often assume that $p$ is in an S-positive subcategory, although many of our \emph{diagrammatic} results will not rely on such an assumption. 
Hence, the reader may wish to assume $p$ is a state (in the usual sense) for physical interpretations whenever convenient. In any case, we will make it clear when these assumptions are made. 
\end{rmk}

The fact that we have two notions of a.e.\ equivalence (left and right) is solely due to the fact that the morphisms need not be $*$-preserving. 
For \emph{$*$-preserving} morphisms, the two notions of a.e.\ equivalence are themselves equivalent.  

\begin{lem}[Symmetry of a.e.\ equivalence for $*$-preserving morphisms]{prop:hfpkgs}
Let 
\[
\xy0;/r.25pc/:
(0,0)*+{\Theta}="0";
(-10,7.5)*+{X}="X";
(10,7.5)*+{Y}="Y";
(10,-7.5)*+{W}="W";
(-10,-7.5)*+{Z}="Z";
{\ar@{~>}"X";"Y"^{f}};
{\ar@{~>}"X";"Z"_{h}};
{\ar@{~>}"W";"Z"^{k}};
{\ar@{~>}"W";"Y"_{g}};
{\ar@{~>}"0";"X"^{p}};
{\ar@{~>}"0";"W"_{s}};
\endxy
\]
be a (not necessarily commuting) diagram of $*$-preserving morphisms in a quantum Markov category. Then 
\be
\label{eq:hfpkhs}
\vcenter{\hbox{%
\begin{tikzpicture}[font=\small]
\node[copier] (copier) at (0,0.3) {};
\node at (-0.20,0.15) {\scriptsize$X$};
\node[arrow box] (h) at (-0.5,0.95) {$h$};
\node at (-0.70,1.45) {\scriptsize$Z$};
\node[arrow box] (f) at (0.5,0.95) {$f$};
\node at (0.70,1.45) {\scriptsize$Y$};
\node[arrow box] (p) at (0,-0.3) {$p$};
\node at (-0.20,-0.80) {\scriptsize$\Theta$};
\coordinate (X) at (-0.5,1.6);
\coordinate (Y) at (0.5,1.6);
\draw (0,-1.0) to (p);
\draw (p) to (copier);
\draw (copier) to[out=165,in=-90] (h);
\draw (h) to (X);
\draw (copier) to[out=15,in=-90] (f);
\draw (f) to (Y);
\end{tikzpicture}}}
\quad
=
\quad
\vcenter{\hbox{%
\begin{tikzpicture}[font=\small]
\node[copier] (copier) at (0,0.3) {};
\node at (-0.20,0.15) {\scriptsize$W$};
\node[arrow box] (h) at (-0.5,0.95) {$k$};
\node at (-0.70,1.45) {\scriptsize$Z$};
\node[arrow box] (f) at (0.5,0.95) {$g$};
\node at (0.70,1.45) {\scriptsize$Y$};
\node[arrow box] (p) at (0,-0.3) {$s$};
\node at (-0.20,-0.80) {\scriptsize$\Theta$};
\coordinate (X) at (-0.5,1.6);
\coordinate (Y) at (0.5,1.6);
\draw (0,-1.0) to (p);
\draw (p) to (copier);
\draw (copier) to[out=165,in=-90] (h);
\draw (h) to (X);
\draw (copier) to[out=15,in=-90] (f);
\draw (f) to (Y);
\end{tikzpicture}}}
\qquad
\iff
\qquad
\vcenter{\hbox{%
\begin{tikzpicture}[font=\small]
\node[copier] (copier) at (0,0.3) {};
\node at (-0.20,0.15) {\scriptsize$X$};
\node[arrow box] (h) at (-0.5,0.95) {$f$};
\node at (-0.70,1.45) {\scriptsize$Y$};
\node[arrow box] (f) at (0.5,0.95) {$h$};
\node at (0.70,1.45) {\scriptsize$Z$};
\node[arrow box] (p) at (0,-0.3) {$p$};
\node at (-0.20,-0.80) {\scriptsize$\Theta$};
\coordinate (X) at (-0.5,1.6);
\coordinate (Y) at (0.5,1.6);
\draw (0,-1.0) to (p);
\draw (p) to (copier);
\draw (copier) to[out=165,in=-90] (h);
\draw (h) to (X);
\draw (copier) to[out=15,in=-90] (f);
\draw (f) to (Y);
\end{tikzpicture}}}
\quad
=
\quad
\vcenter{\hbox{%
\begin{tikzpicture}[font=\small]
\node[copier] (copier) at (0,0.3) {};
\node at (-0.20,0.15) {\scriptsize$W$};
\node[arrow box] (h) at (-0.5,0.95) {$g$};
\node at (-0.70,1.45) {\scriptsize$Y$};
\node[arrow box] (f) at (0.5,0.95) {$k$};
\node at (0.70,1.45) {\scriptsize$Z$};
\node[arrow box] (p) at (0,-0.3) {$s$};
\node at (-0.20,-0.80) {\scriptsize$\Theta$};
\coordinate (X) at (-0.5,1.6);
\coordinate (Y) at (0.5,1.6);
\draw (0,-1.0) to (p);
\draw (p) to (copier);
\draw (copier) to[out=165,in=-90] (h);
\draw (h) to (X);
\draw (copier) to[out=15,in=-90] (f);
\draw (f) to (Y);
\end{tikzpicture}}}
.
\tag{\Coffeecup}
\ee
\end{lem}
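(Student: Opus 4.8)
The plan is to read off the two diagrams as morphism equations and then pass between them using only the braiding $c$, the coherence between copy and the involution, and the hypothesis that all six morphisms are $*$-preserving. Writing composition upward (Schr\"odinger convention), the left-hand equation of \eqref{eq:hfpkhs} reads $(h\otimes f)\circ\D_X\circ p=(k\otimes g)\circ\D_W\circ s$, a morphism $\Theta\to Z\otimes Y$, while the right-hand equation reads $(f\otimes h)\circ\D_X\circ p=(g\otimes k)\circ\D_W\circ s$, a morphism $\Theta\to Y\otimes Z$. The two differ only by interchanging the two output legs, so the natural instinct is to postcompose with the swap $c_{Z,Y}$. The obstruction is that in a quantum Markov category copy is \emph{not} cocommutative, i.e.\ $c_{X,X}\circ\D_X\ne\D_X$ in general (this is exactly the failure distinguishing the quantum from the classical case; cf.\ Remark~\ref{a004}), so the naive move does not immediately close the loop. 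The $*$-preserving hypothesis is what repairs this.

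The key identity I would extract is the last relation in \eqref{eq:markovcatfirstconditions}, namely $c_{X,X}\circ(*_X\otimes *_X)\circ\D_X=\D_X\circ *_X$, the abstract form of $(AB)^*=B^*A^*$. Starting from the left-hand equation, I postcompose both sides with the (odd) morphism $c_{Z,Y}\circ(*_Z\otimes *_Y)$. On the left-hand side, $*$-preservation of $h$ and $f$ gives $(*_Z\otimes *_Y)\circ(h\otimes f)=(h\otimes f)\circ(*_X\otimes *_X)$ (using functoriality of $\otimes$ on pairs of equal grade); naturality of the braiding gives $c_{Z,Y}\circ(h\otimes f)=(f\otimes h)\circ c_{X,X}$; and the key identity collapses $c_{X,X}\circ(*_X\otimes *_X)\circ\D_X$ to $\D_X\circ *_X$. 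Finally $*$-preservation of $p$ rewrites $\D_X\circ *_X\circ p$ as $\D_X\circ p\circ *_\Theta$. Running the identical chain on the right-hand side produces $(g\otimes k)\circ\D_W\circ s\circ *_\Theta$, so the postcomposed equation reads
\[
(f\otimes h)\circ\D_X\circ p\circ *_\Theta=(g\otimes k)\circ\D_W\circ s\circ *_\Theta .
\]

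To conclude, I precompose with $*_\Theta$ and invoke the involutivity $*_\Theta\circ *_\Theta=\id_\Theta$ (the first relation in \eqref{eq:lastsetformarkovcatidentities}) to cancel the trailing $*_\Theta$, yielding exactly the right-hand equation of \eqref{eq:hfpkhs}. The converse implication follows by running the same argument in reverse, since every manipulation uses either an isomorphism ($c$ and the $*$'s) or an identity valid in both directions; alternatively, one observes that the derivation is symmetric under interchanging the two output wires, so applying it to the right-hand equation returns the left-hand one. I expect the only genuinely delicate point to be bookkeeping the $\Z_2$-grading: each $*$ is odd, so at every step one must check that the tensor products pair morphisms of equal grade and that both sides of each intermediate equation are homogeneous of the same parity. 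This is routine, but it must be recorded, since it is precisely the grading constraint that makes the quantum calculus behave differently from the classical one.
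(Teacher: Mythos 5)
Your proof is correct and is essentially the paper's own argument rendered algebraically: postcomposing the hypothesis with $c_{Z,Y}\circ(*_{Z}\otimes *_{Y})$ and simplifying via the coherence $c_{X,X}\circ(*_{X}\otimes *_{X})\circ\Delta_{X}=\Delta_{X}\circ *_{X}$, naturality of the braiding, $*$-preservation of each morphism, and $*^{2}=\mathrm{id}$ is exactly the paper's string-diagram chain, which inserts $*\circ *$ below the copy map, pushes one involution through the copier (creating the swap), slides $f$ and $h$ through the crossing, applies the hypothesis in the middle, and rewinds. Your closing remark on tracking the $\Z_{2}$-grading is the same bookkeeping the paper handles implicitly via its ``horizontal line test.''
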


\bprf
Assume the left-hand-side of~(\ref{eq:hfpkhs}) holds. Then 
\be
\vcenter{\hbox{%
\begin{tikzpicture}[font=\small]
\node[copier] (copier) at (0,0.3) {};
\node[arrow box] (h) at (-0.5,0.95) {$f$};
\node[arrow box] (f) at (0.5,0.95) {$h$};
\node[arrow box] (p) at (0,-0.3) {$p$};
\coordinate (X) at (-0.5,1.6);
\coordinate (Y) at (0.5,1.6);
\draw (0,-1.0) to (p);
\draw (p) to (copier);
\draw (copier) to[out=165,in=-90] (h);
\draw (h) to (X);
\draw (copier) to[out=15,in=-90] (f);
\draw (f) to (Y);
\end{tikzpicture}}}
\;=
\vcenter{\hbox{
\begin{tikzpicture}[font=\small]
\node[arrow box] (p) at (0,-1.3) {$p$};
\node[star] (s1) at (0,-0.6) {};
\node[star] (s2) at (0,-0.15) {};
\node[copier] (copier) at (0,0.3) {};
\node[arrow box] (f) at (-0.5,0.95) {$f$};
\node[arrow box] (h) at (0.5,0.95) {$h$};
\coordinate (X) at (0.5,1.5);
\coordinate (Y) at (-0.5,1.5);
\draw (0,-1.9) to (p);
\draw (p) to (s1);
\draw (s1) to (s2);
\draw (s2) to (copier);
\draw (copier) to[out=15,in=-90] (h);
\draw (h) to (X);
\draw (copier) to[out=165,in=-90] (f);
\draw (f) to (Y);
\end{tikzpicture}}}
\;=
\vcenter{\hbox{
\begin{tikzpicture}[font=\small]
\node[arrow box] (p) at (0,-1.3) {$p$};
\node[star] (s1) at (0,-0.6) {};
\node[copier] (copier) at (0,-0.15) {};
\node[star] (R) at (0.5,0.3) {};
\node[star] (L) at (-0.5,0.3) {};
\coordinate (Ls) at (-0.5,1.4) {};
\coordinate (Rs) at (0.5,1.4) {};
\node[arrow box] (f) at (-0.5,1.7) {$f$};
\node[arrow box] (h) at (0.5,1.7) {$h$};
\coordinate (X) at (0.5,2.25);
\coordinate (Y) at (-0.5,2.25);
\draw (0,-1.9) to (p);
\draw (p) to (s1);
\draw (s1) to (copier);
\draw (copier) to [out=15,in=-90] (R);
\draw (R) to [out=90,in=-90] (Ls);
\draw (L) to [out=90,in=-90] (Rs);
\draw (Ls) to (f);
\draw (Rs) to (h);
\draw (h) to (X);
\draw (copier) to[out=165,in=-90] (L);
\draw (f) to (Y);
\end{tikzpicture}}}
\;=
\vcenter{\hbox{
\begin{tikzpicture}[font=\small]
\node[arrow box] (p) at (0,-1.3) {$p$};
\node[star] (s1) at (0,-0.6) {};
\node[copier] (copier) at (0,-0.15) {};
\coordinate (R) at (0.5,0.3) {};
\coordinate (L) at (-0.5,0.3) {};
\coordinate (Ls) at (-0.5,1.4) {};
\coordinate (Rs) at (0.5,1.4) {};
\node[star] (s2) at (-0.5,1.5) {};
\node[star] (s3) at (0.5,1.5) {};
\node[arrow box] (f) at (-0.5,2.1) {$f$};
\node[arrow box] (h) at (0.5,2.1) {$h$};
\coordinate (X) at (0.5,2.65);
\coordinate (Y) at (-0.5,2.65);
\draw (0,-1.9) to (p);
\draw (p) to (s1);
\draw (s1) to (copier);
\draw (copier) to [out=15,in=-90] (R);
\draw (R) to [out=90,in=-90] (Ls);
\draw (L) to [out=90,in=-90] (Rs);
\draw (Ls) to (s2);
\draw (s2) to (f);
\draw (Rs) to (s3);
\draw (s3) to (h);
\draw (h) to (X);
\draw (copier) to[out=165,in=-90] (L);
\draw (f) to (Y);
\end{tikzpicture}}}
\;=
\vcenter{\hbox{
\begin{tikzpicture}[font=\small]
\node[arrow box] (p) at (0,-0.8) {$p$};
\node[star] (s1) at (0,-1.5) {};
\node[copier] (copier) at (0,-0.15) {};
\coordinate (R) at (0.5,0.3) {};
\coordinate (L) at (-0.5,0.3) {};
\coordinate (Ls) at (-0.5,1.4) {};
\coordinate (Rs) at (0.5,1.4) {};
\node[star] (s2) at (-0.5,2.35) {};
\node[star] (s3) at (0.5,2.35) {};
\node[arrow box] (f) at (-0.5,1.7) {$f$};
\node[arrow box] (h) at (0.5,1.7) {$h$};
\coordinate (X) at (0.5,2.65);
\coordinate (Y) at (-0.5,2.65);
\draw (0,-1.9) to (s1);
\draw (s1) to (p);
\draw (p) to (copier);
\draw (copier) to [out=15,in=-90] (R);
\draw (R) to [out=90,in=-90] (Ls);
\draw (L) to [out=90,in=-90] (Rs);
\draw (Ls) to (f);
\draw (f) to (s2);
\draw (Rs) to (h);
\draw (h) to (s3);
\draw (s3) to (X);
\draw (copier) to[out=165,in=-90] (L);
\draw (s2) to (Y);
\end{tikzpicture}}}
\;=\;
\vcenter{\hbox{
\begin{tikzpicture}[font=\small]
\node[arrow box] (p) at (0,-0.8) {$p$};
\node[star] (s1) at (0,-1.5) {};
\node[copier] (copier) at (0,-0.15) {};
\coordinate (R) at (0.5,0.3) {};
\coordinate (Ls) at (-0.5,1.6) {};
\coordinate (Rs) at (0.5,1.6) {};
\node[star] (s2) at (-0.5,1.7) {};
\node[star] (s3) at (0.5,1.7) {};
\node[arrow box] (f) at (0.5,0.5) {$f$};
\node[arrow box] (h) at (-0.5,0.5) {$h$};
\coordinate (X) at (0.5,2.1);
\coordinate (Y) at (-0.5,2.1);
\draw (0,-1.9) to (s1);
\draw (s1) to (p);
\draw (p) to (copier);
\draw (copier) to [out=15,in=-90] (f);
\draw (f) to [out=90,in=-90] (Ls);
\draw (h) to [out=90,in=-90] (Rs);
\draw (Ls) to (s2);
\draw (Rs) to (s3);
\draw (s3) to (X);
\draw (copier) to[out=165,in=-90] (h);
\draw (s2) to (Y);
\end{tikzpicture}}}
\;=
\vcenter{\hbox{
\begin{tikzpicture}[font=\small]
\node[arrow box] (p) at (0,-0.8) {$s$};
\node[star] (s1) at (0,-1.5) {};
\node[copier] (copier) at (0,-0.15) {};
\coordinate (R) at (0.5,0.3) {};
\coordinate (Ls) at (-0.5,1.6) {};
\coordinate (Rs) at (0.5,1.6) {};
\node[star] (s2) at (-0.5,1.7) {};
\node[star] (s3) at (0.5,1.7) {};
\node[arrow box] (g) at (0.5,0.5) {$g$};
\node[arrow box] (h) at (-0.5,0.5) {$k$};
\coordinate (X) at (0.5,2.1);
\coordinate (Y) at (-0.5,2.1);
\draw (0,-1.9) to (s1);
\draw (s1) to (p);
\draw (p) to (copier);
\draw (copier) to [out=15,in=-90] (g);
\draw (g) to [out=90,in=-90] (Ls);
\draw (h) to [out=90,in=-90] (Rs);
\draw (Ls) to (s2);
\draw (Rs) to (s3);
\draw (s3) to (X);
\draw (copier) to[out=165,in=-90] (h);
\draw (s2) to (Y);
\end{tikzpicture}}}
\;\;.
\ee 
One then rewinds the steps with $f$ replaced by $g$, $h$ replaced by $k$,  and $p$ replaced by $s$. A completely analogous argument holds if the right-hand-side of~(\ref{eq:hfpkhs}) is assumed. 
\eprf

\begin{cor}[When left and right a.e.\ equivalence coincide]{cor:leftequalsrightaeequivalence}
In a quantum Markov category (using the same notation as in Definition~\ref{defn:aeequivalence}),
$f$ is right $p$-a.e.\ equivalent to $g$ if and only if $f$ is left $p$-a.e.\ equivalent to $g$ provided $f,p,$ and $g$ are $*$-preserving.  
In particular, if $f$ is left (or right) $p$-a.e.\ equivalent to $g$, then $f$ is $p$-a.e.\ equivalent to $g$. 
\end{cor}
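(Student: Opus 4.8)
The plan is to obtain the corollary as an immediate specialization of Lemma~\ref{prop:hfpkgs}. The key observation is that the two equations appearing in~(\ref{eq:hfpkhs}) are \emph{exactly} the right and left $p$-a.e.\ equivalence conditions of Definition~\ref{defn:aeequivalence}, once one of the two parallel branches of the copy is collapsed to an identity wire. So rather than redo any string-diagram manipulation, I would simply feed the right data into the coffee-cup lemma.

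Concretely, I would apply Lemma~\ref{prop:hfpkgs} with the following choices: take $W=X$ and $s=p$, so that both states in the lemma coincide with the given $\Theta\xstoch{p}X$; set $h=k=\id_X$, which is $*$-preserving since $\id_X\circ*_X=*_X=*_X\circ\id_X$ (and even, hence a morphism of $\mM_*$), so that the object $Z$ becomes $X$ and the $h$- and $k$-boxes degenerate to plain wires; and let the two morphisms into $Y$ be the given $f$ and $g$. Under these substitutions the left-hand equation of~(\ref{eq:hfpkhs}) reads $(\id_X\otimes f)\circ\D_X\circ p=(\id_X\otimes g)\circ\D_X\circ p$, which is precisely the statement that $f$ is right $p$-a.e.\ equivalent to $g$, while the right-hand equation reads $(f\otimes\id_X)\circ\D_X\circ p=(g\otimes\id_X)\circ\D_X\circ p$, which is precisely left $p$-a.e.\ equivalence. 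Since Lemma~\ref{prop:hfpkgs} asserts these two equations are equivalent, we conclude that $f$ is right $p$-a.e.\ equivalent to $g$ if and only if $f$ is left $p$-a.e.\ equivalent to $g$.

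The only point requiring genuine care is the bookkeeping of the left/right orientation of the copy map, i.e.\ confirming that collapsing the left branch of~(\ref{eq:hfpkhs}) to a wire reproduces the \emph{horizontally flipped} (right-equivalence) diagram of Definition~\ref{defn:aeequivalence}, and the right branch the unflipped (left-equivalence) one; this is a direct comparison of the string diagrams and invokes no hypothesis beyond $*$-preservation of $f$, $g$, $p$ (already assumed) together with that of $\id_X$ (just verified). I expect this orientation-matching to be the entire substance of the argument, since the analytic content lives in the lemma. The final ``in particular'' assertion is then immediate: by Definition~\ref{defn:aeequivalence}, $f\aeequals{p}g$ means $f$ is \emph{both} left and right $p$-a.e.\ equivalent to $g$, so if either one-sided equivalence holds, the equivalence just established upgrades it to the other, and hence to two-sided $p$-a.e.\ equivalence.
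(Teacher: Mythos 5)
Your proposal is correct and is essentially identical to the paper's own proof: the paper likewise derives the corollary by specializing Lemma~\ref{prop:hfpkgs} to the square with $W=Z=X$, $h=k=\id_{X}$, and $s=p$, and your orientation bookkeeping (left-hand equation of~(\ref{eq:hfpkhs}) $=$ right $p$-a.e.\ equivalence, right-hand equation $=$ left $p$-a.e.\ equivalence) matches the diagrams exactly. The only difference is cosmetic: you spell out the wire-collapsing and the $*$-preservation of $\id_{X}$, which the paper leaves implicit.
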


\bprf 
This follows from Lemma~\ref{prop:hfpkgs} in the special case described by the diagram 
\be
\xy0;/r.25pc/:
(0,0)*+{\Theta}="0";
(-10,7.5)*+{X}="X";
(10,7.5)*+{Y}="Y";
(10,-7.5)*+{X}="W";
(-10,-7.5)*+{X}="Z";
{\ar@{~>}"X";"Y"^{f}};
{\ar"X";"Z"_{\id_{X}}};
{\ar"W";"Z"^{\id_{X}}};
{\ar@{~>}"W";"Y"_{g}};
{\ar@{~>}"0";"X"^{p}};
{\ar@{~>}"0";"W"_{p}};
\endxy
.
\ee
\eprf

If the morphisms in Definition~\ref{defn:aeequivalence} are \emph{not} $*$-preserving, which can happen in the quantum setting, there are instances where left and right a.e.\ equivalence are inequivalent (besides the following remark, see also Example~\ref{rmk:aedoesnotimplydoubleae} and Remark~\ref{rmk:nonstarpresbayesnotae}).

\begin{rmk}[$*$-preservation is crucial for left-right a.e.\ symmetry]{rmk:hfpkgs}
The $*$-preserving condition in Lemma~\ref{prop:hfpkgs}, and hence Corollary~\ref{cor:leftequalsrightaeequivalence}, is crucial. Here is a counterexample in the category of finite-dimensional $C^*$-algebras and unital linear maps. 
Let $\mA$ and $\mB$ both be $\mathcal{M}_{2}(\C)$ and let $\w=\tr(\r\;\cdot\;)$, where $\rho=\left[\begin{smallmatrix}1&0\\0&0\end{smallmatrix}\right]$. Let $F=\id_{\mathcal{M}_{2}(\C)}$ and set 
\[
\begin{split}
\mathcal{M}_{2}(\C)&\xstoch{F'}\mathcal{M}_{2}(\C)\\
\begin{bmatrix}b_{11}&b_{12}\\b_{21}&b_{22}\end{bmatrix}&\xmapsto{\qquad}
\begin{bmatrix}b_{11}&b_{12}\\0&b_{22}\end{bmatrix}.
\end{split}
\] 
Then, one can easily check that $F$ is unital, $\w=\w\circ F=\w\circ F'$, and
\[
\w\big(F(B)A\big)=\w\big(F'(B)A\big)\quad\forall\;A,B,
\quad\text{while}\quad
\w\big(AF(B)\big)\ne\w\big(AF'(B)\big)\quad\forall\;A,B
.
\]
This is because $F'$ is not $*$-preserving. However, if $\rho$ happens to commute with the images of $F$ and $F'$, then the two notions agree and all the expressions above are equal (due to the cyclicity of trace). Therefore, one can view the difference of these two notions of a.e.\ equivalence as being related to the non-commutativity present in the quantum setting. 
\end{rmk}

One of the key ingredients to the Gelfand--Naimark--Segal construction is the concept of the nullspace of a state on a $C^*$-algebra. A closely related idea is the right nullspace of a PU map, which was studied by Choi, among others, under the name ``left kernel''~\cite{Ch74}. 

\begin{defn}[The left and right nullspace of a PU map]{defn:nullspace}
Let $\mA\xstoch{\omega}\mC$ be a PU 
map of $C^*$-algebras, the \define{left/right nullspace} of $\omega$ is the subset 
of $\mA$ given by%
\footnote{When $\omega$ is SPU, the nullspaces ${}_{\w}\mathcal{N}$ and $\mathcal{N}_{\omega}$ are the largest right and left ideals, respectively (note the order), contained in the kernel of $\omega$~\cite[Remark~3.4]{Ch74}.}
\[
{}_{\w}\mathcal{N}:=\big\{A\in\mA\;:\;\omega(AA^*)=0\big\}
\quad/\quad
\mathcal{N}_{\omega}:=\big\{A\in\mA\;:\;\omega(A^*A)=0\big\}
.
\]
The PU map $\omega$ is said to be \define{faithful} iff $\mathcal{N}_{\omega}=0$.%
\footnote{This is equivalent to ${}_{\omega}\mathcal{N}=0$ since $*$ is an involution. Indeed, suppose $\omega(AA^*)=0$. Then $\omega(\uline{A}^*\uline{A})=0$, where $\uline{A}:=A^*$. Hence, $\uline{A}=0$, which implies $A=0$. This shows $\mathcal{N}_{\omega}=0$ implies ${}_{\omega}\mathcal{N}=0$. A similar calculation shows the converse.}
\end{defn}

We find the nullspace of a PU map between $C^*$-algebras to be a convenient mathematical object that serves to facilitate certain computations and relate two (a-prior different) notions of a.e.\ equivalence in the category $\fdCAlgY$. States on finite-dimensional algebras (or von~Neumann algebras) admit a particularly convenient form for the nullspace. 

\begin{lem}[The support of a state]{lem:support}
Let $\mA$ be a finite-dimensional $C^*$-algebra (or more generally a von~Neumann algebra) and let $\mA\xstoch{\w}\C$ be a (PU) state. Then there exists a unique smallest projection $P_{\w}$ satisfying
\[
\w(A)=\w(P_{\w}A)=\w(AP_{\w})=\w(P_{\w}AP_{\w})\qquad\forall\;A\in\mA. 
\]
The projection $P_{\w}$ is called the \define{support} of $\omega$.
In particular, if $P_{\w}^{\perp}:=1_{\mA}-P_{\w}$, then $\w(P_{\w}^{\perp}A)=0$ and $\w(AP_{\w}^{\perp})=0$ for all $A\in\mA$. Furthermore, 
\[
{}_{\w}\mathcal{N}=P_{\omega}^{\perp}\mA
\quad\text{ and }\quad
\mathcal{N}_{\omega}=\mA P_{\omega}^{\perp}.
\]
\end{lem}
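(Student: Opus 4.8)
The plan is to realize the support projection through the left nullspace $\mathcal{N}_\w$ together with the standard structure theory of left ideals. First I would record the two elementary consequences of positivity of $\w$. The Cauchy--Schwarz inequality for the positive sesquilinear form $(A,B)\mapsto\w(A^*B)$ gives $|\w(A^*B)|^2\le\w(A^*A)\,\w(B^*B)$, from which several facts follow at once: (i) $\mathcal{N}_\w$ is contained in $\ker\w$ (take $A=1_{\mA}$) and is a left ideal (since $\w\big((BA)^*(BA)\big)\le\lVert B\rVert^2\w(A^*A)$); (ii) $A\in\mathcal{N}_\w$ iff $\w(BA)=0$ for all $B\in\mA$; and symmetrically ${}_\w\mathcal{N}=(\mathcal{N}_\w)^*$ is a right ideal with $A\in{}_\w\mathcal{N}$ iff $\w(AB)=0$ for all $B$. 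In particular $\mathcal{N}_\w$ is norm-closed in the finite-dimensional case and $\sigma$-weakly closed when $\w$ is normal.

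Next I would invoke the structure theorem for closed left ideals: in a finite-dimensional $C^*$-algebra, or in a von~Neumann algebra for the weakly closed ideal $\mathcal{N}_\w$ arising from a normal state, every such ideal has the form $\mA e$ for a unique projection $e$. I then set $P_\w^\perp:=e$ and $P_\w:=1_{\mA}-e$. This immediately yields $\mathcal{N}_\w=\mA P_\w^\perp$, and applying the involution gives ${}_\w\mathcal{N}=(\mathcal{N}_\w)^*=P_\w^\perp\mA$, which are the last two displayed identities.

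With $e=P_\w^\perp\in\mathcal{N}_\w\cap{}_\w\mathcal{N}$ one has $\w(e)=0$ (using $e^*e=ee^*=e$), so Cauchy--Schwarz gives $\w(AP_\w^\perp)=0$ and $\w(P_\w^\perp A)=0$ for all $A\in\mA$. Decomposing $A=(P_\w+P_\w^\perp)A$ and $A=A(P_\w+P_\w^\perp)$ then yields $\w(A)=\w(P_\w A)=\w(AP_\w)$, and since $\w(P_\w A P_\w^\perp)=0$ as a special case, $\w(P_\w A P_\w)=\w(P_\w A)=\w(A)$; these are the four defining equalities.

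Finally, for minimality and uniqueness: for a projection $P$, the four equalities are equivalent to $\w(1_{\mA}-P)=0$ (one direction by setting $A=1_{\mA}$ in $\w(A)=\w(PA)$, the other by the Cauchy--Schwarz argument applied to the projection $1_{\mA}-P\in\mathcal{N}_\w\cap{}_\w\mathcal{N}$). But $\w(1_{\mA}-P)=0$ means $1_{\mA}-P\in\mathcal{N}_\w=\mA P_\w^\perp$, i.e.\ $(1_{\mA}-P)P_\w^\perp=1_{\mA}-P$, which forces $1_{\mA}-P\le P_\w^\perp$, equivalently $P\ge P_\w$. As $P_\w$ itself satisfies the equalities, it is the least such projection, and a least element is automatically unique. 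The main obstacle is the structure theorem identifying $\mathcal{N}_\w$ with $\mA P_\w^\perp$; in finite dimensions this is elementary, and one may instead argue concretely by writing $\w=\Tr(\rho\,\cdot\,)$ and taking $P_\w$ to be the range projection of the density matrix $\rho$, so that $\w(A^*A)=\lVert A\rho^{1/2}\rVert_2^2=0\iff AP_\w^\perp=A$, while the von~Neumann case rests on normality of $\w$ to guarantee that $\mathcal{N}_\w$ is $\sigma$-weakly closed.
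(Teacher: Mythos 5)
Your proof is correct. Note first that the paper does not actually prove this lemma itself: its entire ``proof'' is a citation to Section~1.14 of Sakai, so you have supplied the argument the paper outsources. Your route is essentially the standard operator-algebraic one that the citation points to: use Cauchy--Schwarz for $(A,B)\mapsto\w(A^*B)$ to show $\mathcal{N}_{\w}$ is a left ideal with $\mathcal{N}_{\w}=\{A:\w(BA)=0\;\forall B\}$ (this characterization also gives $\sigma$-weak closedness for free, as an intersection of kernels of normal functionals), invoke the structure theorem that a closed left ideal equals $\mA e$ for a unique projection $e$, set $P_{\w}^{\perp}:=e$, and then derive the four equalities, the ideal identities ${}_{\w}\mathcal{N}=(\mathcal{N}_{\w})^*=P_{\w}^{\perp}\mA$, and minimality via the clean equivalence (for a projection $P$) of the four equalities with $\w(1_{\mA}-P)=0$, which forces $1_{\mA}-P\in\mA P_{\w}^{\perp}$ and hence $P\ge P_{\w}$. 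All of these steps check out, including the projection-order fact that $(1_{\mA}-P)P_{\w}^{\perp}=1_{\mA}-P$ implies $1_{\mA}-P\le P_{\w}^{\perp}$. Two features of your write-up add value beyond the paper's citation: the concrete finite-dimensional shortcut via $\w=\Tr(\rho\,\cdot\,)$ and $P_{\w}=$ range projection of $\rho$, with $\w(A^*A)=\lVert A\rho^{1/2}\rVert_2^2$, is a genuinely elementary alternative to the ideal-theoretic machinery in the case the paper mostly uses; and your normality caveat is apt --- the lemma as stated for a general (PU) state on an arbitrary von~Neumann algebra is false (a singular state on $\ell^{\infty}$ has no smallest support projection), and Sakai's treatment indeed concerns normal functionals, while in finite dimensions every state is automatically normal, so the paper's main use case is unaffected. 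The only cosmetic point: your parenthetical ``take $A=1_{\mA}$'' for $\mathcal{N}_{\w}\subseteq\ker\w$ means taking the first slot of the form to be $1_{\mA}$, i.e.\ $|\w(B)|^2\le\w(1_{\mA})\w(B^*B)$; that is what you intend, but it is worth stating the roles of the slots explicitly.
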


\bprf
See Section~1.14 of Sakai~\cite{Sa71}. 
\eprf

\begin{rmk}[Techniques on positivity]{rmk:supportconditions}
Any one of the three equalities in Lemma~\ref{lem:support} implies the other two. The proof illustrates some simple, but useful, techniques, which we will use very often.%
\footnote{It is in fact these kinds of manipulations that motivated us to introduce the involution into the string-diagrammatic language.}
 For example, suppose $\w(P_{\w}AP_{\w})=\w(AP_{\w})$ for all $A\in\mA$. Then 
\[
\w(P_{\w}A)=\w\big((A^*P_{\w})^*\big)=\overline{\w(A^*P_{\w})}=\overline{\w(P_{\w}A^*P_{\w})}=\w(P_{\w}AP_{\w})
\]
since $\w(A^*)=\overline{\w(A)}$ for all $A\in\mA$ (this is a consequence of positivity of $\w$). This proves one of the two remaining equalities. 
For the last equality, $\omega(A)=\omega(AP_{\omega})$, first note that if $A\ge0$, then 
\[
\begin{split}
0&\le\w\big(P_{\w}^{\perp}AP_{\w}^{\perp}\big)\quad\text{ since $\w$ is a positive functional and $P_{\w}^{\perp}AP_{\w}^{\perp}\ge0$}\\
&\le\lVert A\rVert\w\big(P_{\w}^{\perp}\big)\quad\text{ since $A\le\lVert A\rVert1_{\mA}$ and $\w$ is linear ($\lVert\;\cdot\;\rVert$ is the norm on $\mA$)}\\
&=0\quad\text{ by the previous equality}.
\end{split}
\]
This proves $\w\big(P_{\w}^{\perp}AP_{\w}^{\perp}\big)=0$ for $A\ge0$. This equality also holds for arbitrary $A$ since every $A$ can always be expressed as a linear combination of at most four positive elements.%
\footnote{Break $A$ up into a self-adjoint and skew-adjoint element, and then spectrally decompose the two results and split the negative and positive terms apart.}
Therefore, 
\[
\w(A)=\w\big(AP_{\w}+P_{\w}AP_{\w}^{\perp}+P_{\w}^{\perp}AP_{\w}^{\perp}\big)=\w(AP_{\w})+\w\big(P_{\w}AP_{\w}^{\perp}\big)+\w\big(P_{\w}^{\perp}AP_{\w}^{\perp}\big)=\w(AP_{\w})
\]
since the second term equals $\w\big(P_{\w}AP_{\w}^{\perp}\big)=\w\big(P_{\w}AP_{\w}^{\perp}P_{\w}\big)=0$ by our assumed identity. 
\end{rmk}

The following theorem is one of the main results in this paper. It offers a useful technique translating between the string-diagrammatic definition of a.e.\ equivalence and the operator-algebraic definition from~\cite{PaRu19}. This result will be utilized in proving many of the main theorems referring to $C^*$-algebras from now on. 

\begin{theo}[The many equivalent definitions of a.e.\ equivalence in $\fdCAlgY$]{thm:ncaeequivalence}
Let $\mA$, $\mB$, and $\mC$ be $C^*$-algebras, let $\mA\xstoch{\w}\mC$ be an SPU map, and let $F,G:\mB\stoch\mA$ be linear maps.
Consider the following four conditions. 
\begin{enumerate}[(a)]
\itemsep0pt
\item
\label{item:leftae}
$F$ is left $\w$-a.e.\ equivalent to $G$ in the sense of Definition~\ref{defn:aeequivalence}.
\item
\label{item:rightae}
$F$ is right $\w$-a.e.\ equivalent to $G$ in the sense of Definition~\ref{defn:aeequivalence}.
\item
\label{item:nullspaceae}
$F(B)-G(B)$ is in the right nullspace $\mathcal{N}_{\w}$ of $\w$ for all $B\in\mB$. 
\item
\label{item:aeP}
If $\mC=\C$ (so that $\omega$ is a state), then 
$F(B)P_{\w}=G(B)P_{\w}$ for all $B\in\mB$. 
\end{enumerate}
Then the following facts hold.
\begin{enumerate}[i.]
\itemsep0pt
\item
Conditions (\ref{item:rightae}) and (\ref{item:nullspaceae}) are equivalent.%
\footnote{A similar equivalence holds between (\ref{item:leftae}) and $F(B)-G(B)\in{}_{\omega}\mathcal{N}$ for all $B\in\mB$.}
\item
If $F$ and $G$ are $*$-preserving, then conditions (\ref{item:leftae}), (\ref{item:rightae}), and (\ref{item:nullspaceae}) are all equivalent. 
\item
If $\mC=\C$, then conditions (\ref{item:rightae}), (\ref{item:nullspaceae}), 
and (\ref{item:aeP}) are all equivalent. 
\item
If $\mC=\C$ and if $F$ and $G$ are $*$-preserving, then all conditions are equivalent. 
\end{enumerate}
\end{theo}

\pagebreak
\bprf
{\color{white}{you found me!}}

\begin{enumerate}[i.]
\itemsep0pt
\item
To see that (\ref{item:rightae}) is equivalent to (\ref{item:nullspaceae}), first suppose (\ref{item:rightae}) holds. This means
\be
\w\big(AF(B)\big)=\w\big(AG(B)\big)
\qquad\forall\;A\in\mA,\;B\in\mB.
\ee
By linearity of $\w$, this is equivalent to
\be
\w\Big(A\big(F(B)-G(B)\big)\Big)=0\qquad\forall\;A\in\mA,\;B\in\mB.
\ee
In particular, one can set $A:=\big(F(B)-G(B)\big)^*$. This immediately gives condition (\ref{item:nullspaceae}).
Now, suppose (\ref{item:nullspaceae}) holds. Then 
\be
\label{eq:provingae0}
\begin{split}
0&=\omega\Big(\big(F(B)-G(B)\big)^*\big(F(B)-G(B)\big)\Big)\quad\text{ by assumption}\\
&\ge\omega\Big(\big(F(B)-G(B)\big)\Big)^*\omega\Big(\big(F(B)-G(B)\big)\Big)\quad\text{ since $\omega$ is SPU}\\
&\ge0\quad\text{ since $C^*C\ge0$.}
\end{split}
\ee
Hence, all inequalities become equalities. Therefore, 
\be
\begin{split}
\omega\Big(A^*\big(F(B)-G(B)\big)\Big)&=\omega(A)^*\omega\Big(\big(F(B)-G(B)\big)\Big)\quad\text{by the Multiplication Lemma}\\
&=0\quad\text{by (\ref{eq:provingae0})}, 
\end{split}
\ee
where the last step applies since $C^*C=0$ implies $C=0$. 
Rearranging this gives $\omega\big(A^*F(B)\big)=\omega\big(A^*G(B)\big)$. Since $*$ is an involution, this proves (\ref{item:rightae}). 

\item
This follows from the previous steps and Corollary~\ref{cor:leftequalsrightaeequivalence} ($\w$ is $*$-preserving because it is positive), which proves (\ref{item:leftae}) is equivalent to (\ref{item:rightae}). 

\item
The equivalence between conditions (\ref{item:nullspaceae}) and (\ref{item:aeP}) follows from the identity $\mathcal{N}_{\w}=\mathcal{A}P_{\w}^{\perp}$ (cf.\ Lemma~3.42 in~\cite{PaRu19}). 

\item
This follows from all the previous statements. \qedhere
\end{enumerate}
\eprf

The definition of a.e.\ equivalence for morphisms of $C^*$-algebras in terms of the nullspace of a state (item~(\ref{item:nullspaceae}) in Theorem~\ref{thm:ncaeequivalence}) was introduced in~\cite{PaRu19}. It was motivated by the GNS construction and had little to do with diagrammatic reasoning, so it is quite satisfying that our definition coincides with the categorical (Definition~\ref{defn:aeequivalence}) due to Cho and Jacobs~\cite{ChJa18} (and extended by Fritz in~\cite[Definition~13.1]{Fr19}) when $\omega$ is SPU and when $F$ and $G$ are $*$-preserving. However, when $F$ and $G$ are merely linear, there the distinction between left and right a.e.\ equivalence becomes important. This difference will manifest itself when we distinguish certain properties that one can demand on subcategories of quantum Markov categories. These include a.e.\ determinism, a.e.\ modularity, causality, strict positivity, and Bayesian inversion, all of which will be discussed in later sections. 

\begin{rmk}[A simpler proof for a.e.\ equivalence with respect to a state]{a014}
When $\mC=\C$, there is an even simpler proof that (\ref{item:aeP}) implies (\ref{item:rightae}) in Theorem~\ref{thm:ncaeequivalence} using projections, namely
\[
\w\big(AF(B)\big)
\overset{\text{Lem~\ref{lem:support}}}{=\joinrel=\joinrel=\joinrel=\joinrel=}
\w\big(AF(B)P_{\w}\big)
=
\w\big(AG(B)P_{\w}\big)
\overset{\text{Lem~\ref{lem:support}}}{=\joinrel=\joinrel=\joinrel=\joinrel=}
\w\big(AG(B)\big)
\]
for all $A\in\mA$ and $B\in\mB$. 
One of the convenient properties of condition~(\ref{item:aeP}) in Theorem~\ref{thm:ncaeequivalence} is that it is linear and involves only a single variable, as opposed to the definition of right a.e.\ equivalence from Definition~\ref{defn:aeequivalence}, which involves two variable inputs. 
\end{rmk}

The following remark describes a feature of a.e.\ equivalence in the non-commutative setting and is meant to clarify any potential misunderstandings. 

\begin{rmk}[A.e.\ equivalence sees more than the supported corner]{rmk:supportedcornernotenough}
If $\mM_{m}(\C)\xstoch{F}\mM_{m}(\C)$ is a CPU map, $\mM_{m}(\C)\xstoch{\w}\C$ is a state, and $F\aeequals{\omega}\id_{\mM_{m}(\C)}$, i.e.\ $F(A)P_{\omega}=AP_{\omega}$, then $F=\id_{\mM_{m}(\C)}$~\cite[Theorem~3.67]{PaRu19}.%
\footnote{This is false if $F$ is merely $*$-preserving and unital.}
However, if $\mathrm{Ad}_{P_{\omega}}\circ F=\mathrm{Ad}_{P_{\omega}}$, i.e.\ $P_{\omega}F(A)P_{\omega}=P_{\omega}AP_{\omega}$, then it is not necessarily the case that $F$ equals the identity map. 
Indeed, consider the $m=2$ case and take the density matrix $\rho=e_{1}e_{1}^*$ ($e_{1}$ is the first standard unit vector of $\C^{m}$) with associated state $\w=\tr(\rho\;\cdot\;)$. Then the CPU map (which is even a $^*$-isomorphism)
\[
F\left(
\begin{bmatrix}
a&b\\c&d
\end{bmatrix}
\right):=
\mathrm{Ad}_{\left[\begin{smallmatrix}1&0\\0&-1\end{smallmatrix}\right]}\left(
\begin{bmatrix}
a&b\\c&d
\end{bmatrix}
\right)
=
\begin{bmatrix}
a&-c\\-b&d
\end{bmatrix}
\]
satisfies $\w=\w\circ F$, $\mathrm{Ad}_{P_{\w}}\circ F=\mathrm{Ad}_{P_{\w}}$, and $F\circ \mathrm{Ad}_{P_{\w}}=\mathrm{Ad}_{P_{\w}}$, but $F\ne\id_{\mathcal{M}_{n}}$. 
Thus, one should keep in mind that there is a good deal of information about $F(A)$ in $F(A)P_{\w}$, which would be lost if one worked with only $P_{\w}F(A)P_{\w}$. The notion of a.e.\ equivalence we are using keeps track of this additional information. 
\end{rmk}

\section[Almost everywhere determinism]{a016}
\label{sec:aedet}
\vspace{-12mm}
\noindent
\begin{tikzpicture}
\coordinate (L) at (-8.75,0);
\coordinate (R) at (8.75,0);
\draw[line width=2pt,orange!20] (L) -- node[anchor=center,rectangle,fill=orange!20]{\strut \Large \textcolor{black}{\textbf{6\;\; Almost everywhere determinism}}} (R);
\end{tikzpicture}

We have two reasonable notions of being deterministic almost everywhere (besides just the distinction between left/right notions as in a.e.\ equivalence).
The notion of a morphism being a.e.\ deterministic was introduced recently by Fritz~\cite[Definition~13.10]{Fr19}. In the setting of operator algebras while studying the relationship between disintegrations and Bayesian inversion, we have also found this notion to be of great importance (though our initial discovery was obtained in terms of projections). It is again reassuring that the categorical definition agrees with ours. In addition, another natural candidate we found important to distinguish is that of being a.e.\ equivalent to a deterministic morphism. This would allow one to replace a morphism with a deterministic one in certain computations. 

\begin{defn}[A.e.\ deterministic morphisms]{defn:aeequivalentdeterministic}
Let $\Theta\xstoch{p}X$ be any morphism and let $X\xstoch{f}Y$ be an even morphism in a quantum Markov category. The morphism $f$ is \define{left/right $p$-a.e.\ equivalent to a deterministic morphism} iff there exists a deterministic morphism $X\xstoch{g}Y$ such that 
$f$ is left/right $p$-a.e.\ equal to $g$. 
The morphism $f$ is \define{left/right $p$-a.e.\ deterministic} iff 
\[
\left.
\vcenter{\hbox{
\begin{tikzpicture}[font=\small]
\node[arrow box] (p) at (0,-0.3) {$p$};
\node[copier] (c) at (0,0.3) {};
\node[copier] (c2) at (-0.5,0.75) {};
\node[arrow box] (f) at (-1,1.4) {$f$};
\node[arrow box] (e) at (0,1.4) {$f$};
\coordinate (X) at (0.7,1.95);
\coordinate (Y1) at (-1,1.95);
\coordinate (Y2) at (0,1.95);
\draw (0,-0.9) to (p);
\draw (p) to (c);
\draw (c) to[out=15,in=-90] (X);
\draw (c) to[out=165,in=-90] (c2);
\draw (c2) to[out=165,in=-90] (f);
\draw (c2) to[out=15,in=-90] (e);
\draw (f) to (Y1);
\draw (e) to (Y2);
\end{tikzpicture}}}
\quad=\quad
\vcenter{\hbox{
\begin{tikzpicture}[font=\small]
\node[arrow box] (p) at (0,-0.3) {$p$};
\node[copier] (c) at (0,0.3) {};
\node[copier] (c2) at (-0.5,1.45) {};
\node[arrow box] (f) at (-0.5,0.95) {$f$};
\coordinate (X) at (0.7,1.95);
\coordinate (Y1) at (-1,1.95);
\coordinate (Y2) at (0,1.95);
\draw (0,-0.9) to (p);
\draw (p) to (c);
\draw (c) to[out=15,in=-90] (X);
\draw (c) to[out=165,in=-90] (f);
\draw (c2) to[out=165,in=-90] (-1,1.95);
\draw (c2) to[out=15,in=-90] (0,1.95);
\draw (f) to (c2);
\end{tikzpicture}}}
\qquad
\middle/
\qquad
\vcenter{\hbox{
\begin{tikzpicture}[font=\small]
\node[arrow box] (p) at (0,-0.3) {$p$};
\node[copier] (c) at (0,0.3) {};
\node[copier] (c2) at (0.5,0.75) {};
\node[arrow box] (f) at (1,1.4) {$f$};
\node[arrow box] (e) at (0,1.4) {$f$};
\coordinate (X) at (-0.7,1.95);
\coordinate (Y1) at (1,1.95);
\coordinate (Y2) at (0,1.95);
\draw (0,-0.9) to (p);
\draw (p) to (c);
\draw (c) to[out=165,in=-90] (X);
\draw (c) to[out=15,in=-90] (c2);
\draw (c2) to[out=15,in=-90] (f);
\draw (c2) to[out=165,in=-90] (e);
\draw (f) to (Y1);
\draw (e) to (Y2);
\end{tikzpicture}}}
\quad=\quad
\vcenter{\hbox{
\begin{tikzpicture}[font=\small]
\node[arrow box] (p) at (0,-0.3) {$p$};
\node[copier] (c) at (0,0.3) {};
\node[copier] (c2) at (0.5,1.45) {};
\node[arrow box] (f) at (0.5,0.95) {$f$};
\coordinate (X) at (-0.7,1.95);
\coordinate (Y1) at (1,1.95);
\coordinate (Y2) at (0,1.95);
\draw (0,-0.9) to (p);
\draw (p) to (c);
\draw (c) to[out=165,in=-90] (X);
\draw (c) to[out=15,in=-90] (f);
\draw (c2) to[out=15,in=-90] (Y1);
\draw (c2) to[out=165,in=-90] (0,1.95);
\draw (f) to (c2);
\end{tikzpicture}}}
\right.
\quad.
\]
\end{defn}

A-priori these notions are all different, and only in certain subcategories of certain quantum Markov categories do they agree. For example, left and right notions are equivalent in classical Markov categories and $*$-preserving subcategories of quantum Markov categories (cf.\ Lemma~\ref{lem:symmetryaedeterminism}). Rather than proving this now, we first provide some examples and non-examples.

\begin{exa}[A.e.\ determinism in $\FinStoch$]{ex:paedetforfinstoch}
Using the same notation as in Definition~\ref{defn:aeequivalentdeterministic} but in the category $\FinStoch$, 
a stochastic map $f$ is $p$-a.e.\ deterministic if and only if 
\[
f_{yx}f_{y'x}p_{x\theta}=\de_{y'y}f_{yx}p_{x\theta}\qquad\forall\;\theta\in\Theta,\;x\in X,\;y,y'\in Y. 
\]
Therefore, $f_{yx}f_{y'x}=\de_{yy'}f_{yx}$ for some $x\in X$ and all $y,y'\in Y$ if there exists a $\q$ such that $p_{x\theta}>0$. In this case, $f_{yx}f_{y'x}=0$ when $y\ne y'$ and $(f_{yx})^2=f_{yx}$. This means $f_{yx}\in\{0,1\}$. Since $f_{x}$ is a probability measure, this implies there exists a unique $y$ such that $f_{yx}=1$. Set
\[
N_{p}:=\big\{x\in X\;:\;p_{x\theta}=0\;\forall\;\theta\in\Theta\big\}=\bigcap_{\theta\in\Theta}N_{p_{\theta}},
\]
where $N_{p_{\theta}}\subset X$ is the usual nullspace of the probability measure $p_{\theta}$. Then this analysis says that $f$, when restricted to $X\setminus N_{p}$, corresponds to a function from $X\setminus N_{p}$ to $Y$. Hence, for $x\in X\setminus N_{p}$, set $g_{yx}:=f_{yx}$ for all $y\in Y$. No information about the form of $f_{x}$ is obtained for $x\in N_{p}$ from the condition of a.e.\ determinism. Nevertheless, if $x\in N_{p}$, then set $g_{x}$ to be \emph{any} (unit) point measure on $Y$. Then $g$ is deterministic and $f\underset{\raisebox{.6ex}[0pt][0pt]{\scriptsize$p$}}{=}g$. This shows that in $\FinStoch$, every morphism $f$ that is $p$-a.e.\ deterministic is $p$-a.e.\ equivalent to some deterministic map. The converse turns out to be true as well, and this will be proved more generally in an arbitrary classical Markov category in Proposition~\ref{prop:aetodetisaedet} (see also~\cite[Lemma~13.12]{Fr19}).
\end{exa}

\begin{exa}[A.e.\ determinism in $\Stoch$]{ex:paedetforstoch}
Generalizing the previous example, in $\Stoch$, the category of measurable spaces and Markov kernels, the condition of a.e.\ determinism reads 
\[
\int_{A}f_{x}(B)f_{x}(C)\,dp_{\theta}(x)=\int_{A}f_{x}(B\cap C)\,dp_{\theta}(x)
\]
for all measurable subsets $A\subseteq X$ and $B,C\subseteq Y$ and for all $\theta\in\Theta$. In the special case where $B=C$, this equation becomes 
\[
\int_{A}\Big(f_{x}(B)-f_{x}(B)^2\Big)\,dp_{\theta}(x)=0
\]
for all measurable subsets $A\subseteq X$ and $B\subseteq Y$ and for all $\theta\in\Theta$. The integrand is non-negative because $f_{x}$ is a probability measure. Hence, taking $A=X$, there exists a $p_{\theta}$-nullset $N_{\theta,B}\subset X$ such that $f_{x}(B)^2=f_{x}(B)$ for all $x\in X\setminus N_{\theta,B}$. Thus, $f_{x}(B)\in\{0,1\}$ for all $x\in X\setminus N_{\theta,B}$. In other words, given any $\theta\in\Theta$ and any measurable $B\subseteq Y$, the stochastic map $f$ restricts to a $\{0,1\}$-valued measure on a set of full $p_{\theta}$ measure. This is not necessarily the same as a function on arbitrary measurable spaces, though it is under additional, often considered reasonable, assumptions on the measurable spaces (see \cite[Example~10.4]{Fr19} for further details). 
\end{exa}

\begin{exa}[A.e.\ deterministic maps in $\fdCAlgU$]{exa:aedetinCAlg}
In the quantum Markov category $\fdCAlgUY$, given a linear unital map
$\mB\xstoch{F}\mA$ and an SPU map $\mA\xstoch{\omega}\mC$, then $F$ is right $\w$-a.e.\ deterministic if and only if 
\[
F(B_{1}B_{2})-F(B_{1})F(B_{2})\in\mN_{\omega}\qquad\forall\;B_{1},B_{2}\in\mB
\]
by part~(\ref{item:nullspaceae}) of Theorem~\ref{thm:ncaeequivalence}. 
As a special case, if $\omega$ is a (PU) state $\mA\xstoch{\omega}\C$, then $F$ is right $\w$-a.e.\ deterministic if and only if 
\[
F(B_{1}B_{2})P_{\w}=F(B_{1})F(B_{2})P_{\w}\qquad\forall\;B_{1},B_{2}\in\mB.
\] 
by part~(\ref{item:aeP}) of Theorem~\ref{thm:ncaeequivalence}. 
Note that a.e.\ determinism does \emph{not} say that the composite 
\[
\mB\xstoch{F}\mA\xstoch{S_{\omega}}P_{\omega}\mA P_{\omega},
\]
where $S_{\omega}(A):=P_{\omega}AP_{\omega}$, is deterministic. Indeed, if $\mB:=\mM_{n}(\C),$ $\mA:=\mM_{m}(\C)$, and $\mathrm{rank}(P_{\omega})<n$, then there are no $*$-homomorphisms (not even non-unital ones!) $\mB\to P_{\omega}\mA P_{\omega}$. 
We will find a simpler conditions describing a.e.\ determinism involving only a single variable (and without the restriction to states) in Example~\ref{exa:aedeterminismforSPU}. 
\end{exa}

\begin{exa}[A.e.\ deterministic does not imply a.e.\ equivalent to a deterministic map]{exa:aedniaeequivdet}
In Example~\ref{ex:paedetforfinstoch}, we showed that every $p$-a.e.\ deterministic morphism is $p$-a.e.\ equivalent to a deterministic morphism in $\FinStoch$. This turns out to be false in the category $\fdCAlgCPU$ of finite-dimensional $C^*$-algebras and CPU maps. Namely, if a CPU map is a.e.\ deterministic, it is not necessary equivalent to a deterministic map. 
To see this, take 
\[
\begin{split}
\mB:=\mathcal{M}_{n}(\C)&\xstoch{F}\mathcal{M}_{m}(\C)=:\mA\\
A&\xmapsto{\quad}\begin{bmatrix}A&0\\0&\frac{1}{n}\tr(A)\mathds{1}_{m-n}\end{bmatrix}
\end{split}
\]
supposing $n<m$ and $m$ is \emph{not} an integral multiple of $n$. Let $\sigma\in\mM_{n}(\C)$ be a density matrix and let 
\[
\w:=\tr(\rho\;\cdot\;),\quad\text{ where }\quad\rho:=\begin{bmatrix}\sigma&0\\0&0\end{bmatrix}\in\mM_{m}(\C).
\]
Then $F$ is $\omega$-a.e.\ deterministic, but it is not $\omega$-a.e.\ equivalent to a deterministic morphism because a $*$-homomorphism from $\mM_{n}(\C)$ to $\mM_{m}(\C)$ does not exist unless $m$ is an integral multiple of $n$. 
\end{exa}

To describe a.e.\ determinism more explicitly and conveniently in $\fdCAlgUY$, we state two (independent) lemmas that will be used several times. 

\begin{lem}[Positive elements in the nullspace]{lem:minilemmaleftideal}
Let $\mA$ and $\mC$ be $C^*$-algebras, let $\mA\xstoch{\w}\mC$ be SPU, let $\mathcal{N}_{\w}\subseteq\mA$ be its associated right nullspace, and fix $A\ge0$. Then $A\in\ker\w$ if and only if $A\in\mathcal{N}_{\w}$. An analogous condition holds for the left nullspace. 
\end{lem}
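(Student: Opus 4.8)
The plan is to treat the two implications separately, since one of them holds for arbitrary elements while positivity of $A$ is doing all the work in the other. Throughout I would use that an SPU map is automatically positive (from $\w(B^*B)\ge\w(B)^*\w(B)\ge0$) and hence order-preserving, and that, being unital, it satisfies $\lVert\w(1_{\mA})\rVert=1$, so that the Schwarz hypothesis becomes the Kadison--Schwarz inequality $\w(B)^*\w(B)\le\w(B^*B)$ for all $B\in\mA$, exactly as recalled in Example~\ref{exa:positivehierarchy}.

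First I would dispose of the direction $A\in\mathcal{N}_{\w}\Rightarrow A\in\ker\w$, which requires no hypothesis on $A$. Applying Kadison--Schwarz to $B=A$ gives $0\le\w(A)^*\w(A)\le\w(A^*A)=0$, so $\w(A)^*\w(A)=0$, and the $C^*$-identity $\lVert\w(A)\rVert^2=\lVert\w(A)^*\w(A)\rVert$ forces $\w(A)=0$. Thus $\mathcal{N}_{\w}\subseteq\ker\w$ in general.

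For the converse I would use positivity essentially. Given $A\ge0$ with $\w(A)=0$, set $C:=A^{1/2}$, the positive square root from continuous functional calculus, so $C=C^*$ and $C^2=A$. Then $\w(C^*C)=\w(A)=0$, i.e.\ $C\in\mathcal{N}_{\w}$, and the crux is to promote this to $A=C^2\in\mathcal{N}_{\w}$. Here I would run the standard monotonicity argument: from $B^*B\le\lVert B\rVert^2 1_{\mA}$ one gets $C^*B^*BC\le\lVert B\rVert^2\,C^*C$, and applying the positive map $\w$ yields $0\le\w\big((BC)^*(BC)\big)\le\lVert B\rVert^2\,\w(C^*C)=0$; taking $B:=C$ and using $(C^2)^*C^2=A^2$ gives $\w(A^2)=0$, i.e.\ $A\in\mathcal{N}_{\w}$. (Conceptually this is just the fact that $\mathcal{N}_{\w}$ is a left ideal contained in $\ker\w$, cf.\ the footnote to Definition~\ref{defn:nullspace}, so that $A=C\cdot C$ with $C\in\mathcal{N}_{\w}$ lies in $\mathcal{N}_{\w}$.) The statement for the left nullspace ${}_{\w}\mathcal{N}=\{A:\w(AA^*)=0\}$ then costs nothing: for $A\ge0$ we have $A^*A=AA^*=A^2$, so $A\in{}_{\w}\mathcal{N}\Leftrightarrow\w(A^2)=0\Leftrightarrow A\in\mathcal{N}_{\w}$, and the equivalence with $A\in\ker\w$ is what was just established (one may also use the Schwarz inequality in the mirrored form $\w(B)\w(B)^*\le\w(BB^*)$, valid since positive maps are $*$-preserving).

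I expect the only genuine obstacle to be the converse, and precisely the passage from the scalar vanishing $\w(A)=0$ to the quadratic vanishing $\w(A^2)=0$: this is exactly where $A\ge0$ cannot be dropped, since a self-adjoint element whose positive and negative parts cancel under $\w$ generally fails to lie in $\mathcal{N}_{\w}$, and the square-root-plus-order-monotonicity device is what bridges the gap.
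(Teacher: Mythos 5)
Your proof is correct and follows essentially the same route as the paper's: the forward inclusion $\mathcal{N}_{\w}\subseteq\ker\w$ via Kadison--Schwarz, and the converse by writing $A$ as a square of an element of $\mathcal{N}_{\w}$ and invoking the left-ideal property of $\mathcal{N}_{\w}$. The only cosmetic differences are that the paper uses an arbitrary decomposition $A=D^*D$ and cites the left-ideal fact (footnote to Definition~\ref{defn:nullspace}), whereas you take $D=A^{1/2}$ and prove that fact inline via the monotonicity estimate $C^*B^*BC\le\lVert B\rVert^2 C^*C$.
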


\bprf
One direction follows from $\mathcal{N}_{\omega}\subseteq\ker\omega$ because $0\le\omega(A)^*\omega(A)\le\omega(A^*A)\le0$. For the other direction, write $A$ as $A=D^*D$. Then $D\in\mathcal{N}_{\w}$ since $\omega(A)=0$. Since $\mathcal{N}_{\w}$ is a left ideal in $\mA$ (see footnote in Definition~\ref{defn:nullspace}), $D^*D\in\mathcal{N}_{\w}$. 
\eprf

\begin{lem}[The weak a.e.\ Multiplication Lemma]{lem:Attalslemmaaegeneral}
Let $\mA$, $\mB$, and $\mC$ be $C^*$-algebras, 
let $\mA\xstoch{\w}\mC$ be an SPU map, and let $\mB\xstoch{F}\mA$ be a linear (not necessarily $*$-preserving) map.
Then 
\[
F(B^*B)-F(B)^*F(B)\in\mathcal{N}_{\omega}\qquad\forall\;B\in\mB
\]
if and only if $F$ is right $\w$-a.e.\ deterministic, i.e.\
\[
F(B^*C)-F(B)^*F(C)\in\mathcal{N}_{\omega}\qquad\forall\;B,C\in\mB. 
\]
Diagrammatically, 
\[
\vcenter{\hbox{
\begin{tikzpicture}[font=\small]
\node[arrow box] (p) at (0,-0.3) {$\omega$};
\node[copier] (c) at (0,0.3) {};
\node[copier] (c2) at (0.7,0.75) {};
\node[arrow box] (f) at (1.4,1.4) {$F$};
\node[arrow box] (e) at (0,1.4) {$F$};
\coordinate (X) at (-1.0,2.55);
\node[effect] (Y1) at (1.4,1.95) {$B$};
\node[effect] (Y2) at (0,1.95) {$B^*$};
\draw (0,-0.9) to (p);
\draw (p) to (c);
\draw (c) to[out=165,in=-90] (X);
\draw (c) to[out=15,in=-90] (c2);
\draw (c2) to[out=15,in=-90] (f);
\draw (c2) to[out=165,in=-90] (e);
\draw (f) to (Y1);
\draw (e) to (Y2);
\end{tikzpicture}}}
\;=\;
\vcenter{\hbox{
\begin{tikzpicture}[font=\small]
\node[arrow box] (p) at (0,-0.3) {$\omega$};
\node[copier] (c) at (0,0.3) {};
\node[copier] (c2) at (0.7,1.45) {};
\node[arrow box] (f) at (0.7,0.95) {$F$};
\coordinate (X) at (-1.0,2.55);
\node[effect] (Y1) at (1.4,1.95) {$B$};
\node[effect] (Y2) at (0,1.95) {$B^*$};
\draw (0,-0.9) to (p);
\draw (p) to (c);
\draw (c) to[out=165,in=-90] (X);
\draw (c) to[out=15,in=-90] (f);
\draw (c2) to[out=15,in=-90] (Y1);
\draw (c2) to[out=165,in=-90] (0,1.95);
\draw (f) to (c2);
\end{tikzpicture}}}
\forall\;B\in\mB\iff
\vcenter{\hbox{
\begin{tikzpicture}[font=\small]
\node[arrow box] (p) at (0,-0.3) {$\omega$};
\node[copier] (c) at (0,0.3) {};
\node[copier] (c2) at (0.5,0.75) {};
\node[arrow box] (f) at (1,1.4) {$F$};
\node[arrow box] (e) at (0,1.4) {$F$};
\coordinate (X) at (-0.7,1.95);
\coordinate (Y1) at (1,1.95);
\coordinate (Y2) at (0,1.95);
\draw (0,-0.9) to (p);
\draw (p) to (c);
\draw (c) to[out=165,in=-90] (X);
\draw (c) to[out=15,in=-90] (c2);
\draw (c2) to[out=15,in=-90] (f);
\draw (c2) to[out=165,in=-90] (e);
\draw (f) to (Y1);
\draw (e) to (Y2);
\end{tikzpicture}}}
\quad=\quad
\vcenter{\hbox{
\begin{tikzpicture}[font=\small]
\node[arrow box] (p) at (0,-0.3) {$\omega$};
\node[copier] (c) at (0,0.3) {};
\node[copier] (c2) at (0.5,1.45) {};
\node[arrow box] (f) at (0.5,0.95) {$F$};
\coordinate (X) at (-0.7,1.95);
\coordinate (Y1) at (1,1.95);
\coordinate (Y2) at (0,1.95);
\draw (0,-0.9) to (p);
\draw (p) to (c);
\draw (c) to[out=165,in=-90] (X);
\draw (c) to[out=15,in=-90] (f);
\draw (c2) to[out=15,in=-90] (Y1);
\draw (c2) to[out=165,in=-90] (0,1.95);
\draw (f) to (c2);
\end{tikzpicture}}}
\]
An analogous condition holds for left a.e.\ determinism with appropriate morphisms appearing on the left instead of the right. 
\end{lem}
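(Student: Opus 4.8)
The plan is to prove the nontrivial (forward) implication by polarization, viewing the ``defect'' of $F$ as a sesquilinear form and exploiting that the right nullspace $\mathcal{N}_{\omega}$ is a complex-linear subspace. The reverse implication is immediate: setting $C:=B$ in the second condition $F(B^*C)-F(B)^*F(C)\in\mathcal{N}_{\omega}$ recovers the first, $F(B^*B)-F(B)^*F(B)\in\mathcal{N}_{\omega}$. So I focus on the direction ($\Rightarrow$).

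First I would introduce the defect map $\Phi(B,C):=F(B^*C)-F(B)^*F(C)$ for $B,C\in\mB$ and record that it is \emph{sesquilinear}, i.e.\ conjugate-linear in $B$ and linear in $C$. This uses only the complex-linearity of $F$ together with $(\lambda B)^*=\bar\lambda B^*$ and the conjugate-linearity of the adjoint on $\mA$; crucially it does \emph{not} require $F$ to be $*$-preserving, which is exactly the generality the lemma advertises. With this notation the hypothesis reads $\Phi(B,B)\in\mathcal{N}_{\omega}$ for every $B\in\mB$, and the desired conclusion is $\Phi(B,C)\in\mathcal{N}_{\omega}$ for all $B,C\in\mB$.

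Then I would invoke the polarization identity for sesquilinear forms over $\C$,
\[
\Phi(B,C)=\tfrac{1}{4}\sum_{k=0}^{3} i^{-k}\,\Phi\big(B+i^{k}C,\,B+i^{k}C\big),
\]
which writes $\Phi(B,C)$ as a complex-linear combination of diagonal values $\Phi(B+i^{k}C,\,B+i^{k}C)$, each of which lies in $\mathcal{N}_{\omega}$ by hypothesis since $B+i^{k}C\in\mB$. It then remains to observe that $\mathcal{N}_{\omega}$ is closed under addition and scalar multiplication: this is where Schwarz positivity of $\omega$ enters, through the fact recalled after Definition~\ref{defn:nullspace} (cf.\ \cite[Remark~3.4]{Ch74}) that for an SPU map $\mathcal{N}_{\omega}$ is a \emph{left ideal} of $\mA$, in particular a complex-linear subspace. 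Consequently the right-hand side above lies in $\mathcal{N}_{\omega}$, giving $\Phi(B,C)\in\mathcal{N}_{\omega}$, which is precisely the statement of right $\omega$-a.e.\ determinism.

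The main, and essentially only, obstacle is the subspace property of $\mathcal{N}_{\omega}$: the polarization step itself is purely formal, but it yields the conclusion only because $\mathcal{N}_{\omega}$ is linearly closed, and for merely positive (non-Schwarz) $\omega$ this can fail, which is why the hypothesis is SPU rather than PU. Finally, the left-handed statement is obtained by the mirror argument, replacing $\mathcal{N}_{\omega}=\{A:\omega(A^*A)=0\}$ by the left nullspace ${}_{\omega}\mathcal{N}=\{A:\omega(AA^*)=0\}$ and the form $\Phi$ by $\Phi'(B,C):=F(CB^*)-F(C)F(B)^*$, which is again sesquilinear with diagonal $F(BB^*)-F(B)F(B)^*$, and invoking that ${}_{\omega}\mathcal{N}$ is a right ideal.
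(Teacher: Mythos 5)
Your proof is correct, and it takes a genuinely different and substantially shorter route than the paper's. The paper proves the nontrivial direction by polarizing \emph{inside} $\omega$: it sets $D:=B+C$, expands $\omega\big(X_D^*X_D\big)=0$ with $X_D:=F(D^*D)-F(D)^*F(D)$ into 64 terms, eliminates blocks of terms via the substitutions $C\mapsto-C$ and $C\mapsto iC$, and then still needs the Kadison--Schwarz inequality for $\omega$ together with the standard Multiplication Lemma (Lemma~\ref{lem:multiplicationtheorem}) to dispose of the surviving cross terms before concluding $\omega(W^*W)=0$ by positivity. You instead polarize \emph{before} applying $\omega$: the defect $\Phi(B,C)=F(B^*C)-F(B)^*F(C)$ is sesquilinear without any $*$-preservation assumption on $F$ (I verified $\Phi(B+i^kC,B+i^kC)=\Phi(B,B)+i^k\Phi(B,C)+i^{-k}\Phi(C,B)+\Phi(C,C)$, and with the coefficients $i^{-k}$ the sums over $k$ come out exactly as you claim), so the $\mA$-valued polarization identity expresses $\Phi(B,C)$ as a linear combination of diagonal values, and the conclusion follows once $\mathcal{N}_\omega$ is known to be a linear subspace --- which the footnote to Definition~\ref{defn:nullspace} supplies (for SPU $\omega$, $\mathcal{N}_\omega$ is a left ideal, after Choi). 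This buys brevity and isolates the single structural fact doing the work; the paper's route, in exchange, is self-contained apart from Lemma~\ref{lem:multiplicationtheorem}. Your mirror argument for the left-handed version with $\Phi'$ and ${}_{\omega}\mathcal{N}$ is likewise fine.

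One side remark of yours is inaccurate, though it does not affect validity: closure of $\mathcal{N}_\omega$ under addition does \emph{not} require Schwarz positivity. For any positive linear $\omega$, the operator inequality
\[
(A+B)^*(A+B)\;\le\;(A+B)^*(A+B)+(A-B)^*(A-B)\;=\;2A^*A+2B^*B
\]
gives $0\le\omega\big((A+B)^*(A+B)\big)\le 2\,\omega(A^*A)+2\,\omega(B^*B)$, so $\mathcal{N}_\omega$ is a linear subspace for merely positive $\omega$ as well; what genuinely needs Schwarz-type positivity is the left-\emph{ideal} property (absorption under left multiplication), which your argument never uses. In fact this means your polarization proof establishes the algebraic equivalence under weaker hypotheses than the paper's proof does --- it is only the translation into the string-diagrammatic formulation, via Theorem~\ref{thm:ncaeequivalence}, that calls on $\omega$ being SPU.
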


\bprf
The only non-trivial direction is the forward one. 
Fix $B,C\in\mB$ and set $D:=B+C$. 
Then 
\be
\label{eq:FBCi}
\begin{split}
F(D^*D)-F(D)^*F(D)&=F(B^*B)+F(B^*C)+F(C^*B)+F(C^*C)\\
&-F(B)^*F(B)-F(B)^*F(C)-F(C)^*F(B)-F(C)^*F(C).
\end{split}
\ee
Since $F(D^*D)-F(D)^*F(D)\in\mathcal{N}_{\omega}$ by assumption, this means
\be
\label{eq:omegaFD}
\omega\Big(\big(F(D^*D)-F(D)^*F(D)\big)^*\big(F(D^*D)-F(D)^*F(D)\big)\Big)=0.
\ee
Expanding the expression inside $\omega$ in terms of $B$ and $C$ results in 64 terms, which we will not write. However, we will describe the types of terms that arise. First, 8 of the terms come from 
\be
\big(F(V^*V)-F(V)^*F(V)\big)^*\big(F(V^*V)-F(V)^*F(V)\big)
\ee
with $V\in\{B,C\}$, which vanish by assumption. Second, 32 terms have an \emph{odd} number of $B$'s and $C$'s in them. By taking $B\mapsto B$ and $C\mapsto -C$ and adding the two resulting expressions cancels out all of these terms. 
Thirdly, 8 additional terms with exactly two $C$'s and two $B$'s satisfy the condition that taking $B\mapsto B$ and $C\mapsto iC$ negates all these 8 terms (for example, $F(C^*B)^*F(B)F(C)$ is one such term). 
For these terms, taking $B\mapsto B$ and $C\mapsto iC$ and adding the resulting equations causes these terms to cancel as well. Thus, one is left with 64-8-32-8=16 terms, which are equal to 
\be
\label{eq:fourtermtypesleft}
\begin{split}
\uline{\big(F(B^*B)-F(B)^*F(B)\big)^*\big(F(C^*C)-F(C)^*F(C)\big)}&+\big(F(B^*C)-F(B)^*F(C)\big)^*\big(F(B^*C)-F(B)^*F(C)\big)\\
\big(F(C^*B)-F(C)^*F(B)\big)^*\big(F(C^*B)-F(C)^*F(B)\big)&+\uline{\big(F(C^*C)-F(C)^*F(C)\big)^*\big(F(B^*B)-F(B)^*F(B)\big)}.
\end{split}
\ee
The underlined terms vanish inside $\omega$ by the (standard) Multiplication Lemma (Lemma~\ref{lem:multiplicationtheorem}). In more detail, since
\be
\begin{split}
0&=\omega\Big(\big(F(B^*B)-F(B)^*F(B)\big)^*\big(F(B^*B)-F(B)^*F(B)\big)\Big)\quad\text{ by assumption}\\
&\ge\omega\big(F(B^*B)-F(B)^*F(B)\big)^*\omega\big(F(B^*B)-F(B)^*F(B)\big)\quad\text{ since $\omega$ is SPU}\\
&=0\quad\text{ since $F(B^*B)-F(B)^*F(B)\in\mathcal{N}_{\omega}\subseteq\ker\omega$}
\end{split}
\ee
so that all intermediate inequalities become equalities. Hence, $F(B^*B)-F(B)^*F(B)$ is in the multiplicative domain of $\omega$ and the (standard) Multiplication Lemma applies. Therefore, applying $\omega$ to the remaining terms in (\ref{eq:fourtermtypesleft}) gives
\be
0\overset{\text{(\ref{eq:omegaFD})}}{=\joinrel=\joinrel=}\omega(W^*W)+\omega(W'^*W'),
\ee
where
\be
W:=F(B^*C)-F(B)^*F(C)
\quad\text{ and }\quad
W':=F(C^*B)-F(C)^*F(B).
\ee
Since $\omega$ is positive, this is the sum of two positive terms, which can only be zero if both terms are zero. In particular, $\omega(W^*W)=0$, which proves the lemma. 
\eprf

\begin{cor}[A.e.\ determinism in $\fdCAlgY$ with respect to a state]{cor:Attalslemmaae}
Let $\mA$ and $\mB$ be finite-dimensional $C^*$-algebras, 
let $\mA\xstoch{\w}\C$ be a state on $\mA$, and let $\mB\xstoch{F}\mA$ be a linear map. Then 
\[
F(B^*B)P_{\w}=F(B)^*F(B)P_{\w}\qquad\forall\;B\in\mB
\]
if and only if $F$ is $\w$-a.e.\ deterministic, i.e.\
\[
F(B^*C)P_{\w}=F(B)^*F(C)P_{\w}\qquad\forall\;B,C\in\mB. 
\]
\end{cor}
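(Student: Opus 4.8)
The plan is to deduce this statement directly from the weak a.e.\ Multiplication Lemma (Lemma~\ref{lem:Attalslemmaaegeneral}) by specializing to $\mC=\C$ and rephrasing the nullspace conditions appearing there in terms of the support projection $P_{\w}$. First I would recall from Lemma~\ref{lem:support} that a state $\w$ on a finite-dimensional $C^*$-algebra $\mA$ has a support projection $P_{\w}$ with $\mathcal{N}_{\w}=\mA P_{\w}^{\perp}$. The key elementary observation is then that for any $A\in\mA$ one has $A\in\mathcal{N}_{\w}$ if and only if $AP_{\w}=0$: if $A=A'P_{\w}^{\perp}\in\mA P_{\w}^{\perp}$ then $AP_{\w}=A'P_{\w}^{\perp}P_{\w}=0$ since $P_{\w}$ is a projection, and conversely if $AP_{\w}=0$ then $A=A(P_{\w}+P_{\w}^{\perp})=AP_{\w}^{\perp}\in\mathcal{N}_{\w}$. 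This is exactly the equivalence of conditions~(\ref{item:nullspaceae}) and~(\ref{item:aeP}) in Theorem~\ref{thm:ncaeequivalence}, which I would cite directly.

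With this translation in hand, the two displayed equalities in the corollary become membership statements: $F(B^*B)P_{\w}=F(B)^*F(B)P_{\w}$ is equivalent to $F(B^*B)-F(B)^*F(B)\in\mathcal{N}_{\w}$, and $F(B^*C)P_{\w}=F(B)^*F(C)P_{\w}$ is equivalent to $F(B^*C)-F(B)^*F(C)\in\mathcal{N}_{\w}$. These are precisely the two conditions related by Lemma~\ref{lem:Attalslemmaaegeneral} in the case $\mC=\C$, so the corollary follows immediately. The backward implication is in any case trivial (set $C:=B$), while the forward implication is the content of the lemma, whose proof polarizes $F(D^*D)-F(D)^*F(D)$ with $D:=B+C$ and uses the standard Multiplication Lemma (Lemma~\ref{lem:multiplicationtheorem}) to kill the cross terms.

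I do not expect a genuine obstacle here, since all of the analytic work---the polarization identity and the appeal to the Kadison--Schwarz inequality and the standard Multiplication Lemma---has already been carried out in Lemma~\ref{lem:Attalslemmaaegeneral}. The only point requiring minor care is the verification that right multiplication by $P_{\w}$ detects membership in $\mathcal{N}_{\w}$, which hinges on $P_{\w}$ being a projection and on the identity $\mathcal{N}_{\w}=\mA P_{\w}^{\perp}$ from Lemma~\ref{lem:support}; once this is recorded, the corollary is a one-line specialization.
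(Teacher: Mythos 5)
Your proposal is correct and matches the paper's own proof, which likewise derives the corollary from the weak a.e.\ Multiplication Lemma (Lemma~\ref{lem:Attalslemmaaegeneral}) combined with the nullspace--support translation of Theorem~\ref{thm:ncaeequivalence} (your verification that $AP_{\w}=0$ iff $A\in\mathcal{N}_{\w}=\mA P_{\w}^{\perp}$ is exactly the content used there, and the state $\w$ is automatically SPU since positive functionals are completely positive, so the lemma indeed applies). The paper additionally remarks that one could instead rerun the polarization argument directly with projections, yielding $8$ terms rather than $64$, but your route via the already-proved lemma is the one the paper itself takes.
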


\bprf
This follows from the weak a.e.\ Multiplication Lemma (Lemma~\ref{lem:Attalslemmaaegeneral}) and Theorem~\ref{thm:ncaeequivalence}. One can also prove this statement much more directly by following the proof of the weak a.e.\ Multiplication Lemma but using projections instead---one obtains 8 terms instead of 64. 
\eprf

These results provide some indication that we have a reasonable definition for a.e.\ determinism in the $C^*$-algebra setting (see~Theorem~\ref{thm:aemodbayesdisint} for additional justification). One can simplify this even further if one assumes the map $F$ is SPU as well. 

\begin{exa}[A.e.\ determinism for SPU maps]{exa:aedeterminismforSPU}
In the quantum Markov category $\fdCAlgY$, given a linear map
$\mB\xstoch{F}\mA$ and an SPU map $\mA\xstoch{\omega}\mC$, $F$ is right $\w$-a.e.\ deterministic if and only if 
\[
F(B^*B)-F(B)^*F(B)\in\mN_{\omega}\qquad\forall\;B\in\mB
\]
by the weak a.e.\ Multiplication Lemma (Lemma~\ref{lem:Attalslemmaaegeneral}). 
When $F$ is $*$-preserving, this is equivalent to 
\[
F(B^*B)-F(B)^*F(B)\in\mathcal{N}_{\omega}\cap{}_{\omega}\mathcal{N}\qquad\forall\;B\in\mB
\]
since this expression is self-adjoint. Finally, when $F$ is SPU, this simplifies even further to 
\[
F(B^*B)-F(B)^*F(B)\in\ker\omega\qquad\forall\;B\in\mB
\]
by Lemma~\ref{lem:minilemmaleftideal}, which applies because $F(B^*B)-F(B)^*F(B)\ge0$ by the Kadison--Schwarz inequality for $F$. 
\end{exa}

The following lemma shows that left and right a.e.\ determinism are equivalent for $*$-preserving morphisms. In fact, a more general result is proved in order to motivate some of the definitions and results that follow. 

\begin{lem}[Symmetry of a.e.\ determinism for $*$-preserving morphisms]{lem:symmetryaedeterminism}
In terms of the notation from Definition~\ref{defn:aeequivalentdeterministic} and assuming that the morphisms $f$ and $p$ are $*$-preserving, then $f$ is left p-a.e.\ deterministic if and only if $f$ is right $p$-a.e.\ deterministic. More generally,%
\footnote{Notice the order of the morphisms here. In a classical Markov category, $f$ and $g$ can be swapped on either side without changing this equivalence. The order matters in a general quantum Markov category, even when using even $*$-preserving morphisms.}
\[
\vcenter{\hbox{
\begin{tikzpicture}[font=\small]
\node[arrow box] (p) at (0,-0.3) {$p$};
\node[copier] (c) at (0,0.3) {};
\node[copier] (c2) at (0.5,0.75) {};
\node[arrow box] (f) at (1,1.4) {$f$};
\node[arrow box] (e) at (0,1.4) {$g$};
\coordinate (X) at (-0.7,1.95);
\coordinate (Y1) at (1,1.95);
\coordinate (Y2) at (0,1.95);
\draw (0,-0.9) to (p);
\draw (p) to (c);
\draw (c) to[out=165,in=-90] (X);
\draw (c) to[out=15,in=-90] (c2);
\draw (c2) to[out=15,in=-90] (f);
\draw (c2) to[out=165,in=-90] (e);
\draw (f) to (Y1);
\draw (e) to (Y2);
\end{tikzpicture}}}
\quad=\quad
\vcenter{\hbox{
\begin{tikzpicture}[font=\small]
\node[arrow box] (p) at (0,-0.3) {$p$};
\node[copier] (c) at (0,0.3) {};
\node[copier] (c2) at (0.5,1.45) {};
\node[arrow box] (f) at (0.5,0.95) {$h$};
\coordinate (X) at (-0.7,1.95);
\coordinate (Y1) at (1,1.95);
\coordinate (Y2) at (0,1.95);
\draw (0,-0.9) to (p);
\draw (p) to (c);
\draw (c) to[out=165,in=-90] (X);
\draw (c) to[out=15,in=-90] (f);
\draw (c2) to[out=15,in=-90] (Y1);
\draw (c2) to[out=165,in=-90] (0,1.95);
\draw (f) to (c2);
\end{tikzpicture}}}
\quad\iff\quad
\vcenter{\hbox{
\begin{tikzpicture}[font=\small]
\node[arrow box] (p) at (0,-0.3) {$p$};
\node[copier] (c) at (0,0.3) {};
\node[copier] (c2) at (-0.5,0.75) {};
\node[arrow box] (f) at (-1,1.4) {$f$};
\node[arrow box] (e) at (0,1.4) {$g$};
\coordinate (X) at (0.7,1.95);
\coordinate (Y1) at (-1,1.95);
\coordinate (Y2) at (0,1.95);
\draw (0,-0.9) to (p);
\draw (p) to (c);
\draw (c) to[out=15,in=-90] (X);
\draw (c) to[out=165,in=-90] (c2);
\draw (c2) to[out=165,in=-90] (f);
\draw (c2) to[out=15,in=-90] (e);
\draw (f) to (Y1);
\draw (e) to (Y2);
\end{tikzpicture}}}
\quad=\quad
\vcenter{\hbox{
\begin{tikzpicture}[font=\small]
\node[arrow box] (p) at (0,-0.3) {$p$};
\node[copier] (c) at (0,0.3) {};
\node[copier] (c2) at (-0.5,1.45) {};
\node[arrow box] (f) at (-0.5,0.95) {$h$};
\coordinate (X) at (0.7,1.95);
\coordinate (Y1) at (-1,1.95);
\coordinate (Y2) at (0,1.95);
\draw (0,-0.9) to (p);
\draw (p) to (c);
\draw (c) to[out=15,in=-90] (X);
\draw (c) to[out=165,in=-90] (f);
\draw (c2) to[out=165,in=-90] (-1,1.95);
\draw (c2) to[out=15,in=-90] (0,1.95);
\draw (f) to (c2);
\end{tikzpicture}}}
\]
whenever $f,g,h,$ and $p$ are even $*$-preserving morphisms. 
\end{lem}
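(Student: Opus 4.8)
The two displayed equations each put the ``right-handed'' configuration (where the copy of $p$ keeps the spectator wire on the left and re-copies on the right) on the left of the $\iff$ and its ``left-handed'' mirror on the right. Denote the four resulting morphisms as maps $\Theta\to X\otimes Y\otimes Y$ (the right-handed ones) and $\Theta\to Y\otimes Y\otimes X$ (the left-handed ones). The plan is \emph{not} to invoke Lemma~\ref{prop:hfpkgs} by collapsing the inner copy together with $f,g$ (resp.\ $h$) into one arrow and treating it as a single branch of the outer copy: that arrow contains a copy map $\D$, which is not $*$-preserving (Example~\ref{a008}), so the hypotheses of the Coffeecup lemma fail. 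Instead I would exhibit a single invertible ``redecoration'' $\Psi$, built only from the involutions and the symmetry and depending only on the objects $X$ and $Y$ (not on $f,g,h$), together with the involution $*_\Theta$, such that each left-handed diagram equals $\Psi$ post-composed and $*_\Theta$ pre-composed with the corresponding right-handed diagram.

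To produce $\Psi$ I would start from a left-handed diagram, say $(f\otimes g\otimes\id_X)\circ(\D_X\otimes\id_X)\circ\D_X\circ p$, and run the same star-calculus as in the proof of Lemma~\ref{prop:hfpkgs}: insert $*_X\circ*_X=\id_X$ just below the outer copy (first identity of~\eqref{eq:lastsetformarkovcatidentities}); pull one involution up through the outer copy and then through the inner copy using the copy--involution coherence (the last identity of~\eqref{eq:markovcatfirstconditions}), each pull-through replacing the single involution below a copy by involutions on both output wires \emph{together with} a swap of those two wires; push the resulting involutions past $f$, $g$ and the identity wire using that $f,g$ are $*$-preserving; and finally move the one remaining involution on the bottom wire through $p$ by $*$-preservation of $p$, converting it into $*_\Theta$. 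Carrying this out, the accumulated swaps compose (by naturality of the symmetry $c$) to the full reversal $c_{X,Y\otimes Y}\circ(\id_X\otimes c_{Y,Y})$ of the three output wires, and the involutions collect into $*_X\otimes*_Y\otimes*_Y$ applied before this reversal; equivalently $\Psi=(*_Y\otimes*_Y\otimes*_X)\circ c_{X,Y\otimes Y}\circ(\id_X\otimes c_{Y,Y})$. The crucial check is that running the \emph{identical} recipe on the two right-hand diagrams---the ones built from $h$ followed by a copy $\D_Y$---produces the same $\Psi$ and the same $*_\Theta$; this is where the copy--involution coherence for $\D_Y$ and the $*$-preservation of $h$ get used, and where one confirms that $\Psi$ is independent of the particular morphisms in the branches.

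Granting the two resulting identities, namely $[\text{left LHS}]=\Psi\circ[\text{right LHS}]\circ*_\Theta$ and $[\text{left RHS}]=\Psi\circ[\text{right RHS}]\circ*_\Theta$, the lemma is immediate: $\Psi$ is an isomorphism (a composite of involutions and symmetries) and $*_\Theta$ is an isomorphism (as $*^2=\id$), so the right-handed equation holds if and only if the left-handed one does. Specializing to $g=h=f$ then recovers the asserted equivalence of left and right $p$-a.e.\ determinism for $*$-preserving $f$ and $p$. I expect the only real obstacle to be the bookkeeping in the second paragraph: one must track the two branch-swaps produced by the two copy--involution pull-throughs and confirm, via naturality of $c$, that on \emph{both} the $(f,g)$-with-inner-copy diagrams and the $h$-then-copy diagrams they assemble into the one object-level reversal $c_{X,Y\otimes Y}\circ(\id_X\otimes c_{Y,Y})$. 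The non-cocommutativity of $\D$ is exactly what prevents this reversal from being trivial, and it is what makes the $*$-preservation hypotheses on $f,g,h,p$ indispensable.
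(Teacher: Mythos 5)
Your proposal is correct and takes essentially the same route as the paper: the paper's proof is exactly this star calculus (insert $*_X\circ *_X=\id$, pull the involution through the outer and inner copies via the copy--involution coherence in~(\ref{eq:markovcatfirstconditions}), each pull-through contributing a swap, push the involutions through $f$, $g$, $h$, and $p$ by $*$-preservation, and use naturality of the symmetry), carried out as one chain that substitutes the assumed mirrored equality in the middle and then ``undoes the process.'' Your packaging of that computation as conjugation by a single object-level isomorphism $\Psi$ (together with precomposition by $*_{\Theta}$) just makes the rewinding step and the two-way implication formal, and your observation that Lemma~\ref{prop:hfpkgs} cannot be invoked directly (since $\D$ is not $*$-preserving) matches the paper's own remark immediately following its proof.
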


\begin{proof}
It suffices to show the right equality implies the left equality. Beginning with the left and using the axioms and assumption gives
\be
\vcenter{\hbox{
\begin{tikzpicture}[font=\small]
\node[arrow box] (p) at (0,-0.3) {$p$};
\node[copier] (c) at (0,0.3) {};
\node[copier] (c2) at (0.5,0.75) {};
\node[arrow box] (f) at (1,1.4) {$f$};
\node[arrow box] (e) at (0,1.4) {$g$};
\coordinate (X) at (-0.7,1.95);
\coordinate (Y1) at (1,1.95);
\coordinate (Y2) at (0,1.95);
\draw (0,-0.9) to (p);
\draw (p) to (c);
\draw (c) to[out=165,in=-90] (X);
\draw (c) to[out=15,in=-90] (c2);
\draw (c2) to[out=15,in=-90] (f);
\draw (c2) to[out=165,in=-90] (e);
\draw (f) to (Y1);
\draw (e) to (Y2);
\end{tikzpicture}}}
=
\vcenter{\hbox{
\begin{tikzpicture}[font=\small]
\node[arrow box] (p) at (0,-1.2) {$p$};
\node[star] (s1) at (0,-0.55) {};
\node[copier] (c) at (0,-0.15) {};
\node[star] (R) at (0.5,0.3) {};
\node[star] (L) at (-0.5,0.3) {};
\coordinate (Ls) at (-0.5,1.4) {};
\coordinate (Rs) at (0.5,1.4) {};
\node[copier] (c2) at (0.5,1.55) {};
\node[arrow box] (f) at (1,2.2) {$f$};
\node[arrow box] (e) at (0,2.2) {$g$};
\coordinate (X) at (-0.7,2.75);
\coordinate (Y1) at (1,2.75);
\coordinate (Y2) at (0,2.75);
\draw (0,-1.7) to (p);
\draw (p) to (s1);
\draw (s1) to (c);
\draw (c) to [out=15,in=-90] (R);
\draw (c) to[out=165,in=-90] (L);
\draw (R) to [out=90,in=-65] (Ls);
\draw (Ls) to[out=115,in=-90] (X);
\draw (L) to [out=90,in=-90] (c2);
\draw (c2) to[out=15,in=-90] (f);
\draw (c2) to[out=165,in=-90] (e);
\draw (f) to (Y1);
\draw (e) to (Y2);
\end{tikzpicture}}}
=
\vcenter{\hbox{
\begin{tikzpicture}[font=\small]
\node[arrow box] (p) at (0,-0.80) {$p$};
\node[star] (s1) at (0,-1.35) {};
\node[copier] (c) at (0,-0.15) {};
\coordinate (R) at (0.5,0.3) {};
\coordinate (L) at (-0.5,0.3) {};
\coordinate (Ls) at (-0.5,1.4) {};
\coordinate (Rs) at (0.5,1.4) {};
\node[copier] (c2) at (0.5,1.55) {};
\node[star] (sL) at (-0.7,2.0) {};
\node[star] (sTL) at (0,2.0) {};
\node[star] (sTR) at (1,2.0) {};
\coordinate (TLs) at (0,3.1) {};
\coordinate (TRs) at (1,3.1) {};
\node[arrow box] (f) at (1,3.4) {$f$};
\node[arrow box] (g) at (0,3.4) {$g$};
\coordinate (X) at (-0.7,3.95);
\coordinate (Y1) at (1,3.95);
\coordinate (Y2) at (0,3.95);
\draw (0,-1.7) to (s1);
\draw (s1) to (p);
\draw (p) to (c);
\draw (c) to [out=15,in=-90] (R);
\draw (c) to[out=165,in=-90] (L);
\draw (R) to [out=90,in=-55] (Ls);
\draw (Ls) to[out=125,in=-90] (sL);
\draw (sL) to (X);
\draw (L) to [out=90,in=-90] (c2);
\draw (c2) to[out=15,in=-90] (sTR);
\draw (c2) to[out=165,in=-90] (sTL);
\draw (sTL) to [out=90,in=-90] (TRs);
\draw (sTR) to [out=90,in=-90] (TLs);
\draw (TRs) to (f);
\draw (TLs) to (g);
\draw (f) to (Y1);
\draw (g) to (Y2);
\end{tikzpicture}}}
=
\vcenter{\hbox{
\begin{tikzpicture}[font=\small]
\node[arrow box] (p) at (0,-0.80) {$p$};
\node[star] (s1) at (0,-1.35) {};
\node[copier] (c) at (0,-0.15) {};
\coordinate (R) at (0.5,0.3) {};
\coordinate (L) at (-0.5,0.3) {};
\coordinate (Ls) at (-0.5,1.4) {};
\coordinate (Rs) at (0.5,1.4) {};
\node[copier] (c2) at (0.5,1.55) {};
\coordinate (sL0) at (-0.7,2.0) {};
\node[star] (sL) at (-0.7,3.5) {};
\node[arrow box] (sTL) at (0,2.2) {$f$};
\node[arrow box] (sTR) at (1,2.2) {$g$};
\coordinate (TLs) at (0,3.3) {};
\coordinate (TRs) at (1,3.3) {};
\node[star] (f) at (1,3.5) {};
\node[star] (g) at (0,3.5) {};
\coordinate (X) at (-0.7,3.95);
\coordinate (Y1) at (1,3.95);
\coordinate (Y2) at (0,3.95);
\draw (0,-1.7) to (s1);
\draw (s1) to (p);
\draw (p) to (c);
\draw (c) to [out=15,in=-90] (R);
\draw (c) to[out=165,in=-90] (L);
\draw (R) to [out=90,in=-55] (Ls);
\draw (Ls) to[out=125,in=-90] (sL0);
\draw (sL0) to (sL);
\draw (sL) to (X);
\draw (L) to [out=90,in=-90] (c2);
\draw (c2) to[out=15,in=-90] (sTR);
\draw (c2) to[out=165,in=-90] (sTL);
\draw (sTL) to [out=90,in=-90] (TRs);
\draw (sTR) to [out=90,in=-90] (TLs);
\draw (TRs) to (f);
\draw (TLs) to (g);
\draw (f) to (Y1);
\draw (g) to (Y2);
\end{tikzpicture}}}
=
\vcenter{\hbox{
\begin{tikzpicture}[font=\small]
\node[arrow box] (p) at (0,-0.80) {$p$};
\node[star] (s1) at (0,-1.35) {};
\node[copier] (c) at (0,-0.15) {};
\coordinate (R) at (0.7,1.3) {};
\coordinate (L) at (-0.5,0.3) {};
\coordinate (Ls) at (-0.7,2.6) {};
\coordinate (Rs) at (0.7,1.4) {};
\node[copier] (c2) at (-0.5,0.3) {};
\node[star] (sL) at (-0.7,3.5) {};
\node[arrow box] (sTL) at (-1,0.95) {$f$};
\node[arrow box] (sTR) at (0,0.95) {$g$};
\coordinate (oldf) at (0,2.4) {};
\coordinate (oldg) at (1,2.4) {};
\coordinate (TLs) at (0,3.3) {};
\coordinate (TRs) at (1,3.3) {};
\node[star] (f) at (1,3.5) {};
\node[star] (g) at (0,3.5) {};
\coordinate (X) at (-0.7,3.95);
\coordinate (Y1) at (1,3.95);
\coordinate (Y2) at (0,3.95);
\draw (0,-1.7) to (s1);
\draw (s1) to (p);
\draw (p) to (c);
\draw (c) to [out=15,in=-90] (R);
\draw (c) to[out=165,in=-90] (L);
\draw (R) to [out=90,in=-90] (Ls);
\draw (Ls) to (sL);
\draw (sL) to (X);
\draw (L) to [out=90,in=-90] (c2);
\draw (c2) to[out=15,in=-90] (sTR);
\draw (c2) to[out=165,in=-90] (sTL);
\draw (sTL) to [out=90,in=-90] (oldf);
\draw (sTR) to [out=90,in=-90] (oldg);
\draw (oldf) to [out=90,in=-90] (TRs);
\draw (oldg) to [out=90,in=-90] (TLs);
\draw (TRs) to (f);
\draw (TLs) to (g);
\draw (f) to (Y1);
\draw (g) to (Y2);
\end{tikzpicture}}}
=
\vcenter{\hbox{
\begin{tikzpicture}[font=\small]
\node[arrow box] (p) at (0,-0.80) {$p$};
\node[star] (s1) at (0,-1.35) {};
\node[copier] (c) at (0,-0.15) {};
\coordinate (R) at (0.7,1.3) {};
\node[arrow box] (L) at (-0.5,0.45) {$h$};
\coordinate (Ls) at (-0.7,2.6) {};
\coordinate (Rs) at (0.7,1.4) {};
\node[copier] (c2) at (-0.5,1.05) {};
\node[star] (sL) at (-0.7,3.5) {};
\coordinate (sTL) at (-1,1.4) {};%
\coordinate (sTR) at (0,1.4) {};%
\coordinate (oldf) at (0,2.4) {};
\coordinate (oldg) at (1,2.4) {};
\coordinate (TLs) at (0,3.3) {};
\coordinate (TRs) at (1,3.3) {};
\node[star] (f) at (1,3.5) {};
\node[star] (g) at (0,3.5) {};
\coordinate (X) at (-0.7,3.95);
\coordinate (Y1) at (1,3.95);
\coordinate (Y2) at (0,3.95);
\draw (0,-1.7) to (s1);
\draw (s1) to (p);
\draw (p) to (c);
\draw (c) to [out=15,in=-90] (R);
\draw (c) to[out=165,in=-90] (L);
\draw (R) to [out=90,in=-90] (Ls);
\draw (Ls) to (sL);
\draw (sL) to (X);
\draw (L) to [out=90,in=-90] (c2);
\draw (c2) to[out=15,in=-90] (sTR);
\draw (c2) to[out=165,in=-90] (sTL);
\draw (sTL) to [out=90,in=-90] (oldf);
\draw (sTR) to [out=90,in=-90] (oldg);
\draw (oldf) to [out=90,in=-90] (TRs);
\draw (oldg) to [out=90,in=-90] (TLs);
\draw (TRs) to (f);
\draw (TLs) to (g);
\draw (f) to (Y1);
\draw (g) to (Y2);
\end{tikzpicture}}}
.
\ee
Then by undoing the process of passing the involutions through, the claim is obtained. 
\end{proof}

Note that Lemma~\ref{lem:symmetryaedeterminism} is not a special case of Lemma~\ref{prop:hfpkgs}. 
We now mention a result that holds in classical Markov categories but fails in quantum Markov categories. This lemma will be important in proving that a morphism that is a.e.\ equivalent to a deterministic morphism is itself a.e.\ deterministic in a \emph{classical} Markov category. We will come back to what happens in a quantum Markov category afterwards since the situation is a bit more subtle.

\begin{lem}[A.e.\ equality implies doubling a.e.\ equality in the classical setting]{lem:fpaegimpliesffpaegg}
In a classical Markov category,
\[
\vcenter{\hbox{
\begin{tikzpicture}[font=\small,xscale=-1]
\node[arrow box] (p) at (0,-0.3) {$p$};
\node[copier] (copier) at (0,0.3) {};
\node[arrow box] (f) at (-0.5,0.95) {$f$};
\coordinate (X) at (0.5,1.5);
\coordinate (Y) at (-0.5,1.5);
\draw (0,-0.9) to (p);
\draw (p) to (copier);
\draw (copier) to[out=15,in=-90] (X);
\draw (copier) to[out=165,in=-90] (f);
\draw (f) to (Y);
\end{tikzpicture}}}
\quad=\quad
\vcenter{\hbox{
\begin{tikzpicture}[font=\small,xscale=-1]
\node[arrow box] (p) at (0,-0.3) {$p$};
\node[copier] (copier) at (0,0.3) {};
\node[arrow box] (g) at (-0.5,0.95) {$g$};
\coordinate (X) at (0.5,1.5);
\coordinate (Y) at (-0.5,1.5);
\draw (0,-0.9) to (p);
\draw (p) to (copier);
\draw (copier) to[out=15,in=-90] (X);
\draw (copier) to[out=165,in=-90] (g);
\draw (g) to (Y);
\end{tikzpicture}}}
\quad
\implies
\quad
\vcenter{\hbox{
\begin{tikzpicture}[font=\small,xscale=-1]
\node[arrow box] (p) at (0,-0.3) {$p$};
\node[copier] (c) at (0,0.3) {};
\node[copier] (c2) at (-0.5,0.75) {};
\node[arrow box] (f) at (-1,1.4) {$f$};
\node[arrow box] (e) at (0,1.4) {$f$};
\coordinate (X) at (0.7,1.95);
\coordinate (Y1) at (-1,1.95);
\coordinate (Y2) at (0,1.95);
\draw (0,-0.9) to (p);
\draw (p) to (c);
\draw (c) to[out=15,in=-90] (X);
\draw (c) to[out=165,in=-90] (c2);
\draw (c2) to[out=165,in=-90] (f);
\draw (c2) to[out=15,in=-90] (e);
\draw (f) to (Y1);
\draw (e) to (Y2);
\end{tikzpicture}}}
\quad=\quad
\vcenter{\hbox{
\begin{tikzpicture}[font=\small,xscale=-1]
\node[arrow box] (p) at (0,-0.3) {$p$};
\node[copier] (c) at (0,0.3) {};
\node[copier] (c2) at (-0.5,0.75) {};
\node[arrow box] (f) at (-1,1.4) {$g$};
\node[arrow box] (e) at (0,1.4) {$g$};
\coordinate (X) at (0.7,1.95);
\coordinate (Y1) at (-1,1.95);
\coordinate (Y2) at (0,1.95);
\draw (0,-0.9) to (p);
\draw (p) to (c);
\draw (c) to[out=15,in=-90] (X);
\draw (c) to[out=165,in=-90] (c2);
\draw (c2) to[out=165,in=-90] (f);
\draw (c2) to[out=15,in=-90] (e);
\draw (f) to (Y1);
\draw (e) to (Y2);
\end{tikzpicture}}}
\]
\end{lem}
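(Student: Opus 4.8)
The plan is to read both the hypothesis and the conclusion as honest equalities of morphisms and then reduce the conclusion to two applications of the hypothesis. Write $\ell_f := (\id_X\otimes f)\circ\Delta_X$ and $\ell_g := (\id_X\otimes g)\circ\Delta_X$, both morphisms $X\to X\otimes Y$; these are the ``graphs'' of $f$ and $g$. With this notation the hypothesis is precisely $\ell_f\circ p=\ell_g\circ p$, and the conclusion is $L_f\circ p=L_g\circ p$, where $L_f:=(\id_X\otimes f\otimes f)\circ\Delta_X^{(3)}$ and $L_g:=(\id_X\otimes g\otimes g)\circ\Delta_X^{(3)}$, with $\Delta_X^{(3)}:=(\id_X\otimes\Delta_X)\circ\Delta_X$ the threefold copy.

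First I would record the factorization $L_f=(\ell_f\otimes\id_Y)\circ\ell_f$, which follows purely from coassociativity of $\Delta_X$ (two nested copies rearrange into a single threefold copy) together with bifunctoriality of $\otimes$. Post-composing the hypothesis $\ell_f\circ p=\ell_g\circ p$ by $\ell_f\otimes\id_Y$ — a legal move, since post-composition preserves equality of morphisms — then yields $L_f\circ p=(\ell_f\otimes\id_Y)\circ\ell_g\circ p=(\id_X\otimes f\otimes g)\circ\Delta_X^{(3)}\circ p$, which has replaced the ``outer'' copy of $f$ by a $g$.

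It remains to convert the ``inner'' $f$ into a $g$, i.e.\ to show $(\id_X\otimes f\otimes g)\circ\Delta_X^{(3)}\circ p=L_g\circ p$. Here I would use the companion factorization $(\id_X\otimes g\otimes f)\circ\Delta_X^{(3)}=(\ell_g\otimes\id_Y)\circ\ell_f$ and post-compose the hypothesis by $\ell_g\otimes\id_Y$, obtaining $(\id_X\otimes g\otimes f)\circ\Delta_X^{(3)}\circ p=(\ell_g\otimes\id_Y)\circ\ell_g\circ p=L_g\circ p$. Finally, commutativity~(\ref{eq:commutativity}) of the copy map (cocommutativity of $\Delta_X$) together with naturality of the symmetry shows that swapping the two $Y$-output legs carries $(\id_X\otimes f\otimes g)\circ\Delta_X^{(3)}$ to $(\id_X\otimes g\otimes f)\circ\Delta_X^{(3)}$ and fixes the symmetric morphism $L_g$; chaining $L_f\circ p=(\id_X\otimes f\otimes g)\circ\Delta_X^{(3)}\circ p=(\id_X\otimes g\otimes f)\circ\Delta_X^{(3)}\circ p=L_g\circ p$ then closes the argument.

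The one genuinely delicate point — and the reason the statement is confined to the classical setting — is this last reconciliation: matching the two orders of the $Y$-legs requires the symmetry to commute past the copy, which is exactly cocommutativity~(\ref{eq:commutativity}). In a quantum Markov category this axiom is unavailable (by Example~\ref{a008}, copy is not even $*$-preserving), so the two applications of the hypothesis cannot be glued, consistent with the observation in the text that this lemma fails quantumly. The diagrammatic rendering of the proof is just the string-diagram translation of these four equalities, with the nested copiers redrawn via coassociativity and a single braiding inserted in the final step.
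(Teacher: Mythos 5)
Your proof is correct and is essentially the paper's argument in algebraic rather than string-diagram notation: both proofs use coassociativity of $\Delta_X$ to factor the doubled diagram, apply the hypothesis once to each copy of $f$, and glue the two applications using the commutativity axiom~(\ref{eq:commutativity}) together with naturality of the symmetry (the paper invokes exactly this in its fourth and eighth equalities). You also correctly identify cocommutativity as the step that fails in the quantum setting, matching the paper's discussion following the lemma and Example~\ref{rmk:aedoesnotimplydoubleae}.
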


\bprf
Although a proof is given in~\cite[Lemma~13.12]{Fr19}, we illustrate it here for comparison. It follows from 
\be
\begin{split}
\vcenter{\hbox{
\begin{tikzpicture}[font=\small,xscale=-1]
\node[arrow box] (p) at (0,-0.3) {$p$};
\node[copier] (c) at (0,0.3) {};
\node[copier] (c2) at (-0.5,0.75) {};
\node[arrow box] (f) at (-1,1.4) {$f$};
\node[arrow box] (e) at (0,1.4) {$f$};
\coordinate (X) at (0.7,1.95);
\coordinate (Y1) at (-1,1.95);
\coordinate (Y2) at (0,1.95);
\draw (0,-0.9) to (p);
\draw (p) to (c);
\draw (c) to[out=15,in=-90] (X);
\draw (c) to[out=165,in=-90] (c2);
\draw (c2) to[out=165,in=-90] (f);
\draw (c2) to[out=15,in=-90] (e);
\draw (f) to (Y1);
\draw (e) to (Y2);
\end{tikzpicture}}}
\;\;&=\;\;
\vcenter{\hbox{
\begin{tikzpicture}[font=\small,xscale=-1]
\node[arrow box] (p) at (0,-0.3) {$p$};
\node[copier] (c) at (0,0.3) {};
\node[copier] (c2) at (0.5,0.75) {};
\node[arrow box] (f) at (-0.8,1.4) {$f$};
\node[arrow box] (e) at (0,1.4) {$f$};
\coordinate (X) at (0.9,1.95);
\coordinate (Y1) at (-0.8,1.95);
\coordinate (Y2) at (0,1.95);
\draw (0,-0.9) to (p);
\draw (p) to (c);
\draw (c) to[out=15,in=-90] (c2);
\draw (c) to[out=165,in=-90] (f);
\draw (c2) to[out=165,in=-90] (e);
\draw (c2) to[out=15,in=-90] (X);
\draw (f) to (Y1);
\draw (e) to (Y2);
\end{tikzpicture}}}
\;\;=\;\;
\vcenter{\hbox{
\begin{tikzpicture}[font=\small,xscale=-1]
\node[arrow box] (p) at (0,-0.3) {$p$};
\node[copier] (c) at (0,0.3) {};
\node[copier] (c2) at (0.5,0.75) {};
\node[arrow box] (f) at (-0.8,1.4) {$g$};
\node[arrow box] (e) at (0,1.4) {$f$};
\coordinate (X) at (0.9,1.95);
\coordinate (Y1) at (-0.8,1.95);
\coordinate (Y2) at (0,1.95);
\draw (0,-0.9) to (p);
\draw (p) to (c);
\draw (c) to[out=15,in=-90] (c2);
\draw (c) to[out=165,in=-90] (f);
\draw (c2) to[out=165,in=-90] (e);
\draw (c2) to[out=15,in=-90] (X);
\draw (f) to (Y1);
\draw (e) to (Y2);
\end{tikzpicture}}}
\;\;=\;\;
\vcenter{\hbox{
\begin{tikzpicture}[font=\small,xscale=-1]
\node[arrow box] (p) at (0,-0.3) {$p$};
\node[copier] (c) at (0,0.3) {};
\node[copier] (c2) at (-0.5,0.75) {};
\node[arrow box] (g) at (-1,1.4) {$g$};
\node[arrow box] (f) at (0,1.4) {$f$};
\coordinate (X) at (0.7,1.95);
\coordinate (Y1) at (-1,1.95);
\coordinate (Y2) at (0,1.95);
\draw (0,-0.9) to (p);
\draw (p) to (c);
\draw (c) to[out=15,in=-90] (X);
\draw (c) to[out=165,in=-90] (c2);
\draw (c2) to[out=165,in=-90] (g);
\draw (c2) to[out=15,in=-90] (f);
\draw (g) to (Y1);
\draw (f) to (Y2);
\end{tikzpicture}}}
\;\;=\;\;
\vcenter{\hbox{
\begin{tikzpicture}[font=\small,xscale=-1]
\node[arrow box] (p) at (0,-0.3) {$p$};
\node[copier] (c) at (0,0.3) {};
\node[copier] (c2) at (-0.5,0.75) {};
\coordinate (L) at (-0.9,1.0);
\coordinate (R) at (-0.1,1.0);
\node[arrow box] (g) at (-0.9,1.9) {$g$};
\node[arrow box] (f) at (-0.1,1.9) {$f$};
\coordinate (X) at (0.7,2.45);
\coordinate (Y1) at (-0.9,2.45);
\coordinate (Y2) at (-0.1,2.45);
\draw (0,-0.9) to (p);
\draw (p) to (c);
\draw (c) to[out=15,in=-90] (X);
\draw (c) to[out=165,in=-90] (c2);
\draw (c2) to[out=165,in=-90] (L);
\draw (c2) to[out=15,in=-90] (R);
\draw (L) to[out=90,in=-90] (f);
\draw (R) to[out=90,in=-90] (g);
\draw (g) to (Y1);
\draw (f) to (Y2);
\end{tikzpicture}}}
\\
\;\;&=\;\;
\vcenter{\hbox{
\begin{tikzpicture}[font=\small,xscale=-1]
\node[arrow box] (p) at (0,-0.3) {$p$};
\node[copier] (c) at (0,0.3) {};
\node[copier] (c2) at (-0.5,0.75) {};
\coordinate (L) at (-0.9,2.45);
\coordinate (R) at (-0.1,2.45);
\node[arrow box] (f) at (-0.9,1.4) {$f$};
\node[arrow box] (g) at (-0.1,1.4) {$g$};
\coordinate (X) at (0.7,2.45);
\coordinate (Y1) at (-0.9,2.45);
\coordinate (Y2) at (-0.1,2.45);
\draw (0,-0.9) to (p);
\draw (p) to (c);
\draw (c) to[out=15,in=-90] (X);
\draw (c) to[out=165,in=-90] (c2);
\draw (c2) to[out=165,in=-90] (f);
\draw (c2) to[out=15,in=-90] (g);
\draw (f) to[out=90,in=-90] (R);
\draw (g) to[out=90,in=-90] (L);
\draw (L) to (Y1);
\draw (R) to (Y2);
\end{tikzpicture}}}
\;\;=\;\;
\vcenter{\hbox{
\begin{tikzpicture}[font=\small,xscale=-1]
\node[arrow box] (p) at (-0.2,-0.3) {$p$};
\node[copier] (c) at (-0.2,0.3) {};
\node[copier] (c2) at (0.4,0.75) {};
\coordinate (L) at (-0.9,2.45);
\coordinate (R) at (-0.1,2.45);
\node[arrow box] (f) at (-0.9,1.4) {$f$};
\node[arrow box] (g) at (-0.1,1.4) {$g$};
\coordinate (X) at (0.8,2.45);
\coordinate (Y1) at (-0.9,2.45);
\coordinate (Y2) at (-0.1,2.45);
\draw (-0.2,-0.9) to (p);
\draw (p) to (c);
\draw (c) to[out=15,in=-90] (c2);
\draw (c) to[out=165,in=-90] (f);
\draw (c2) to[out=165,in=-90] (g);
\draw (c2) to[out=15,in=-90] (X);
\draw (f) to[out=90,in=-90] (R);
\draw (g) to[out=90,in=-90] (L);
\draw (L) to (Y1);
\draw (R) to (Y2);
\end{tikzpicture}}}
\;\;=\;\;
\vcenter{\hbox{
\begin{tikzpicture}[font=\small,xscale=-1]
\node[arrow box] (p) at (-0.2,-0.3) {$p$};
\node[copier] (c) at (-0.2,0.3) {};
\node[copier] (c2) at (0.4,0.75) {};
\coordinate (L) at (-0.9,2.45);
\coordinate (R) at (-0.1,2.45);
\node[arrow box] (f) at (-0.9,1.4) {$g$};
\node[arrow box] (g) at (-0.1,1.4) {$g$};
\coordinate (X) at (0.8,2.45);
\coordinate (Y1) at (-0.9,2.45);
\coordinate (Y2) at (-0.1,2.45);
\draw (-0.2,-0.9) to (p);
\draw (p) to (c);
\draw (c) to[out=15,in=-90] (c2);
\draw (c) to[out=165,in=-90] (f);
\draw (c2) to[out=165,in=-90] (g);
\draw (c2) to[out=15,in=-90] (X);
\draw (f) to[out=90,in=-90] (R);
\draw (g) to[out=90,in=-90] (L);
\draw (L) to (Y1);
\draw (R) to (Y2);
\end{tikzpicture}}}
\;\;=\;\;
\vcenter{\hbox{
\begin{tikzpicture}[font=\small,xscale=-1]
\node[arrow box] (p) at (0,-0.3) {$p$};
\node[copier] (c) at (0,0.3) {};
\node[copier] (c2) at (-0.5,0.75) {};
\node[arrow box] (f) at (-1,1.4) {$g$};
\node[arrow box] (e) at (0,1.4) {$g$};
\coordinate (X) at (0.7,1.95);
\coordinate (Y1) at (-1,1.95);
\coordinate (Y2) at (0,1.95);
\draw (0,-0.9) to (p);
\draw (p) to (c);
\draw (c) to[out=15,in=-90] (X);
\draw (c) to[out=165,in=-90] (c2);
\draw (c2) to[out=165,in=-90] (f);
\draw (c2) to[out=15,in=-90] (e);
\draw (f) to (Y1);
\draw (e) to (Y2);
\end{tikzpicture}}}
\;\;,
\end{split}
\ee
where the symmetry (commutativity) axiom of a classical Markov category was used in the fourth and eighth equalities. 
\eprf

Note that unlike in the proof of Lemma~\ref{lem:symmetryaedeterminism} in a $*$-preserving subcategory of a general quantum Markov category, in the proof of Lemma~\ref{lem:fpaegimpliesffpaegg}, we are able to swap the order of $f$ and $g$ in the fourth equality, which allows us to use the fact that $f\aeequals{p}g$. It is not possible to do this using the involution $*$ in the more general setting. More precisely, we have the following fact.  

\vspace{-1mm}
\begin{exa}[Doubling a.e.\ equality fails in $\fdCAlgCPU$]{rmk:aedoesnotimplydoubleae}
Lemma~\ref{lem:fpaegimpliesffpaegg} does not generally hold in a quantum Markov category, even if all morphisms are even and $*$-preserving. 
We will supply a simple counterexample in the subcategory $\fdCAlgCPU$ of the quantum Markov category $\fdCAlgUY$. 
Set $\mB\equiv\mA:=\mathcal{M}_{2}(\C)$ and set 
$\rho:=\left[\begin{smallmatrix}1&0\\0&0\end{smallmatrix}\right]$.
Let $\w:=\tr(\rho\;\cdot\;)$ be its associated state, and let $P_{\w}$ denote its support (in this case, $\rho=P_{\w}$). 
For any $\l\in(0,1),$ set
\[
\begin{split}
F&:=\l\id+(1-\l)\mathrm{Ad}_{P_{\w}}+(1-\l)\mathrm{Ad}_{\left[\begin{smallmatrix}0&1\\1&0\end{smallmatrix}\right]}\circ\mathrm{Ad}_{P_{\w}}
\quad\text{ and }\\
G&:=\l\id+(1-\l)\mathrm{Ad}_{P_{\w}}+(1-\l)\mathrm{Ad}_{P_{\w}^{\perp}},
\end{split}
\]
which explicitly shows that $F$ and $G$ are CP~\cite{Ch75}. 
In terms of their action on matrices, these maps are given by 
\[
F\left(\begin{bmatrix}a&b\\c&d\end{bmatrix}\right):=\begin{bmatrix}a&\l b\\\l c&(1-\l)a+\l d\end{bmatrix}
\quad\text{ and }\quad
G\left(\begin{bmatrix}a&b\\c&d\end{bmatrix}\right):=\begin{bmatrix}a&\l b\\ \l c&d\end{bmatrix}, 
\]
from which it easily follows $F$ and $G$ are unital, and therefore CPU. These maps are $\w$-a.e.\ equivalent because multiplying both expressions by $P_{\w}$ on the right gives the same result (we are freely using the equivalent notions of a.e.\ equivalence from Theorem~\ref{thm:ncaeequivalence} because $\omega$ is SPU). Using these formulas, we find
\[
F\left(\begin{bmatrix}a&b\\c&d\end{bmatrix}\right)^2P_{\w}-G\left(\begin{bmatrix}a&b\\c&d\end{bmatrix}\right)^2P_{\w}=\begin{bmatrix}0&0\\\l(1-\l)c(a-d)&0\end{bmatrix},
\]
which is non-zero in general. 
Therefore, the equality on the right-hand-side in Lemma~\ref{lem:fpaegimpliesffpaegg} does not hold in $\fdCAlgCPU$. 
\end{exa}

A consequence of Lemma~\ref{lem:fpaegimpliesffpaegg} in the classical Markov category setting is the following innocent sounding statement. 

\begin{prop}[A.e.\ equivalence to a deterministic map implies a.e.\ determinism]{prop:aetodetisaedet}
Let $\Theta\xstoch{p}X$ be a morphism in a classical Markov category. If $X\xstoch{f}Y$ is $p$-a.e.\ equivalent to a deterministic map $X\xstoch{g}Y$, then $f$ is $p$-a.e.\ deterministic. 
\end{prop}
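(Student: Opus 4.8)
The plan is to unwind both the hypothesis and the conclusion into statements about $p$-a.e.\ equivalence of morphisms $X\to Y\otimes Y$ and then bridge them through the determinism of $g$. Recall from Definition~\ref{defn:aeequivalentdeterministic} that $f$ being $p$-a.e.\ deterministic is exactly the assertion that
\[
(f\otimes f)\circ\D_X \aeequals{p} \D_Y\circ f,
\]
so this single a.e.\ equality is all I need to establish.

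First I would record three ingredients. (i) Since $g$ is deterministic, $(g\otimes g)\circ\D_X=\D_Y\circ g$ on the nose (Definition~\ref{eq:deterministicmap}), so in particular these two morphisms are trivially $p$-a.e.\ equal. (ii) The hypothesis $f\aeequals{p}g$ together with the classical doubling Lemma~\ref{lem:fpaegimpliesffpaegg} yields $(f\otimes f)\circ\D_X \aeequals{p} (g\otimes g)\circ\D_X$. (iii) Post-composing the defining diagram of $f\aeequals{p}g$ with the copy map $\D_Y$ on the output wire, which leaves the witness wire carrying $p$ untouched, gives $\D_Y\circ f \aeequals{p} \D_Y\circ g$.

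Then I would simply chain these together, using that $p$-a.e.\ equivalence is an equivalence relation (symmetry and transitivity are immediate, since the defining condition is an equality of diagrams):
\[
(f\otimes f)\circ\D_X \aeequals{p} (g\otimes g)\circ\D_X = \D_Y\circ g \aeequals{p} \D_Y\circ f.
\]
Reading off the two ends produces $(f\otimes f)\circ\D_X\aeequals{p}\D_Y\circ f$, which is precisely the desired conclusion.

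The only nontrivial input is Lemma~\ref{lem:fpaegimpliesffpaegg}, and this is where the main obstacle — and the reason the statement is restricted to the classical setting — resides: the doubling lemma is proved using the commutativity axiom~(\ref{eq:commutativity}) to swap the two copies of $f$ and $g$, a step that is unavailable in a general quantum Markov category (cf.\ Example~\ref{rmk:aedoesnotimplydoubleae}). Everything else is formal, since the equivalence-relation properties of $\aeequals{p}$ and the stability of a.e.\ equivalence under post-composition both follow directly from the diagrammatic definition. I would take care only to note that in a classical Markov category the left and right versions of a.e.\ equivalence and of a.e.\ determinism coincide (Example~\ref{exa:classicalae}), so there is no left/right bookkeeping to track.
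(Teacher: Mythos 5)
Your proposal is correct and follows essentially the same route as the paper's proof: both rest on Lemma~\ref{lem:fpaegimpliesffpaegg} to pass from $f\aeequals{p}g$ to the doubled equality, then use determinism of $g$ to trade $(g\otimes g)\circ\D_{X}$ for $\D_{Y}\circ g$, and finally substitute $f$ back for $g$ under the copy via the hypothesis $f\aeequals{p}g$. The only difference is presentational — you phrase the chain as transitivity of $\aeequals{p}$ applied to morphisms $X\stoch Y\otimes Y$, while the paper writes the same three steps as one string-diagram computation — and your observation that the classical commutativity axiom enters only through the doubling lemma matches the paper's own commentary exactly.
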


Although this fact is proved in~\cite[Lemma~13.11]{Fr19}, it is worth explicitly showing where the commutativity assumption is made and to motivate an upcoming definition.

\bprf
This follows from
\be
\vcenter{\hbox{
\begin{tikzpicture}[font=\small,xscale=-1]
\node[arrow box] (p) at (0,-0.3) {$p$};
\node[copier] (c) at (0,0.3) {};
\node[copier] (c2) at (-0.5,0.75) {};
\node[arrow box] (f) at (-1,1.4) {$f$};
\node[arrow box] (e) at (0,1.4) {$f$};
\coordinate (X) at (0.7,1.95);
\coordinate (Y1) at (-1,1.95);
\coordinate (Y2) at (0,1.95);
\draw (0,-0.9) to (p);
\draw (p) to (c);
\draw (c) to[out=15,in=-90] (X);
\draw (c) to[out=165,in=-90] (c2);
\draw (c2) to[out=165,in=-90] (f);
\draw (c2) to[out=15,in=-90] (e);
\draw (f) to (Y1);
\draw (e) to (Y2);
\end{tikzpicture}}}
\quad\overset{\text{Lem~\ref{lem:fpaegimpliesffpaegg}}}{=\joinrel=\joinrel=\joinrel=\joinrel=}\quad
\vcenter{\hbox{
\begin{tikzpicture}[font=\small,xscale=-1]
\node[arrow box] (p) at (0,-0.3) {$p$};
\node[copier] (c) at (0,0.3) {};
\node[copier] (c2) at (-0.5,0.75) {};
\node[arrow box] (f) at (-1,1.4) {$g$};
\node[arrow box] (e) at (0,1.4) {$g$};
\coordinate (X) at (0.7,1.95);
\coordinate (Y1) at (-1,1.95);
\coordinate (Y2) at (0,1.95);
\draw (0,-0.9) to (p);
\draw (p) to (c);
\draw (c) to[out=15,in=-90] (X);
\draw (c) to[out=165,in=-90] (c2);
\draw (c2) to[out=165,in=-90] (f);
\draw (c2) to[out=15,in=-90] (e);
\draw (f) to (Y1);
\draw (e) to (Y2);
\end{tikzpicture}}}
\quad=\quad
\vcenter{\hbox{
\begin{tikzpicture}[font=\small,xscale=-1]
\node[arrow box] (p) at (0,-0.3) {$p$};
\node[copier] (c) at (0,0.3) {};
\node[copier] (c2) at (-0.5,1.45) {};
\node[arrow box] (g) at (-0.5,0.95) {$g$};
\coordinate (X) at (0.7,1.95);
\coordinate (Y1) at (-1,1.95);
\coordinate (Y2) at (0,1.95);
\draw (0,-0.9) to (p);
\draw (p) to (c);
\draw (c) to[out=15,in=-90] (X);
\draw (c) to[out=165,in=-90] (g);
\draw (c2) to[out=165,in=-90] (-1,1.95);
\draw (c2) to[out=15,in=-90] (0,1.95);
\draw (g) to (c2);
\end{tikzpicture}}}
\quad=\quad
\vcenter{\hbox{
\begin{tikzpicture}[font=\small,xscale=-1]
\node[arrow box] (p) at (0,-0.3) {$p$};
\node[copier] (c) at (0,0.3) {};
\node[copier] (c2) at (-0.5,1.45) {};
\node[arrow box] (f) at (-0.5,0.95) {$f$};
\coordinate (X) at (0.7,1.95);
\coordinate (Y1) at (-1,1.95);
\coordinate (Y2) at (0,1.95);
\draw (0,-0.9) to (p);
\draw (p) to (c);
\draw (c) to[out=15,in=-90] (X);
\draw (c) to[out=165,in=-90] (f);
\draw (c2) to[out=165,in=-90] (-1,1.95);
\draw (c2) to[out=15,in=-90] (0,1.95);
\draw (f) to (c2);
\end{tikzpicture}}}
.
\ee
\eprf

Since we know Lemma~\ref{lem:fpaegimpliesffpaegg} is false in a quantum Markov category (by Example~\ref{rmk:aedoesnotimplydoubleae}), one wonders if the reasonable-sounding Proposition~\ref{prop:aetodetisaedet} still holds in quantum Markov categories, perhaps due to Lemma~\ref{lem:symmetryaedeterminism}. Somewhat surprisingly, Proposition~\ref{prop:aetodetisaedet} fails in a general quantum Markov category (cf.\ Example~\ref{exa:Unotdetreason}). 
This motivates the following definition. 

\begin{defn}[Deterministically reasonable subcategory]{defn:detreason}
A subcategory $\mC$ of a quantum Markov category is \define{deterministically reasonable} iff for any morphisms $f$ and $p$ in $\mC$ such that $f$ is right/left $p$-a.e.\ equivalent to a deterministic morphism, then $f$ is right/left $p$-a.e.\ deterministic. 
\end{defn}

Although every classical Markov category is deterministically reasonable 
by Proposition~\ref{prop:aetodetisaedet}, 
this is not true in an arbitrary quantum Markov category, even if $\mC$ is $*$-preserving. Fortunately, it holds for our main subcategories of interest. 

\vspace{-1mm}
\begin{exa}[$\fdCAlgSPU$ is deterministically reasonable]{exa:SPUdetreason}
Let $\mB\xstoch{F}\mA$ and $\mA\xstoch{\omega}\mC$ be SPU maps. Suppose that $F\underset{\raisebox{.6ex}[0pt][0pt]{\tiny$\omega$}}{=}G$, where $\mB\xstoch{G}\mA$ is deterministic. Fixing $B\in\mB$ immediately gives
\[
\omega\big(F(B^*B)-F(B)^*F(B)\big)=\omega\big(G(B^*B)-G(B)^*G(B)\big)=0,  
\]
since $F\aeequals{\omega}G$ and since $G$ is deterministic. 
But $F(B^*B)-F(B)^*F(B)\ge0$ by KS for $F$. Hence, Lemma~\ref{lem:minilemmaleftideal} implies $F(B^*B)-F(B)^*F(B)\in\mathcal{N}_{\omega}$. Since $B$ was arbitrary, $F$ is $\omega$-a.e.\ deterministic by the weak a.e.\ Multiplication Lemma (Lemma~\ref{lem:Attalslemmaaegeneral}). 
\end{exa}

\vspace{-1mm}
\begin{exa}[$\fdCAlgU_{*}$ is not deterministically reasonable]{exa:Unotdetreason}
Set $\mA:=\mM_{4}(\C)$ and $\mB:=\mM_{2}(\C)$. Set $F,G:\mB\stoch\mA$ and $\mA\xstoch{\omega}\C$ to be the unital linear $*$-preserving maps 
\[
\mM_{2}(\C)\ni\begin{bmatrix}b_{11}&b_{12}\\b_{21}&b_{22}\end{bmatrix}\mapsto F(B):=\begin{bmatrix}b_{11}&b_{12}&0&0\\b_{21}&b_{22}&b_{21}&b_{21}\\0&b_{12}&b_{11}&b_{12}\\0&b_{12}&b_{21}&b_{22}\end{bmatrix},\quad
G(B):=\begin{bmatrix}b_{11}&b_{12}&0&0\\b_{21}&b_{22}&0&0\\0&0&b_{11}&b_{12}\\0&0&b_{21}&b_{22}\end{bmatrix}
\]
and
\[
\omega=\tr(\rho\;\cdot\;),\quad\text{ where }\quad
\rho=\begin{bmatrix}1&0&0&0\\0&0&0&0\\0&0&0&0\\0&0&0&0\end{bmatrix}. 
\]
Then $G$ is deterministic, $F\aeequals{\omega}G$ Theorem~\ref{thm:ncaeequivalence}, and yet $F$ is not right $\omega$-a.e.\ deterministic by Corollary~\ref{cor:Attalslemmaae}. This shows that  $\fdCAlgU_{*}$ is not deterministically reasonable. 
\end{exa}

\vspace{-1mm}
\begin{ques}[Is $\fdCAlgPU$ deterministically reasonable?]{a016}
The proof provided in Example~\ref{exa:SPUdetreason} used the Kadison--Schwarz inequality for $F$. It is unclear to us if this is necessary. Example~\ref{exa:Unotdetreason} showed us that $\fdCAlgU_{*}$ is not deterministically reasonable. What about the subcategory $\fdCAlgPU$?
\end{ques}

We end this section with a concept that is an a.e.\ weakening of unitality. The introduction of this concept is motivated by theorems regarding disintegrations from measure theory~\cite[Section~452]{FrV4}. We will actually prove these statements abstractly in the Markov category language in Remark~\ref{rmk:aeunitaldisint}. 
The reader uninterested in these results can safely skip this discussion and move to the next section. 

\begin{defn}[A.e.\ unitality]{eq:aecausal}
Let $\Theta\xstoch{p}X\xstoch{f}Y$ be a composable pair of morphisms in a quantum CD category, where 
$f$ is even. The morphism $f$ is said to be \define{left/right $p$-a.e.\ unital} iff 
\[
\left. 
\vcenter{\hbox{
\begin{tikzpicture}[font=\small]
\node[arrow box] (p) at (0,-0.3) {$p$};
\node[copier] (copier) at (0,0.3) {};
\node[arrow box] (f) at (-0.5,0.95) {$f$};
\coordinate (X) at (0.5,1.7);
\node[discarder] (Y) at (-0.5,1.5) {};
\draw (0,-0.9) to (p);
\draw (p) to (copier);
\draw (copier) to[out=15,in=-90] (X);
\draw (copier) to[out=165,in=-90] (f);
\draw (f) to (Y);
\end{tikzpicture}}}
\quad=\quad
\vcenter{\hbox{
\begin{tikzpicture}[font=\small,xscale=-1]
\node[arrow box] (p) at (0,0.3) {$p$};
\coordinate (X) at (0,1.6);
\draw (0,-1.0) to (p);
\draw (p) to (X);
\end{tikzpicture}}}
\qquad\middle/\qquad
\vcenter{\hbox{
\begin{tikzpicture}[font=\small,xscale=-1]
\node[arrow box] (p) at (0,-0.3) {$p$};
\node[copier] (copier) at (0,0.3) {};
\node[arrow box] (f) at (-0.5,0.95) {$f$};
\coordinate (X) at (0.5,1.7);
\node[discarder] (Y) at (-0.5,1.5) {};
\draw (0,-0.9) to (p);
\draw (p) to (copier);
\draw (copier) to[out=15,in=-90] (X);
\draw (copier) to[out=165,in=-90] (f);
\draw (f) to (Y);
\end{tikzpicture}}}
\quad=\quad
\vcenter{\hbox{
\begin{tikzpicture}[font=\small,xscale=-1]
\node[arrow box] (p) at (0,0.3) {$p$};
\coordinate (X) at (0,1.6);
\draw (0,-1.0) to (p);
\draw (p) to (X);
\end{tikzpicture}}}
\right.
\quad.
\]
\end{defn}

\begin{exa}[A.e.\ unitality in $\fdCAlgP$ and $\FinMeas$]{a017}
Given a state $\mA\xstoch{\w}\C$ in $\fdCAlgPU$, a positive (not necessarily unital) map $\mB\xstoch{F}\mA$ is right $\w$-a.e.\ unital if and only if 
$F(1_{\mB})P_{\w}=P_{\w}$ by Theorem~\ref{thm:ncaeequivalence} and the first axiom in~(\ref{eq:markovcatfirstconditions}) in Definition~\ref{defn:qmc}. 
Since 
$P_{\w}^{\perp}F(1_{\mB})P_{\w}=P_{\w}^{\perp}P_{\w}=0$ and $P_{\w}F(1_{\mB})P_{\w}^{\perp}=(P_{\w}^{\perp}F(1_{\mB})P_{\w})^*=0$
(because $F$ is $*$-preserving), we conclude  
\[
F(1_{\mB})=P_{\w}F(1_{\mB})P_{\w}+P_{\w}^{\perp}F(1_{\mB})P_{\w}^{\perp}=P_{\w}+\mathrm{Ad}_{P_{\w}^{\perp}}\big(F(1_{\mB})\big).
\]
This guarantees $F(1_{\mB})\ge P_{\w}$ since $F$ is a positive map. 

For $\FinMeas$, in the special case $\mA=\C^{X}$ and $\mB=\C^{Y}$, the state $\w$ corresponds to a probability measure $I\xstoch{p}X$, and this condition means that the corresponding map $X\xstoch{f}Y$ associates to each $x\in X\setminus N_{p}$ a probability measure on $Y$. However, it can assign \emph{any} (possibly signed) measure 
on $Y$ to each element of $N_{p}$. Indeed, a.e.\ unitality provides us with the equation 
\[
p_{x}=\sum_{y\in Y}f_{yx}p_{x}=p_{x}\sum_{y\in Y}f_{yx}\qquad\forall\; x\in X.
\]
When $x\in X\setminus N_{p}$, this gives the constraint $\sum_{y\in Y}f_{yx}=1$, but when $x\in N_{p}$, this gives no condition. 
\end{exa}

\section[Inversion, disintegration, and Bayesian inversion]{a018}
\label{sec:disintbayes}
\vspace{-12mm}
\noindent
\begin{tikzpicture}
\coordinate (L) at (-8.75,0);
\coordinate (R) at (8.75,0);
\draw[line width=2pt,orange!20] (L) -- node[anchor=center,rectangle,fill=orange!20]{\strut \Large \textcolor{black}{\textbf{7\;\; Inversion, disintegration, and Bayesian inversion}}} (R);
\end{tikzpicture}

Now that we have described a.e.\ equivalence, a.e.\ determinism, and a.e\ unitality, we focus on disintegrations and Bayesian inversion in this section. We recall the definitions, we state several properties, and we review the current status on their existence in the non-commutative setting~\cite{PaRu19,PaRuBayes} illustrating that disintegrations, Bayesian inverses, and conditionals (in the sense of~\cite[Definition~11.5]{Fr19}) do not always exist in $\fdCAlgSPU$ (nor do they exist in $\fdCAlgCPU$). Nevertheless, necessary and sufficient conditions for when disintegrations and Bayesian inverses exist are provided in~\cite{PaRu19,PaRuBayes}. The purpose of this and the next section is to elucidate the relationships between them in the cases when they do exist. Examples in classical and quantum error correcting codes are also provided. 

\begin{defn}[State-preserving morphisms and disintegrations]{defn:disintegration}
Let $\mathcal{M}_{\text{\Yinyang}}$ be a quantum Markov category and let $\mC$ be a $*$-preserving subcategory of $\mathcal{M}_{\text{\Yinyang}}$. 
Given states $I\xstoch{p}X$ and $I\xstoch{q}Y$ in $\mM_{\text{\Yinyang}}$, 
a morphism $X\xstoch{f}Y$ in $\mathcal{M}_{\text{\Yinyang}}$ is said to be 
\define{state-preserving} iff 
\[
\xy0;/r.25pc/:
(0,7.5)*+{I}="o";
(-10,-7.5)*+{X}="X";
(10,-7.5)*+{Y}="Y";
{\ar@{~>}"o";"X"_{p}};
{\ar@{~>}"o";"Y"^{q}};
{\ar@{~>}"X";"Y"_{f}};
{\ar@{=}(-5,-5);(3,-0.5)};
\endxy
\qquad\text{i.e.}\qquad
\vcenter{\hbox{%
\begin{tikzpicture}[font=\small]
\node[state] (p) at (0,-0.9) {$p$};
\node[arrow box] (f) at (0,-0.3) {$f$};
\coordinate (X) at (0,0.3);
\draw (p) to (f);
\draw (f) to (X);
\end{tikzpicture}}}
\quad
=
\quad
\vcenter{\hbox{%
\begin{tikzpicture}[font=\small]
\node[state] (q) at (0,-0.9) {$q$};
\coordinate (X) at (0,-0.3);
\draw (q) to (X);
\end{tikzpicture}}}
\quad.
\]
Such data will be denoted by $(f,p,q)$. 
If $\mC$ is even and if $f$ is also in $\mC$, a \define{(right) disintegration of $(f,p,q)$ in $\mC$ (or in $\mM$)} is 
a morphism%
\footnote{An explanation of this terminology is given in Example~\ref{exa:disintsinstoch}. Technically, it would be more mathematically appropriate and consistent to say that ``A morphism $Y\xstoch{g}X$ in $\mC$ (or in $\mM$) is a \define{(right) disintegration of $(f,p,q)$} iff the given equations hold.'' Example~\ref{exa:disintsinstoch} explains that not every disintegration in the measure-theoretic sense is necessarily represented by a transition kernel (cf.\ \cite[Definition~A.1]{PaRu19}). However, for a clean categorical account of disintegrations, we demand that they are.}
 $Y\xstoch{g}X$ in $\mC$ (or in $\mM$) such that%
\footnote{Note that if $g$ is only in $\mM$, it need not be $*$-preserving~\cite{PaRuBayes}. The equality $f\circ g\aeequals{q}\id_{Y}$ should be read as right a.e.\ equivalence to be consistent with the string diagram. Analogously, there is a similar notion of \define{left disintegration}, where the second diagram has the $f$ and $g$ on the left string instead of the right string. If $g$ is in $\mC$, then these two notions agree.}
\[
\xy0;/r.25pc/:
(0,7.5)*+{I}="o";
(-10,-7.5)*+{X}="X";
(10,-7.5)*+{Y}="Y";
{\ar@{~>}"o";"X"_{p}};
{\ar@{~>}"o";"Y"^{q}};
{\ar@{~>}"Y";"X"^{g}};
{\ar@{=}(-3,0);(5,-4.5)};
\endxy
\;\;\text{and}\;\;
\xy0;/r.25pc/:
(0,7.5)*+{X}="X";
(10,-7.5)*+{Y}="Y1";
(-10,-7.5)*+{Y}="Y2";
{\ar@{~>}"Y1";"X"_{g}};
{\ar@{~>}"X";"Y2"_{f}};
{\ar"Y1";"Y2"^{\mathrm{id}_{Y}}};
{\ar@{=}(-3,0);(5,-4.5)_{q}};
\endxy
\;\;,\;\text{i.e.}\;\;
\vcenter{\hbox{%
\begin{tikzpicture}[font=\small]
\node[state] (q) at (0,-0.9) {$q$};
\node[arrow box] (g) at (0,-0.3) {$g$};
\coordinate (Y) at (0,0.3);
\draw (q) to (g);
\draw (g) to (Y);
\end{tikzpicture}}}
\;\;
=
\;\;
\vcenter{\hbox{%
\begin{tikzpicture}[font=\small]
\node[state] (p) at (0,-0.9) {$p$};
\coordinate (X) at (0,-0.3);
\draw (p) to (X);
\end{tikzpicture}}}
\;\;\;\text{and}\;\;
\vcenter{\hbox{%
\begin{tikzpicture}[font=\small]
\node[state] (q) at (0,-0.4) {$q$};
\node[copier] (copier) at (0,0.3) {};
\coordinate (X) at (-0.5,1.31);
\coordinate (X2) at (0.5,1.31);
\draw (q) to (copier);
\draw (copier) to[out=165,in=-90] (X);
\draw (copier) to[out=15,in=-90] (X2);
\end{tikzpicture}}}
\;\;
=
\;\;
\vcenter{\hbox{%
\begin{tikzpicture}[font=\small]
\node[state] (q) at (0,0) {$q$};
\node[copier] (copier) at (0,0.3) {};
\coordinate (f) at (-0.5,0.91) {};
\node[arrow box] (g) at (0.5,0.95) {$g$};
\node[arrow box] (e) at (0.5,1.75) {$f$};
\coordinate (X) at (-0.5,2.3);
\coordinate (Y) at (0.5,2.3);
\draw (q) to (copier);
\draw (copier) to[out=165,in=-90] (f);
\draw (f) to (X);
\draw (copier) to[out=15,in=-90] (g);
\draw (g) to (e);
\draw (e) to (Y);
\end{tikzpicture}}}
\;.
\]
As a shorthand, `$g$ is a disintegration of $(f,p,q)$' will refer to the case when $g$ is in $\mC$.%
\footnote{One can slightly generalize the definition of a disintegration by not requiring $p$ nor $q$ to be states but arbitrary morphisms. We will do this in future sections and still refer to them as disintegrations (see Remark~\ref{rmk:statestomapsfordisint} for more details). This will be particularly relevant when relating this notion to those of sufficient statistics in the non-commutative setting.}
\end{defn}

\vspace{-1mm}
\begin{rmk}[Allowing $f$ to be stochastic for disintegrations]{a019}
The definition of a disintegration in Definition~\ref{defn:disintegration} differs from the one introduced in~\cite{PaRu19} in that we are no longer assuming $X\xstoch{f}Y$ is deterministic. The reason for this is to apriori allow the possibility for more morphisms to have disintegrations. 
However, it turns out that in suitable subcategories (of which $\FinStoch$ and $\fdCAlgSPU$ are examples), if a morphism $X\xstoch{f}Y$ together with a state $I\xstoch{p}X$ has a disintegration, then $f$ is $p$-a.e.\ deterministic (see Theorem~\ref{thm:aemodbayesdisint}). 
\end{rmk}

\begin{exa}[Disintegrations for deterministic morphisms exist in $\FinStoch$]{thm:disintsexistfordeterministicmaps}
Let $(X,p)$ and $(Y,q)$ be finite probability spaces.
Let $f:X\to Y$ be a measure-preserving function (cf.\ Example~\ref{exa:probpreserving}).
Then there exists a disintegration $Y\xstoch{g}X$ of 
$(f,p,q)$. Moreover, $g$ is unique $q$-a.e.\ and a formula for such a (representative of a) disintegration is given by 
\[
g_{xy}:=
\begin{cases}
p_{x}\de_{yf(x)}/q_{y}&\mbox{ if } y\in Y\setminus N_{q}\\
1/|X|&\mbox{ if } y\in N_{q}
\end{cases}
\]
(see~\cite[Section~2.2]{PaRu19} for details). However, disintegrations of \emph{stochastic maps} (as opposed to functions) do not always exist. We will show why in Corollary~\ref{cor:binversiondisint}.  
\end{exa}

\vspace{-1mm}
\begin{exa}[Hamming codes as disintegrations in $\FinStoch$]{a020}
Hamming's error correcting codes~\cite{Ha50} can be described in terms of disintegrations. We will illustrate this in the concrete case of Hamming's $(7,4)$ code (though the analysis works for any of them).%
\footnote{The presentation we follow here is based on an Exercises \#53 and \#54 in~\cite[Section 3.1]{Br97}. I have also benefitted from discussions with Christian Carmellini and Philip Parzygnat.}
 Set $X:=\Z_{2}^{4}$ and $Y:=\Z_{2}^{7}$. We interpret $X$ as the set of possible messages a sender wishes to transmit and $Y$ as the set of possible messages that the receiver acquires after transmission. Since $X$ and $Y$ are composed of vectors, we will write an element of $X$ and $Y$ as $\vec{x}$ and $\vec{y}$, respectively. Here, $\Z_{2}=\{0,1\}$ with modulo 2 arithmetic. Set
\[
Q:=\begin{bmatrix}1&0&1&1\\1&1&0&1\\1&1&1&0\end{bmatrix}, 
\qquad
H:=\begin{bmatrix}\mathds{1}_{3}&Q\end{bmatrix},
\qquad
M:=\begin{bmatrix}Q\\\mathds{1}_{4}\end{bmatrix}.
\]
Then, one can show that 
\[
0\to\Z_{2}^{4}\xrightarrow{M}\Z_{2}^{7}\xrightarrow{H}\Z_{2}^{3}\to0
\]
is an exact sequence%
\footnote{A sequence of abelian groups $\cdots \rightarrow V_{i-1}\xrightarrow{f_{i}}V_{i}\xrightarrow{f_{i+1}}V_{i+1}\rightarrow\cdots$ is \define{exact} iff $\mathrm{Im}(f_{i})=\ker(f_{i+1})$ for all $i$.}
~\cite{PaLinear}. Exactness of this sequence says that $M$ embeds $\Z_{2}^{4}$ into a subspace of $\Z_{2}^{7}$, henceforth referred to as the \emph{code subspace}. Furthermore, $H$ takes \emph{only} this subspace to $0$. The rest of $\Z_{2}^7$ is generated from the code subspace by the addition of any one of the seven unit vectors in $\Z_{2}^{7}$ in the sense that 
\[
\Z_{2}^{7}=M(\Z_{2}^{4})\cup\bigcup_{i=1}^{7}\Big(M(\Z_{2}^{4})+\vec{e}_{i}\Big)
\]
and all terms in this union are mutually disjoint. 

Now, the sender wants to send a message, an element of $X$, across a communication channel to a receiver. During the process, the sender first embeds the initial code using $M$. A stochastic map $X\xstoch{f}Y$, called the \emph{error}, is used to represent the evolution of any initial message across the channel. The specific conditional probabilities $f_{\vec{y}\vec{x}}$ associated to the channel (namely, the probability that the messages $\vec{y}$ is received given that $\vec{x}$ was sent) will not be relevant to our discussion, but one crucial assumption is made. This assumption is 
\[
f_{\vec{y}\vec{x}}=0 \quad\text{ if }\quad \left\lVert\vec{y}-M\vec{x}\right\rVert>1,
\] 
where $\lVert\;\cdot\;\rVert$ here denotes the \emph{Hamming distance}, defined via%
\footnote{The modulo $2$ arithmetic is used only for each term! Once $z_{i}-y_{i}$ is computed in $\Z_{2}$, the sum $\sum_{i=1}^{7}$ is a standard sum of non-negative integers. For example, $\lVert(1,0,1,0,0,1,0)-(1,0,1,1,1,1,1)\rVert=3$.}
\[
\lVert\vec{y}-\vec{z}\rVert:=\sum_{i=1}^{n}\Big((z_{i}-y_{i})\!\!\mod2\Big)\qquad\forall\;\vec{y},\vec{z}\in\Z_{2}^{n}. 
\]
This is the assumption that at most one error occurs during the transmission. 

Once a message $\vec{y}$ is received, the receiver must apply the transformation $H$ to obtain a vector in $\Z_{2}^{3}$. The number of vectors in $\Z_{2}^{3}$ is $2^3=8$, corresponding to the total number of possible errors plus no error. The outcome $H\vec{y}$ is therefore either one of the columns of $H$ or is the zero vector. If it is the zero vector, no error has occurred. If it is the $i$-th column of $H$, then an error occurred in the $i$-th entry of the vector $\vec{y}$. In the latter case, the receiver must correct for this error to obtain $\vec{y}+\vec{e}_{i}$. Afterwards, the receiver applies the projection map onto the first four entries of this vector to obtain a vector in $\Z_{2}^{4}$. This describes a deterministic map $Y\xrightarrow{g}X$, called the \emph{recovery}, whose definition we summarize as follows. Let $Z:=\Z_{2}^{3}$, define $Z\xrightarrow{k}\{0,1,\dots,7\}$ by 
\[
Z\ni\vec{z}\mapsto k(\vec{z}):=
\begin{cases}
0&\mbox{ if $\vec{z}=\vec{0}$}\\
i&\mbox{ if $\vec{z}=H\vec{e}_{i}$,}
\end{cases}
\]
and let $\Z_{2}^{7}\xrightarrow{\pi}\Z_{2}^{4}$ be the projection onto the first four entries. 
Finally, define $Y\xrightarrow{g}X$ by 
\[
Y\ni\vec{y}\mapsto g(\vec{y}):=
\begin{cases}
\pi(H\vec{y})&\mbox{ if $k(H\vec{y})=0$}\\
\pi\big(H\vec{y}+\vec{e}_{k(H\vec{y})}\big)&\mbox{ if $k(H\vec{y})\ne0$.}
\end{cases}
\]
One can easily check that $g\circ f=\id_{X}$. In particular, given \emph{any} probability measure $\{\bullet\}\xstoch{p}X$ (describing a probability measure on the collection of possible messages the sender may send), and defining $q:=f\circ p$ (which is the probability measure describing the possible messages the receiver may see), it follows that the error map $f$ is a disintegration of the recovery map $(g,q,p)$.%
\footnote{Normally, disintegrations can be viewed as stochastic processes that recover the information loss associated to a deterministic process~\cite{BFL,PaEntropy}. Therefore, it may sound strange that error is a disintegration of recovery. Nevertheless, this is the case because one is meant to perform a specific (i.e.\ deterministic) task to recover the message.}
\end{exa}

\phantom{why do I need this here?}
\vspace{-11mm}
\begin{exa}[Disintegrations in $\Stoch$]{exa:disintsinstoch}
The terminology of \emph{disintegration} comes from the relationship of this concept to one from measure theory, which is normally defined as follows~\cite[Section~452]{FrV4} (see also~\cite{Pa79,We94}).%
\footnote{For the relationship to the terminology used by Cho--Jacobs from~\cite{ChJa18}, see Remark~\ref{rmk:ChoJacobsDisintegrations}.}
 Given a measure-preserving function $(X,\Sigma,p)\xrightarrow{f}(Y,\Omega,q)$, where $\Sigma$ and $\Omega$ are $\sigma$-algebras on $X$ and $Y$, respectively, a \define{disintegration of $p$ over $q$ consistent with $f$} is a family of pairs $(\Sigma_{y},g_{y})$, indexed by $y\in Y$, where $\Sigma_{y}$ is a $\s$-algebra on $X$ and $g_{y}$ is a measure on $(X,\Sigma_{y})$ satisfying the following three conditions. 
\begin{enumerate}
\itemsep0pt
\item
For each $A\in\S$, there exists a $N_{A}\in\Omega$ such that $q(N_{A})=0$, $A\in\S_{y}$ for all $y\in Y\setminus N_{A}$, and 
\[
Y\setminus N_{A}\ni y \mapsto g_{y}(A)\in[0,\infty]
\]
is $\Omega_{\upharpoonright Y\setminus N_{A}}$-measurable. Here, $\Omega_{\upharpoonright Y\setminus N_{A}}$ is the natural $\s$-algebra on $Y\setminus N_{A}$ viewed as a submeasurable space of $(Y,\Omega)$. 
\item
For each $A\in\S$, 
\[
\int_{Y}g_{y}(A)\,dq(y)=p(A), 
\]
where the integral on the left-hand-side is technically defined as $\int_{Y\setminus N_{A}}g_{y}(A)\,dq(y)$, where $N_{A}$ is as in the previous item (and the measure $q$ is the unique one determined by $q\big(B\cap(Y\setminus N_{A})\big):=q(B)$ for all $B\in\Omega$). 
\item
For each $B\in\Omega$, there exists a $q$-null set $N_{B}\in\Omega$ such that $f^{-1}(B)\in\S_{y}$ for all $y\in Y\setminus N_{B}$ and 
\[
g_{y}\big(f^{-1}(B)\big)=1\qquad\forall\;y\in(Y\setminus N_{B})\cap B.
\]
\end{enumerate}
It is not necessary that the family of $\s$-algebras $\S_{y}$ satisfies $\Sigma\subseteq\Sigma_{y}$ for this notion of disintegration (cf.\ \cite[Example~1.2]{Pa79}). 
If this \emph{were} the case, then one could restrict all the $\s$-algebras to be $\Sigma$, and then one could define a transition kernel $(Y,\Omega)\xstoch{g}(X,\Sigma)$ representing the family of measures. 

This motivates the following closely related definition of a (measure-theoretic) disintegration by demanding that it be represented by a transition kernel. A transition kernel $(Y,\Omega)\xstoch{g}(X,\Sigma)$ is a \define{disintegration of $p$ over $q$ consistent with $f$} iff 
\[
\int_{Y}g_{y}(A)\,dq(y)=p(A)\qquad\forall\;A\in\Sigma
\]
and for each $B\in\Omega$, there exists a $q$-null set $N_{B}\in\Omega$ such that 
\[
g_{y}\big(f^{-1}(B)\big)=1\qquad\forall\;y\in(Y\setminus N_{B})\cap B.
\]
The equivalence between \emph{this} definition and the one from Definition~\ref{defn:disintegration} above is explained in~\cite[Appendix~A]{PaRu19} (it is immediate from the definitions that the first equality is just the statement $p=g\circ q$). This definition of a transition kernel being a disintegration is equivalent to the first definition above (where the family of measures need not form a transition kernel) when $(X,\Sigma,p)$ is a non-empty countably compact measure space (cf.\ \cite[Exercise 452X (l)]{FrV4}). Therefore, we do not lose too much generality by considering our disintegrations (as in Definition~\ref{defn:disintegration}) to be represented by transition kernels. 

Note, however, that our usage of the concept of a disintegration from Definition~\ref{defn:disintegration} is also (a-priori) slightly more general than the (latter) measure-theoretic one because we allow $f$ to be a \emph{Markov kernel}, as opposed to just a \emph{measurable map}. 
 Furthermore, although $g$ is a-priori only a transition kernel (as opposed to a Markov kernel), it is a fact~\cite[Proposition~452G (a)]{FrV4} that if $g$ is a transition kernel that is a disintegration of $p$ over $q$ consistent with $f$, then $g$ is a \emph{Markov kernel} $q$-a.e. (see Remark~\ref{rmk:aeunitaldisint} for a diagrammatic proof). Hence, we lose no generality by assuming that we work with unital morphisms (Markov kernels) in our definition of disintegration. 
\emph{Regular conditional probabilities} can also be described in terms of disintegrations (though their categorical relationship will be made more explicit in Remark~\ref{exa:RCP}). Finally, disintegrations are sometimes also called \emph{optimal hypotheses}.%
\footnote{The definition of a regular conditional probability is a-priori different from the definition of a disintegration. Also, the definition of an optimal hypothesis comes from~\cite{BaFr14}. Technically, that definition requires a stochastic section on the nose as opposed to a.e.\ equality to the identity, but it seems reasonable to apply the same terminology to this setting.} 
 These relationships are also explained in~\cite[Appendix~A]{PaRu19}.
\end{exa}

\vspace{-1mm}
\begin{rmk}[Relationship to Cho--Jacobs' disintegration]{rmk:ChoJacobsDisintegrations}
In~\cite{ChJa18}, Cho and Jacobs defined a disintegration (called a \emph{CJ disintegration} in the remainder of this remark) as one of two assignments that take a joint state $I\xstoch{s}X\times Y$ and produce probability-preserving morphism $(X,p)\xstoch{f}(Y,q)$ and $(Y,q)\xstoch{g}(X,p)$ such that 
\[
\vcenter{\hbox{%
\begin{tikzpicture}[font=\small]
\node[state] (q) at (0,0) {$q$};
\node[copier] (copier) at (0,0.3) {};
\node[arrow box] (g) at (-0.5,0.95) {$g$};
\coordinate (X) at (-0.5,1.5);
\coordinate (Y) at (0.5,1.5);
\draw (q) to (copier);
\draw (copier) to[out=165,in=-90] (g);
\draw (copier) to[out=15,in=-90] (Y);
\draw (g) to (X);
\path[scriptstyle]
node at (-0.7,1.45) {$X$}
node at (0.7,1.45) {$Y$};
\end{tikzpicture}}}
\quad
=
\quad
\vcenter{\hbox{%
\begin{tikzpicture}[font=\small]
\node[state] (omega) at (0,0) {\;$s$\;};
\coordinate (X) at (-0.25,0.55) {};
\coordinate (Y) at (0.25,0.55) {};
\draw (omega) ++(-0.25, 0) to (X);
\draw (omega) ++(0.25, 0) to (Y);
\path[scriptstyle]
node at (-0.45,0.4) {$X$}
node at (0.45,0.4) {$Y$};
\end{tikzpicture}}}
\quad
=
\quad
\vcenter{\hbox{%
\begin{tikzpicture}[font=\small]
\node[state] (p) at (0,0) {$p$};
\node[copier] (copier) at (0,0.3) {};
\node[arrow box] (f) at (0.5,0.95) {$f$};
\coordinate (X) at (-0.5,1.5);
\coordinate (Y) at (0.5,1.5);
\draw (p) to (copier);
\draw (copier) to[out=165,in=-90] (X);
\draw (copier) to[out=15,in=-90] (f);
\draw (f) to (Y);
\path[scriptstyle]
node at (-0.7,1.45) {$X$}
node at (0.7,1.45) {$Y$};
\end{tikzpicture}}}
\;\;,
\]
where $p$ and $q$ are the marginals of $s$, namely the composites $I\xstoch{s}X\times Y\xrightarrow{\pi_{X}}X$ and $I\xstoch{s}X\times Y\xrightarrow{\pi_{Y}}Y$, respectively. 
The assignments taking a joint state to conditionals are special kinds of disintegrations in our language where the maps one disintegrates are \emph{projections}. In more detail, they arise as the horizontal squiggly arrows in the diagram 
\[
\xy0;/r.25pc/:
(-25,-10)*+{X}="X";
(25,-10)*+{Y}="Y";
(0,-10)*+{X\times Y}="XY";
(0,10)*+{\{\bullet\}}="1";
{\ar@{~>}"1";"X"_{p}};
{\ar@{~>}"1";"Y"^{q}};
{\ar@{~>}"1";"XY"^{s}};
{\ar@{->>}@/_0.5pc/"XY";"X"_{\pi_{X}}};
{\ar@{->>}@/^0.5pc/"XY";"Y"^{\pi_{Y}}};
{\ar@{~>}@/_0.5pc/"X";"XY"_{\overline{\pi_{X}}}};
{\ar@{~>}@/^0.5pc/"Y";"XY"^{\overline{\pi_{Y}}}};
\endxy
\]
where $\overline{\pi_{X}}$ and $\overline{\pi_{Y}}$ are disintegrations of $(\pi_{X},s,p)$ and $(\pi_{Y},s,q)$, respectively. 
The morphisms $f$ and $g$ are then obtained by post-composing these disintegrations with the other projection (following the horizontal morphisms in succession). More explicitly $f=\pi_{Y}\circ\overline{\pi_{X}}$ and $g=\pi_{X}\circ\overline{\pi_{Y}}$. The proof that $f$ and $g$ as defined here satisfy the CJ disintegration condition will follow from additional assumptions on the specific subcategory that these morphisms live in, namely those covered by Theorem~\ref{thm:aemodbayesdisint} (cf.\ Remark~\ref{rmk:CJdisintcontinued}). 

We point out this remark because the two ideas of disintegration are semantically different in quantum (and classical!) Markov categories since the input data are different. The CJ disintegration associates conditionals from joint states (see also Remark~\ref{rmk:EPRconditionals}). In contrast, the disintegration as presented in Definition~\ref{defn:disintegration} assumes a conditional and describes a conditional in the opposite direction. 
Our reason for distinguishing them is because the associated joint state need not be a positive linear functional in the setting of $C^*$-algebras, rendering it uninterpretable as a genuine physical state. Indeed, if $f$ is the identity, the associated joint functional is the copy map followed by the state one begins with. Since the copy map is not positive unital, the associated joint functional is not positive~\cite[Remark~5.96]{PaRuBayes}. 
\end{rmk}

\vspace{-1mm}
\begin{exa}[Knill--Laflamme quantum error correction as a disintegration]{exa:QEC}
Just as Hamming error correcting codes can be described in terms of disintegrations, some quantum error correcting codes can be as well.%
\footnote{I thank Benjamin Russo for discussions and for bringing my attention to~\cite{KLP05}.}
 Though much can be said about general quantum error correcting codes~\cite{KLP05}, we content ourselves here with a concrete example~\cite{KnLa97}, leaving the more general study to forthcoming work. In what follows, we implement Dirac brac-ket notation. In particular, let $|0\>$ and $|1\>$ denote the unit vectors $\vec{e}_{1}$ and $\vec{e}_{2}$ in $\C^{2}=:\mathcal{Q}$, respectively. Let $\g\in[0,\infty)$ and set $a_{\pm}:=\sqrt{\frac{1\pm e^{-\gamma}}{2}}$. Set
\[
|0_{L}\>:=\frac{1}{2^{3/2}}\big(|0\>+|1\>\big)^{\otimes 3}
\quad
\text{ and }
\quad
|1_{L}\>:=\frac{1}{2^{3/2}}\big(|0\>-|1\>\big)^{\otimes 3}
\]
to be the \emph{logical} $0$ and $1$ states in $\Hi:=(\C^{2})^{\otimes 3}$. Set $\mC:=\mathrm{span}\big\{|0_{L}\>,|1_{L}\>\big\}$ to be the \emph{code subspace} in $\Hi$. 
Let $V:\mathcal{Q}\hookrightarrow\Hi$ be the isometry, called the \emph{encoding}, sending $\mathcal{Q}$ onto $\mC$, namely $V|0\>:=|0_{L}\>$ and $V|1\>:=|1_{L}\>$. 
Define the single qubit error generators
\[
\Lambda_{+}:=a_{+}\mathds{1}_{2}
\quad\text{ and }\quad
\Lambda_{-}:=a_{-}\sigma_{z},
\quad\text{ where }\quad\s_{z}:=\begin{bmatrix}1&0\\0&-1\end{bmatrix}, 
\]
and let 
\[
\Gamma:=\frac{1}{4}\left(2-e^{-3\gamma}+3e^{-\g}\right). 
\]
Define the \emph{error operators} by 
\begin{align*}
E_{0}&:=\frac{1}{\sqrt{\Gamma}}\Lambda_{+}\otimes \Lambda_{+}\otimes \Lambda_{+},&
E_{1}&:=\frac{1}{\sqrt{\Gamma}}\Lambda_{-}\otimes \Lambda_{+}\otimes \Lambda_{+},\\
E_{2}&:=\frac{1}{\sqrt{\Gamma}}\Lambda_{+}\otimes \Lambda_{-}\otimes \Lambda_{+},&
E_{3}&:=\frac{1}{\sqrt{\Gamma}}\Lambda_{+}\otimes \Lambda_{+}\otimes \Lambda_{-}. 
\end{align*}
These error operators are the Kraus operators of a CPU map%
\footnote{Since we are in the Heisenberg picture, our $E$ is more closely related to the Hilbert--Schmidt dual of the error map $\mathcal{E}$ in~\cite{KnLa97}.}
$\mB\xstoch{E}\mB$ defined by 
\[
\mB\ni B\xmapsto{E} \sum_{i=0}^{3}E_{i}^{\dag}AE_{i}, 
\]
where $\mB:=\mB(\Hi)$. The composite 
\[
F:=\left(\mA\xstoch{E}\mB\xstoch{\Ad_{V^{\dag}}}\mA\right),
\]
where $\mA:=\mB(\mathcal{Q})$, is called the \emph{error}, and is also a CPU map. Just as in the case of the Hamming code, it includes the encoding done by the sender as well as the possible errors that can occur along the transmission channel. In this case, the allowed errors include flipping a qubit in a \emph{single} component of the tensor product (3 such possibilities) or no error. The receiver will apply a recovery map to obtain the message being sent. To define this map, let $P_{\mC}:=|0_{L}\>\<0_{L}|+|1_{L}\>\<1_{L}|$ denote the projection onto $\mC$ (as an element of $\mB(\Hi)$) and define the \emph{recovery operators} by
\begin{align*}
R_{0}&:=P_{\mC}(\mathds{1}_{2}\otimes\mathds{1}_{2}\otimes\mathds{1}_{2}),&
R_{1}&:=P_{\mC}(\s_{z}\otimes\mathds{1}_{2}\otimes\mathds{1}_{2}),\\
R_{2}&:=P_{\mC}(\mathds{1}_{2}\otimes\s_{z}\otimes\mathds{1}_{2}),&
R_{3}&:=P_{\mC}(\mathds{1}_{2}\otimes\mathds{1}_{2}\otimes\s_{z}).\\
\end{align*}
From these, define the CPU map $\mB\xstoch{R}\mB$ by 
\[
\mB\ni B\xmapsto{R}\sum_{i=0}^{3}R_{i}^{\dag}AR_{i}.
\]
Unitality of $R$ follows from the fact that the $R_{i}^{\dag}R_{i}$ are orthogonal projections onto mutually orthogonal 2-dimensional subspaces of $\Hi$. For example, $R_{0}^{\dag}R_{0}=P_{\mC}$. Similarly, $R_{i}^{\dag}R_{i}$ is the projection onto the subspace $R_{i}^{\dag}(\mC)$, which is where an element of $\mathcal{Q}$ gets sent under a bit flip in the $i$-th tensor component. Thus, by applying these orthogonal projections, the receiver can \emph{detect} which subspace the message is in. Furthermore, the recovery operators automatically correct the error as well because $\s_{z}^{2}=\mathds{1}_{2}$. Hence, the receiver is able to recover what the original message was. Mathematically, 
the composite
\[
G:=\left(\mA\xstoch{\Ad_{V}}\mB\xstoch{R}\mB\right)
\]
is called the \emph{recovery}, and it describes this process. It is also a CPU map (which is not immediately obvious because $\Ad_{V}$ is not unital). In fact, $G$ is a $*$-homomorphism.%
\footnote{This is the determinism we mentioned in Example~\ref{exa:deterministicmaps}. In the Schr\"odinger picture, this corresponds to a partial trace.}

Putting all our data together, we obtain the following (not necessarily commutative) diagram (recall that this is described in the Heisenberg picture) 
\[
\xy0;/r.25pc/:
(-29,0)*{\text{sender }=};
(-18,0)*+{\mA}="B";
(18,0)*+{\mB}="A";
(29,0)*{=\text{ receiver}};
(0,18)*+{\mB}="At";
(0,-18)*+{\mB}="Ab";
{\ar@/_1.0pc/"B";"A"_{G}^{\text{recover}}};
{\ar@{~>}@/_1.0pc/"A";"B"_{F}^{\text{error}}};
{\ar@{~>}@/_1.0pc/"B";"Ab"^{\Ad_{V}}_{\text{decode}}};
{\ar@{~>}@/_1.0pc/"Ab";"A"^{R}_{\text{detect \& fix}}};
{\ar@{~>}@/_1.0pc/"A";"At"^{E}_{\text{transmit}}};
{\ar@{~>}@/_1.0pc/"At";"B"^{\Ad_{V^{\dag}}}_{\text{encode}}};
\endxy
\]
One can also show that $F\circ G=\id_{\mA}$. Hence, the error map $F$ is a disintegration of the recovery map $(G,\omega\circ F,\omega)$ for every state $\mA\xstoch{\omega}\C$ that the sender wishes to communicate to the receiver. 
\end{exa}

\begin{rmk}[A.e.\ unitality for disintegrations]{rmk:aeunitaldisint}
In Definition~\ref{defn:disintegration}, suppose $\mM_{\text{\Yinyang}}$ is a quantum CD category instead, and suppose $f$ is unital. If we had only demanded that a right Bayes map $g$ be a morphism in $\mM$, so that it is even but not necessarily unital, then it is automatically right $q$-a.e.\ unital. This follows from 
\[
\vcenter{\hbox{%
\begin{tikzpicture}[font=\small]
\node[state] (q) at (0,0) {$q$};
\node[copier] (copier) at (0,0.3) {};
\node[arrow box] (g) at (0.5,0.95) {$g$};
\coordinate (X) at (-0.5,1.7);
\node[discarder] (Y) at (0.5,1.4) {};
\draw (q) to (copier);
\draw (copier) to[out=165,in=-90] (X);
\draw (copier) to[out=15,in=-90] (g);
\draw (g) to (Y);
\end{tikzpicture}}}
\qquad
=
\qquad
\vcenter{\hbox{%
\begin{tikzpicture}[font=\small]
\node[state] (q) at (0,0) {$q$};
\node[copier] (copier) at (0,0.3) {};
\node[arrow box] (g) at (0.5,0.95) {$g$};
\coordinate (X) at (-0.5,2.6);
\node[discarder] (Y) at (0.5,2.15) {};
\node[arrow box] (f) at (0.5,1.7) {$f$};
\draw (q) to (copier);
\draw (copier) to[out=165,in=-90] (X);
\draw (copier) to[out=15,in=-90] (g);
\draw (g) to (f);
\draw (f) to (Y);
\end{tikzpicture}}}
\qquad
\overset{\text{Defn~\ref{defn:disintegration}}}{=\joinrel=\joinrel=\joinrel=\joinrel=}
\qquad
\vcenter{\hbox{%
\begin{tikzpicture}[font=\small]
\node[state] (q) at (0,0) {$q$};
\node[copier] (copier) at (0,0.3) {};
\coordinate (X) at (-0.5,1.5);
\coordinate (c) at (0.5,0.91);
\node[discarder] (Y) at (c) {};
\draw (q) to (copier);
\draw (copier) to[out=165,in=-90] (X);
\draw (copier) to[out=15,in=-90] (c);
\end{tikzpicture}}}
\qquad
=
\qquad
\vcenter{\hbox{%
\begin{tikzpicture}[font=\small]
\node[state] (q) at (0,0) {$q$};
\node (X) at (0,1.0) {};
\draw (q) to (X);
\end{tikzpicture}}}
\quad,
\]
where we have used unitality of $f$ in the first equality. This provides an abstract proof of a standard result~\cite[Proposition~452G (a)]{FrV4}, the latter of which is a special case by Proposition~\ref{thm:Bayesianinverseofdeterministicisadisint} (modulo the subtlety mentioned in Example~\ref{exa:disintsinstoch}). In fact, it is a partial generalization since our $f$ is not assumed to be a measurable function but is only assumed to be stochastic (cf.\ Example~\ref{exa:disintsinstoch}). 
\end{rmk}

\begin{defn}[Bayesian inverses and Bayes maps]{defn:bayesianinverse}
Let $\mathcal{M}_{\text{\Yinyang}}$ be a quantum Markov category and let $\mC$ be an even subcategory of $\mathcal{M}_{\text{\Yinyang}}$. 
Given states $I\xstoch{p}X$ and $I\xstoch{q}Y$ in $\mC$ and a state-preserving morphism $X\xstoch{f}Y$ in $\mC$, a \define{left Bayesian inverse of $(f,p,q)$} is 
a morphism $Y\xstoch{g}X$ in $\mC$ such that%
\footnote{In other words, one can `slide' either morphism over $\D$ but this swaps the states and morphisms.}
\[
\xy0;/r.25pc/:
(0,7.5)*+{I}="1";
(-25,7.5)*+{Y}="Y";
(25,7.5)*+{X}="X";
(-25,-7.5)*+{Y\times Y}="YY";
(25,-7.5)*+{X\times X}="XX";
(0,-7.5)*+{X\times Y}="XY";
{\ar@{~>}"1";"X"^{p}};
{\ar@{~>}"1";"Y"_{q}};
{\ar"Y";"YY"_{\Delta_{Y}}};
{\ar"X";"XX"^{\Delta_{X}}};
{\ar@{~>}"YY";"XY"_{g\times\id_{Y}}};
{\ar@{~>}"XX";"XY"^{\id_{X}\times f}};
(0,0)*{=\joinrel=\joinrel=};
\endxy
\quad,\;\;\text{i.e.}\qquad
\vcenter{\hbox{%
\begin{tikzpicture}[font=\small]
\node[state] (q) at (0,0) {$q$};
\node[copier] (copier) at (0,0.3) {};
\node[arrow box] (g) at (-0.5,0.95) {$g$};
\coordinate (X) at (-0.5,1.5);
\coordinate (Y) at (0.5,1.5);
\draw (q) to (copier);
\draw (copier) to[out=165,in=-90] (g);
\draw (copier) to[out=15,in=-90] (Y);
\draw (g) to (X);
\path[scriptstyle]
node at (-0.7,1.45) {$X$}
node at (0.7,1.45) {$Y$};
\end{tikzpicture}}}
\quad
=
\quad
\vcenter{\hbox{%
\begin{tikzpicture}[font=\small]
\node[state] (p) at (0,0) {$p$};
\node[copier] (copier) at (0,0.3) {};
\node[arrow box] (f) at (0.5,0.95) {$f$};
\coordinate (X) at (-0.5,1.5);
\coordinate (Y) at (0.5,1.5);
\draw (p) to (copier);
\draw (copier) to[out=165,in=-90] (X);
\draw (copier) to[out=15,in=-90] (f);
\draw (f) to (Y);
\path[scriptstyle]
node at (-0.7,1.45) {$X$}
node at (0.7,1.45) {$Y$};
\end{tikzpicture}}}
\;.
\]
This condition will be referred to as \define{the Bayes condition}.
More generally, a \define{left Bayes map for $(f,p,q)$} is a morphism $f$ in $\mM$ such that the Bayes condition holds.
\footnote{Just as we did for disintegrations, one can slightly generalize these definitions by not requiring $p$ nor $q$ to be states but arbitrary morphisms (cf.\ Remark~\ref{rmk:statestomapsfordisint}). We will do this in future sections and still refer to them as Bayesian inverses or Bayes maps.}
A completely analogous definition is made for \define{right Bayesian inverses} and \define{right Bayes maps}, and the adjectives `right' and `left' are dropped when the two agree, when it does not matter which is used, or when it is clear which one of the two is being used.%
\footnote{As usual, the left and right notions are equivalent when $\mC$ is $*$-preserving and all morphisms are in $\mC$.}
\end{defn}

Thus, a Bayesian inverse is a particular kind of Bayes map. The definition of a Bayesian inverse was already motivated in Section~\ref{sec:whatisbayes} along with the theorem regarding its existence for $\FinStoch$. It also exists for more general probability spaces (standard Borel spaces)~\cite{CuSt14,CDDG17,ChJa18}. Example~\ref{exa:RCP} will describe how regular conditional probabilities are special kinds of Bayesian inverses. 

\begin{exa}[Bayes maps and Bayesian inverses for matrix algebras]{exa:Bayesmapsformatrixalgebras}
The general theory of Bayesian inverses for CPU maps (in the sense presented here) between finite-dimensional $C^*$-algebras is the main subject of~\cite{PaRuBayes}. A special case is simple enough to describe here. If $\mA:=\mM_{m}(\C)\xstoch{\omega=\tr(\rho\;\cdot\;)}\C$ and $\mB:=\mM_{n}(\C)\xstoch{\xi=\tr(\s\;\cdot\;)}\C$ are states (with corresponding density matrices), and if $\mB\xstoch{F}\mA$ is a state-preserving CPU map, then a Bayes map $\mA\xstoch{G}\mB$ for $(F,\omega,\xi)$ must necessarily satisfy $P_{\xi}G(A)=\hat{\sigma}F^*(\rho A)$ for all $A\in\mA$. Here, $\hat{\sigma}$ is the pseudo-inverse of $\sigma$ (cf.\ \cite{Mo1920,Pe55}) and $F^*$ is the Hilbert--Schmidt adjoint of $F$. Thus, a Bayes map always exists. However, demanding that the Bayes map be $*$-preserving is a non-trivial condition. In general, therefore, a Bayesian inverse need not exist as a CPU map. In the special case that $P_{\xi}=\mathds{1}_{n}$, a CPU Bayesian inverse of $(F,\omega,\xi)$ exists if and only if $F(\sigma B)\rho=\rho F(B\sigma)$ for all $B\in\mB$. When this condition is satisfied, the Bayesian inverse takes the form $G=\mathrm{Ad}_{\sqrt{\hat{\sigma}}}\circ F^*\circ\mathrm{Ad}_{\sqrt{\rho}}$. The expression here is often known in the quantum information literature as the \emph{Petz recovery map}~\cite{Pe84,Wi15}. It was originally introduced by Accardi and Cecchini as a generalized conditional expectation~\cite{AcCe82}. It has also been rediscovered by Barnum--Knill as well as Leifer in more recent years~\cite{BaKn02,Le06}. 
However, when $\sigma$ and/or $\rho$ do not have full support, the notion of Bayesian inverse as presented here is generally different from the Petz recovery map, and it generalizes another notion studied by Accardi and Cecchini in the case of faithful states~\cite{AcCe82}. 

Physically, demanding that a Bayesian inverse is CPU suggests that it can be interpreted, and potentially implemented, as a physical process. In the case that no such CPU Bayesian inverse exists, one \emph{always} obtains left and right Bayes maps. It is generally believed that linear maps that are not CP are not implementable as physical operations.%
\footnote{In recent years, there has been some dispute about this in the literature~\cite{Pe94,Al95,ShSu05,SRS19}.}
 Nevertheless (and perhaps surprisingly), this map has interesting consequences in terms of inference in certain cases. These details are explored in~\cite{PaJeffrey} (a preview is given in Remark~\ref{rmk:EPRconditionals}). 
\end{exa}

\begin{lem}[A.e.\ unitality and state-preservation for Bayesian inverses]{lem:Bayesianinversespreservestates}
If $\mM_{\text{\Yinyang}}$ in Definition~\ref{defn:bayesianinverse} is replaced with a quantum CD category, one can define Bayesian inverses in essentially the same manner. Under these more general assumptions the following facts hold. 
\begin{enumerate}
\item
A left Bayes map $g$ for $(f,p,q)$ is necessarily left $q$-a.e.\ unital. 
\item
If $f$ is right $p$-a.e.\ unital, then $g$ is state-preserving, i.e.\ $q\circ g=p$. 
\end{enumerate}
\end{lem}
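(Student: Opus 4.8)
The plan is to prove both statements purely diagrammatically, working in the quantum CD category $\mM_{\text{\Yinyang}}$ and using the Bayes condition from Definition~\ref{defn:bayesianinverse} together with the defining axioms of a quantum CD category (the counit/grounding laws in~(\ref{eq:markovcatfirstconditions}) and~(\ref{eq:markovcatsecondconditions})). The key observation is that both claims follow by discarding one of the two output legs of the Bayes condition and simplifying, so the main engine is the interaction of the copy map $\D$ with the discard map $!$.

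For the first statement, I would start from the left Bayes condition and post-compose both sides with $!_{Y}$ applied to the $Y$-leg (the right-hand output). On the right-hand side of the Bayes condition, the $Y$-leg carries $f$ emanating from the copy of $p$; discarding it and using the counit law~(\ref{eq:markovcatfirstconditions}) (first applying unitality of $f$ to absorb the $!_{Y}\circ f$ into $!_{X}$, then using $(!\otimes\id)\circ\D=\id$) collapses the right-hand side to just $p$. On the left-hand side the discarded $Y$-leg is a bare output of the copy of $q$, so by the same counit law the left-hand side collapses to the composite of $g$ after $q$ with the remaining $X$-leg still copied off $q$—precisely the diagram asserting left $q$-a.e.\ unitality of $g$ in Definition~\ref{eq:aecausal}. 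This is the cleaner of the two and I expect it to go through in a handful of rewrites.

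For the second statement I would instead discard the $X$-leg (the left output). On the right-hand side of the Bayes condition, the $X$-leg is a bare output of the copy of $p$; discarding it via the counit law leaves $f\circ p$, which equals $q$ by state-preservation of $f$. On the left-hand side, the $X$-leg carries $g$ off the copy of $q$; here I must discard $g$, and this is where the hypothesis enters: right $p$-a.e.\ unitality of $f$ is \emph{not} immediately what I have, so the subtle point is that I instead need $!_{X}\circ g$ to reduce correctly against $q$. Using the Bayes condition in the form already established, or directly using that $f$ is right $p$-a.e.\ unital to justify replacing $!_{X}\circ g$ by $!_{Y}$ in the presence of $q$, the left-hand side collapses to $q\circ g$ composed appropriately so that the surviving equation reads $q\circ g=f\circ p=q$—wait, I must be careful: the conclusion wanted is $q\circ g=p$, so I should instead discard so that the $Y$-output survives and the equation produced is $p=q\circ g$ read off the opposite legs.

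The main obstacle will be bookkeeping which leg to discard to land on the asserted equation rather than its mirror, and in the second part correctly invoking the right $p$-a.e.\ unitality hypothesis at the right moment—since without $*$-preservation I cannot freely slide discards through $g$, and must route the argument through the already-proved a.e.\ unitality (part~1) or through the explicit Bayes diagram. I expect part~(1) to be essentially immediate, and part~(2) to require one genuine use of the a.e.\ unitality hypothesis on $f$, precisely analogous to the discard-and-collapse computation displayed in Remark~\ref{rmk:aeunitaldisint}, which I would imitate closely.
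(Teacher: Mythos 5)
Your global strategy---discard one output leg of the Bayes diagram and collapse with the counit laws---is exactly the paper's, but you have assigned the wrong leg to each part, and the swap is not cosmetic. For part~(1) you propose discarding the $Y$-leg and ``absorbing $!_{Y}\circ f$ into $!_{X}$'' by unitality of $f$. In a quantum \emph{CD} category morphisms are not unital---that is precisely the difference from a Markov category and the reason the lemma is phrased with a.e.\ unitality at all---so this step is unavailable; and it is also unnecessary. Discarding the $Y$-leg in fact produces the equation of part~(2), not part~(1): the left side of the Bayes condition collapses by the counit law to $g\circ q$ (not to the a.e.-unitality diagram of $g$, whose surviving output has type $Y$; discarding a \emph{bare} leg of the copier deletes the copier entirely, so there is no ``remaining $X$-leg still copied off $q$''), while the right side becomes copied $p$ with $!\circ f$ on one leg, which is verbatim the right $p$-a.e.\ unitality diagram of $f$. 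The correct move for part~(1) is to discard the $X$-leg: on the left this places the discard after $g$, yielding exactly the left $q$-a.e.\ unitality diagram of $g$, and on the right the discard lands on the \emph{bare} leg of the copied $p$, so the right side collapses to $f\circ p=q$ using only the counit law and the state-preservation $q=f\circ p$ that is part of the data of a Bayes map---no unitality of $f$ enters anywhere.

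For part~(2), your closing self-correction (``discard so that the $Y$-output survives'') again points at the wrong leg: if the $Y$-output survives you recover part~(1)'s equation, not $g\circ q=p$. You want the $X$-output to survive, i.e.\ discard the $Y$-leg; then the hypothesis that $f$ is right $p$-a.e.\ unital collapses the right side to $p$, and the counit law alone collapses the left side to $g\circ q$, giving $p=g\circ q$ (the lemma's ``$q\circ g$'' is this composite written in the opposite order convention). This is the paper's proof, which runs the same computation in reverse: start from $p$, attach a discarded copier leg, insert $f$ before the discard via right $p$-a.e.\ unitality, and apply the Bayes condition. With the legs reassigned as above your argument goes through; as written, part~(1) rests on an inadmissible unitality step and part~(2) lands on the wrong equation.
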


\bprf
{\color{white}{you found me!}}

\begin{enumerate}
\item
A left Bayes map $g$ is necessarily left $q$-a.e.\ unital since
\be
\vcenter{\hbox{%
\begin{tikzpicture}[font=\small,xscale=-1]
\node[state] (q) at (0,0) {$q$};
\node[copier] (copier) at (0,0.3) {};
\node[arrow box] (g) at (0.5,0.95) {$g$};
\coordinate (X) at (-0.5,1.7);
\node[discarder] (Y) at (0.5,1.4) {};
\draw (q) to (copier);
\draw (copier) to[out=165,in=-90] (X);
\draw (copier) to[out=15,in=-90] (g);
\draw (g) to (Y);
\end{tikzpicture}}}
\qquad
\overset{\text{Defn~\ref{defn:bayesianinverse}}}{=\joinrel=\joinrel=\joinrel=\joinrel=}
\qquad
\vcenter{\hbox{%
\begin{tikzpicture}[font=\small,xscale=-1]
\node[state] (p) at (0,0) {$p$};
\node[copier] (copier) at (0,0.3) {};
\coordinate (c) at (0.5,0.91);
\node[discarder] (Y) at (c) {};
\node[arrow box] (f) at (-0.5,0.95) {$f$};
\coordinate (X) at (-0.5,1.7);
\node[discarder] (Y) at (c) {};
\draw (p) to (copier);
\draw (copier) to[out=165,in=-90] (f);
\draw (f) to (X);
\draw (copier) to[out=15,in=-90] (c);
\end{tikzpicture}}}
\qquad
=
\qquad
\vcenter{\hbox{%
\begin{tikzpicture}[font=\small]
\node[state] (p) at (0,0) {$p$};
\node[arrow box] (f) at (0,0.75) {$f$};
\draw (p) to (f);
\draw (f) to (0,1.55);
\end{tikzpicture}}}
\qquad
=
\qquad
\vcenter{\hbox{%
\begin{tikzpicture}[font=\small]
\node[state] (q) at (0,0) {$q$};
\node (X) at (0,1.0) {};
\draw (q) to (X);
\end{tikzpicture}}}
\quad.
\ee
\item
If $f$ is right $p$-a.e.\ unital, then 
\be
\vcenter{\hbox{%
\begin{tikzpicture}[font=\small]
\node[state] (p) at (0,0) {$p$};
\node (X) at (0,1.0) {};
\draw (p) to (X);
\end{tikzpicture}}}
\quad
=
\quad
\vcenter{\hbox{%
\begin{tikzpicture}[font=\small]
\node[state] (p) at (0,0) {$p$};
\node[copier] (copier) at (0,0.3) {};
\coordinate (X) at (-0.5,1.5);
\coordinate (c) at (0.5,0.91);
\node[discarder] (Y) at (c) {};
\draw (p) to (copier);
\draw (copier) to[out=165,in=-90] (X);
\draw (copier) to[out=15,in=-90] (c);
\end{tikzpicture}}}
\quad
=
\quad
\vcenter{\hbox{%
\begin{tikzpicture}[font=\small]
\node[state] (p) at (0,0) {$p$};
\node[copier] (copier) at (0,0.3) {};
\node[arrow box] (f) at (0.5,0.95) {$f$};
\coordinate (X) at (-0.5,1.7);
\node[discarder] (Y) at (0.5,1.4) {};
\draw (p) to (copier);
\draw (copier) to[out=165,in=-90] (X);
\draw (copier) to[out=15,in=-90] (f);
\draw (f) to (Y);
\end{tikzpicture}}}
\quad
\overset{\text{Defn~\ref{defn:bayesianinverse}}}{=\joinrel=\joinrel=\joinrel=\joinrel=}
\quad
\vcenter{\hbox{%
\begin{tikzpicture}[font=\small]
\node[state] (q) at (0,0) {$q$};
\node[copier] (copier) at (0,0.3) {};
\coordinate (c) at (0.5,0.91);
\node[discarder] (Y) at (c) {};
\node[arrow box] (g) at (-0.5,0.95) {$g$};
\coordinate (X) at (-0.5,1.7);
\node[discarder] (Y) at (c) {};
\draw (q) to (copier);
\draw (copier) to[out=165,in=-90] (g);
\draw (g) to (X);
\draw (copier) to[out=15,in=-90] (c);
\end{tikzpicture}}}
\quad
=
\quad
\vcenter{\hbox{%
\begin{tikzpicture}[font=\small]
\node[state] (q) at (0,0) {$q$};
\node[arrow box] (g) at (0,0.75) {$g$};
\draw (q) to (g);
\draw (g) to (0,1.55);
\end{tikzpicture}}}
\quad.
\ee
\end{enumerate}
\eprf

The following lemma shows that Bayesian inverses are a.e.\ unique whenever they are $*$-preserving. 

\begin{lem}[Bayesian inverses are left a.e.\ unique]{lem:bayesaeunique}
Let $I\xstoch{p}X\xstoch{f}Y$ be a composable pair of even morphisms in a quantum Markov category and set $q:=f\circ p$. 
\begin{enumerate}
\item
If $g,g':Y\stoch X$ are two left (right) Bayes maps for $(f,p,q)$, then $g$ and $g'$ are left (right) $q$-a.e.\ equal. In particular, if $f,p,q,g,$ and $g'$ are $*$-preserving, then $g\underset{\raisebox{.6ex}[0pt][0pt]{\scriptsize$q$}}{=}g'$. 
\item
If $X\xstoch{f'} Y$ is right (left) $p$-a.e. equal to $f$, then $Y\xstoch{g}X$ is a left (right) Bayes map for $(f,p,q)$ if and only if $g$ is a left (right) Bayes map for $(f',p,q)$. 
\end{enumerate}
\end{lem}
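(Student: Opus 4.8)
The plan is to read both parts straight off the definition of the Bayes condition (Definition~\ref{defn:bayesianinverse}) together with the definition of a.e.\ equivalence (Definition~\ref{defn:aeequivalence}), exploiting the fact that the left-hand side of the Bayes condition is \emph{literally} the copy-and-apply diagram that defines a.e.\ equivalence. No genuine computation is needed; the only place one can slip is the left/right bookkeeping of which strand each morphism occupies, so I would track that explicitly. The main (very mild) obstacle is therefore purely notational: making sure the orientation in the Bayes diagrams matches the orientation in the a.e.\ diagrams, and, in part~(2), checking that the replacement triple $(f',p,q)$ is even well-posed.

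For part~(1), I would write the two left Bayes conditions for $g$ and $g'$ as a single chain
\[
(g\otimes\id_Y)\circ\D_Y\circ q=(\id_X\otimes f)\circ\D_X\circ p=(g'\otimes\id_Y)\circ\D_Y\circ q,
\]
where the middle term is the common right-hand side determined by $p$ and $f$. Cancelling it gives $(g\otimes\id_Y)\circ\D_Y\circ q=(g'\otimes\id_Y)\circ\D_Y\circ q$, which is verbatim the statement that $g$ is left $q$-a.e.\ equivalent to $g'$ in the sense of Definition~\ref{defn:aeequivalence} (here $g,g'$ play the role of the two morphisms $X\to Y$, with $X,Y$ interchanged, sitting on the left strand after copying $q$). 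The right-handed case is obtained by reflecting every diagram horizontally. For the final clause, once $q,g,g'$ are all $*$-preserving I would invoke Corollary~\ref{cor:leftequalsrightaeequivalence} to promote left $q$-a.e.\ equivalence to full $q$-a.e.\ equivalence, yielding $g\aeequals{q}g'$.

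For part~(2), the key observation is that the left-hand side $(g\otimes\id_Y)\circ\D_Y\circ q$ of the left Bayes condition does not mention $f$ at all. Hence $g$ is a left Bayes map for $(f,p,q)$ iff this fixed diagram equals $(\id_X\otimes f)\circ\D_X\circ p$, and a left Bayes map for $(f',p,q)$ iff it equals $(\id_X\otimes f')\circ\D_X\circ p$. These two right-hand sides agree precisely when $f$ is right $p$-a.e.\ equivalent to $f'$ — exactly the hypothesis — so the two Bayes conditions are logically equivalent and $g$ is a left Bayes map for one triple iff for the other. The right-handed statement (needing \emph{left} $p$-a.e.\ equivalence) again follows by reflection.

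Before calling $g$ a Bayes map for $(f',p,q)$ I would first verify that this triple is legitimate, i.e.\ that $f'$ is state-preserving with $f'\circ p=q$. This is automatic: post-composing the right $p$-a.e.\ equality $(\id_X\otimes f)\circ\D_X\circ p=(\id_X\otimes f')\circ\D_X\circ p$ with $\,!_X\otimes\id_Y\,$ and using the counit law $(!_X\otimes\id_X)\circ\D_X=\id_X$ from~(\ref{eq:markovcatfirstconditions}) collapses both sides to $f\circ p$ and $f'\circ p$ respectively, forcing $f'\circ p=f\circ p=q$. With that remark in place, both parts reduce to the definitional identifications above.
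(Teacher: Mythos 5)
Your proposal is correct and follows essentially the same route as the paper: part~(1) is verbatim the paper's argument (chain the two Bayes conditions through the common term $(\id_X\otimes f)\circ\Delta_X\circ p$, then invoke Corollary~\ref{cor:leftequalsrightaeequivalence} in the $*$-preserving case), and part~(2) is exactly the ``similar calculation'' the paper leaves implicit, with your left/right bookkeeping matching the conventions of Definitions~\ref{defn:aeequivalence} and~\ref{defn:bayesianinverse}. Your extra check that $f'\circ p=q$ via the counit law in~(\ref{eq:markovcatfirstconditions}) is a small well-posedness point the paper omits, but it does not change the approach.
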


\bprf
Although this is explained in~\cite[Section~5]{ChJa18}, we illustrate the concise proof. 

\begin{enumerate}
\item
By assumption,
\be 
\vcenter{\hbox{
\begin{tikzpicture}[font=\small]
\node[state] (q) at (0,0) {$q$};
\node[copier] (copier) at (0,0.3) {};
\node[arrow box] (g) at (-0.5,0.95) {$g$};
\coordinate (X) at (0.5,1.5);
\coordinate (Y) at (-0.5,1.5);
\draw (q) to (copier);
\draw (copier) to[out=15,in=-90] (X);
\draw (copier) to[out=165,in=-90] (g);
\draw (g) to (Y);
\end{tikzpicture}}}
\quad=\quad
\vcenter{\hbox{
\begin{tikzpicture}[font=\small]
\node[state] (p) at (0,0) {$p$};
\node[copier] (copier) at (0,0.3) {};
\node[arrow box] (f) at (0.5,0.95) {$f$};
\coordinate (X) at (-0.5,1.5);
\coordinate (Y) at (0.5,1.5);
\draw (p) to (copier);
\draw (copier) to[out=165,in=-90] (X);
\draw (copier) to[out=15,in=-90] (f);
\draw (f) to (Y);
\end{tikzpicture}}}
\quad=\quad
\vcenter{\hbox{
\begin{tikzpicture}[font=\small]
\node[state] (q) at (0,0) {$q$};
\node[copier] (copier) at (0,0.3) {};
\node[arrow box] (g) at (-0.5,0.95) {$g'$};
\coordinate (X) at (0.5,1.5);
\coordinate (Y) at (-0.5,1.5);
\draw (q) to (copier);
\draw (copier) to[out=15,in=-90] (X);
\draw (copier) to[out=165,in=-90] (g);
\draw (g) to (Y);
\end{tikzpicture}}}
\quad.
\ee
Applying Corollary~\ref{cor:leftequalsrightaeequivalence} gives the required result when the morphisms are $*$-preserving. 
\item
A similar calculation proves this. \qedhere
\end{enumerate}
\eprf

Lemma~\ref{lem:bayesaeunique} says that Bayesian inverses are a.e.\ unique in a certain sense. A similar result also holds for disintegrations, though the proof is given much later in Corollary~\ref{cor:binversiondisint} in a.e.\ modular subcategories. 

\begin{rmk}[On the a.e.\ equivalence of Bayesian inverses]{rmk:nonstarpresbayesnotae}
Without assuming $g$ and $g'$ are $*$-preserving in Lemma~\ref{lem:bayesaeunique}, we can only conclude that $g$ is \emph{left} $q$-a.e.\ equivalent to $g$', and not necessarily \emph{right} a.e.\ equivalent. 
Indeed, in the category $\fdCAlgU$, if two Bayesian inverses are not $*$-preserving, then they need not be a.e.\ equivalent. Explicit examples are provided in~\cite{PaRuBayes}. In any case, we will be clear about these subtle points in this work whenever they arise. 
\end{rmk}

We now describe some of the compositional/functorial properties of disintegrations and Bayesian inverses in appropriate subcategories of quantum Markov categories. We first define the associated category of states and then return to these facts. 

\begin{defn}[The category of state-preserving morphisms]{defn:statepreservingcat}
Let $\mC$ be an even subcategory of a quantum Markov category.
Let $I_{/\mC}$ be the category whose objects are pairs $(X,I\xstoch{p}X)$ (with $X$ in $\mC$ and $p$ a state in $\mC$) and a morphism from $(X,I\xstoch{p}X)$ to $(Y,I\xstoch{q}Y)$ is a state-preserving morphism $X\xstoch{f}Y$ in $\mC$, i.e.\ $f\circ p=q$. 
The category $I_{/\mC}$ is called the \define{category of state-preserving morphisms in $\mC$}. The category $\mC$ is said to be \define{left/right a.e.\ well-defined with respect to states} iff for every quadruple of morphisms 
\[
\xy0;/r.25pc/:
(-20,0)*+{(X,p)}="X";
(0,0)*+{(Y,q)}="Y";
(20,0)*+{(Z,p)}="Z";
{\ar@{~>}@<0.6ex>"X";"Y"^{f}};
{\ar@{~>}@<-0.6ex>"X";"Y"_{f'}};
{\ar@{~>}@<0.6ex>"Y";"Z"^{g}};
{\ar@{~>}@<-0.6ex>"Y";"Z"_{g'}};
\endxy
\]
for which the pairs $(f,f')$ and $(g,g')$ are left/right $p$- and $q$-a.e.\ equivalent, respectively, then the composites $g\circ f$ and $g'\circ f'$ are left/right $p$-a.e.\ equivalent. If $\mC$ is also $*$-preserving, 
then the left/right adjectives will be dropped.
\end{defn}

\begin{prop}[Compositionality of disintegrations]{prop:functorialityofdisints}
Let $\mC$ be a subcategory of a quantum Markov category that is 
a.e.\ well-defined with respect to states (cf.\ Definition~\ref{defn:statepreservingcat}).%
\footnote{Example~\ref{exa:aeclassescompose} explains that this holds for our two main examples of classical and quantum probability, while~\cite[Proposition~13.9]{Fr19} (whose statement is repeated in Corollary~\ref{cor:aeclassesofstatepreserving}) shows that this holds more generally when $\mC$ is causal (cf.\ Definition~\ref{defn:causal}).}
If $\ov{g}:Z\stoch Y$ and $\ov{f}:Y\stoch X$ are disintegrations of 
$(X\xstoch{f}Y,I\xstoch{p}X,q:=f\circ p)$ and $(Y\xstoch{g}Z,I\xstoch{q}Y,r:=g\circ q)$, 
respectively, then $\ov{f}\circ\ov{g}$ is a disintegration of $(g\circ f,p,r)$ (all morphisms here are in $\mC$). 
\end{prop}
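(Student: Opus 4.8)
Let me prove this.

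The plan is to verify directly that $\ov{f}\circ\ov{g}$ satisfies the two defining conditions of a disintegration of $(g\circ f, p, r)$, namely the state-recovery condition $p = (\ov{f}\circ\ov{g})\circ r$ and the a.e.-section condition $(g\circ f)\circ(\ov{f}\circ\ov{g}) \aeequals{r} \id_{Z}$. I will use the two hypotheses: that $\ov{f}$ is a disintegration of $(f,p,q)$ (so $p = \ov{f}\circ q$ and $f\circ\ov{f}\aeequals{q}\id_Y$) and that $\ov{g}$ is a disintegration of $(g,q,r)$ (so $q = \ov{g}\circ r$ and $g\circ\ov{g}\aeequals{r}\id_Z$). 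The role of the a.e.-well-definedness hypothesis will be to allow me to manipulate a.e.-equivalences under composition.

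\textbf{State-recovery condition.} This is the easy direction and requires no a.e.\ reasoning. I would simply compute
\[
(\ov{f}\circ\ov{g})\circ r = \ov{f}\circ(\ov{g}\circ r) = \ov{f}\circ q = p,
\]
using the state-recovery condition for $\ov{g}$ in the second equality and for $\ov{f}$ in the third.

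\textbf{The a.e.-section condition.} This is where the main work lies. I want $(g\circ f)\circ\ov{f}\circ\ov{g}\aeequals{r}\id_Z$. The natural strategy is to insert the known a.e.-identities one at a time and propagate them through the composite. Starting from $g\circ f\circ\ov{f}\circ\ov{g}$, I would first use $f\circ\ov{f}\aeequals{q}\id_Y$; but this equivalence is relative to $q$, whereas I am measuring everything relative to $r$. The key observation is that $q = \ov{g}\circ r$, so pre-composing a $q$-a.e.\ equality with $\ov{g}$ (which sends the state $r$ to $q$) yields an $r$-a.e.\ equality: concretely, since $f\circ\ov{f}\aeequals{q}\id_Y$ and $q=\ov{g}\circ r$, I can conclude $g\circ(f\circ\ov{f})\circ\ov{g}\aeequals{r}g\circ\id_Y\circ\ov{g}=g\circ\ov{g}$. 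This step is precisely where I expect the main obstacle: pushing an a.e.-equivalence through post-composition by $g$ and along the state $\ov{g}\circ r = q$ requires that a.e.-equivalence behaves well under composition, which is exactly the content of a.e.-well-definedness with respect to states (Definition~\ref{defn:statepreservingcat}); I would need to check carefully that the relevant pairs of morphisms are state-preserving so that the definition applies, possibly invoking Lemma~\ref{lem:Bayesianinversespreservestates} or the state-preservation built into disintegrations. Having reduced to $g\circ\ov{g}$, I then apply $g\circ\ov{g}\aeequals{r}\id_Z$ directly to finish. Chaining the two $r$-a.e.\ equalities gives $(g\circ f)\circ(\ov{f}\circ\ov{g})\aeequals{r}\id_Z$, completing the verification that $\ov{f}\circ\ov{g}$ is a disintegration of $(g\circ f,p,r)$.

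I expect the bookkeeping of \emph{which state} each a.e.-equivalence is taken with respect to, and the justification that these equivalences compose (the transitivity and congruence properties guaranteed by a.e.-well-definedness), to be the subtle part; in a general quantum Markov category one must be cautious about left versus right a.e.-equivalence, so I would either restrict attention to the $*$-preserving setting where Corollary~\ref{cor:leftequalsrightaeequivalence} collapses the distinction, or track the handedness explicitly throughout.
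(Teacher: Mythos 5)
Your proposal is correct and follows essentially the same route as the paper: the state-recovery condition is verified by direct composition, and the a.e.\ section condition is obtained by propagating $f\circ\ov{f}\aeequals{q}\id_{Y}$ through pre-composition with $\ov{g}$ (using $q=\ov{g}\circ r$) and post-composition with $g$, then invoking $g\circ\ov{g}\aeequals{r}\id_{Z}$, with a.e.\ well-definedness with respect to states justifying the composition of a.e.\ classes exactly as in the paper's diagrammatic argument. Your remark on tracking left/right handedness is a reasonable precaution, though the paper sidesteps it since disintegrations are defined in a $*$-preserving subcategory where the two notions coincide.
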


\bprf
The probability-preserving condition is immediate. 
The second condition for a disintegration follows from 
\be
\xy0;/r.25pc/:
(-6.5,0)*+{Y}="Y2";
(6.5,0)*+{Y}="Y1";
(0,12.5)*+{X}="X";
(22.5,-12.5)*+{Z}="Z1";
(-22.5,-12.5)*+{Z}="Z2";
{\ar@{~>}@/_1.5pc/"Z1";"X"_{\ov{f}\circ\ov{g}}};
{\ar@{~>}@/_1.5pc/"X";"Z2"_{g\circ f}};
{\ar@{~>}"Z1";"Y1"^{\ov{g}}};
{\ar@{~>}"Y1";"X"_{\ov{f}}};
{\ar@{~>}"X";"Y2"_{f}};
{\ar@{~>}"Y2";"Z2"^{g}};
{\ar"Y1";"Y2"^{\id_{Y}}};
{\ar"Z1";"Z2"^{\id_{Z}}};
{\ar@{=}(-3,-6.5);(3,-6.5)_{r}};
{\ar@{=}(-1.5,5);(1.5,5)_{q}};
\endxy
\ee
since composing a.e.\ equivalence classes of morphisms is well-defined.
\eprf

\begin{exa}[A.e.\ classes of positive unital maps compose]{exa:aeclassescompose}
The categories $\FinStoch$ and $\fdCAlgSPU$ are a.e.\ well-defined with respect to states (cf.\ Definition~\ref{defn:statepreservingcat}). The latter fact was first proved in~\cite[Proposition~3.106]{PaRu19}. In fact, it also holds for $\fdCAlgPU$~\cite[Theorem~3.113]{PaRu19}. As a consequence, Proposition~\ref{prop:functorialityofdisints} applies to these categories. 
\end{exa}

Bayesian inversion, on the other hand, is compositional in \emph{any} quantum Markov category. 

\begin{prop}[Bayesian inversion is compositional and symmetric]{prop:bayesfunctorial}
Let $\mM_{\text{\Yinyang}}$ be a quantum Markov category and let $\mC$ be an even subcategory of $\mM_{\text{\Yinyang}}$. Let $I\xstoch{p}X$, $I\xstoch{q}Y$, and $I\xstoch{r}Z$ be three states in $\mC$, let $X\xstoch{f}Y$ and $Y\xstoch{g}Z$ be state-preserving morphisms in $\mC$, so that $q=f\circ p$ and $r=g\circ q$.
\begin{enumerate}
\item
If $Z\xstoch{\ov{g}}Y$ and $Y\xstoch{\ov{f}}X$ are Bayes maps for 
$(f,p,q)$ and $(g,q,r)$ (so that $\ov{g}$ and $\ov{f}$ are assumed to only be in $\mM$),
respectively, then $\ov{f}\circ\ov{g}$ is a Bayes map for $(g\circ f,p)$.
\item
If $\mC$ is also $*$-preserving and $\ov{f}$ is a Bayesian inverse of $(f,p,q)$ (so that $\ov{f}$ is in $\mC$ now), then $f$ is a Bayesian inverse of $(\ov{f},q,p)$.%
\footnote{The $*$-preserving assumption is crucial here because of Remark~\ref{rmk:hfpkgs}.}
\end{enumerate}
\end{prop}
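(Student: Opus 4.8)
The plan is to prove both statements by direct string-diagrammatic manipulation, exploiting the Bayes condition from Definition~\ref{defn:bayesianinverse} as the only hypothesis, since Proposition~\ref{prop:bayesfunctorial} claims compositionality holds with \emph{no} positivity assumptions.

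For part (1), I would start from the Bayes condition for $(f,p,q)$ applied with the Bayes map $\ov{f}$, and the Bayes condition for $(g,q,r)$ applied with $\ov{g}$. The goal is to verify the Bayes condition for the composite $g\circ f$ with candidate Bayes map $\ov{f}\circ\ov{g}$, i.e.\ that
\[
\vcenter{\hbox{
\begin{tikzpicture}[font=\small]
\node[state] (r) at (0,0) {$r$};
\node[copier] (copier) at (0,0.3) {};
\node[arrow box] (fg) at (-0.5,0.95) {$\ov{f}\circ\ov{g}$};
\coordinate (X) at (-0.5,1.5);
\coordinate (Y) at (0.5,1.5);
\draw (r) to (copier);
\draw (copier) to[out=165,in=-90] (fg);
\draw (copier) to[out=15,in=-90] (Y);
\draw (fg) to (X);
\end{tikzpicture}}}
\;=\;
\vcenter{\hbox{
\begin{tikzpicture}[font=\small]
\node[state] (p) at (0,0) {$p$};
\node[copier] (copier) at (0,0.3) {};
\node[arrow box] (gf) at (0.5,0.95) {$g\circ f$};
\coordinate (X) at (-0.5,1.5);
\coordinate (Y) at (0.5,1.5);
\draw (p) to (copier);
\draw (copier) to[out=165,in=-90] (X);
\draw (copier) to[out=15,in=-90] (gf);
\draw (gf) to (Y);
\end{tikzpicture}}}
\;.
\]
The idea is to insert a copy of $Y$ in the middle. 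Starting from the right-hand side with $g\circ f$ on the $X$-copy, I would use the copy-associativity axiom~(\ref{eq:markovcatsecondconditions}) together with naturality/comultiplication of $\D$ to expose an intermediate $Y$-wire carrying the state $q=f\circ p$; then apply the Bayes condition for $(f,p,q)$ to ``slide'' $f$ across $\D_X$, converting it into $\ov{f}$ acting on the $q$-branch; and then apply the Bayes condition for $(g,q,r)$ to slide $g$ across, converting it into $\ov{g}$ acting on the $r$-branch. Reassembling gives $\ov{f}\circ\ov{g}$ on the appropriate wire with state $r$, which is the desired left-hand side. This is essentially the chaining of two conditionals and should go through purely formally.

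For part (2), the hypothesis is that $\mC$ is $*$-preserving and $\ov{f}$ is a genuine Bayesian inverse (hence in $\mC$, and $*$-preserving). I want to show $f$ is a Bayesian inverse of $(\ov{f},q,p)$. The Bayes condition for $(f,p,q)$ with inverse $\ov{f}$ reads, as a \emph{left} Bayes condition,
\[
\vcenter{\hbox{
\begin{tikzpicture}[font=\small]
\node[state] (q) at (0,0) {$q$};
\node[copier] (copier) at (0,0.3) {};
\node[arrow box] (g) at (-0.5,0.95) {$\ov{f}$};
\coordinate (X) at (-0.5,1.5);
\coordinate (Y) at (0.5,1.5);
\draw (q) to (copier);
\draw (copier) to[out=165,in=-90] (g);
\draw (copier) to[out=15,in=-90] (Y);
\draw (g) to (X);
\end{tikzpicture}}}
\;=\;
\vcenter{\hbox{
\begin{tikzpicture}[font=\small]
\node[state] (p) at (0,0) {$p$};
\node[copier] (copier) at (0,0.3) {};
\node[arrow box] (f) at (0.5,0.95) {$f$};
\coordinate (X) at (-0.5,1.5);
\coordinate (Y) at (0.5,1.5);
\draw (p) to (copier);
\draw (copier) to[out=165,in=-90] (X);
\draw (copier) to[out=15,in=-90] (f);
\draw (f) to (Y);
\end{tikzpicture}}}
\;.
\]
Reading this equation with the roles of $(X,p)$ and $(Y,q)$ interchanged is exactly the Bayes condition asserting that $f$ is a Bayesian inverse of $(\ov{f},q,p)$, \emph{provided} one can freely pass between the left-hand and right-hand forms of the diagram. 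The key tool is Corollary~\ref{cor:leftequalsrightaeequivalence} (via Lemma~\ref{prop:hfpkgs}): since $f,\ov{f},p,q$ are all $*$-preserving, left and right a.e.\ equivalence coincide, and the symmetry lets me reflect the diagram across the copy map. I would make this precise by invoking Lemma~\ref{prop:hfpkgs} directly, with the substitutions that turn the ``slide $\ov{f}$ over $\D_Y$'' equation into the ``slide $f$ over $\D_X$'' equation.

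The main obstacle I anticipate is in part (2): the footnote already flags that the $*$-preserving hypothesis is \emph{crucial} (Remark~\ref{rmk:hfpkgs}), so the entire content of the symmetry statement lives in correctly applying Lemma~\ref{prop:hfpkgs}/Corollary~\ref{cor:leftequalsrightaeequivalence} to reflect the Bayes diagram. I must be careful that what I call the ``Bayes condition for $f$ as an inverse of $\ov{f}$'' is the honest reflection obtained from the symmetry lemma, and not a silent use of commutativity that would only hold in the classical case. Concretely, the hard part is checking that the reflected diagram produced by Lemma~\ref{prop:hfpkgs} is literally the defining equation of Definition~\ref{defn:bayesianinverse} for the swapped data $(\ov{f},q,p)$, including that the roles of the left/right branches of $\D$ match up correctly; once the diagram is set up with the right objects on the right wires, the equivalence is immediate from the symmetry of a.e.\ equivalence for $*$-preserving morphisms. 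Part (1), by contrast, should present no real difficulty beyond bookkeeping of the copy maps.
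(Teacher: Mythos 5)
Your proposal is correct and follows essentially the same route as the paper: part (1) is the paper's chain of two applications of the Bayes condition (first for $(g,q,r)$ to trade $\ov{g}$ on the $r$-branch for $g$ on the $q$-branch, then for $(f,p,q)$ to trade $\ov{f}$ for $f$, with only the interchange law for boxes on parallel wires in between), and part (2) is precisely the paper's application of Lemma~\ref{prop:hfpkgs} with $h=\id_X$, $k=\ov{f}$, $g=\id_Y$, $s=q$, which is exactly where the $*$-preserving hypothesis enters. One caution on part (1): your preliminary step of using coassociativity ``together with naturality/comultiplication of $\D$ to expose an intermediate $Y$-wire'' is both unnecessary and, read literally, illegitimate --- $\D$ is \emph{not} natural in a quantum Markov category; naturality with respect to $\D$ is the \emph{definition} of determinism (Definition~\ref{eq:deterministicmap}), and $f$ is not assumed deterministic. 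The intermediate wire carrying $q=f\circ p$ appears automatically the moment you apply the Bayes condition for $(f,p,q)$; no prior surgery on the copy map is needed, and with that phrase deleted your computation coincides with the paper's. On part (2), note also that Corollary~\ref{cor:leftequalsrightaeequivalence} alone would not suffice, since the reflection you need simultaneously exchanges $p\leftrightarrow q$ and $f\leftrightarrow\ov{f}$ rather than merely swapping left and right branches for fixed data; your stated plan of invoking the full four-morphism Lemma~\ref{prop:hfpkgs} directly is the right one and is what the paper does.
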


\bprf
{\color{white}{you found me!}}

\begin{enumerate}
\item
The calculation 
\be
\vcenter{\hbox{%
\begin{tikzpicture}[font=\small]
\node[state] (q) at (0,0) {$r$};
\node[copier] (copier) at (0,0.3) {};
\coordinate (f) at (0.5,0.91) {};
\node[arrow box] (g) at (-0.5,0.95) {$\ov{g}$};
\node[arrow box] (e) at (-0.5,1.75) {$\ov{f}$};
\coordinate (X) at (0.5,2.3);
\coordinate (Y) at (-0.5,2.3);
\draw (q) to (copier);
\draw (copier) to[out=15,in=-90] (f);
\draw (f) to (X);
\draw (copier) to[out=165,in=-90] (g);
\draw (g) to (e);
\draw (e) to (Y);
\end{tikzpicture}}}
\quad
=
\quad
\vcenter{\hbox{%
\begin{tikzpicture}[font=\small]
\node[state] (p) at (0,0) {$q$};
\node[copier] (copier) at (0,0.3) {};
\node[arrow box] (f) at (0.5,0.95) {$g$};
\node[arrow box] (e) at (-0.5,1.75) {$\ov{f}$};
\coordinate (X) at (0.5,2.3);
\coordinate (Y) at (-0.5,2.3);
\draw (p) to (copier);
\draw (copier) to[out=15,in=-90] (f);
\draw (f) to (X);
\draw (copier) to[out=165,in=-90] (e);
\draw (e) to (Y);
\end{tikzpicture}}}
\quad
=
\quad
\vcenter{\hbox{%
\begin{tikzpicture}[font=\small]
\node[state] (p) at (0,0) {$q$};
\node[copier] (copier) at (0,0.3) {};
\node[arrow box] (f) at (-0.5,0.95) {$\ov{f}$};
\node[arrow box] (e) at (0.5,1.75) {$g$};
\coordinate (X) at (-0.5,2.3);
\coordinate (Y) at (0.5,2.3);
\draw (p) to (copier);
\draw (copier) to[out=165,in=-90] (f);
\draw (f) to (X);
\draw (copier) to[out=15,in=-90] (e);
\draw (e) to (Y);
\end{tikzpicture}}}
\quad
=
\quad
\vcenter{\hbox{%
\begin{tikzpicture}[font=\small]
\node[state] (q) at (0,0) {$p$};
\node[copier] (copier) at (0,0.3) {};
\coordinate (f) at (-0.5,0.91) {};
\node[arrow box] (g) at (0.5,0.95) {$f$};
\node[arrow box] (e) at (0.5,1.75) {$g$};
\coordinate (X) at (-0.5,2.3);
\coordinate (Y) at (0.5,2.3);
\draw (q) to (copier);
\draw (copier) to[out=165,in=-90] (f);
\draw (f) to (X);
\draw (copier) to[out=15,in=-90] (g);
\draw (g) to (e);
\draw (e) to (Y);
\end{tikzpicture}}}
\;\;\;.
\ee
proves the Bayes condition.  
\item
This is an immediate consequence of Lemma~\ref{prop:hfpkgs} applied to the diagram 
\be
\xy0;/r.25pc/:
(0,0)*+{I}="0";
(-10,7.5)*+{X}="X";
(10,7.5)*+{Y}="Y";
(10,-7.5)*+{Y}="W";
(-10,-7.5)*+{X}="Z";
{\ar@{~>}"X";"Y"^{f}};
{\ar"X";"Z"_{\id_{X}}};
{\ar@{~>}"W";"Z"^{\ov{f}}};
{\ar"W";"Y"_{\id_{Y}}};
{\ar@{~>}"0";"X"^{p}};
{\ar@{~>}"0";"W"_{q}};
\endxy
,
\ee
which entails 
\be
\label{eq:bayessymmetric}
\vcenter{\hbox{
\begin{tikzpicture}[font=\small]
\node[state] (q) at (0,0.0) {$q$};
\node[copier] (copier) at (0,0.3) {};
\node[arrow box] (g) at (-0.5,0.95) {$\ov{f}$};
\node at (-0.7,1.5) {$X$};
\node at (0.7,1.5) {$Y$};
\coordinate (X) at (0.5,1.6);
\coordinate (Y) at (-0.5,1.6);
\draw (q) to (copier);
\draw (copier) to[out=15,in=-90] (X);
\draw (copier) to[out=165,in=-90] (g);
\draw (g) to (Y);
\end{tikzpicture}}}
\quad=\quad
\vcenter{\hbox{
\begin{tikzpicture}[font=\small]
\node[state] (p) at (0,0) {$p$};
\node[copier] (copier) at (0,0.3) {};
\node[arrow box] (f) at (0.5,0.95) {$f$};
\node at (-0.7,1.5) {$X$};
\node at (0.7,1.5) {$Y$};
\coordinate (X) at (-0.5,1.6);
\coordinate (Y) at (0.5,1.6);
\draw (p) to (copier);
\draw (copier) to[out=165,in=-90] (X);
\draw (copier) to[out=15,in=-90] (f);
\draw (f) to (Y);
\end{tikzpicture}}}
\qquad
\iff
\qquad
\vcenter{\hbox{
\begin{tikzpicture}[font=\small]
\node[state] (p) at (0,0) {$p$};
\node[copier] (copier) at (0,0.3) {};
\node[arrow box] (f) at (-0.5,0.95) {$f$};
\node at (-0.7,1.5) {$Y$};
\node at (0.7,1.5) {$X$};
\coordinate (X) at (0.5,1.6);
\coordinate (Y) at (-0.5,1.6);
\draw (p) to (copier);
\draw (copier) to[out=15,in=-90] (X);
\draw (copier) to[out=165,in=-90] (f);
\draw (f) to (Y);
\end{tikzpicture}}}
\quad
=
\quad
\vcenter{\hbox{
\begin{tikzpicture}[font=\small]
\node[state] (q) at (0,0) {$q$};
\node[copier] (copier) at (0,0.3) {};
\node[arrow box] (g) at (0.5,0.95) {$\ov{f}$};
\node at (-0.7,1.5) {$Y$};
\node at (0.7,1.5) {$X$};
\coordinate (X) at (-0.5,1.6);
\coordinate (Y) at (0.5,1.6);
\draw (q) to (copier);
\draw (copier) to[out=165,in=-90] (X);
\draw (copier) to[out=15,in=-90] (g);
\draw (g) to (Y);
\end{tikzpicture}}}
.
\ee
The implication towards the right proves the proposition. \qedhere
\end{enumerate}
\eprf

The following remark explains that Bayesian inversion defines a dagger functor on a.e.\ equivalence classes. This is essentially Remark~13.9 in~\cite{Fr19}, but now it includes the non-commutative setting. 

\begin{rmk}[Bayesian inversion as a dagger functor]{rmk:bayesdagger}
Let $\mC\subseteq\mathcal{M}_{\text{\Yinyang}}$ be an even $*$-preserving subcategory of a quantum Markov category.
Let $\mathfrak{B}I_{/\mC}$ be the subcategory of $I_{/\mC}$ (recall Definition~\ref{defn:statepreservingcat}) consisting of the same objects as $I_{/\mC}$ but whose morphisms consist of all Bayesian invertible morphisms (whose Bayesian inverses are also in $\mC$).

Now, consider two a.e.\ equivalent pairs of composable morphisms $f,f':(X,I\xstoch{p}X)\to(Y,I\xstoch{q}Y)$ and $g,g':(Y,I\xstoch{q}Y)\to(Z,I\xstoch{r}Z)$, i.e.\ $f\underset{\raisebox{.6ex}[0pt][0pt]{\scriptsize$p$}}{=}f'$ and $g\underset{\raisebox{.6ex}[0pt][0pt]{\scriptsize$q$}}{=}g'$, in $\mathfrak{B}I_{/\mC}$. Then $g\circ f\underset{\raisebox{.6ex}[0pt][0pt]{\scriptsize$p$}}{=}g'\circ f'$ follows from 
\[
\vcenter{\hbox{%
\begin{tikzpicture}[font=\small]
\node[state] (q) at (0,0) {$p$};
\node[copier] (copier) at (0,0.3) {};
\coordinate (f) at (-0.5,0.91) {};
\node[arrow box] (g) at (0.5,0.95) {$f$};
\node[arrow box] (e) at (0.5,1.75) {$g$};
\coordinate (X) at (-0.5,2.3);
\coordinate (Y) at (0.5,2.3);
\draw (q) to (copier);
\draw (copier) to[out=165,in=-90] (f);
\draw (f) to (X);
\draw (copier) to[out=15,in=-90] (g);
\draw (g) to (e);
\draw (e) to (Y);
\end{tikzpicture}}}
\quad
=
\quad
\vcenter{\hbox{%
\begin{tikzpicture}[font=\small]
\node[copier] (copier) at (0,0.3) {};
\node[arrow box] (h) at (-0.5,0.95) {$\ov{f}$};
\node[arrow box] (f) at (0.5,0.95) {$g$};
\node[state] (p) at (0,-0.1) {$q$};
\coordinate (X) at (-0.5,1.6);
\coordinate (Y) at (0.5,1.6);
\draw (p) to (copier);
\draw (copier) to[out=165,in=-90] (h);
\draw (h) to (X);
\draw (copier) to[out=15,in=-90] (f);
\draw (f) to (Y);
\end{tikzpicture}}}
\quad
\overset{g\underset{\raisebox{.6ex}[0pt][0pt]{\tiny$q$}}{=}g'}{=\joinrel=\joinrel=}
\quad
\vcenter{\hbox{%
\begin{tikzpicture}[font=\small]
\node[copier] (copier) at (0,0.3) {};
\node[arrow box] (h) at (-0.5,0.95) {$\ov{f}$};
\node[arrow box] (f) at (0.5,0.95) {$g'$};
\node[state] (p) at (0,-0.1) {$q$};
\coordinate (X) at (-0.5,1.6);
\coordinate (Y) at (0.5,1.6);
\draw (p) to (copier);
\draw (copier) to[out=165,in=-90] (h);
\draw (h) to (X);
\draw (copier) to[out=15,in=-90] (f);
\draw (f) to (Y);
\end{tikzpicture}}}
\quad
\overset{\text{Prop~\ref{prop:bayesfunctorial}}}{=\joinrel=\joinrel=\joinrel=\joinrel=\joinrel=}
\quad
\vcenter{\hbox{%
\begin{tikzpicture}[font=\small]
\node[state] (q) at (0,0) {$p$};
\node[copier] (copier) at (0,0.3) {};
\coordinate (f) at (-0.5,0.91) {};
\node[arrow box] (g) at (0.5,0.95) {$f$};
\node[arrow box] (e) at (0.5,1.75) {$g'$};
\coordinate (X) at (-0.5,2.3);
\coordinate (Y) at (0.5,2.3);
\draw (q) to (copier);
\draw (copier) to[out=165,in=-90] (f);
\draw (f) to (X);
\draw (copier) to[out=15,in=-90] (g);
\draw (g) to (e);
\draw (e) to (Y);
\end{tikzpicture}}}
\quad
\overset{f\underset{\raisebox{.6ex}[0pt][0pt]{\tiny$p$}}{=}f'}{=\joinrel=\joinrel=}
\quad
\vcenter{\hbox{%
\begin{tikzpicture}[font=\small]
\node[state] (q) at (0,0) {$p$};
\node[copier] (copier) at (0,0.3) {};
\coordinate (f) at (-0.5,0.91) {};
\node[arrow box] (g) at (0.5,0.95) {$f'$};
\node[arrow box] (e) at (0.5,1.75) {$g'$};
\coordinate (X) at (-0.5,2.3);
\coordinate (Y) at (0.5,2.3);
\draw (q) to (copier);
\draw (copier) to[out=165,in=-90] (f);
\draw (f) to (X);
\draw (copier) to[out=15,in=-90] (g);
\draw (g) to (e);
\draw (e) to (Y);
\end{tikzpicture}}}
\quad,
\]
where we have used a Bayesian inverse $\ov{f}$ for $(f,p,q)$ in the first equality. 
By Lemma~\ref{lem:bayesaeunique}, if $f$ is $p$-a.e.\ equivalent to $f'$ and $f$ has $\ov{f}$ as a Bayesian inverse, then $\ov{f}$ is also a Bayesian inverse of $f'$.
It is even easier to check that the identity is a Bayesian inverse of the identity for any states.
Hence, taking a.e.\ equivalence classes of morphisms in $\mathfrak{B}I_{/\mC}$ defines a category, which will be denoted by $\mathfrak{B}^{\mathrm{ae}}I_{/\mC}$. It consists of a.e.\ equivalence classes of even $*$-preserving Bayesian invertible morphisms. 
These facts together with Proposition~\ref{prop:bayesfunctorial} say that 
Bayesian inversion defines a dagger functor on $\mathfrak{B}^{\mathrm{ae}}I_{/\mC}$.%
\footnote{Our original proof of this claim was initially valid for our examples of commutative and non-commutative probability, where we had explicitly shown that a.e.\ equivalence classes of PU maps compose~\cite{PaRu19}. This is to be contrasted with Fritz proof, which was valid for all causal (classical) Markov categories (cf.\ Definition~\ref{defn:causal}). Since $\fdCAlgSPU$ is causal in an appropriate sense (cf. Proposition~\ref{prop:SPUcausal}), Fritz' proof can be adapted to this case (cf.\ Corollary~\ref{cor:aeclassesofstatepreserving}). However, we have actually found that $\fdCAlgPU$ is not causal in this sense, even though it has the property that a.e.\ equivalence classes of morphisms compose. Hence, our result simultaneously extends Fritz' result in two respects (cf.\ Question~\ref{ques:diagramaePUcompose}).}

This is to be contrasted with the notion of having a disintegration. Even in categories where composition of state-preserving a.e.\ classes is well-defined so that disintegrations are compositional when they exist, if $(f,p,q)$ has a disintegration $\ov{f}$, it is almost never the case that $f$ is a disintegration of $(\ov{f},q,p)$. More on this will be explained in Section~\ref{sec:modposcaus} when we discover that having a disintegration of $(f,p,q)$ imposes constraints on $f$ (namely, $p$-a.e.\ determinism).
\end{rmk}

\vspace{-1mm}
\begin{prop}[Invertible implies disintegrable and Bayesian invertible]{prop:invertimpliesdisintandbayes}
Let $\mC$ be an S-positive subcategory of a quantum CD category $\mM_{\text{\Yinyang}}$. 
Let $I\xstoch{p}X$ and $X\xstoch{f}Y$ be unital morphisms in $\mM$ and set $q:=f\circ p$. 
\begin{enumerate}[i.]
\itemsep0pt
\item
If $f$ is invertible, with inverse in $\mM$, then $f^{-1}$ is a disintegration of $(f,p,q)$ in $\mM$. 
\item
\label{item:inverseSpositiveimpliesBayes}
Assuming that $f$ and $p$ are now in $\mC$, if $f$ is invertible, with inverse also in $\mC$, then $f^{-1}$ is a Bayesian inverse of $(f,p,q)$. 
\end{enumerate}
\end{prop}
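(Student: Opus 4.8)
The plan is to verify each part directly against its definition, using only associativity of composition for part~(i) and the determinism forced by S-positivity for part~(ii). Throughout I write $\id$ for identities and recall that $q=f\circ p$ by hypothesis.

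For part~(i), a right disintegration of $(f,p,q)$ in $\mM$ is a morphism $Y\xstoch{g}X$ satisfying $g\circ q=p$ together with $f\circ g\aeequals{q}\id_Y$ (read as right a.e.\ equivalence, per Definition~\ref{defn:disintegration}). Taking $g:=f^{-1}$, the first condition is immediate:
\[
f^{-1}\circ q=f^{-1}\circ f\circ p=\id_X\circ p=p.
\]
The second condition is even easier, since $f\circ f^{-1}=\id_Y$ holds on the nose, so certainly $f\circ f^{-1}\aeequals{q}\id_Y$ (a.e.\ equivalence is reflexive). Thus $f^{-1}$ is a disintegration of $(f,p,q)$ in $\mM$, and no positivity hypothesis is needed here.

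For part~(ii) the content lies in the Bayes condition. Now $f$ and $f^{-1}$ lie in the S-positive subcategory $\mC$, so Lemma~\ref{lem:positivityinvertibilityimpliesdeterminism} applies and $f$ is deterministic, i.e.\ $\D_Y\circ f=(f\otimes f)\circ\D_X$. I would then compute the left-hand side of the left Bayes condition for $g:=f^{-1}$, substituting $q=f\circ p$, sliding $\D_Y$ through the deterministic $f$, and using functoriality of $\otimes$ together with $f^{-1}\circ f=\id_X$ in the last factor:
\[
(f^{-1}\otimes\id_Y)\circ\D_Y\circ q
=(f^{-1}\otimes\id_Y)\circ\D_Y\circ f\circ p
=(f^{-1}\otimes\id_Y)\circ(f\otimes f)\circ\D_X\circ p
=(\id_X\otimes f)\circ\D_X\circ p,
\]
which is exactly the right-hand side of the Bayes condition (Definition~\ref{defn:bayesianinverse}). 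Hence $f^{-1}$ is a left Bayesian inverse of $(f,p,q)$. The right Bayes condition follows from the identical computation carried out on the opposite tensor factor; alternatively, since $f$ and $f^{-1}$ are deterministic they are in particular $*$-preserving (Definition~\ref{eq:deterministicmap}), so Corollary~\ref{cor:leftequalsrightaeequivalence} identifies the two notions once the left condition is established. Either way, $f^{-1}$ is a (two-sided) Bayesian inverse.

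The only genuinely non-trivial ingredient is the determinism of invertible morphisms, and this is already isolated in Lemma~\ref{lem:positivityinvertibilityimpliesdeterminism}; once it is in hand, both parts reduce to one-line diagram manipulations. I therefore expect no real obstacle beyond correctly matching each computation to the left/right diagram conventions in Definitions~\ref{defn:disintegration} and~\ref{defn:bayesianinverse}.
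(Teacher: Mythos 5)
Your proof is correct and follows essentially the same route as the paper: part~(i) is the same one-line verification of the two disintegration equations, and part~(ii) invokes Lemma~\ref{lem:positivityinvertibilityimpliesdeterminism} to get determinism of $f$, slides $\D_Y$ through $f$, and cancels $f^{-1}\circ f=\id_X$, exactly as the paper does. The one detail the paper includes that you omit is the preliminary check that $f^{-1}$ is unital (relevant because the ambient category is only a quantum CD category, where unitality is not automatic, and the notion of disintegration implicitly packages it); this follows in one line from $!_Y\circ f = !_X$ and $f\circ f^{-1}=\id_Y$. Also note that your ``alternatively'' appeal to Corollary~\ref{cor:leftequalsrightaeequivalence} would additionally require $p$ to be $*$-preserving, which is not among the hypotheses, but this is harmless since your primary argument---running the identical computation on the other tensor factor, using the symmetry of $\D_Y\circ f=(f\otimes f)\circ\D_X$---already yields the right Bayes condition.
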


\bprf
Note that even if $\mM_{\text{\Yinyang}}$ is a quantum CD category, if $f$ is unital, then $f^{-1}$ is unital as well because
\be
\vcenter{\hbox{%
\begin{tikzpicture}[font=\small]
\node[arrow box] (c) at (0,0) {$f^{-1}$};
\node[discarder] (d) at (0,0.5) {};
\draw  (c) to (d);
\draw (c) to (0,-0.5);
\end{tikzpicture}}}
\quad
=
\quad
\vcenter{\hbox{%
\begin{tikzpicture}[font=\small]
\node[arrow box] (b) at (0,-0.8) {$f^{-1}$};
\node[arrow box] (c) at (0,0) {$f$};
\node[discarder] (d) at (0,0.5) {};
\draw  (b) to (c);
\draw  (c) to (d);
\draw (b) to (0,-1.4);
\end{tikzpicture}}}
\quad
=
\quad
\vcenter{\hbox{%
\begin{tikzpicture}[font=\small]
\node[discarder] (d) at (0,0) {};
\node at (0,0.2) {};
\draw (d) to (0,-0.5);
\end{tikzpicture}}}
\quad
.
\ee

\begin{enumerate}[i.]
\itemsep0pt
\item
The disintegration conditions for $f^{-1}$
follow from
\be
\vcenter{\hbox{%
\begin{tikzpicture}[font=\small]
\node[state] (q) at (0,-0.9) {$q$};
\node[arrow box] (g) at (0,-0.3) {$f^{-1}$};
\coordinate (Y) at (0,0.3);
\draw (q) to (g);
\draw (g) to (Y);
\end{tikzpicture}}}
\quad
=
\quad
\vcenter{\hbox{%
\begin{tikzpicture}[font=\small]
\node[state] (p) at (0,-1.7) {$p$};
\node[arrow box] (q) at (0,-1.1) {$f$};
\node[arrow box] (g) at (0,-0.3) {$f^{-1}$};
\coordinate (Y) at (0,0.3);
\draw (p) to (q);
\draw (q) to (g);
\draw (g) to (Y);
\end{tikzpicture}}}
\quad
=
\quad
\vcenter{\hbox{%
\begin{tikzpicture}[font=\small]
\node[state] (p) at (0,-0.9) {$p$};
\coordinate (X) at (0,-0.3);
\draw (p) to (X);
\end{tikzpicture}}}
\qquad
\text{ and }
\qquad
\vcenter{\hbox{%
\begin{tikzpicture}[font=\small]
\node[state] (q) at (0,0) {$q$};
\node[copier] (copier) at (0,0.3) {};
\coordinate (f) at (-0.5,0.91) {};
\node[arrow box] (g) at (0.5,0.95) {$f^{-1}$};
\node[arrow box] (e) at (0.5,1.75) {$f$};
\coordinate (X) at (-0.5,2.3);
\coordinate (Y) at (0.5,2.3);
\draw (q) to (copier);
\draw (copier) to[out=165,in=-90] (f);
\draw (f) to (X);
\draw (copier) to[out=15,in=-90] (g);
\draw (g) to (e);
\draw (e) to (Y);
\end{tikzpicture}}}
\quad
=
\quad
\vcenter{\hbox{%
\begin{tikzpicture}[font=\small]
\node[state] (q) at (0,-0.2) {$q$};
\node[copier] (copier) at (0,0.3) {};
\coordinate (X) at (-0.5,1.31);
\coordinate (X2) at (0.5,1.31);
\draw (q) to (copier);
\draw (copier) to[out=165,in=-90] (X);
\draw (copier) to[out=15,in=-90] (X2);
\end{tikzpicture}}}
\quad.
\ee

\item
The Bayes condition for $f^{-1}$ follows from 
\be
\vcenter{\hbox{%
\begin{tikzpicture}[font=\small]
\node[state] (q) at (0,0) {$q$};
\node[copier] (copier) at (0,0.3) {};
\node[arrow box] (g) at (-0.5,0.95) {$f^{-1}$};
\coordinate (X) at (-0.5,1.5);
\coordinate (Y) at (0.5,1.5);
\draw (q) to (copier);
\draw (copier) to[out=165,in=-90] (g);
\draw (copier) to[out=15,in=-90] (Y);
\draw (g) to (X);
\end{tikzpicture}}}
\quad
=
\quad
\vcenter{\hbox{%
\begin{tikzpicture}[font=\small,xscale=-1]
\node[state] (q) at (0,0) {p};
\node[copier] (copier) at (0,1.2) {};
\node[arrow box] (e2) at (0.5,1.75) {$f^{-1}$};
\node[arrow box] (g) at (0,0.6) {$f$};
\coordinate (X) at (-0.5,2.3);
\coordinate (Y) at (0.5,2.3);
\draw (q) to (g);
\draw (g) to (copier);
\draw (copier) to[out=165,in=-90] (X);
\draw (e2) to (Y);
\draw (copier) to[out=15,in=-90] (e2);
\end{tikzpicture}}}
\quad
\overset{\text{Lem~\ref{lem:positivityinvertibilityimpliesdeterminism}}}{=\joinrel=\joinrel=\joinrel=\joinrel=}
\quad
\vcenter{\hbox{%
\begin{tikzpicture}[font=\small,xscale=-1]
\node[state] (q) at (0,0) {p};
\node[copier] (copier) at (0,0.3) {};
\node[arrow box] (g2) at (-0.5,0.95) {$f$};
\node[arrow box] (e2) at (0.5,1.75) {$f^{-1}$};
\node[arrow box] (g) at (0.5,0.95) {$f$};
\coordinate (X) at (-0.5,2.3);
\coordinate (Y) at (0.5,2.3);
\draw (q) to (copier);
\draw (copier) to[out=165,in=-90] (g2);
\draw (g2) to (X);
\draw (e2) to (Y);
\draw (copier) to[out=15,in=-90] (g);
\draw (g) to (e2);
\end{tikzpicture}}}
\quad
=
\quad
\vcenter{\hbox{%
\begin{tikzpicture}[font=\small]
\node[state] (p) at (0,0) {$p$};
\node[copier] (copier) at (0,0.3) {};
\node[arrow box] (f) at (0.5,0.95) {$f$};
\coordinate (X) at (-0.5,1.5);
\coordinate (Y) at (0.5,1.5);
\draw (p) to (copier);
\draw (copier) to[out=165,in=-90] (X);
\draw (copier) to[out=15,in=-90] (f);
\draw (f) to (Y);
\end{tikzpicture}}}
\quad.
\ee
since $f$ is deterministic by S-positivity. \qedhere
\end{enumerate}
\eprf

\begin{rmk}[S-positivity is crucial in part~\ref{item:inverseSpositiveimpliesBayes}.\ of Proposition~\ref{prop:invertimpliesdisintandbayes}]{rmk:disintdnibayesforpu}
To see that S-positivity is a crucial assumption in part~\ref{item:inverseSpositiveimpliesBayes}.\ of Proposition~\ref{prop:invertimpliesdisintandbayes}, consider $\fdCAlgPU$. Take $\mA:=\mM_{m}(\C)=:\mB$, let $\mA\xstoch{\omega:=\tr(\rho\;\cdot\;)}\C$ be any state, and take $\mB\xstoch{F:=T}\mA$ to be the transpose map. Then (by arguments similar to those in Remark~\ref{rmk:supportconditions}), taking $B\in\mB$ to compute the Hilbert--Schmidt adjoint of $F$ gives
\[
\tr\big(F^*(\rho)B\big)
=\tr\big(\rho F(B)\big)
=\tr(\rho B^{T})
=\tr\big((B\rho^{T})^T\big)
=\tr(B\rho^{T})
=\tr(\rho^{T}B),
\]
where the invariance of trace under the transpose was used in the fourth equality. Hence, $F^*=F$ and $\xi:=\tr(\rho^{T}\;\cdot\;)$ is the pullback state. Furthermore, since $F$ is invertible ($F^{-1}=F$), $G:=F$ is a disintegration of $(F,\omega,\xi)$. However, the Bayes condition fails because given $A,B\in\mM_{m}(\C)$, one has
\[
\xi\big(G(A)B\big)
=\tr(\rho^{T}A^{T}B)
=\tr\big((B^TA\rho)^T\big)
=\tr(B^{T}A\rho)
=\tr(\rho B^{T}A)
\ne
\tr(\rho AB^{T})
=\omega\big(AF(B)\big).
\]
On the other hand, we will later see that every SPU disintegration of an SPU map is indeed a Bayesian inverse in Corollary~\ref{cor:summary}. 
\end{rmk}

We now state another fact that provides an indication of how disintegrations are related to Bayesian inverses. 

\begin{prop}[Bayes maps for a.e.\ deterministic morphisms are disintegrations]{thm:Bayesianinverseofdeterministicisadisint}
Let $\mM_{\text{\Yinyang}}$ be a quantum Markov category, 
let $I\xstoch{p}X$ be a state in $\mM$, let $X\xstoch{f}Y$ be \emph{either} a left or a right $p$-a.e.\ deterministic map in $\mM$, and set $q:=f\circ p$. If $f$ has a  \emph{left} (right) Bayes map $Y\xstoch{g}X$, then $g$ is a \emph{left} (right) disintegration of $(f,p,q)$ in $\mM$.  
In particular, if all morphisms are $*$-preserving, then all adjectives `left' and `right' may be dropped. 
\end{prop}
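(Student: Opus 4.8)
```latex
\textbf{Plan of proof.}

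The statement to be proved is Proposition~\ref{thm:Bayesianinverseofdeterministicisadisint}: if $X\xstoch{f}Y$ is left (right) $p$-a.e.\ deterministic, $q:=f\circ p$, and $g$ is a left (right) Bayes map for $(f,p,q)$, then $g$ is a left (right) disintegration of $(f,p,q)$. Recall from Definition~\ref{defn:disintegration} that being a disintegration means two things: the state-preservation equation $p=g\circ q$, and the equation $\D_X\circ q = (\id_X\otimes(f\circ g))\circ\D_X\circ q$ (the string-diagram condition, which encodes $f\circ g\aeequals{q}\id_Y$ together with the copy structure). The plan is to verify each of these in turn using the Bayes condition together with $p$-a.e.\ determinism. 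I will work with the right-handed versions throughout; the left versions are symmetric and, when all morphisms are $*$-preserving, Corollary~\ref{cor:leftequalsrightaeequivalence} collapses the left/right distinction so the adjectives may be dropped.

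\textbf{First step: state-preservation.} I would first show $p=g\circ q$. This should follow from Lemma~\ref{lem:Bayesianinversespreservestates}, part~(2): a right Bayes map $g$ is state-preserving provided $f$ is right $p$-a.e.\ unital. Since $f$ is a morphism in a quantum Markov category, it is already unital, hence a fortiori right $p$-a.e.\ unital, so this hypothesis is automatic and Lemma~\ref{lem:Bayesianinversespreservestates} applies directly to give $q\circ g = p$. (Even if one wishes to work in a quantum CD category, unitality of $f$ is part of the hypothesis structure, so this goes through.)

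\textbf{Second step (the main obstacle): the a.e.\ section condition.} The substantive part is establishing the second disintegration equation, namely the diagram asserting $f\circ g\aeequals{q}\id_Y$ in the copied form. The idea is to start from the right Bayes condition for $(f,p,q)$, which reads diagrammatically that copying $q$ and applying $g$ on one leg equals copying $p$ and applying $f$ on one leg. I would then post-compose both legs appropriately so as to apply $f$ after $g$ on the $X$-leg, converting the Bayes diagram into a statement about $f\circ g$ paired with $\id_Y$ against $q$. The crux is that the resulting expression involves $(\id_Y\otimes (f\circ g))$ applied after $\D_Y\circ q$, and I must recognize this as the two-fold copy of $q$ with $f$ doubled. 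Here is precisely where $p$-a.e.\ determinism of $f$ enters: determinism of $f$ lets me pull $f$ through the copy map $\D$ (splitting $f$ across both legs), and the a.e.\ version of this — controlled by the weak a.e.\ Multiplication Lemma (Lemma~\ref{lem:Attalslemmaaegeneral}) and the symmetry of a.e.\ determinism for $*$-preserving morphisms (Lemma~\ref{lem:symmetryaedeterminism}) — is what converts the Bayes equation into the desired section equation modulo $q$-a.e.\ equivalence. I expect the delicate bookkeeping to be ensuring that the a.e.\ determinism is being used on the correct (left versus right) side to match the handedness of the Bayes map, since in the non-commutative setting left and right genuinely differ unless $*$-preservation is assumed; Lemma~\ref{lem:symmetryaedeterminism} and Corollary~\ref{cor:leftequalsrightaeequivalence} are the tools that let me align these and, in the $*$-preserving case, drop the adjectives.

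\textbf{Assembly.} Having the two equations, I conclude $g$ satisfies both defining conditions of a (right) disintegration of $(f,p,q)$, completing the proof. The whole argument is diagrammatic and short; the only real content is the single manipulation in the second step, where determinism of $f$ is precisely the hypothesis that upgrades a Bayes map into a disintegration. This also matches the informal hierarchy advertised in the introduction, that disintegration is the special case of Bayesian inversion occurring exactly when the forward map is a.e.\ deterministic.
```
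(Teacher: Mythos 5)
Your plan matches the paper's proof in outline: state-preservation is obtained exactly as in the paper via Lemma~\ref{lem:Bayesianinversespreservestates} (with $f$ automatically unital, hence a.e.\ unital, in a quantum Markov category), and the second disintegration equation is the same one-step string-diagram manipulation the paper performs in~(\ref{eq:binvofdetimpdisint}): write $\D_{Y}\circ q=\D_{Y}\circ f\circ p$, use $p$-a.e.\ determinism of $f$ to pull $f$ through the copy, and then apply the Bayes condition to trade one $f$-leg for $g$ on the $q$-side. Your version reads the paper's chain of equalities in the reverse direction, but the ingredients are identical.

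Two points in your second step are off, though neither is fatal. First, you cite the weak a.e.\ Multiplication Lemma (Lemma~\ref{lem:Attalslemmaaegeneral}) as a control on the a.e.\ step, but that lemma is a $C^*$-algebraic statement about SPU maps and plays no role in this abstract proposition; the fact you actually need is nothing more than Definition~\ref{defn:aeequivalentdeterministic} itself, instantiated by inserting a copier with a grounded spectator leg (the counit axiom) to create the witness for the a.e.\ equality, and then deleting that leg afterwards --- this is precisely the second and fourth equalities in the paper's calculation. Second, and more substantively, your worry about matching the handedness of the a.e.\ determinism to the handedness of the Bayes map, to be resolved by Lemma~\ref{lem:symmetryaedeterminism} or Corollary~\ref{cor:leftequalsrightaeequivalence}, would restrict you to the $*$-preserving case, since those results require $*$-preservation. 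The proposition as stated allows, e.g., a \emph{left} Bayes map together with \emph{either} left or right $p$-a.e.\ determinism, with no $*$-preservation hypothesis. The paper's resolution needs no symmetry lemma at all: the grounded extra leg can be attached on whichever side the given handedness of determinism requires (the paper notes explicitly that if $f$ is left rather than right $p$-a.e.\ deterministic, one extends the grounding to the other side), so either form of determinism serves either form of Bayes map. If your mixed-handedness case genuinely leaned on Lemma~\ref{lem:symmetryaedeterminism}, you would only have proved a weaker statement; replace that appeal with the side-of-the-discard observation and your argument covers the full claim.
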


\bprf
The state-preserving condition of a disintegration follows from Lemma~\ref{lem:Bayesianinversespreservestates}. 
For concreteness, suppose that $f$ is right $p$-a.e.\ deterministic (we will later explain how the calculation is modified if $f$ is left $p$-a.e.\ deterministic). 
The second condition of a left disintegration follows from the string diagram calculation
\be
\label{eq:binvofdetimpdisint}
\vcenter{\hbox{%
\begin{tikzpicture}[font=\small]
\node[state] (q) at (0,-0.1) {$q$};
\node[copier] (copier) at (0,0.3) {};
\coordinate (X) at (-0.5,1.21);
\coordinate (X2) at (0.5,1.21);
\draw (q) to (copier);
\draw (copier) to[out=165,in=-90] (X);
\draw (copier) to[out=15,in=-90] (X2);
\end{tikzpicture}}}
\;\;
=
\;\;
\vcenter{\hbox{%
\begin{tikzpicture}[font=\small]
\node[state] (p) at (0,-0.9) {$p$};
\node[arrow box] (f) at (0,-0.3) {$f$};
\node[copier] (copier) at (0,0.3) {};
\coordinate (X) at (-0.5,0.91);
\coordinate (X2) at (0.5,0.91);
\draw (p) to (f);
\draw (f) to (copier);
\draw (copier) to[out=165,in=-90] (X);
\draw (copier) to[out=15,in=-90] (X2);
\end{tikzpicture}}}
\;\;
=
\;\;
\vcenter{\hbox{
\begin{tikzpicture}[font=\small]
\node[state] (p) at (0,0) {$p$};
\node[copier] (c) at (0,0.3) {};
\node[copier] (c2) at (0.5,1.45) {};
\node[arrow box] (f) at (0.5,0.95) {$f$};
\node[discarder] (X) at (-0.7,1.65) {};
\coordinate (Y1) at (1,1.95);
\coordinate (Y2) at (0,1.95);
\draw (p) to (c);
\draw (c) to[out=165,in=-90] (X);
\draw (c) to[out=15,in=-90] (f);
\draw (c2) to[out=15,in=-90] (Y1);
\draw (c2) to[out=165,in=-90] (0,1.95);
\draw (f) to (c2);
\end{tikzpicture}}}
\;\;
=
\;\;
\vcenter{\hbox{
\begin{tikzpicture}[font=\small]
\node[state] (p) at (0,0) {$p$};
\node[copier] (c) at (0,0.3) {};
\node[copier] (c2) at (0.5,0.75) {};
\node[arrow box] (f) at (1,1.4) {$f$};
\node[arrow box] (e) at (0,1.4) {$f$};
\node[discarder] (X) at (-0.7,1.65) {};
\coordinate (Y1) at (1,1.95);
\coordinate (Y2) at (0,1.95);
\draw (p) to (c);
\draw (c) to[out=165,in=-90] (X);
\draw (c) to[out=15,in=-90] (c2);
\draw (c2) to[out=15,in=-90] (f);
\draw (c2) to[out=165,in=-90] (e);
\draw (f) to (Y1);
\draw (e) to (Y2);
\end{tikzpicture}}}
\;\;
=
\;\;
\vcenter{\hbox{%
\begin{tikzpicture}[font=\small]
\node[state] (p) at (0,0) {$p$};
\node[copier] (copier) at (0,0.3) {};
\node[arrow box] (f) at (-0.5,0.95) {$f$};
\node[arrow box] (e) at (0.5,0.95) {$f$};
\coordinate (X) at (-0.5,2.2);
\coordinate (Y) at (0.5,2.2);
\draw (p) to (copier);
\draw (copier) to[out=165,in=-90] (f);
\draw (f) to (X);
\draw (copier) to[out=15,in=-90] (e);
\draw (e) to (Y);
\end{tikzpicture}}}
\;\;
=
\;\;
\vcenter{\hbox{%
\begin{tikzpicture}[font=\small,xscale=-1]
\node[state] (q) at (0,0) {$q$};
\node[copier] (copier) at (0,0.3) {};
\coordinate (f) at (-0.5,0.91) {};
\node[arrow box] (g) at (0.5,0.95) {$g$};
\node[arrow box] (e) at (0.5,1.75) {$f$};
\coordinate (X) at (-0.5,2.3);
\coordinate (Y) at (0.5,2.3);
\draw (q) to (copier);
\draw (copier) to[out=165,in=-90] (f);
\draw (f) to (X);
\draw (copier) to[out=15,in=-90] (g);
\draw (g) to (e);
\draw (e) to (Y);
\end{tikzpicture}}}
\;\;, 
\ee
where right $p$-a.e.\ determinism of $f$ was used in the third equality and the \emph{left} Bayes condition was used in the last equality. Notice that if $f$ was left $p$-a.e.\ deterministic instead, one could extend the grounding operation to the right, instead of the left, as was done in the second equality. 
\eprf

Proposition~\ref{thm:Bayesianinverseofdeterministicisadisint} is useful because it allows one to use the simple formula for Bayes maps from Example~\ref{exa:Bayesmapsformatrixalgebras} (and more generally~\cite{PaRuBayes}) to construct \emph{linear} disintegrations. Although not quantum operations, such maps can still be useful in inference and Bayesian updating~\cite{PaJeffrey}.

\begin{exa}[Regular conditional probabilities]{exa:RCP}
Let $(X,\S,\mu)$ and $(Y,\W,\nu)$ be measure spaces and let 
$X\xrightarrow{f}Y$ be a measure-preserving map. 
A \define{regular conditional probability} (cf.\ \cite[Definition~2.1]{LFR04})
is a transition kernel $Y\xstoch{g}X$ for which 
there exists a $\nu$-null set $N\in\W$ such that 
$g_{y}$ is a probability measure for all $y\in Y\setminus N$ and 
\[
\mu\big(A\cap f^{-1}(B)\big)=\int_{B}g_{y}(A)\,d\nu(y)
\qquad\forall\;A\in\S\text{ and }\forall\;B\in\W.
\]
Writing this equation string-diagrammatically immediately shows that this is precisely the Bayes condition, namely $g$ is a Bayesian inverse of $(f,\mu,\nu)$ in $\Stoch$. 
Indeed, viewing $f$ as a Markov kernel, $\int_{A}f_{x}(B)\,d\mu(x)=\int_{A}\chi_{f^{-1}(B)}(x)\,d\mu(x)=\int_{A\cap f^{-1}(B)}d\mu=\mu\big(A\cap f^{-1}(B)\big)$, where $\chi$ is used to denote the characteristic function. 
In~\cite[Proposition~A.29]{PaRu19}, it was shown (using the explicit language of measure theory) that the notion of a regular conditional probability is equivalent to that of a disintegration. By Proposition~\ref{thm:Bayesianinverseofdeterministicisadisint}, we get a simple string-diagrammatic proof that a regular conditional probability is a disintegration%
\footnote{Technically, the regular conditional probability is only a.e.\ unital.}
 since $f$ is deterministic. A string-diagrammatic proof that a disintegration is a regular conditional probability will follow from Theorem~\ref{thm:aemodbayesdisint} and the fact that $\Stoch$ is a.e.\ modular, as defined in Definition~\ref{defn:aemodularity} (see \cite[Example~13.19]{Fr19} for a proof that $\Stoch$ satisfies this condition). 

Therefore, since the Bayes condition makes sense even if $f$ is a Markov kernel, a Bayesian inverse can be viewed as a generalization of the notion of a regular conditional probability. There are also other generalizations of regular conditional probabilities in the literature. For example, a \emph{conditional distribution of a Markov kernel given another} in~\cite[Definition 4]{No13} is a slight generalization of Bayesian inversion where one includes an extra morphism (to reproduce our Bayes condition, set $M_{1}=\id$ in~\cite[Definition 4]{No13}). 
\end{exa}

\vspace{-2mm}
\begin{cor}[Bayesian inverses of a.e.\ deterministic maps in $\fdCAlgSPU$]{cor:bayesdetCAlg}
In the category $\fdCAlgSPU$, 
let $\mA\xstoch{\w}\C$ be a state on $\mA$, let $\mB\xstoch{F}\mA$ be a $\w$-a.e.\ deterministic map, and set $\xi:=\w\circ F$. If $\mA\xstoch{G}\mB$ is a Bayesian inverse of $(F,\omega,\xi)$, then $G$ is a disintegration of $(F,\w,\xi)$.
\end{cor}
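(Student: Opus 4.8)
The plan is to recognize this as a direct instantiation of Proposition~\ref{thm:Bayesianinverseofdeterministicisadisint}, applied with the ambient quantum Markov category taken to be $\fdCAlgUY$ and the abstract data $(f,p,q,g)$ identified with $(F,\w,\xi,G)$. Since $\fdCAlgSPU$ is a subcategory of $\fdCAlgUY$ and all the morphisms in question ($F$, $\w$, $\xi$, and $G$) are SPU maps or states, they are in particular morphisms of the even part $\mM=\fdCAlgUY$, so the proposition applies verbatim once its hypotheses are verified.

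First I would check that every relevant morphism is $*$-preserving. Every SPU map is positive, since its value on $B^*B$ dominates $\lVert F(1_{\mB})\rVert\,F(B)^*F(B)\ge 0$, and every positive map is $*$-preserving by Example~\ref{a007}; the state $\w$ and the composite $\xi=\w\circ F$ are positive unital, hence $*$-preserving as well (and $\xi$ is genuinely a state, being a composite of positive unital maps with $\xi(1_{\mB})=\w(F(1_{\mB}))=\w(1_{\mA})=1$). Because $F$ and $\w$ are $*$-preserving, Lemma~\ref{lem:symmetryaedeterminism} shows that the hypothesis that $F$ be $\w$-a.e.\ deterministic is unambiguous: left and right $\w$-a.e.\ determinism coincide. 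Similarly, since $G$ is a Bayesian inverse in the $*$-preserving subcategory $\fdCAlgSPU$, it is simultaneously a left and a right Bayes map for $(F,\w,\xi)$ by Corollary~\ref{cor:leftequalsrightaeequivalence}, as noted in Definition~\ref{defn:bayesianinverse}.

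With these hypotheses in place, Proposition~\ref{thm:Bayesianinverseofdeterministicisadisint} yields that $G$ is a (left, equivalently right) disintegration of $(F,\w,\xi)$ in $\mM$; as $G$ is itself SPU, this is at once a disintegration in $\fdCAlgSPU$. Concretely, the state-preserving half of the disintegration condition comes from Lemma~\ref{lem:Bayesianinversespreservestates}, using that $F$ is unital and hence $\w$-a.e.\ unital, while the remaining (a.e.\ section) condition is exactly the string-diagram computation~(\ref{eq:binvofdetimpdisint}) carried out in the proof of the proposition. The only genuine work is therefore the bookkeeping of $*$-preservation needed to collapse the left/right distinctions; I do not expect any substantive obstacle here, since the analytic content was already isolated in the weak a.e.\ Multiplication Lemma (Lemma~\ref{lem:Attalslemmaaegeneral}) and in Proposition~\ref{thm:Bayesianinverseofdeterministicisadisint} itself.
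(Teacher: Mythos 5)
Your proposal is correct and is essentially the paper's intended argument: the corollary is stated without separate proof precisely because it is the instantiation of Proposition~\ref{thm:Bayesianinverseofdeterministicisadisint} in $\fdCAlgUY$, with the left/right distinctions collapsing since SPU maps are positive and hence $*$-preserving, and with state-preservation supplied by Lemma~\ref{lem:Bayesianinversespreservestates}. One tiny citation quibble: the equivalence of the left and right Bayes conditions for $*$-preserving data is really Lemma~\ref{prop:hfpkgs} (as used in Proposition~\ref{prop:bayesfunctorial}) rather than Corollary~\ref{cor:leftequalsrightaeequivalence}, which is its special case for a.e.\ equivalence.
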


A partial converse to Proposition~\ref{thm:Bayesianinverseofdeterministicisadisint} and Corollary~\ref{cor:bayesdetCAlg} will be provided in the next section. 

\begin{rmk}[Replacing states with arbitrary morphisms]{rmk:statestomapsfordisint}
As briefly mentioned in Definitions~\ref{defn:disintegration} and~\ref{defn:bayesianinverse}, one can remove the restriction that $p$, $q$, and $r$ are states. Indeed, one can replace $p$ with an arbitrary morphism $\Theta\xstoch{p}X$ (and similarly for $q$ and $r$) provided that it is still in the same subcategory as needed for the definitions. One then obtains analogous results to those obtained specifically for states. However, our understanding of Bayesian inversion and disintegration in the category $\fdCAlgCPU$ in this more general case is not yet complete. See Question~\ref{ques:moregeneraldisintsandbayes} for details. 
\end{rmk}

\vspace{-1mm}
\begin{ques}[Characterizing generalized disintegrations and Bayesian inverses]{ques:moregeneraldisintsandbayes}
Our earlier work provides necessary and sufficient conditions for disintegrations and Bayesian inverses to exist in the category $\fdCAlgCPU$ with respect to (CPU) \emph{states}~\cite{PaRu19,PaRuBayes}. However, if the states are replaced with arbitrary CPU maps, we do not yet have more concrete necessary and sufficient conditions for their existence (although some preliminary results are known). For example, we have two open problems. 
\begin{enumerate}
\itemsep0pt
\item
Given CPU maps $\mA\xstoch{\omega}\mC$ and $\mB\xstoch{\xi}\mC$ together with a deterministic map $\mB\xrightarrow{F}\mA$ (or more generally, an $\omega$-a.e.\ deterministic map) such that $\xi=\w\circ F$, find convenient necessary and sufficient conditions for the existence of a CPU map $\mA\xstoch{G}\mB$ such that $\w=\xi\circ G$ and $G\circ F\aeequals{\xi}\id_{\mC}$. 
\item
Given CPU maps $\mA\xstoch{\omega}\mC$ and $\mB\xstoch{\xi}\mC$ together with a CPU map $\mB\xstoch{F}\mA$ such that $\xi=\w\circ F$, find convenient necessary and sufficient conditions for the existence of a CPU map $\mA\xstoch{G}\mB$ such that $\xi\big(G(A)B\big)=\omega\big(AF(B)\big)$ for all $A\in\mA$ and $B\in\mB$. 
\end{enumerate}
Such generalizations appear in classical probability in the form of parametrizing probability measures~\cite[Section~14]{Fr19}. 
In the quantum setting, special cases of these concepts (where $\mC$ is a commutative $C^*$-algebra, for example) are known in the non-commutative statistics literature, some of which will be explored in Example~\ref{exa:Umegakisufficientstatistic}. 
What is the appropriate interpretation of these more general notions in non-commutative statistics? 
\end{ques}

\vspace{-1mm}
\begin{rmk}[$\fdCAlgCPU$ does not have conditionals]{rmk:EPRconditionals}
Although we do not discuss conditionals in this work, we briefly mention that the subcategory $\fdCAlgCPU$ does not have conditional distributions in the sense of~\cite[Definition~11.1]{Fr19} (and therefore does not have conditionals in the sense of~\cite[Definition~11.5]{Fr19}).%
\footnote{This is also consistent with our observation that $\fdCAlgCPU$ is not strictly positive (Example~\ref{exa:strictpositive}) and Fritz' observation~\cite[Lemma~13.18]{Fr19}, which states: ``if $\mC$ is a classical Markov category with conditionals, then $\mC$ is strictly positive.''}
In the setting of $C^*$-algebras, a ``conditional distribution'' is replaced by a PU map~\cite{Pa17}. One might suspect the lack of conditional distributions since the copy map is not positive and therefore it is rare that using it to construct a joint functional will give a positive functional~\cite[Remark~5.96]{PaRuBayes}. However, if one already has a joint state, can a conditional be found that gives rise to that joint state? We include a simple counterexample here to illustrate that $\fdCAlgCPU$ does not have conditional distributions in this sense. Let $\rho$ be the EPR density matrix on $\mA\otimes\mB:=\mM_{2}(\C)\otimes\mM_{2}(\C)\cong\mM_{4}(\C)$ given by
\[
\rho=\frac{1}{2}\begin{bmatrix}0&0&0&0\\0&1&-1&0\\0&-1&1&0\\0&0&0&0\end{bmatrix}.
\]
Set $\omega:=\tr(\rho\;\cdot\;)$ and $\omega_{\mA}:=\omega\circ!_{\mA}$ (the state obtained by computing the partial trace of $\rho$ over $\mB$). In this case, $\omega_{\mA}(A)=\frac{1}{2}\tr(A)$ for all $A\in\mA$. For $\omega$ to have a conditional distribution, there must exist a CPU map $\mB\xstoch{F}\mA$ such that $\omega(A\otimes B)=\omega_{\mA}\big(AF(B)\big)$ for all $A\in\mA$ and $B\in\mB$. One can show that the only such $F$ is the map 
\[
\mB\ni\begin{bmatrix}b_{11}&b_{12}\\b_{21}&b_{22}\end{bmatrix}\xmapsto{F}
\begin{bmatrix}b_{22}&-b_{12}\\-b_{21}&b_{11}\end{bmatrix}. 
\]
This map is unital and in fact positive. However, it is not completely positive since its Choi matrix (cf.\ \cite[Theorem~2]{Ch75}) has eigenvalues $-1$ and $+1$. In fact, the map $F$ can be expressed as 
\[
\mB\ni B\mapsto F(B)=\begin{bmatrix}0&1\\-1&0\end{bmatrix}B^{T}\begin{bmatrix}0&-1\\1&0\end{bmatrix}
\]
showing explicitly that it is PU but not CPU. 
It is interesting that one actually obtains a positive map $F$ in this example.%
\footnote{However, this is not always the case for all states $\omega$ on $\mA\otimes\mB$.}
It is also interesting to note that this positive map provides the correct inference update rule reproducing the EPR correlations. For example, if Alice measures spin up in some direction and Bob applies the map $F$, then the induced state on Bob's system corresponds to the spin down state. 
Further details and examples are deferred to forthcoming work~\cite{PaJeffrey}. 
\end{rmk}

\section[A.e.\ modularity, strict positivity, and causality]{a021}
\label{sec:modposcaus}
\vspace{-12mm}
\noindent
\begin{tikzpicture}
\coordinate (L) at (-8.75,0);
\coordinate (R) at (8.75,0);
\draw[line width=2pt,orange!20] (L) -- node[anchor=center,rectangle,fill=orange!20]{\strut \Large \textcolor{black}{\textbf{8\;\; A.e.\ modularity, strict positivity, and causality}}} (R);
\end{tikzpicture}

Here we prove our main results relating Bayesian inversion to disintegration. 
We also review a variety of additional axioms for Markov categories (besides S-positivity) introduced by Fritz~\cite{Fr19}, and we prove which ones hold for $\CAlgSPU$. 
In particular, we bring back Fritz' initial axiom of positivity, which appeared in an earlier version of his work, but which was subsequently dropped (since it was subsumed by a stronger axiom that was valid for many of the examples he considered). To avoid conflating the terminology, we call this dropped axiom a.e.\ modularity (since it describes an almost everywhere version of the modularity properties of conditional expectations from the theory of operator algebras~\cite{Um54,To57}). Briefly, our main results are that $\fdCAlgSPU$ is a.e.\ modular and causal, but it is \emph{not} strictly positive (and neither is $\fdCAlgCPU$). We will also see that a.e.\ modularity is the diagrammatic axiom that relates inverses, deterministic morphisms, disintegrations, and Bayesian inverses in Theorem~\ref{thm:aemodbayesdisint}, which is valid in the non-commutative setting as well (Theorem~\ref{thm:SPUmodular}). The results in this section lay the foundations for instantiating theorems proved synthetically to the non-commutative setting, though we do not explore all these theorems in detail here (we only briefly mention a quantum Fisher--Neyman factorization theorem).  

\begin{defn}[A.e.\ modularity]{defn:aemodularity}
An even 
subcategory $\mC$ of a quantum Markov category is \define{left a.e.\ modular} iff for all $f,g,q$ in $\mC$ such that%
\footnote{Notice that the diagram on the left says that $g$ is a left disintegration of $(f,p:=g\circ q,q)$. This definition is a slight generalization of the previous definition of disintegration to the case where $q$ need not be a state (see Remark~\ref{rmk:statestomapsfordisint}).}
\vspace{-1mm}
\[
\vcenter{\hbox{%
\begin{tikzpicture}[font=\small]
\node[arrow box] (q) at (0,-0.25) {$q$};
\node[copier] (copier) at (0,0.3) {};
\coordinate (f) at (0.5,0.91) {};
\node[arrow box] (g) at (-0.5,0.95) {$g$};
\node[arrow box] (e) at (-0.5,1.75) {$f$};
\coordinate (X) at (0.5,2.3);
\coordinate (Y) at (-0.5,2.3);
\draw (0,-0.85) to (q);
\draw (q) to (copier);
\draw (copier) to[out=15,in=-90] (f);
\draw (f) to (X);
\draw (copier) to[out=165,in=-90] (g);
\draw (g) to (e);
\draw (e) to (Y);
\end{tikzpicture}}}
\quad=\quad
\vcenter{\hbox{%
\begin{tikzpicture}[font=\small]
\node[arrow box] (q) at (0,-0.3) {$q$};
\node[copier] (copier) at (0,0.3) {};
\coordinate (X) at (-0.5,1.31);
\coordinate (X2) at (0.5,1.31);
\draw (0,-0.9) to (q);
\draw (q) to (copier);
\draw (copier) to[out=165,in=-90] (X);
\draw (copier) to[out=15,in=-90] (X2);
\end{tikzpicture}}}
,\qquad\text{ then }\qquad
\vcenter{\hbox{
\begin{tikzpicture}[font=\small]
\node[arrow box] (q) at (0,-0.3) {$q$};
\node[copier] (c) at (0,0.3) {};
\node[copier] (c2) at (-0.5,1.55) {};
\node[arrow box] (g) at (-0.5,0.95) {$g$};
\node[arrow box] (f) at (0,2.25) {$f$};
\coordinate (X) at (0.7,2.95);
\coordinate (Y1) at (0,2.95);
\coordinate (Y2) at (-1,2.95);
\draw (0,-0.9) to (q);
\draw (q) to (c);
\draw (c) to[out=15,in=-90] (X);
\draw (c) to[out=165,in=-90] (g);
\draw (c2) to[out=15,in=-90] (f);
\draw (f) to (Y1);
\draw (c2) to[out=165,in=-90] (Y2);
\draw (g) to (c2);
\end{tikzpicture}}}
\quad=\quad
\vcenter{\hbox{
\begin{tikzpicture}[font=\small]
\node[arrow box] (q) at (0,-0.3) {$q$};
\node[copier] (c) at (0,0.3) {};
\node[copier] (c2) at (-0.5,1.55) {};
\node[arrow box] (g) at (-1,2.25) {$g$};
\coordinate (X) at (0.7,2.95);
\coordinate (Y1) at (0,2.95);
\coordinate (Y2) at (-1,2.95);
\draw (0,-0.9) to (q);
\draw (q) to (c);
\draw (c) to[out=15,in=-90] (X);
\draw (c) to[out=165,in=-90] (c2);
\draw (c2) to[out=15,in=-90] (Y1);
\draw (g) to (Y2);
\draw (c2) to[out=165,in=-90] (g);
\end{tikzpicture}}}
\]
A morphism $g$ is said to be \define{left $q$-a.e.\ modular with respect to $f$} iff the right equality holds. 
A similar definition is made for a \define{right a.e.\ modular} subcategory as well as \define{right $q$-a.e.\ modular with respect to $f$}.%
\footnote{This definition is made in such a way so that the diagrams are reflected across the vertical line passing through $q$. It is defined in such a way so that an even $*$-preserving subcategory of a quantum Markov category is left a.e.\ modular if and only if it is right a.e.\ modular (the proof is similar to arguments we have already seen---see Lemma~\ref{prop:hfpkgs} for example).} 
\end{defn}

\vspace{-2mm}

In other words, an even subcategory $\mC$ is left (right) a.e.\ modular iff every disintegration as above is left (right) $q$-a.e.\ modular. 

\vspace{-1mm}
\begin{exa}[The reason for choosing the terminology ``a.e.\ modular'']{exa:whyaemodular}
In the category $\fdCAlgU$, 
let $\mB\xstoch{F}\mA$ and $\mA\xstoch{G}\mB$ be linear unital maps. Let $\mB\xstoch{\xi}\C$ be a state (CPU). Theorem~\ref{thm:ncaeequivalence} shows that $G$ is a left $\xi$-a.e.\ modular with respect to $F$ if and only if
\[
P_{\xi}G\big(AF(B)\big)=P_{\xi}G(A)B\qquad\forall\;A\in\mA,\;B\in\mB. 
\]
If one assumes $F$ is an injective $*$-homomorphism (so that $\mB$ is viewed as a $C^*$-subalgebra of $\mA$) and $P_{\xi}=1_{\mB}$, then this condition is the modularity condition for conditional expectations of inclusions (see~\cite[Definition~9.5.3]{Fi96} for example). Thus, the definition of a.e.\ modularity generalizes this definition to the case where $\xi$ need not be a state, $F$ need not be injective (it need not even be a $*$-homomorphism), and the equality condition is weakened to an a.e.\ equality. 
\end{exa}

Both $\FinStoch$ and its measurable infinite-dimensional analogue $\Stoch$ are a.e.\ modular Markov categories. This was explicitly proved in an earlier version of~\cite{Fr19} (it also follows from results in the published version). Before checking whether $\fdCAlgSPU$ is a.e.\ modular, we present one of our main results relating a.e.\ modularity to two other conditions. 
This theorem also provides a partial converse to Proposition~\ref{thm:Bayesianinverseofdeterministicisadisint}, the latter of which provided a simple condition under which a Bayesian inverse is necessarily a disintegration. 
Furthermore, the theorem provides an alternative criterion for determining whether a subcategory is a.e.\ modular, and it is this fact that will help us prove $\fdCAlgSPU$ is a.e.\ modular. 

\begin{theo}[A.e.\ modularity, Bayesian inversion, and disintegrations abstractly]{thm:aemodbayesdisint}
Let $f,g,q$ be even morphisms in a quantum Markov category such that $g$ is a left disintegration of $(f,p:=g\circ q,q)$ (the left equality Definition~\ref{defn:aemodularity}). Then $g$ is left $q$-a.e.\ modular with respect to $f$ (the right equality in Definition~\ref{defn:aemodularity}) if and only if both the equalities 
\begin{enumerate}[(a)]
\item
\[
\vcenter{\hbox{%
\begin{tikzpicture}[font=\small]
\node[arrow box] (q) at (0,-0.2) {$q$};
\node[copier] (copier) at (0,0.3) {};
\node[arrow box] (g) at (-0.5,0.95) {$g$};
\coordinate (X) at (-0.5,1.5);
\coordinate (Y) at (0.5,1.5);
\draw (0,-0.7) to (q); 
\draw (q) to (copier);
\draw (copier) to[out=165,in=-90] (g);
\draw (copier) to[out=15,in=-90] (Y);
\draw (g) to (X);
\end{tikzpicture}}}
\quad=\quad
\vcenter{\hbox{%
\begin{tikzpicture}[font=\small]
\node[arrow box] (p) at (0,-0.2) {$p$};
\node[copier] (copier) at (0,0.3) {};
\node[arrow box] (f) at (0.5,0.95) {$f$};
\coordinate (X) at (-0.5,1.5);
\coordinate (Y) at (0.5,1.5);
\draw (0,-0.7) to (p); 
\draw (p) to (copier);
\draw (copier) to[out=165,in=-90] (X);
\draw (copier) to[out=15,in=-90] (f);
\draw (f) to (Y);
\end{tikzpicture}}}
\quad,\quad\text{ where }\quad
\vcenter{\hbox{%
\begin{tikzpicture}[font=\small]
\node[arrow box] (p) at (0,0) {$p$};
\draw (0,-0.7) to (p);
\draw (p) to (0,0.7);
\end{tikzpicture}}}
\quad:=\quad
\vcenter{\hbox{%
\begin{tikzpicture}[font=\small]
\node[arrow box] (q) at (0,-0.1) {$q$};
\node[arrow box] (g) at (0,0.8) {$g$};
\draw (0,-0.6) to (q);
\draw (q) to (g);
\draw (g) to (0,1.55);
\end{tikzpicture}}}
\quad, 
\]
\item
and
\[
\vcenter{\hbox{
\begin{tikzpicture}[font=\small,xscale=-1]
\node[arrow box] (p) at (0,-0.3) {$p$};
\node[copier] (c) at (0,0.3) {};
\node[copier] (c2) at (-0.5,0.75) {};
\node[arrow box] (f) at (-1,1.4) {$f$};
\node[arrow box] (e) at (0,1.4) {$f$};
\coordinate (X) at (0.7,1.95);
\coordinate (Y1) at (-1,1.95);
\coordinate (Y2) at (0,1.95);
\draw (0,-0.9) to (p);
\draw (p) to (c);
\draw (c) to[out=15,in=-90] (X);
\draw (c) to[out=165,in=-90] (c2);
\draw (c2) to[out=165,in=-90] (f);
\draw (c2) to[out=15,in=-90] (e);
\draw (f) to (Y1);
\draw (e) to (Y2);
\end{tikzpicture}}}
\quad=\quad
\vcenter{\hbox{
\begin{tikzpicture}[font=\small,xscale=-1]
\node[arrow box] (p) at (0,-0.3) {$p$};
\node[copier] (c) at (0,0.3) {};
\node[copier] (c2) at (-0.5,1.45) {};
\node[arrow box] (f) at (-0.5,0.95) {$f$};
\coordinate (X) at (0.7,1.95);
\coordinate (Y1) at (-1,1.95);
\coordinate (Y2) at (0,1.95);
\draw (0,-0.9) to (p);
\draw (p) to (c);
\draw (c) to[out=15,in=-90] (X);
\draw (c) to[out=165,in=-90] (f);
\draw (c2) to[out=165,in=-90] (-1,1.95);
\draw (c2) to[out=15,in=-90] (0,1.95);
\draw (f) to (c2);
\end{tikzpicture}}}
\]
\end{enumerate}
hold. In words, an even subcategory $\mC$ of a quantum Markov category is left a.e.\ modular if and only if for every left disintegration $g$ of $(f,p,q)$ in $\mC$, $g$ is a left Bayesian inverse for $(f,p,q)$ and $f$ is right $p$-a.e.\ deterministic. 
\end{theo}

\bprf
{\color{white}{you found me!}}

\noindent($\Rightarrow$)
Suppose that $f,g,q$ satisfy the right identity in Definition~\ref{defn:aemodularity}. Then 
\be
\label{eq:bayesfromdisint}
\vcenter{\hbox{%
\begin{tikzpicture}[font=\small]
\node[arrow box] (q) at (0,-0.2) {$q$};
\node[copier] (copier) at (0,0.3) {};
\node[arrow box] (g) at (-0.5,0.95) {$g$};
\coordinate (X) at (-0.5,1.5);
\coordinate (Y) at (0.5,1.5);
\draw (0,-0.7) to (q); 
\draw (q) to (copier);
\draw (copier) to[out=165,in=-90] (g);
\draw (copier) to[out=15,in=-90] (Y);
\draw (g) to (X);
\end{tikzpicture}}}
\quad=\quad
\vcenter{\hbox{
\begin{tikzpicture}[font=\small]
\node[arrow box] (q) at (0,-0.3) {$q$};
\node[copier] (c) at (0,0.3) {};
\node[copier] (c2) at (-0.5,1.55) {};
\node[arrow box] (g) at (-1,2.25) {$g$};
\node[discarder] (X) at (0.7,2.95) {};
\coordinate (Y1) at (0,2.95);
\coordinate (Y2) at (-1,2.95);
\draw (0,-0.9) to (q);
\draw (q) to (c);
\draw (c) to[out=15,in=-90] (X);
\draw (c) to[out=165,in=-90] (c2);
\draw (c2) to[out=15,in=-90] (Y1);
\draw (g) to (Y2);
\draw (c2) to[out=165,in=-90] (g);
\end{tikzpicture}}}
\quad\overset{\text{Defn~\ref{defn:aemodularity}}}{=\joinrel=\joinrel=\joinrel=\joinrel=\joinrel=}\quad
\vcenter{\hbox{
\begin{tikzpicture}[font=\small]
\node[arrow box] (q) at (0,-0.3) {$q$};
\node[copier] (c) at (0,0.3) {};
\node[copier] (c2) at (-0.5,1.55) {};
\node[arrow box] (g) at (-0.5,0.95) {$g$};
\node[arrow box] (f) at (0,2.25) {$f$};
\node[discarder] (X) at (0.7,2.95) {};
\coordinate (Y1) at (0,2.95);
\coordinate (Y2) at (-1,2.95);
\draw (0,-0.9) to (q);
\draw (q) to (c);
\draw (c) to[out=15,in=-90] (X);
\draw (c) to[out=165,in=-90] (g);
\draw (c2) to[out=15,in=-90] (f);
\draw (f) to (Y1);
\draw (c2) to[out=165,in=-90] (Y2);
\draw (g) to (c2);
\end{tikzpicture}}}
\quad=\quad
\vcenter{\hbox{%
\begin{tikzpicture}[font=\small]
\node[arrow box] (q) at (0,-1.0) {$q$};
\node[arrow box] (g) at (0,-0.2) {$g$};
\node[copier] (copier) at (0,0.3) {};
\node[arrow box] (f) at (0.5,0.95) {$f$};
\coordinate (X) at (-0.5,1.5);
\coordinate (Y) at (0.5,1.5);
\draw (0,-1.6) to (q); 
\draw (q) to (g);
\draw (g) to (copier);
\draw (copier) to[out=165,in=-90] (X);
\draw (copier) to[out=15,in=-90] (f);
\draw (f) to (Y);
\end{tikzpicture}}}
\quad=\quad
\vcenter{\hbox{%
\begin{tikzpicture}[font=\small]
\node[arrow box] (p) at (0,-0.2) {$p$};
\node[copier] (copier) at (0,0.3) {};
\node[arrow box] (f) at (0.5,0.95) {$f$};
\coordinate (X) at (-0.5,1.5);
\coordinate (Y) at (0.5,1.5);
\draw (0,-0.7) to (p); 
\draw (p) to (copier);
\draw (copier) to[out=165,in=-90] (X);
\draw (copier) to[out=15,in=-90] (f);
\draw (f) to (Y);
\end{tikzpicture}}}
\ee
and 
\be
\label{eq:aemodularityaedeterministic}
\vcenter{\hbox{
\begin{tikzpicture}[font=\small,xscale=-1]
\node[arrow box] (p) at (0,-0.3) {$p$};
\node[copier] (c) at (0,0.3) {};
\node[copier] (c2) at (-0.5,0.75) {};
\node[arrow box] (f) at (-1,1.4) {$f$};
\node[arrow box] (e) at (0,1.4) {$f$};
\coordinate (X) at (0.7,1.95);
\coordinate (Y1) at (-1,1.95);
\coordinate (Y2) at (0,1.95);
\draw (0,-0.9) to (p);
\draw (p) to (c);
\draw (c) to[out=15,in=-90] (X);
\draw (c) to[out=165,in=-90] (c2);
\draw (c2) to[out=165,in=-90] (f);
\draw (c2) to[out=15,in=-90] (e);
\draw (f) to (Y1);
\draw (e) to (Y2);
\end{tikzpicture}}}
=
\!
\vcenter{\hbox{
\begin{tikzpicture}[font=\small,xscale=-1]
\node[arrow box] (p) at (0,-0.3) {$p$};
\node[copier] (c) at (0,0.3) {};
\node[copier] (c2) at (0.5,0.75) {};
\node[arrow box] (f) at (-0.8,1.4) {$f$};
\node[arrow box] (e) at (0,1.4) {$f$};
\coordinate (X) at (0.9,1.95);
\coordinate (Y1) at (-0.8,1.95);
\coordinate (Y2) at (0,1.95);
\draw (0,-0.9) to (p);
\draw (p) to (c);
\draw (c) to[out=15,in=-90] (c2);
\draw (c) to[out=165,in=-90] (f);
\draw (c2) to[out=165,in=-90] (e);
\draw (c2) to[out=15,in=-90] (X);
\draw (f) to (Y1);
\draw (e) to (Y2);
\end{tikzpicture}}}
\overset{\text{(\ref{eq:bayesfromdisint})}}{=\joinrel=\joinrel=}
\!
\vcenter{\hbox{
\begin{tikzpicture}[font=\small]
\node[arrow box] (q) at (0,-0.3) {$q$};
\node[copier] (c) at (0,0.3) {};
\node[copier] (c2) at (-0.5,1.55) {};
\node[arrow box] (g) at (-0.5,0.95) {$g$};
\node[arrow box] (f) at (0,2.25) {$f$};
\coordinate (X) at (0.7,2.95);
\coordinate (Y1) at (0,2.95);
\coordinate (Y2) at (-1,2.95);
\draw (0,-0.9) to (q);
\draw (q) to (c);
\draw (c) to[out=15,in=-90] (X);
\draw (c) to[out=165,in=-90] (g);
\draw (c2) to[out=15,in=-90] (f);
\draw (f) to (Y1);
\draw (c2) to[out=165,in=-90] (Y2);
\draw (g) to (c2);
\end{tikzpicture}}}
\!\!
\underset{\text{\ref{defn:aemodularity}}}{\overset{\text{Defn}}{=\joinrel=\joinrel=}}
\!\!
\vcenter{\hbox{
\begin{tikzpicture}[font=\small]
\node[arrow box] (q) at (0,-0.3) {$q$};
\node[copier] (c) at (0,0.3) {};
\node[copier] (c2) at (-0.5,1.55) {};
\node[arrow box] (g) at (-1,2.25) {$g$};
\coordinate (X) at (0.7,2.95);
\coordinate (Y1) at (0,2.95);
\coordinate (Y2) at (-1,2.95);
\draw (0,-0.9) to (q);
\draw (q) to (c);
\draw (c) to[out=15,in=-90] (X);
\draw (c) to[out=165,in=-90] (c2);
\draw (c2) to[out=15,in=-90] (Y1);
\draw (g) to (Y2);
\draw (c2) to[out=165,in=-90] (g);
\end{tikzpicture}}}
\!
=
\vcenter{\hbox{
\begin{tikzpicture}[font=\small,xscale=-1]
\node[arrow box] (q) at (0,-0.3) {$q$};
\node[copier] (c) at (0,0.3) {};
\node[copier] (c2) at (-0.5,1.45) {};
\node[arrow box] (g) at (0.6,1.05) {$g$};
\coordinate (X) at (0.6,1.95);
\coordinate (Y1) at (-1,1.95);
\coordinate (Y2) at (0,1.95);
\draw (0,-0.9) to (q);
\draw (q) to (c);
\draw (c) to[out=165,in=-90] (c2);
\draw (c) to[out=15,in=-90] (g);
\draw (c2) to[out=165,in=-90] (-1,1.95);
\draw (c2) to[out=15,in=-90] (0,1.95);
\draw (g) to (X);
\end{tikzpicture}}}
\!
\overset{\text{(\ref{eq:bayesfromdisint})}}{=\joinrel=\joinrel=}
\!
\vcenter{\hbox{
\begin{tikzpicture}[font=\small,xscale=-1]
\node[arrow box] (p) at (0,-0.3) {$p$};
\node[copier] (c) at (0,0.3) {};
\node[copier] (c2) at (-0.5,1.45) {};
\node[arrow box] (f) at (-0.5,0.95) {$f$};
\coordinate (X) at (0.7,1.95);
\coordinate (Y1) at (-1,1.95);
\coordinate (Y2) at (0,1.95);
\draw (0,-0.9) to (p);
\draw (p) to (c);
\draw (c) to[out=15,in=-90] (X);
\draw (c) to[out=165,in=-90] (f);
\draw (c2) to[out=165,in=-90] (-1,1.95);
\draw (c2) to[out=15,in=-90] (0,1.95);
\draw (f) to (c2);
\end{tikzpicture}}}
\ee

\noindent($\Leftarrow$)
The proof that $f,g,q$ satisfy conditions (a) and (b) implies they satisfy the right identity in Definition~\ref{defn:aemodularity} follows similar manipulations to the ones just witnessed---it is obtained by cyclically permuting the list of equalities in~(\ref{eq:aemodularityaedeterministic}). 
\eprf

\begin{cor}[Bayesian inversion and disintegration in an a.e.\ modular category]{cor:binversiondisint}
Let $\Theta\xstoch{p}X$ and $X\xstoch{f}Y$ be morphisms in a $*$-preserving a.e.\ modular subcategory $\mC$ of a quantum Markov category. Set $q:=f\circ p$. Then the following conditions are equivalent. 
\begin{enumerate}[(a)]
\item
\label{item:gdisint}
A disintegration of $(f,p,q)$ exists. 
\item
\label{item:gbayesandfdet}
$f$ is $p$-a.e.\ deterministic and a Bayesian inverse of $(f,p,q)$ exists.  
\end{enumerate}
Furthermore, in either of these cases, any disintegration of $(f,p,q)$ is $q$-a.e.\ equivalent to any Bayesian inverse of $(f,p,q)$. In particular, any two disintegrations are $q$-a.e.\ equivalent. 
\end{cor}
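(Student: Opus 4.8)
The plan is to reduce everything to Theorem~\ref{thm:aemodbayesdisint}, which is precisely the tool designed for this purpose, and then to use the a.e.\ uniqueness results already established (Lemma~\ref{lem:bayesaeunique} and Lemma~\ref{prop:hfpkgs}) for the final equivalence-class statements. Since $\mC$ is $*$-preserving, left and right a.e.\ notions coincide by Corollary~\ref{cor:leftequalsrightaeequivalence} and the $*$-preserving symmetry built into the definitions of disintegration, Bayesian inverse, a.e.\ determinism, and a.e.\ modularity; I will invoke this at the outset so that the entire argument can be carried out without tracking left/right distinctions.

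First I would prove $(\ref{item:gdisint})\Rightarrow(\ref{item:gbayesandfdet})$. Suppose $g$ is a disintegration of $(f,p,q)$. By definition $p = g\circ q$, so $g$ is a disintegration of $(f, p := g\circ q, q)$, which is exactly the left-hand hypothesis of Definition~\ref{defn:aemodularity}. Because $\mC$ is a.e.\ modular, $g$ is $q$-a.e.\ modular with respect to $f$, so the right-hand equality of Definition~\ref{defn:aemodularity} holds. Now Theorem~\ref{thm:aemodbayesdisint} applies directly: its conclusion states that under a.e.\ modularity, $g$ is a Bayesian inverse of $(f,p,q)$ and $f$ is $p$-a.e.\ deterministic. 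This immediately gives $(\ref{item:gbayesandfdet})$, witnessed by the same morphism $g$.

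Next I would prove $(\ref{item:gbayesandfdet})\Rightarrow(\ref{item:gdisint})$. Suppose $f$ is $p$-a.e.\ deterministic and $g$ is a Bayesian inverse of $(f,p,q)$. Since $f$ is $p$-a.e.\ deterministic and $g$ is a Bayes map for $(f,p,q)$, Proposition~\ref{thm:Bayesianinverseofdeterministicisadisint} applies (noting all morphisms are $*$-preserving, so the left/right adjectives are dropped) and concludes that $g$ is a disintegration of $(f,p,q)$, giving $(\ref{item:gdisint})$. This direction does not even require a.e.\ modularity — it is a pure application of the earlier proposition — which is worth remarking on, since it shows the two hypotheses in $(\ref{item:gbayesandfdet})$ are together stronger than needed for this implication but are exactly what a.e.\ modularity returns in the other direction.

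Finally, for the uniqueness statements: given a disintegration $g$ and a Bayesian inverse $g'$ of $(f,p,q)$, the forward direction above shows $g$ is \emph{also} a Bayesian inverse of $(f,p,q)$. Then both $g$ and $g'$ are Bayesian inverses of the same data, so Lemma~\ref{lem:bayesaeunique}(1) gives $g \aeequals{q} g'$ (the $*$-preserving hypothesis makes the left and right a.e.\ equalities coincide, yielding genuine $q$-a.e.\ equivalence). The statement that any two disintegrations are $q$-a.e.\ equivalent follows by taking $g'$ to be a second disintegration and applying the same reasoning, since each disintegration is a Bayesian inverse. The main obstacle I anticipate is purely bookkeeping rather than conceptual: I must verify carefully that the $*$-preserving hypothesis on $\mC$ legitimately collapses all the left/right variants in Theorem~\ref{thm:aemodbayesdisint}, Proposition~\ref{thm:Bayesianinverseofdeterministicisadisint}, and Lemma~\ref{lem:bayesaeunique} simultaneously, so that a single unadorned notion of disintegration, Bayesian inverse, and a.e.\ equivalence can be used throughout. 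Once that symmetry is secured at the beginning, each implication is a direct citation.
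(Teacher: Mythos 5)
Your proposal is correct and follows essentially the same route as the paper: Theorem~\ref{thm:aemodbayesdisint} gives (\ref{item:gdisint})$\Rightarrow$(\ref{item:gbayesandfdet}), Proposition~\ref{thm:Bayesianinverseofdeterministicisadisint} gives the converse, and the a.e.\ uniqueness claims follow from Lemma~\ref{lem:bayesaeunique} once every disintegration is recognized as a Bayesian inverse. Your explicit handling of the left/right collapse under the $*$-preserving hypothesis, and your observation that (\ref{item:gbayesandfdet})$\Rightarrow$(\ref{item:gdisint}) needs no a.e.\ modularity, are both accurate and make explicit what the paper leaves implicit.
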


\bprf
Proposition~\ref{thm:Bayesianinverseofdeterministicisadisint} provides the implication (\ref{item:gbayesandfdet})$\Rightarrow$(\ref{item:gdisint}). Theorem~\ref{thm:aemodbayesdisint} provides the implication (\ref{item:gdisint})$\Rightarrow$(\ref{item:gbayesandfdet}). 
\eprf

\vspace{-1mm}
\begin{rmk}[Cho--Jacobs' disintegration continued]{rmk:CJdisintcontinued}
We now restate and prove the claim made in Remark~\ref{rmk:ChoJacobsDisintegrations} regarding the relationship to Cho and Jacob's version of disintegration in the context of a $*$-preserving a.e.\ modular category and supposing that certain disintegrations exist (the ones needed to make sense of what follows). Let $I\xstoch{s}X\times Y$ be a state with $X$ marginal $I\xstoch{p}X$ and $Y$ marginal $I\xstoch{q}Y.$ Let $X\xstoch{\overline{\pi_{X}}}X\times Y$ and $Y\xstoch{\overline{\pi_{Y}}}X\times Y$ be disintegrations of $(\pi_{X},s,p)$ and $(\pi_{Y},s,q)$, respectively. Then by Theorem~\ref{thm:aemodbayesdisint}, $\overline{\pi_{X}}$ and $\overline{\pi_{Y}}$ are Bayesian inverses of $(\pi_{X},s,p)$ and $(\pi_{Y},s,q)$, respectively, i.e.\ 
\[
\vcenter{\hbox{%
\begin{tikzpicture}[font=\small]
\node[state] (q) at (0,0) {$p$};
\node[copier] (copier) at (0,0.3) {};
\node[arrow box] (g) at (-0.5,0.95) {\;$\overline{\pi_{X}}$\;};
\coordinate (X) at (-0.7,1.6);
\coordinate (Y) at (-0.3,1.6);
\coordinate (X2) at (0.5,1.6);
\draw (q) to (copier);
\draw (copier) to[out=165,in=-90] (g);
\draw (copier) to[out=15,in=-90] (X2);
\draw (g.north)++(-0.2,0) to (X);
\draw (g.north)++(0.2,0) to (Y);
\path[scriptstyle]
node at (-0.9,1.55) {$X$}
node at (-0.1,1.55) {$Y$}
node at (0.7,1.55) {$X$};
\end{tikzpicture}}}
\;
=
\;
%
\vcenter{\hbox{%
\begin{tikzpicture}[font=\small]
\coordinate (di1) at (-0.75,1.6);
\coordinate (di2) at (0.75,1.05);
\node[discarder] at (di2) {};
\node[state] (s) at (0,0) {\;$s$\;};
\node[copier] (c1) at (-0.25,0.25) {};
\node[copier] (c2) at (0.25,0.25) {};
\draw (s) ++(-0.25,0) to (c1);
\draw (s) ++(0.25,0) to (c2);
\draw (c1) to[out=155,in=-90] (-0.75,0.75)
to (di1);
\draw (c2) to[out=25,in=-90] (di2);
\draw (c1) to[out=25,in=-90] (0.35,0.75) to (0.35,1.6);
\draw (c2) to[out=155,in=-90] (-0.35,0.75) to (-0.35,1.6);
\end{tikzpicture}}}
%
\;
=
\;
%
\vcenter{\hbox{%
\begin{tikzpicture}[font=\small]
\coordinate (di1) at (-0.75,1.6);
\coordinate (di2) at (0.75,1.05);
\node[state] (s) at (0,0) {\;$s$\;};
\node[copier] (c1) at (-0.25,0.25) {};
%
\draw (s) ++(-0.25,0) to (c1);
\draw (s) ++(0.25,0) to[out=90,in=-90] (-0.35,1.6);
\draw (c1) to[out=155,in=-90] (-0.75,0.75)
to (di1);
\draw (c1) to[out=25,in=-90] (0.35,0.75) to (0.35,1.6);
\end{tikzpicture}}}
\quad\text{ and }\quad
\vcenter{\hbox{%
\begin{tikzpicture}[font=\small]
\node[state] (q) at (0,0) {$q$};
\node[copier] (copier) at (0,0.3) {};
\node[arrow box] (g) at (-0.5,0.95) {\;$\overline{\pi_{Y}}$\;};
\coordinate (X) at (-0.7,1.6);
\coordinate (Y) at (-0.3,1.6);
\coordinate (X2) at (0.5,1.6);
\draw (q) to (copier);
\draw (copier) to[out=165,in=-90] (g);
\draw (copier) to[out=15,in=-90] (X2);
\draw (g.north)++(-0.2,0) to (X);
\draw (g.north)++(0.2,0) to (Y);
\path[scriptstyle]
node at (-0.9,1.55) {$X$}
node at (-0.1,1.55) {$Y$}
node at (0.7,1.55) {$Y$};
\end{tikzpicture}}}
%
\;
=
\;
%
\vcenter{\hbox{%
\begin{tikzpicture}[font=\small]
\coordinate (di1) at (-0.75,1.6);
\coordinate (di2) at (0.75,1.6);
\coordinate (d) at (0.35,1.05);
\node[discarder] at (d) {};
\node[state] (s) at (0,0) {\;$s$\;};
\node[copier] (c1) at (-0.25,0.25) {};
\node[copier] (c2) at (0.25,0.25) {};
\draw (s) ++(-0.25,0) to (c1);
\draw (s) ++(0.25,0) to (c2);
\draw (c1) to[out=155,in=-90] (-0.75,0.75)
to (di1);
\draw (c2) to[out=25,in=-90] (di2);
\draw (c1) to[out=25,in=-90] (0.35,0.75) to (d);
\draw (c2) to[out=155,in=-90] (-0.35,0.75) to (-0.35,1.6);
\end{tikzpicture}}}
%
\;
=
\;
\vcenter{\hbox{%
\begin{tikzpicture}[font=\small]
\coordinate (di1) at (-0.75,1.6);
\coordinate (di2) at (0.75,1.6);
\coordinate (d) at (0.35,1.05);
\node[state] (s) at (0,0) {\;$s$\;};
\node[copier] (c2) at (0.25,0.25) {};
\draw (s) ++(-0.25,0) to[out=90,in=-90] (di1);
\draw (s) ++(0.25,0) to (c2);
to (di1);
\draw (c2) to[out=25,in=-90] (di2);
\draw (c2) to[out=155,in=-90] (-0.35,1.6);
\end{tikzpicture}}}
\;\;.
\]
Therefore, defining 
\[
\vcenter{\hbox{
\begin{tikzpicture}[font=\small]
\node[arrow box] (p) at (0,0) {$f$};
\coordinate (X) at (0,1.0);
\draw (0,-1.0) to (p);
\draw (p) to (X);
\end{tikzpicture}}}
\;
:=
\;
\vcenter{\hbox{
\begin{tikzpicture}[font=\small]
\node[arrow box] (p) at (0,0) {\;\;$\overline{\pi_{X}}$\;\;};
\coordinate (X) at (-0.25,0.6);
\coordinate (Y) at (0.25,1.0);
\node[discarder] at (X) {};
\draw (0,-1.0) to (p);
\draw (p.north)++(-0.25,0) to (X);
\draw (p.north)++(0.25,0) to (Y);
\end{tikzpicture}}}
\qquad\text{ and }\qquad
\vcenter{\hbox{
\begin{tikzpicture}[font=\small]
\node[arrow box] (p) at (0,0) {$g$};
\coordinate (X) at (0,1.0);
\draw (0,-1.0) to (p);
\draw (p) to (X);
\end{tikzpicture}}}
\;
:=
\;
\vcenter{\hbox{
\begin{tikzpicture}[font=\small]
\node[arrow box] (p) at (0,0) {\;\;$\overline{\pi_{Y}}$\;\;};
\coordinate (X) at (-0.25,1.0);
\coordinate (Y) at (0.25,0.6);
\node[discarder] at (Y) {};
\draw (0,-1.0) to (p);
\draw (p.north)++(-0.25,0) to (X);
\draw (p.north)++(0.25,0) to (Y);
\end{tikzpicture}}}
\]
as in Remark~\ref{rmk:ChoJacobsDisintegrations} shows that these maps satisfy
\[
\vcenter{\hbox{%
\begin{tikzpicture}[font=\small]
\node[state] (q) at (0,0) {$q$};
\node[copier] (copier) at (0,0.3) {};
\node[arrow box] (g) at (-0.5,0.95) {$g$};
\coordinate (X) at (-0.5,1.5);
\coordinate (Y) at (0.5,1.5);
\draw (q) to (copier);
\draw (copier) to[out=165,in=-90] (g);
\draw (copier) to[out=15,in=-90] (Y);
\draw (g) to (X);
\path[scriptstyle]
node at (-0.7,1.45) {$X$}
node at (0.7,1.45) {$Y$};
\end{tikzpicture}}}
\quad
=
\quad
\vcenter{\hbox{%
\begin{tikzpicture}[font=\small]
\node[state] (q) at (0,0) {$q$};
\node[copier] (copier) at (0,0.3) {};
\node[arrow box] (g) at (-0.5,0.95) {\;$\overline{\pi_{Y}}$\;};
\coordinate (X) at (-0.7,2.1);
\coordinate (Y) at (-0.3,1.6);
\coordinate (X2) at (0.5,2.1);
\node[discarder] at (Y) {};
\draw (q) to (copier);
\draw (copier) to[out=165,in=-90] (g);
\draw (copier) to[out=15,in=-90] (X2);
\draw (g.north)++(-0.2,0) to (X);
\draw (g.north)++(0.2,0) to (Y);
\end{tikzpicture}}}
\quad
=
\quad
\vcenter{\hbox{%
\begin{tikzpicture}[font=\small]
\coordinate (di1) at (-0.75,1.6);
\coordinate (di2) at (0.75,1.6);
\coordinate (d) at (-0.35,1.05);
\node[state] (s) at (0,0) {\;$s$\;};
\node[copier] (c2) at (0.25,0.25) {};
\node[discarder] at (d) {};
\draw (s) ++(-0.25,0) to[out=90,in=-90] (di1);
\draw (s) ++(0.25,0) to (c2);
to (di1);
\draw (c2) to[out=25,in=-90] (di2);
\draw (c2) to[out=155,in=-90] (d);
\end{tikzpicture}}}
\quad
=
\quad
\vcenter{\hbox{%
\begin{tikzpicture}[font=\small]
\node[state] (omega) at (0,0) {\;$s$\;};
\coordinate (X) at (-0.25,0.55) {};
\coordinate (Y) at (0.25,0.55) {};
\draw (omega) ++(-0.25, 0) to (X);
\draw (omega) ++(0.25, 0) to (Y);
\path[scriptstyle]
node at (-0.45,0.4) {$X$}
node at (0.45,0.4) {$Y$};
\end{tikzpicture}}}
\]
and
\[
\vcenter{\hbox{%
\begin{tikzpicture}[font=\small]
\node[state] (p) at (0,0) {$p$};
\node[copier] (copier) at (0,0.3) {};
\node[arrow box] (f) at (0.5,0.95) {$f$};
\coordinate (X) at (-0.5,1.5);
\coordinate (Y) at (0.5,1.5);
\draw (p) to (copier);
\draw (copier) to[out=165,in=-90] (X);
\draw (copier) to[out=15,in=-90] (f);
\draw (f) to (Y);
\path[scriptstyle]
node at (-0.7,1.45) {$X$}
node at (0.7,1.45) {$Y$};
\end{tikzpicture}}}
\quad
=
\quad
\vcenter{\hbox{
\begin{tikzpicture}[font=\small]
\node[state] (p) at (0,-0.6) {$p$};
\node[star] (s1) at (0,-1.5) {};
\node[copier] (copier) at (0,-0.15) {};
\coordinate (R) at (0.5,0.3) {};
\coordinate (Ls) at (-0.5,1.6) {};
\coordinate (Rs) at (0.5,1.6) {};
\node[star] (s2) at (-0.5,1.7) {};
\node[star] (s3) at (0.5,1.7) {};
\coordinate (r) at (0.5,0.5);
\node[arrow box] (h) at (-0.5,0.5) {$f$};
\coordinate (X) at (0.5,2.1);
\coordinate (Y) at (-0.5,2.1);
\draw[dashed] (0,-1.9) to (s1);
\draw[dashed] (s1) to (0,-1.15);
\draw (p) to (copier);
\draw (copier) to [out=15,in=-90] (r);
\draw (r) to [out=90,in=-90] (Ls);
\draw (h) to [out=90,in=-90] (Rs);
\draw (Ls) to (s2);
\draw (Rs) to (s3);
\draw (s3) to (X);
\draw (copier) to[out=165,in=-90] (h);
\draw (s2) to (Y);
\end{tikzpicture}}}
\quad
=
\quad
\vcenter{\hbox{
\begin{tikzpicture}[font=\small]
\node[state] (p) at (0,-0.6) {$p$};
\node[star] (s1) at (0,-1.5) {};
\node[copier] (copier) at (0,-0.15) {};
\coordinate (R) at (0.5,0.3) {};
\coordinate (Ls) at (-0.5,1.6) {};
\coordinate (Rs) at (0.5,1.6) {};
\coordinate (d) at (0,2.0);
\node[discarder] at (d) {};
\node[star] (s4) at (0,1.7) {};
\node[star] (s2) at (-0.5,1.7) {};
\node[star] (s3) at (0.5,1.7) {};
\coordinate (r) at (0.5,0.5);
\node[arrow box] (h) at (-0.5,0.5) {\;\;$\overline{\pi_{X}}$\;\;};
\coordinate (X) at (0.5,2.4);
\coordinate (X2) at (0,2.4);
\coordinate (Y) at (-0.5,2.4);
\draw[dashed] (0,-1.9) to (s1);
\draw[dashed] (s1) to (0,-1.15);
\draw (p) to (copier);
\draw (copier) to [out=15,in=-90] (r);
\draw (r) to [out=90,in=-90] (Ls);
\draw (h.north)++(0.25,0) to [out=90,in=-90] (Rs);
\draw (h.north)++(-0.25,0) to [out=90,in=-90] (s4);
\draw[dashed] (0,1.8) to (s4);
\draw (s4) to (d);
\draw (Ls) to (s2);
\draw (Rs) to (s3);
\draw (s3) to (X);
\draw (copier) to[out=165,in=-90] (h);
\draw (s2) to (Y);
\end{tikzpicture}}}
\quad
=
\quad
\vcenter{\hbox{%
\begin{tikzpicture}[font=\small]
\coordinate (d) at (0,2.3);
\coordinate (M1) at (-0.75,1.2);
\coordinate (M2) at (-0.25,1.0);
\coordinate (M3) at (0.35,1.2);
\node[star] (s4) at (0,2.1) {};
\node[star] (s2) at (-0.5,2.1) {};
\node[star] (s3) at (0.5,2.1) {};
\node[star] (s1) at (0,-0.9) {};
\node[discarder] at (d) {};
\node[state] (s) at (0,0) {\;$s$\;};
\node[copier] (c1) at (-0.25,0.25) {};
%
\draw[dashed] (0,-1.2) to (s1);
\draw[dashed] (s1) to (0,-0.55);
\draw (s) ++(-0.25,0) to (c1);
\draw (s) ++(0.25,0) to[out=90,in=-90] (M2) to[out=90,in=-90] (s3);
\draw (s3) to (0.5,2.7);
\draw (c1) to[out=155,in=-90] (-0.75,0.75)
to (M1) to[out=90,in=-90] (s4);
\draw (s4) to (d);
\draw (c1) to[out=25,in=-90] (0.35,0.75) to (M3) to[out=90,in=-90] (s2);
\draw (s2) to (-0.5,2.7);
\end{tikzpicture}}}
\quad
=
\quad
\vcenter{\hbox{%
\begin{tikzpicture}[font=\small]
\node[state] (omega) at (0,0) {\;$s$\;};
\coordinate (X) at (-0.25,0.55) {};
\coordinate (Y) at (0.25,0.55) {};
\draw (omega) ++(-0.25, 0) to (X);
\draw (omega) ++(0.25, 0) to (Y);
\path[scriptstyle]
node at (-0.45,0.4) {$X$}
node at (0.45,0.4) {$Y$};
\end{tikzpicture}}}
\;\;,
\]
which are the CJ disintegration conditions. Note that the procedure used here also constructs Bayesian inverses from disintegrating the projection maps and composing, just as is in the measure-theoretic case (cf.\ \cite[Section 3.2]{CuSt14}). 
\end{rmk}

\vspace{-2mm}
\begin{rmk}[Relation to the abstract Fisher--Neyman factorization theorem]{rmk:fisherneyman}
Fritz' formulation of the Fisher--Neyman factorization theorem (Theorem~14.5 in~\cite{Fr19}) is an immediate consequence of our Theorem~\ref{thm:aemodbayesdisint} in the special case where one assumes $f$ is deterministic. Hence, our result can be viewed as a stronger version of this factorization theorem. Note that although Fritz assumed strict positivity (cf.\ Definition~\ref{defn:strictpositive} below), the proof follows through using a.e.\ modularity, which was in fact the axiom used in an earlier version of his paper. 
\end{rmk}

We now use this theorem to prove one of our main theorems showing that SPU maps between $C^*$-algebras form an a.e.\ modular category. 

\begin{theo}[$\fdCAlgSPU$ is a.e.\ modular]{thm:SPUmodular}
$\fdCAlgSPU$ and $\fdCAlgCPU$ are a.e.\ modular subcategories of $\fdCAlgUY$. 
\end{theo}

\vspace{-3mm}

To provide a proof of this, we need two lemmas, which are interesting in their own right. 

\vspace{-1mm}
\begin{lem}[The relative Multiplication Lemma]{lem:relmulttheorem}
Let $\mA\xstoch{G}\mB$ and $\mB\xstoch{\xi}\mC$ be morphisms in $\fdCAlgSPU$.%
\footnote{Technically, we only need $\xi$ to be a positive (not necessarily unital) map. It need not be Schwarz-positive.} 
Suppose that%
\footnote{This is equivalent to $G(A^*A)-G(A)^*G(A)\in\mathcal{N}_{\xi}$ since $G$ is SPU by Lemma~\ref{lem:minilemmaleftideal}.}
$G(A^*A)-G(A)^*G(A)\in\ker\xi$ for some $A\in\mA$. Then%
\footnote{Notice the quantifiers here and how this claim differs from the weak a.e.\ Multiplication Lemma (Lemma~\ref{lem:Attalslemmaaegeneral}).}
\[
G(A^*D)-G(A)^*G(D)\;,\; G(D^*A)-G(D)^*G(A)\in\ker\xi\qquad\forall\; D\in\mA.
\]
In terms of string diagrams (in the larger quantum CD category $\fdCAlgY$), if 
\[
\vcenter{\hbox{
\begin{tikzpicture}[font=\small]
\node[arrow box] (p) at (0,0.1) {$\xi$};
\node[copier] (c2) at (0,0.75) {};
\node[arrow box] (f) at (0.7,1.4) {$G$};
\node[arrow box] (e) at (-0.7,1.4) {$G$};
\node[effect] (Y1) at (0.7,1.95) {$A$};
\node[effect] (Y2) at (-0.7,1.95) {$A^*$};
\draw (0,-0.5) to (p);
\draw (p) to (c2);
\draw (c2) to[out=15,in=-90] (f);
\draw (c2) to[out=165,in=-90] (e);
\draw (f) to (Y1);
\draw (e) to (Y2);
\end{tikzpicture}}}
\quad=\quad
\vcenter{\hbox{
\begin{tikzpicture}[font=\small]
\node[arrow box] (p) at (0,0.1) {$\xi$};
\node[copier] (c2) at (0,1.45) {};
\node[arrow box] (f) at (0,0.95) {$G$};
\node[effect] (Y1) at (0.7,1.95) {$A$};
\node[effect] (Y2) at (-0.7,1.95) {$A^*$};
\draw (0,-0.5) to (p);
\draw (p) to (f);
\draw (c2) to[out=15,in=-90] (Y1);
\draw (c2) to[out=165,in=-90] (Y2);
\draw (f) to (c2);
\end{tikzpicture}}}
\]
for some $A\in\mA$, then
\[
\vcenter{\hbox{
\begin{tikzpicture}[font=\small]
\node[arrow box] (p) at (0,0.1) {$\xi$};
\node[copier] (c2) at (0,0.75) {};
\node[arrow box] (f) at (0.7,1.4) {$G$};
\node[arrow box] (e) at (-0.7,1.4) {$G$};
\coordinate (Y1) at (0.7,2.55);
\node[effect] (Y2) at (-0.7,1.95) {$A^*$};
\draw (0,-0.5) to (p);
\draw (p) to (c2);
\draw (c2) to[out=15,in=-90] (f);
\draw (c2) to[out=165,in=-90] (e);
\draw (f) to (Y1);
\draw (e) to (Y2);
\end{tikzpicture}}}
\quad=\quad
\vcenter{\hbox{
\begin{tikzpicture}[font=\small]
\node[arrow box] (p) at (0,0.1) {$\xi$};
\node[copier] (c2) at (0,1.45) {};
\node[arrow box] (f) at (0,0.95) {$G$};
\coordinate (Y1) at (0.7,2.55);
\node[effect] (Y2) at (-0.7,1.95) {$A^*$};
\draw (0,-0.5) to (p);
\draw (p) to (f);
\draw (c2) to[out=15,in=-90] (Y1);
\draw (c2) to[out=165,in=-90] (Y2);
\draw (f) to (c2);
\end{tikzpicture}}}
\quad\text{ and }\quad
\vcenter{\hbox{
\begin{tikzpicture}[font=\small]
\node[arrow box] (p) at (0,0.1) {$\xi$};
\node[copier] (c2) at (0,0.75) {};
\node[arrow box] (f) at (0.7,1.4) {$G$};
\node[arrow box] (e) at (-0.7,1.4) {$G$};
\node[effect] (Y1) at (0.7,1.95) {$A$};
\coordinate (Y2) at (-0.7,2.55) {};
\draw (0,-0.5) to (p);
\draw (p) to (c2);
\draw (c2) to[out=15,in=-90] (f);
\draw (c2) to[out=165,in=-90] (e);
\draw (f) to (Y1);
\draw (e) to (Y2);
\end{tikzpicture}}}
\quad=\quad
\vcenter{\hbox{
\begin{tikzpicture}[font=\small]
\node[arrow box] (p) at (0,0.1) {$\xi$};
\node[copier] (c2) at (0,1.45) {};
\node[arrow box] (f) at (0,0.95) {$G$};
\node[effect] (Y1) at (0.7,1.95) {$A$};
\coordinate (Y2) at (-0.7,2.55) {};
\draw (0,-0.5) to (p);
\draw (p) to (f);
\draw (c2) to[out=15,in=-90] (Y1);
\draw (c2) to[out=165,in=-90] (Y2);
\draw (f) to (c2);
\end{tikzpicture}}}
\]
\end{lem}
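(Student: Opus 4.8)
The plan is to prove the relative Multiplication Lemma (Lemma~\ref{lem:relmulttheorem}) by essentially relativizing the proof of the standard Multiplication Lemma (Lemma~\ref{lem:multiplicationtheorem}) along the positive map $\xi$. The standard lemma says that if $G(A^*A)=G(A)^*G(A)$ exactly, then $A$ lies in the multiplicative domain of $G$; here the hypothesis is only that $G(A^*A)-G(A)^*G(A)\in\ker\xi$, so I expect the conclusion to hold only after applying $\xi$. The essential mechanism is a Cauchy--Schwarz / positivity argument in the GNS-type sesquilinear form built from the map $\xi$.

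First I would translate the diagrammatic statement back into algebra using Convention~\ref{conv:translating}: the displayed string-diagram hypothesis is precisely $\xi\big(G(A^*A)\big)=\xi\big(G(A)^*G(A)\big)$ for the given $A$, i.e.\ $G(A^*A)-G(A)^*G(A)\in\ker\xi$, and the two desired conclusions are $\xi\big(G(A^*D)\big)=\xi\big(G(A)^*G(D)\big)$ and $\xi\big(G(D^*A)\big)=\xi\big(G(D)^*G(A)\big)$ for all $D\in\mA$. Next I would introduce the positive sesquilinear form $\langle X,Y\rangle_\xi := \xi\big(G(X^*Y)\big)-\xi\big(G(X)^*G(Y)\big)$ on $\mA$. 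This form is well-defined and positive semi-definite: positivity of $\langle X,X\rangle_\xi$ follows because $G$ is SPU (so $G(X^*X)-G(X)^*G(X)\ge0$ by Kadison--Schwarz, exactly as in the hierarchy of Example~\ref{exa:positivehierarchy}) and $\xi$ is a positive map, hence $\xi$ sends this positive element to a positive element. The hypothesis on $A$ says exactly that $\langle A,A\rangle_\xi=0$.

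The key step is then the Cauchy--Schwarz inequality for positive semi-definite sesquilinear forms: $\big|\langle A,D\rangle_\xi\big|^2\le \langle A,A\rangle_\xi\,\langle D,D\rangle_\xi$. Since $\langle A,A\rangle_\xi=0$, this forces $\langle A,D\rangle_\xi=0$ for every $D\in\mA$, which is the first conclusion; applying the same argument with the roles of the arguments in the form reversed (or equivalently using $\langle D,A\rangle_\xi=\overline{\langle A,D\rangle_\xi}=0$) yields the second conclusion. A mild technical point I would verify is that $\langle\,\cdot\,,\,\cdot\,\rangle_\xi$ is genuinely a semi-inner product valued in a possibly non-commutative $\mC$; when $\mC\ne\C$ I would instead compose with an arbitrary state on $\mC$ to reduce to the scalar case, or use the operator-valued Cauchy--Schwarz inequality directly, noting (as the footnote in the statement permits) that only positivity of $\xi$ is needed.

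The main obstacle I anticipate is handling the general codomain $\mC$ rather than $\mC=\C$, since the operator-valued Cauchy--Schwarz inequality and the deduction $\langle A,A\rangle_\xi=0 \Rightarrow \langle A,D\rangle_\xi=0$ are slightly more delicate when values live in a noncommutative algebra; the cleanest route is probably to test against all states $\mC\xstoch{\rho}\C$, apply the scalar result to each $\rho\circ\xi$ (still a positive map), and conclude since states separate points of a $C^*$-algebra. A secondary point is to make sure the positivity $\langle X,X\rangle_\xi\ge0$ really uses SPU of $G$ and positivity (not unitality) of $\xi$, matching the relaxed hypotheses flagged in the footnotes. Once these are checked, the diagrammatic reformulation of the two conclusions follows immediately by reversing the translation, completing the proof.
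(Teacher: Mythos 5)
Your proof is correct, and it takes a recognizably different route from the paper's. The paper proves the claim bare-handed: it applies Kadison--Schwarz to $A+\lambda D$, uses positivity of $\xi$ to get a $\mC$-valued inequality, kills the $\lambda^2$ term by dividing by $\lambda$ and letting $\lambda\to0$, and then isolates $\xi\big(G(A^*D)-G(A)^*G(D)\big)$ via the phase substitutions $A\mapsto iA$, $D\mapsto \pm iD$, concluding from the fact that $c\ge0$ and $-c\ge0$ force $c=0$ in $\mC$. Your argument packages this exact mechanism abstractly: the form $\langle X,Y\rangle_\xi:=\xi\big(G(X^*Y)\big)-\xi\big(G(X)^*G(Y)\big)$ is sesquilinear, its diagonal is non-negative by Kadison--Schwarz for $G$ plus positivity of $\xi$, and Cauchy--Schwarz for positive semi-definite sesquilinear forms does the rest---indeed, the standard proof of that Cauchy--Schwarz inequality is precisely the paper's first-order perturbation and phase-rotation computation, so the two proofs have the same computational core. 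What your packaging buys is transparency and reusability; what the paper's buys is that it works directly with operator inequalities in $\mC$ (using closedness of the positive cone under the $\lambda\to0$ limit), so no reduction to scalars is ever needed. Your state-testing reduction for general $\mC$ is nevertheless sound, since $\rho\circ\xi$ is a positive functional for every state $\rho$ and states separate points of a $C^*$-algebra; I would only flag two small points. First, drop the hedge about invoking an ``operator-valued Cauchy--Schwarz inequality directly''---that is the one genuinely delicate alternative, and your state-testing route is the clean one. Second, for the Hermitian symmetry $\langle D,A\rangle_\xi=\overline{\langle A,D\rangle_\xi}$ used for the second conclusion, you can argue either by polarization (real diagonal implies Hermitian) or, more directly, from the fact that positive maps are $*$-preserving (Example~\ref{a007}), so $\overline{(\rho\circ\xi)\big(G(A^*D)\big)}=(\rho\circ\xi)\big(G(D^*A)\big)$ and likewise for the second term.
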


\bprf
[Proof of Lemma~\ref{lem:relmulttheorem}]
Fix $D\in\mA$ and $\l>0$. Then 
\be
G\big((A+\l D)^*(A+\l D)\big)-G(A+\l D)^*G(A+\l D)\ge0
\ee
by Kadison--Schwarz for $G$. Hence 
\be
\begin{split}
0&\le\xi\Big(G\big((A+\l D)^*(A+\l D)\big)-G(A+\l D)^*G(A+\l D)\Big)\quad\text{ since $\xi$ is positive}\\
&=\xi\Big(G(A^*A)-G(A)^*G(A)+\\
&\qquad+\l\big(G(A^*D)+G(D^*A)-G(A)^*G(D)-G(D)^*G(A)\big)\\
&\qquad+\l^2\big(G(D^*D)-G(D)^*G(D)\big)\Big)
\end{split}
\ee
The order $\l^0$ term vanishes by assumption. Hence, dividing both sides by $\l$ and taking the $\l\to0$ limit causes the initially $\l^2$ term to vanish and we are left with 
\be
0\le\xi\big(G(A^*D)+G(D^*A)-G(A)^*G(D)-G(D)^*G(A)\big).
\ee
Taking $A\to iA$ and $D\to-iD$ gives the reverse inequality. Hence, 
\be
0=\xi\big(G(A^*D)+G(D^*A)-G(A)^*G(D)-G(D)^*G(A)\big).
\ee
Now, taking $A\to A$ and $D\to iD$ gives 
\be
0=\xi\big(G(A^*D)-G(D^*A)-G(A)^*G(D)+G(D)^*G(A)\big).
\ee
Adding the two resulting expressions together gives
\be
0=\xi\big(G(A^*D)-G(A)^*G(D)\big).
\ee
The other condition is obtained similarly. 
\eprf

The reason for the terminology is because a special case (when $\xi$ is the identity) reproduces the standard Multiplication Lemma (Lemma~\ref{lem:multiplicationtheorem}). 

\begin{lem}[The relative conditional expectation property]{lem:disintaehomproperty}
Let $\mA$ and $\mB$ be $C^*$-algebras, 
let $\mB\xstoch{\xi}\mC$ be a positive (not necessarily unital nor SP) map, let $\mB\xstoch{F}\mA$ be an SPU map, and suppose there exists an SPU map $\mA\xstoch{G}\mB$ such that $G\circ F\underset{\raisebox{.6ex}[0pt][0pt]{\scriptsize$\xi$}}{=}\id_{\mB}$. Then  
\[
\xi(B^*B)
=\xi\Big(G\big(F(B)\big)^*G\big(F(B)\big)\Big)
=\xi\Big(G\big(F(B)^*F(B)\big)\Big)
=\xi\Big(G\big(F(B^*B)\big)\Big)
\quad\forall\;B\in\mB.
\]
String-diagrammatically, 
\[
\vcenter{\hbox{
\begin{tikzpicture}[font=\small]
\node[arrow box] (p) at (0,0.9) {$\xi$};
\node[copier] (c2) at (0,1.45) {};
\node[effect] (Y1) at (0.7,1.95) {$B$};
\node[effect] (Y2) at (-0.7,1.95) {$B^*$};
\draw (0,0.3) to (p);
\draw (p) to (c2);
\draw (c2) to[out=15,in=-90] (Y1);
\draw (c2) to[out=165,in=-90] (Y2);
\end{tikzpicture}}}
\quad=\quad
\vcenter{\hbox{
\begin{tikzpicture}[font=\small]
\node[arrow box] (p) at (0,0.1) {$\xi$};
\node[copier] (c2) at (0,0.75) {};
\node[arrow box] (f) at (0.7,1.4) {$G$};
\node[arrow box] (e) at (-0.7,1.4) {$G$};
\node[arrow box] (fL) at (0.7,2.15) {$F$};
\node[arrow box] (fR) at (-0.7,2.15) {$F$};
\node[effect] (Y1) at (0.7,2.6) {$B$};
\node[effect] (Y2) at (-0.7,2.6) {$B^*$};
\draw (0,-0.5) to (p);
\draw (p) to (c2);
\draw (c2) to[out=15,in=-90] (f);
\draw (c2) to[out=165,in=-90] (e);
\draw (f) to (fL);
\draw (e) to (fR);
\draw (fL) to (Y1);
\draw (fR) to (Y2);
\end{tikzpicture}}}
\quad=\quad
\vcenter{\hbox{
\begin{tikzpicture}[font=\small]
\node[arrow box] (p) at (0,0.1) {$\xi$};
\node[copier] (c2) at (0,1.45) {};
\node[arrow box] (f) at (0,0.95) {$G$};
\node[arrow box] (fL) at (-0.7,2.1) {$F$};
\node[arrow box] (fR) at (0.7,2.1) {$F$};
\node[effect] (Y1) at (0.7,2.65) {$B$};
\node[effect] (Y2) at (-0.7,2.65) {$B^*$};
\draw (0,-0.5) to (p);
\draw (p) to (f);
\draw (c2) to[out=15,in=-90] (fR);
\draw (c2) to[out=165,in=-90] (fL);
\draw (f) to (c2);
\draw (fL) to (Y2);
\draw (fR) to (Y1);
\end{tikzpicture}}}
\quad=\quad
\vcenter{\hbox{
\begin{tikzpicture}[font=\small]
\node[arrow box] (p) at (0,-0.8) {$\xi$};
\node[arrow box] (F) at (0,0.9) {$F$};
\node[arrow box] (G) at (0,0.05) {$G$};
\node[copier] (c2) at (0,1.45) {};
\node[effect] (Y1) at (0.7,1.95) {$B$};
\node[effect] (Y2) at (-0.7,1.95) {$B^*$};
\draw (0,-1.4) to (p);
\draw (p) to (G);
\draw (G) to (F);
\draw (F) to (c2);
\draw (c2) to[out=15,in=-90] (Y1);
\draw (c2) to[out=165,in=-90] (Y2);
\end{tikzpicture}}}
\quad\forall\;B\in\mB.
\]
\end{lem}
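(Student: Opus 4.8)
\textbf{Proof proposal for Lemma~\ref{lem:disintaehomproperty} (the relative conditional expectation property).}

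The plan is to establish the chain of three equalities by reading it from right to left, leveraging the two lemmas that precede it. The overall strategy is that the rightmost expression $\xi\big(G(F(B^*B))\big)$ equals $\xi(B^*B)$ directly by the hypothesis $G\circ F\aeequals{\xi}\id_{\mB}$, so the real content is to connect $\xi\big(G(F(B)^*F(B))\big)$ to $\xi\big(G(F(B))^*G(F(B))\big)$ and then to $\xi\big(G(F(B^*B))\big)$. I would first observe that the assumption $G\circ F\aeequals{\xi}\id_{\mB}$ means, via Theorem~\ref{thm:ncaeequivalence}, that $G(F(B))-B\in\mathcal{N}_\xi$ for all $B\in\mB$, equivalently $\xi\big((G\circ F)(B)\big)=\xi(B)$ and $\xi\big(A\,(G\circ F)(B)\big)=\xi(A\,B)$ for appropriate $A$ (using right a.e.\ equivalence).

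The key step is to apply the relative Multiplication Lemma (Lemma~\ref{lem:relmulttheorem}) to the composite $G\circ F:\mB\stoch\mB$, with the positive map $\xi$. To do this I would first check that $G\circ F$ satisfies the hypothesis of Lemma~\ref{lem:relmulttheorem}: namely that $(G\circ F)(B^*B)-(G\circ F)(B)^*(G\circ F)(B)\in\ker\xi$ for every $B$. This should follow because $G\circ F\aeequals{\xi}\id_{\mB}$ forces $(G\circ F)(B)^*(G\circ F)(B)$ to agree with $B^*B$ under $\xi$ (via the a.e.\ determinism interpretation and Example~\ref{exa:aedeterminismforSPU}, since $G\circ F$ being $\xi$-a.e.\ equal to the identity makes it $\xi$-a.e.\ deterministic), while $(G\circ F)(B^*B)$ agrees with $B^*B$ under $\xi$ by state-agreement. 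Granting this, Lemma~\ref{lem:relmulttheorem} applied to $G\circ F$ gives $\xi\big((G\circ F)(B^*B)\big)=\xi\big((G\circ F)(B)^*(G\circ F)(B)\big)$, which is the outer equality of the chain once rewritten. For the middle equality, I would apply Lemma~\ref{lem:relmulttheorem} to $G$ itself (with $\xi$ as the positive map and $A:=F(B)$): the Kadison--Schwarz inequality for $F$ together with positivity of $\xi\circ G$ should force $\xi\big(G(F(B)^*F(B))\big)=\xi\big(G(F(B))^*G(F(B))\big)$, giving the passage between the two $G$-expressions.

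I expect the main obstacle to be verifying the hypothesis of the relative Multiplication Lemma cleanly, i.e.\ confirming that $G(F(B^*B))-G(F(B))^*G(F(B))\in\ker\xi$, since this mixes the a.e.\ equivalence of $G\circ F$ with the identity together with the Schwarz inequalities for both $F$ and $G$. The delicate point is that $G$ need not be a $*$-homomorphism, so I cannot simply factor $G(F(B)^*F(B))$ as $G(F(B))^*G(F(B))$; instead I must squeeze the relevant positive differences between $0$ under $\xi$ using the two-sided Kadison--Schwarz estimates (as in the proof of Theorem~\ref{thm:fdcalgcpupositive}). Concretely, I would chain $0\le\xi\big(G(F(B))^*G(F(B))\big)\le\xi\big(G(F(B)^*F(B))\big)\le\xi\big(G(F(B^*B))\big)=\xi(B^*B)$, and separately bound $\xi(B^*B)\le\xi\big(G(F(B))^*G(F(B))\big)$ using the a.e.\ equality $G\circ F\aeequals{\xi}\id_{\mB}$, forcing all inequalities to be equalities. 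This squeezing argument, rather than any single lemma invocation, is where the careful bookkeeping lies; once it is in place the string-diagram version is just the pictorial restatement of these scalar identities.
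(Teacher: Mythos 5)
Your closing squeeze is exactly the paper's proof: the paper establishes
$\xi(B^*B)=\xi\big(G(F(B))^*G(F(B))\big)\le\xi\big(G(F(B)^*F(B))\big)\le\xi\big(G(F(B^*B))\big)=\xi(B^*B)$,
with the outer steps coming from $G\circ F\aeequals{\xi}\id_{\mB}$ and the middle two from Kadison--Schwarz for $G$ and for $F$ together with positivity of $G$ and $\xi$, and then forces all intermediate inequalities to be equalities. So the argument you land on is correct and coincides with the paper's.

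However, your proposed ``key step'' via the relative Multiplication Lemma is circular and should be dropped, as you yourself half-suspect. To apply Lemma~\ref{lem:relmulttheorem} to $G$ at $A=F(B)$ you must first verify $G\big(F(B)^*F(B)\big)-G\big(F(B)\big)^*G\big(F(B)\big)\in\ker\xi$, which is precisely the middle equality being proved; moreover its conclusion concerns mixed products $G(A^*D)$, which this lemma does not need. In the paper the logical order is the reverse of yours: the present lemma is proved first by the direct squeeze, and its conclusion is then fed into Lemma~\ref{lem:relmulttheorem} inside the proof of Theorem~\ref{thm:SPUmodular}. A second caution: your justification through a.e.\ determinism (Example~\ref{exa:aedeterminismforSPU}, the weak a.e.\ Multiplication Lemma, deterministic reasonableness) formally requires the reference map to be SPU, whereas here $\xi$ is only assumed positive; the squeeze avoids this entirely. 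Finally, a small bookkeeping point: the a.e.\ hypothesis yields the first step as an \emph{equality}, not merely a bound --- apply $\xi\big(A\,(G\circ F)(B)\big)=\xi(AB)$ with $A=G(F(B))^*$, then the left-handed version on the remaining factor (left and right a.e.\ equivalence coincide here by Corollary~\ref{cor:leftequalsrightaeequivalence}, since $G\circ F$ and $\xi$ are $*$-preserving), giving $\xi\big(G(F(B))^*G(F(B))\big)=\xi(B^*B)$ directly.
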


\bprf
[Proof of Lemma~\ref{lem:disintaehomproperty}]
Let $B\in\mB$. Then 
\be
\label{eq:squeezing}
\begin{split}
\xi(B^*B)&=\xi\Big(G\big(F(B)\big)^*G\big(F(B)\big)\Big)\quad\text{since $G\circ F\aeequals{\xi}\id_{\mB}$}\\
&\le\xi\Big(G\big(F(B)^*F(B)\big)\Big)\quad\text{ by KS for $G$ and positivity of $\xi$}\\
&\le\xi\Big(G\big(F(B^*B)\big)\Big)\quad\text{ by KS for $F$ and positivity of $\xi$}\\
&=\xi(B^*B)\quad\text{since $G\circ F\aeequals{\xi}\id_{\mB}$.}
\end{split}
\ee
Hence, all intermediate inequalities become equalities. 
\eprf

We finally proceed to prove that $\fdCAlgSPU$ is a.e.\ modular (and hence $\fdCAlgCPU$ as well). 

\bprf
[Proof of Theorem~\ref{thm:SPUmodular}]
By Theorem~\ref{thm:aemodbayesdisint}, it suffices to prove that if $\mB\xstoch{F}\mA,\mA\xstoch{G}\mB,$ and $\mB\xstoch{\xi}\mC$ satisfy the condition that $G\circ F\aeequals{\xi}\id_{\mB}$, then conditions (a) and (b) of Theorem~\ref{thm:aemodbayesdisint} hold. In what follows, set $\omega:=\xi\circ G$. 
\begin{enumerate}[(a)]
\item
Let $B\in\mB$. Then the conditions of Lemma~\ref{lem:disintaehomproperty} hold. In particular, \emph{temporarily} setting $A:=F(B)$ provides
\be
\xi\big(G(A^*A)\big)=\xi\big(G(A)^*G(A)\big). 
\ee
Thus, the relative Multiplication Lemma (Lemma~\ref{lem:relmulttheorem}) applies giving
\be
\label{eq:RMTimpliesbayes}
\xi\Big(G\big(D^*F(B)\big)\Big)
\overset{\text{Lem~\ref{lem:relmulttheorem}}}{=\joinrel=\joinrel=\joinrel=\joinrel=}\xi\Big(G(D)^*G\big(F(B)\big)\Big)
\overset{G\circ F\underset{\raisebox{.6ex}[0pt][0pt]{\tiny$\xi$}}{=}\id_{\mB}}{=\joinrel=\joinrel=\joinrel=\joinrel=}\xi\big(G(D)^*B\big)
\ee
for all $D\in\mA$. Since $B$ was arbitrary and all morphisms are $*$-preserving (and because $*$ is an involution), 
\be
\omega\big(AF(B)\big)
\overset{\omega:=\xi\circ G}{=\joinrel=\joinrel=\joinrel=\joinrel=}
\xi\Big(G\big(AF(B)\big)\Big)
\overset{\text{(\ref{eq:RMTimpliesbayes})}}{=\joinrel=\joinrel=}
\xi\big(G(A)B\big)
\qquad\forall\;A\in\mA,\;B\in\mB.
\ee
Thus, $G$ is a Bayesian inverse of $(F,\omega,\xi)$. 
\item
By the weak a.e.\ Multiplication Lemma (Lemma~\ref{lem:Attalslemmaaegeneral}), it suffices to prove $F(B^*B)-F(B)^*F(B)\in\mathcal{N}_{\omega}$ for all $B\in\mB$. By the Kadison--Schwarz inequality for $F$, we have 
\be
\label{eq:FBBKadisonS}
F(B^*B)-F(B)^*F(B)\ge0. 
\ee
Hence, $F(B^*B)-F(B)^*F(B)\in\mathcal{N}_{\omega}$ if and only if $F(B^*B)-F(B)^*F(B)\in\ker\omega$ by Lemma~\ref{lem:minilemmaleftideal}. 
Therefore, to see that the latter holds, note that 
\be
\begin{split}
\w\big(F(B^*B)-F(B)^*F(B)\big)
&=\xi\Big(G\big(F(B^*B)\big)-G\big(F(B)^*F(B)\big)\Big)\quad\text{ since $\w=\xi\circ F$}\\
&=0\quad\text{ by Lemma~\ref{lem:disintaehomproperty}}.
\end{split}
\ee
Thus, $F$ is $\omega$-a.e.\ deterministic. \qedhere
\end{enumerate}
\eprf

\vspace{-2mm}
\begin{rmk}[On partial invertibility and the $*$-homomorphism property]{rmk:leftinversednidet}
Our proof of Theorem~\ref{thm:SPUmodular} (specifically condition~(b) in Theorem~\ref{thm:aemodbayesdisint} in the setting of $C^*$-algebras) was inspired by Attal's Theorem~6.38 in~\cite{Attal}, although the latter has slightly different assumptions. It states that if $\Hi$ is a Hilbert space and if a CPU map $F:\mathcal{B}(\Hi)\stoch\mathcal{B}(\Hi)$ has a CPU left inverse $G,$ then $F$ is a $^*$-homomorphism. If one allows a different codomain for $F$, then this claim is false. Indeed, a simple example, even in finite dimensions, is provided in Example~\ref{exa:aedniaeequivdet}. 
It is therefore interesting that merely adding a state-preserving assumption to Attal's theorem guarantees the $^*$-homomorphism claim almost surely regardless of the domain and codomain. 
\end{rmk}

\begin{rmk}[A.e.\ determinism for disintegrations on $C^*$-algebras]{rmk:aedeterministicDNIequivtodeterministic}
If $(\mB,\xi)\xstoch{F}(\mA,\omega)$ is a state-preserving SPU map for which there exists a CPU disintegration, it is generally \emph{not} true that $F$ is $\w$-a.e.\ equivalent to a $*$-homomorphism. This is because there are CPU disintegrations of CPU maps of the form $\mathcal{M}_{n}(\C)\xstoch{F}\mathcal{M}_{m}(\C)$, where $m$ is not a multiple of $n$. Indeed, the example in Remark~\ref{exa:aedniaeequivdet} provides such an instance if one equips $\mathcal{M}_{m}(\C)$ with a density matrix of the form 
$\left[\begin{smallmatrix}\sigma&0\\0&0\end{smallmatrix}\right]$, 
where $\sigma$ is of size $n\times n$. 
\end{rmk}

Another immediate corollary of Theorem~\ref{thm:SPUmodular} is a \emph{non-commutative} version of the Fisher--Neyman factorization theorem. The reader is referred to~\cite[Theorem~14.5]{Fr19} for details (and noting that a.e.\ modularity suffices for the theorem to hold). An explicit form of the factorization will be provided in future work (Theorem~5.108 in~\cite{PaRu19} describes a special case). Rather than restating the factorization theorem here, we instead mention an example relating sufficiency in the sense of~\cite[Definition~14.3]{Fr19} to Umegaki's notion of sufficiency. 

\begin{exa}[Umegaki sufficiency in quantum statistical decision theory]{exa:Umegakisufficientstatistic}
In 1958, Umegaki extended the notion of sufficient statistics from classical statistics to operator algebras~\cite{Um59} (see also Holevo's work~\cite[Section~2]{Ho73}). It involves defining when a $C^*$-subalgebra $\mB$ of $\mA$ is sufficient with respect to a family of states (Holevo usually sets $\mA=\mB(\Hi)$ with $\Hi$ a Hilbert space, though this is not necessary). We recall this definition and show how this is a special case of the definition of a sufficient statistic (in the sense of~\cite[Definition~14.3]{Fr19}) in the category $\fdCAlgCPU$. 

First, a (finite) \define{family of states} on a $C^*$-algebra $\mA$ parametrized by the finite set $\Theta$ is a collection of states $\{\omega_{\theta}\}_{\theta\in\Theta}$ on $\mA$. Such a collection is equivalent to the datum of a CPU map $\mA\xstoch{\omega}\C^{\Theta}$. To see this, suppose one has such a CPU map $\omega$. Then, given $\theta\in\Theta$, first define the evaluation at $\theta$ map $\C^{\Theta}\xrightarrow{\mathrm{ev}_{\theta}}\C$ by sending a function $\varphi\in\C^{\Theta}$ to $\varphi(\theta)$. Thus, each $\mA\xstoch{\omega}\C^{\Theta}\xrightarrow{\mathrm{ev}_{\theta}}\C$ is a CPU map and hence defines a family of states $\{\omega_{\theta}\}_{\theta\in\Theta}$ on $\mA$ parametrized by $\Theta$. Conversely, given such a family of states, one obtains such a CPU map $\omega$. In this way, a family of states (in the way defined by Holevo) can be identified with a CPU map, which we will proceed to do. Similarly, set $\xi:=\omega\circ F$ so that $\xi$ is the corresponding family of states on $\mB$. 

Second, let $\mB\xrightarrow{F}\mA$ be the $*$-homomorphism corresponding to the inclusion (though we could easily imagine any $*$-homomorphism). Following Umegaki, Holevo then defines $F$ to be \define{sufficient} for a family of states $\omega$ iff there exists a CPU map $\mA\xstoch{G}\mB$ such that 
\[
G\circ F=\id_{\mB}
\qquad\text{ and }\qquad
\xy0;/r.25pc/:
(0,-7.5)*+{\C}="C";
(-12.5,7.5)*+{\mA}="H";
(12.5,7.5)*+{\mB}="K";
{\ar@{~>}"H";"C"_{\mathrm{ev}_{\theta}\circ\w}};
{\ar@{~>}"K";"C"^{\mathrm{ev}_{\theta}\circ\xi}};
{\ar@{~>}"H";"K"^{G}};
{\ar@{=}(-3,0);(5,4.5)};
\endxy
\quad\forall\;\theta\in\Theta.
\]
Since functions are uniquely determined by their values on points, this definition is equivalent to 
\[
G\circ F=\id_{\mB}
\qquad\text{ and }\qquad
\xy0;/r.25pc/:
(0,-7.5)*+{\C^{\Theta}}="C";
(-12.5,7.5)*+{\mA}="H";
(12.5,7.5)*+{\mB}="K";
{\ar@{~>}"H";"C"_{\w}};
{\ar@{~>}"K";"C"^{\xi}};
{\ar@{~>}"H";"K"^{G}};
{\ar@{=}(-3,0);(5,4.5)};
\endxy
,
\]
which is a special case of~\cite[Definition~14.3]{Fr19} (but generalized to the non-commutative setting) due to Theorem~\ref{thm:SPUmodular} and Corollary~\ref{cor:binversiondisint} relating Bayesian inversion to disintegration. In this way, Umegaki's definition of sufficiency can be stated in terms of disintegrations and Bayesian inversion, where the states have been replaced by more general morphisms (cf.\ Remark~\ref{rmk:statestomapsfordisint}).  
\end{exa}

We now move on to another axiom introduced by Fritz~\cite[Definition~13.16]{Fr19}, which includes S-positivity and a.e.\ modularity as a special case. We will see that these are also special cases even in the non-commutative setting (though one must be careful in choosing the axioms appropriately in this more general situation). 

\begin{defn}[Strict positivity]{defn:strictpositive}
An even subcategory $\mC$ of a quantum Markov category is \define{left strictly positive} iff for all $f,g,q$ in $\mC$ such that $f\circ g$ is left $q$-a.e.\ deterministic, then 
\[
\vcenter{\hbox{
\begin{tikzpicture}[font=\small]
\node[arrow box] (q) at (0,-0.3) {$q$};
\node[copier] (c) at (0,0.3) {};
\node[copier] (c2) at (-0.5,1.55) {};
\node[arrow box] (g) at (-0.5,0.95) {$g$};
\node[arrow box] (f) at (0,2.25) {$f$};
\coordinate (X) at (0.7,2.85);
\coordinate (Y1) at (0,2.85);
\coordinate (Y2) at (-1,2.85);
\draw (0,-0.9) to (q);
\draw (q) to (c);
\draw (c) to[out=15,in=-90] (X);
\draw (c) to[out=165,in=-90] (g);
\draw (c2) to[out=15,in=-90] (f);
\draw (f) to (Y1);
\draw (c2) to[out=165,in=-90] (Y2);
\draw (g) to (c2);
\end{tikzpicture}}}
\quad=\quad
\vcenter{\hbox{
\begin{tikzpicture}[font=\small]
\node[arrow box] (q) at (0,-0.3) {$q$};
\node[copier] (c) at (0,0.3) {};
\node[copier] (c2) at (-0.5,0.95) {};
\node[arrow box] (g) at (-1,1.65) {$g$};
\coordinate (X) at (0.7,2.95);
\node[arrow box] (g2) at (0,1.65) {$g$};
\node[arrow box] (f) at (0,2.4) {$f$};
\coordinate (Y1) at (0,2.95);
\coordinate (Y2) at (-1,2.95);
\draw (0,-0.9) to (q);
\draw (q) to (c);
\draw (c) to[out=15,in=-90] (X);
\draw (c) to[out=165,in=-90] (c2);
\draw (c2) to[out=15,in=-90] (g2);
\draw (g2) to (f); 
\draw (f) to (Y1); 
\draw (g) to (Y2);
\draw (c2) to[out=165,in=-90] (g);
\end{tikzpicture}}}
\]
must also hold. A similar definition is made for a \define{right strictly positive} subcategory.%
\end{defn}

\vspace{-3mm}
\begin{exa}[Neither $\fdCAlgSPU$ nor $\fdCAlgCPU$ are strictly positive]{exa:strictpositive}
Let $\mM_{2}(\C)\xstoch{F}\mM_{4}(\C)$, $\mM_{4}(\C)\xstoch{G}\mM_{3}(\C)$, and $\mM_{3}(\C)\xstoch{\xi}\C$ be the CPU maps given by 
\[
F\left(\begin{bmatrix}b_{11}&b_{12}\\b_{21}&b_{22}\end{bmatrix}\right):=\begin{bmatrix}b_{11}&b_{12}&0&0\\b_{21}&b_{22}&0&0\\0&0&b_{11}&b_{12}\\0&0&b_{21}&b_{22}\end{bmatrix},\quad
G\left(\begin{bmatrix}a_{11}&a_{12}&a_{13}&a_{14}\\a_{21}&a_{22}&a_{23}&a_{24}\\a_{31}&a_{32}&a_{33}&a_{34}\\a_{41}&a_{42}&a_{43}&a_{44}\end{bmatrix}\right):=\begin{bmatrix}a_{11}&a_{12}&a_{13}\\a_{21}&a_{22}&a_{23}\\a_{31}&a_{32}&a_{33}\end{bmatrix},
\]
(in fact, $F$ here is a $*$-homomorphism) 
and
\[
\xi=\tr(\s\;\cdot\;),\quad\text{ where }\quad\s:=\frac{1}{2}\begin{bmatrix}1&0&0\\0&1&0\\0&0&0\end{bmatrix}.
\]
Then $G\circ F$ is $\xi$-a.e.\ deterministic (both right and left, necessarily so because $F$, $G$, and $\xi$ are $*$-preserving). However, one can check that 
\[
P_{\xi}G\big(AF(B)\big)\ne P_{\xi}G(A)G\Big(\big(F(B)\big)\Big). 
\]
Hence (by Lemma~\ref{thm:ncaeequivalence}), the equality in Definition~\ref{defn:strictpositive} fails. Note, however, that 
\[
P_{\xi}G\big(F(B)A\big)=P_{\xi}G\Big(\big(F(B)\big)\Big)G(A). 
\]
does hold in this example.%
\footnote{This is not an indication that we have chosen the incorrect definition of strict positivity. Our definition is chosen so that strict positivity implies a.e.\ modularity and S-positivity (cf.\ Proposition~\ref{prop:strictimplies}). Note that none of these subtleties arise for classical Markov categories. See Remark~\ref{rmk:strictpossubtle} for more details.}
\end{exa}

\vspace{-2mm}
\begin{prop}[Strict positivity implies a.e.\ modularity and S-positivity]{prop:strictimplies}
Suppose $\mC$ is a $*$-preserving, deterministically reasonable (cf.\ Definition~\ref{defn:detreason}), and strictly positive subcategory of a quantum Markov category. Then $\mC$ is a.e.\ modular and S-positive. 
\end{prop}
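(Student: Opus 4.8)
The plan is to derive each of the two conclusions separately, exploiting the fact that strict positivity is a stronger hypothesis than either a.e.\ modularity or S-positivity, once one unpacks the diagrammatic conditions. I would start with \textbf{a.e.\ modularity}, since it is the easier consequence. By definition, I must show that whenever $g$ is a left disintegration of $(f,p:=g\circ q,q)$, then $g$ is left $q$-a.e.\ modular with respect to $f$. The key observation is that the defining equality for a disintegration (the left equality in Definition~\ref{defn:aemodularity}) says precisely that $\mathrm{id}$-style composites force $f\circ g$ to behave deterministically along $q$; more carefully, a left disintegration condition gives $f\circ g \aeequals{q}\mathrm{id}$, which is in particular left $q$-a.e.\ deterministic. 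Thus the hypothesis of the strict positivity axiom in Definition~\ref{defn:strictpositive} (applied with this $f$ and $g$) is met, and the strict positivity equality specializes to exactly the a.e.\ modularity equality (the right equality in Definition~\ref{defn:aemodularity}). The matching of the two string diagrams is the substance here: the strict positivity diagram has an extra copy of $g$ and an extra copy map compared to the modularity diagram, so I would need to collapse these using the disintegration identity $f\circ g\aeequals{q}\mathrm{id}_{Y}$ together with the copy/discard axioms~(\ref{eq:markovcatfirstconditions}). Because $\mC$ is $*$-preserving, Lemma~\ref{prop:hfpkgs} and the symmetry results guarantee I need only verify the left-handed version, and the right-handed one follows automatically.

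For \textbf{S-positivity}, I would proceed from Definition~\ref{defn:positivecategory}: given a composable pair $X\xstoch{f}Y\xstoch{g}Z$ in $\mC$ with $g\circ f$ deterministic, I must establish the S-positivity diagram equality. The strategy is to use the \emph{deterministically reasonable} hypothesis to promote the determinism of $g\circ f$ into an a.e.\ statement usable inside the strict positivity axiom. Specifically, since $g\circ f$ is deterministic (hence a.e.\ deterministic with respect to any state/morphism into $X$), and since $\mC$ is deterministically reasonable, I can arrange the relevant composite to satisfy the a.e.\ determinism hypothesis of Definition~\ref{defn:strictpositive}. Then the strict positivity equality, after appropriate choices of the morphisms playing the roles of $f$, $g$, $q$ in that definition (I would set the ``outer'' morphism to be a suitable copy of $f$ and use $g\circ f$ deterministic as the triggering condition), collapses to the S-positivity identity. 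The bookkeeping of which morphism in Definition~\ref{defn:strictpositive} corresponds to which morphism in Definition~\ref{defn:positivecategory} is delicate, and I expect to reindex via $f\mapsto f$, $g\mapsto f$, and treat $g\circ f$'s determinism as the hypothesis.

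\textbf{The main obstacle} I anticipate is the precise diagrammatic matching in the S-positivity step: Definition~\ref{defn:strictpositive} is phrased with a \emph{pre}-composite $f\circ g$ being a.e.\ deterministic, whereas Definition~\ref{defn:positivecategory} is phrased with a \emph{post}-composite $g\circ f$ being deterministic, and the two diagrams have different shapes (one has a doubled $g$, the other a doubled $f$). Reconciling these requires carefully choosing the instantiation of the strict positivity axiom and then using the copy-discard axioms together with determinism to eliminate the surplus wires. The $*$-preserving assumption is what makes this feasible, since it lets me freely convert between left and right a.e.\ conditions (via Lemma~\ref{prop:hfpkgs} and Corollary~\ref{cor:leftequalsrightaeequivalence}); without it the left/right asymmetry of the quantum setting would block the reduction, as the cautionary Example~\ref{exa:strictpositive} and Remark~\ref{rmk:strictpossubtle} warn. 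I would therefore spend the bulk of the proof verifying that the determinism of $g\circ f$, combined with deterministic reasonableness, indeed supplies a valid hypothesis for the strict positivity diagram, after which the desired identity drops out by the same kind of copy-manipulations used in the proof of Lemma~\ref{lem:positivityinvertibilityimpliesdeterminism}.
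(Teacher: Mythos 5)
Your overall route---derive both conclusions by instantiating the strict positivity axiom and then collapsing diagrams---matches the paper's, but there is a genuine error in where you deploy the deterministically reasonable hypothesis, and it sits at the crux of the modularity step. You write that the disintegration condition gives $f\circ g\aeequals{q}\id$, ``which is in particular left $q$-a.e.\ deterministic.'' That implication is \emph{false} in a general quantum Markov category: being $q$-a.e.\ equivalent to a deterministic morphism (here $\id$) does not imply being $q$-a.e.\ deterministic---this is exactly the content of Example~\ref{exa:Unotdetreason}, and it is the sole reason Definition~\ref{defn:detreason} exists. The paper's proof invokes deterministic reasonableness precisely at this point, to pass from $f\circ g\aeequals{q}\id$ to $q$-a.e.\ determinism of $f\circ g$ so that the strict positivity axiom is triggered at all. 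Relatedly, your plan to ``collapse'' the surplus wires using only the disintegration identity together with the copy/discard axioms understates what is required: replacing $f\circ g$ by $\id$ inside a doubled-copy diagram is an instance of the doubling lemma, which holds classically (Lemma~\ref{lem:fpaegimpliesffpaegg}) but fails in the quantum setting (Example~\ref{rmk:aedoesnotimplydoubleae}); the paper's diagram chase instead uses the a.e.\ determinism of $f\circ g$ a second time, to merge the doubled composite, alternating with the single-wire identity $f\circ g\aeequals{q}\id$.

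Symmetrically, in the S-positivity step you invoke deterministic reasonableness where it is not needed: if the composite is honestly deterministic, it is automatically $\id$-a.e.\ deterministic (take $q=\id$ in Definition~\ref{defn:strictpositive}), with no hypothesis beyond the determinism axiom itself---this is all the paper uses. So you have the two uses of the hypothesis swapped. Your worry about the pre- versus post-composite mismatch is resolved simply by relabelling the composable pair in the strict positivity axiom (your proposed reindexing ``$f\mapsto f$, $g\mapsto f$'' is not a well-defined substitution); after applying strict positivity with $q=\id$ and discarding the spare wire, one obtains the mirror image of the S-positivity equality, and the $*$-preserving hypothesis (via Lemma~\ref{prop:hfpkgs} and the footnote to Definition~\ref{defn:positivecategory}) flips it to the form as drawn. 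Since deterministic reasonableness is among your standing hypotheses, the modularity argument is repairable by citing it where you wrote ``in particular''; but as written, that step asserts a false general implication, and the proof does not go through without the correction.
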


\bprf
{\color{white}{you found me!}}

\noindent 
(Proof of a.e.\ modularity)
Suppose $f\circ g\aeequals{q}\id$. Then by the deterministically reasonable assumption, $f\circ g$ is $q$-a.e.\ deterministic. Hence
\be
\begin{split}
\vcenter{\hbox{
\begin{tikzpicture}[font=\small]
\node[arrow box] (q) at (0,-0.3) {$q$};
\node[copier] (c) at (0,0.3) {};
\node[copier] (c2) at (-0.5,1.55) {};
\node[arrow box] (g) at (-0.5,0.95) {$g$};
\node[arrow box] (f) at (0,2.25) {$f$};
\coordinate (X) at (0.7,2.85);
\coordinate (Y1) at (0,2.85);
\coordinate (Y2) at (-1,2.85);
\draw (0,-0.9) to (q);
\draw (q) to (c);
\draw (c) to[out=15,in=-90] (X);
\draw (c) to[out=165,in=-90] (g);
\draw (c2) to[out=15,in=-90] (f);
\draw (f) to (Y1);
\draw (c2) to[out=165,in=-90] (Y2);
\draw (g) to (c2);
\end{tikzpicture}}}
\;&\overset{\text{Defn~\ref{defn:strictpositive}}}{=\joinrel=\joinrel=\joinrel=\joinrel=\joinrel=}
\;
\vcenter{\hbox{
\begin{tikzpicture}[font=\small]
\node[arrow box] (q) at (0,-0.3) {$q$};
\node[copier] (c) at (0,0.3) {};
\node[copier] (c2) at (-0.5,0.95) {};
\node[arrow box] (g) at (-1,1.65) {$g$};
\coordinate (X) at (0.7,2.95);
\node[arrow box] (g2) at (0,1.65) {$g$};
\node[arrow box] (f) at (0,2.4) {$f$};
\coordinate (Y1) at (0,2.95);
\coordinate (Y2) at (-1,2.95);
\draw (0,-0.9) to (q);
\draw (q) to (c);
\draw (c) to[out=15,in=-90] (X);
\draw (c) to[out=165,in=-90] (c2);
\draw (c2) to[out=15,in=-90] (g2);
\draw (g2) to (f); 
\draw (f) to (Y1); 
\draw (g) to (Y2);
\draw (c2) to[out=165,in=-90] (g);
\end{tikzpicture}}}
\;
\overset{f\circ g\aeequals{q}\id}{=\joinrel=\joinrel=\joinrel=\joinrel=\joinrel=}
\;
\vcenter{\hbox{
\begin{tikzpicture}[font=\small]
\node[arrow box] (q) at (0,-0.3) {$q$};
\node[copier] (c) at (0,0.3) {};
\node[copier] (c2) at (-0.5,0.95) {};
\node[arrow box] (g) at (-1,1.65) {$g$};
\coordinate (X) at (1.0,2.95);
\node[arrow box] (g2) at (0,1.65) {$g$};
\node[arrow box] (f) at (0,2.4) {$f$};
\node[arrow box] (g3) at (1.0,1.65) {$g$};
\node[arrow box] (f3) at (1.0,2.4) {$f$};
\coordinate (Y1) at (0,2.95);
\coordinate (Y2) at (-1,2.95);
\draw (0,-0.9) to (q);
\draw (q) to (c);
\draw (c) to[out=15,in=-90] (g3);
\draw (g3) to (f3);
\draw (f3) to (X);
\draw (c) to[out=165,in=-90] (c2);
\draw (c2) to[out=15,in=-90] (g2);
\draw (g2) to (f); 
\draw (f) to (Y1); 
\draw (g) to (Y2);
\draw (c2) to[out=165,in=-90] (g);
\end{tikzpicture}}}
\quad=\quad
\vcenter{\hbox{
\begin{tikzpicture}[font=\small]
\node[arrow box] (q) at (0,-0.3) {$q$};
\node[copier] (c) at (0,0.3) {};
\node[copier] (c2) at (0.5,0.95) {};
\node[arrow box] (g) at (-1,1.65) {$g$};
\coordinate (X) at (1.0,2.95);
\node[arrow box] (g2) at (0,1.65) {$g$};
\node[arrow box] (f) at (0,2.4) {$f$};
\node[arrow box] (g3) at (1.0,1.65) {$g$};
\node[arrow box] (f3) at (1.0,2.4) {$f$};
\coordinate (Y1) at (0,2.95);
\coordinate (Y2) at (-1,2.95);
\draw (0,-0.9) to (q);
\draw (q) to (c);
\draw (c) to[out=165,in=-90] (g);
\draw (g3) to (f3);
\draw (f3) to (X);
\draw (c) to[out=15,in=-90] (c2);
\draw (c2) to[out=165,in=-90] (g2);
\draw (g2) to (f); 
\draw (f) to (Y1); 
\draw (g) to (Y2);
\draw (c2) to[out=15,in=-90] (g3);
\end{tikzpicture}}}
\\
\;
&\overset{\text{Defn~\ref{defn:detreason}}}{=\joinrel=\joinrel=\joinrel=\joinrel=\joinrel=}
\;
\vcenter{\hbox{
\begin{tikzpicture}[font=\small]
\node[arrow box] (q) at (0,-0.3) {$q$};
\node[copier] (c) at (0,0.3) {};
\node[copier] (c2) at (0.5,2.35) {};
\node[arrow box] (g) at (-1,1.85) {$g$};
\coordinate (X) at (1.0,2.75);
\node[arrow box] (g3) at (0.5,0.95) {$g$};
\node[arrow box] (f3) at (0.5,1.75) {$f$};
\coordinate (Y1) at (0,2.75);
\coordinate (Y2) at (-1,2.75);
\draw (0,-0.9) to (q);
\draw (q) to (c);
\draw (c) to[out=165,in=-90] (g);
\draw (g3) to (f3);
\draw (f3) to (c2);
\draw (c) to[out=15,in=-90] (g3);
\draw (c2) to[out=165,in=-90] (Y1);
\draw (g) to (Y2);
\draw (c2) to[out=15,in=-90] (X);
\end{tikzpicture}}}
\quad\overset{f\circ g\aeequals{q}\id}{=\joinrel=\joinrel=\joinrel=\joinrel=\joinrel=}\quad
\vcenter{\hbox{
\begin{tikzpicture}[font=\small,xscale=-1]
\node[arrow box] (q) at (0,-0.3) {$q$};
\node[copier] (c) at (0,0.3) {};
\node[copier] (c2) at (-0.5,1.45) {};
\node[arrow box] (g) at (0.6,1.05) {$g$};
\coordinate (X) at (0.6,1.95);
\coordinate (Y1) at (-1,1.95);
\coordinate (Y2) at (0,1.95);
\draw (0,-0.9) to (q);
\draw (q) to (c);
\draw (c) to[out=165,in=-90] (c2);
\draw (c) to[out=15,in=-90] (g);
\draw (c2) to[out=165,in=-90] (-1,1.95);
\draw (c2) to[out=15,in=-90] (0,1.95);
\draw (g) to (X);
\end{tikzpicture}}}
\quad=\quad
\vcenter{\hbox{
\begin{tikzpicture}[font=\small]
\node[arrow box] (q) at (0,-0.3) {$q$};
\node[copier] (c) at (0,0.3) {};
\node[copier] (c2) at (-0.5,0.95) {};
\node[arrow box] (g) at (-1,1.65) {$g$};
\coordinate (X) at (0.7,2.35);
\coordinate (Y1) at (0,2.35);
\coordinate (Y2) at (-1,2.35);
\draw (0,-0.9) to (q);
\draw (q) to (c);
\draw (c) to[out=15,in=-90] (X);
\draw (c) to[out=165,in=-90] (c2);
\draw (c2) to[out=15,in=-90] (Y1);
\draw (g) to (Y2);
\draw (c2) to[out=165,in=-90] (g);
\end{tikzpicture}}}
\;\;.
\end{split}
\ee
Note that the $*$-preserving assumption was used on multiple occasions in some of these equalities. For example, the fact that $f\circ g$ is \emph{left} $q$-a.e.\ deterministic was used in the first equality (which followed from the assumption that $f\circ g$ is \emph{left} $q$-a.e.\ equal to $\id$) and the fact that it is \emph{right} $q$-a.e.\ deterministic was used in the fourth equality. In addition, the fact that $f\circ g$ is \emph{right} $q$-a.e.\ equal to $\id$ was used in the second and fifth equalities. 

\noindent
(Proof of S-positivity)
If $f\circ g$ is deterministic, then $f\circ g$ is left (and right) $\id$-a.e.\ deterministic. Therefore, 
\be
\vcenter{\hbox{%
\begin{tikzpicture}[font=\footnotesize]
\coordinate (q) at (0,-0.2) {};
\node[copier] (copier) at (0,0.3) {};
\node[arrow box] (g2) at (0.5,0.95) {$g$};
\node[arrow box] (e2) at (0.5,1.75) {$f$};
\node[arrow box] (g) at (-0.5,0.95) {$g$};
\coordinate (X) at (0.5,2.4);
\coordinate (Y) at (-0.5,2.4);
\draw (q) to (copier);
\draw (copier) to[out=15,in=-90] (g2);
\draw (g2) to (e2);
\draw (e2) to (X);
\draw (copier) to[out=165,in=-90] (g);
\draw (g) to (Y);
\end{tikzpicture}}}
\quad=\quad
\vcenter{\hbox{
\begin{tikzpicture}[font=\small]
\node[arrow box] (q) at (0,-0.3) {$\id$};
\node[copier] (c) at (0,0.3) {};
\node[copier] (c2) at (-0.5,0.95) {};
\node[arrow box] (g) at (-1,1.65) {$g$};
\node[discarder] (X) at (0.7,2.65) {};
\node[arrow box] (g2) at (0,1.65) {$g$};
\node[arrow box] (f) at (0,2.4) {$f$};
\coordinate (Y1) at (0,2.95);
\coordinate (Y2) at (-1,2.95);
\draw (0,-0.9) to (q);
\draw (q) to (c);
\draw (c) to[out=15,in=-90] (X);
\draw (c) to[out=165,in=-90] (c2);
\draw (c2) to[out=15,in=-90] (g2);
\draw (g2) to (f); 
\draw (f) to (Y1); 
\draw (g) to (Y2);
\draw (c2) to[out=165,in=-90] (g);
\end{tikzpicture}}}
\quad
\overset{\text{Defn~\ref{defn:strictpositive}}}{=\joinrel=\joinrel=\joinrel=\joinrel=\joinrel=}
\quad
\vcenter{\hbox{
\begin{tikzpicture}[font=\small]
\node[arrow box] (q) at (0,-0.3) {$\id$};
\node[copier] (c) at (0,0.3) {};
\node[copier] (c2) at (-0.5,1.55) {};
\node[arrow box] (g) at (-0.5,0.95) {$g$};
\node[arrow box] (f) at (0,2.25) {$f$};
\node[discarder] (X) at (0.7,2.65) {};
\coordinate (Y1) at (0,2.95);
\coordinate (Y2) at (-1,2.95);
\draw (0,-0.9) to (q);
\draw (q) to (c);
\draw (c) to[out=15,in=-90] (X);
\draw (c) to[out=165,in=-90] (g);
\draw (c2) to[out=15,in=-90] (f);
\draw (f) to (Y1);
\draw (c2) to[out=165,in=-90] (Y2);
\draw (g) to (c2);
\end{tikzpicture}}}
\quad=\quad
\vcenter{\hbox{
\begin{tikzpicture}[font=\footnotesize]
\node[arrow box] (f) at (0,-0.3) {$g$};
\node[copier] (copier) at (0,0.3) {};
\node[arrow box] (g) at (0.5,0.95) {$f$};
\coordinate (X) at (-0.5,1.6);
\coordinate (Y) at (0.5,1.6);
\draw (0,-0.9) to (f);
\draw (f) to (copier);
\draw (copier) to[out=150,in=-90] (X);
\draw (copier) to[out=15,in=-90] (g);
\draw (g) to (Y);
\end{tikzpicture}}}
\ee
This reproduces the S-positivity axiom (cf.\ Definition~\ref{defn:positivecategory}) since $\mC$ is $*$-preserving.
\eprf

\begin{rmk}[Defining strict positivity and a.e.\ modularity are subtle]{rmk:strictpossubtle}
Notice that the proof of a.e.\ modularity in Proposition~\ref{prop:strictimplies} would fail (S-positivity would still hold) if we had chosen the alternative axiom obtained by replacing the equation in Definition~\ref{defn:strictpositive} with
\be
\label{eq:strictpositivealternative}
\vcenter{\hbox{
\begin{tikzpicture}[font=\small]
\node[arrow box] (q) at (0,-0.3) {$q$};
\node[copier] (c) at (0,0.3) {};
\node[copier] (c2) at (-0.5,1.55) {};
\node[arrow box] (g) at (-0.5,0.95) {$g$};
\node[arrow box] (f) at (-1,2.25) {$f$};
\coordinate (X) at (0.7,2.85);
\coordinate (Y1) at (0,2.85);
\coordinate (Y2) at (-1,2.85);
\draw (0,-0.9) to (q);
\draw (q) to (c);
\draw (c) to[out=15,in=-90] (X);
\draw (c) to[out=165,in=-90] (g);
\draw (c2) to[out=165,in=-90] (f);
\draw (f) to (Y2);
\draw (c2) to[out=15,in=-90] (Y1);
\draw (g) to (c2);
\end{tikzpicture}}}
\quad=\quad
\vcenter{\hbox{
\begin{tikzpicture}[font=\small]
\node[arrow box] (q) at (0,-0.3) {$q$};
\node[copier] (c) at (0,0.3) {};
\node[copier] (c2) at (-0.5,0.95) {};
\node[arrow box] (g) at (0,1.65) {$g$};
\coordinate (X) at (0.7,2.95);
\node[arrow box] (g2) at (-1,1.65) {$g$};
\node[arrow box] (f) at (-1,2.4) {$f$};
\coordinate (Y1) at (-1,2.95);
\coordinate (Y2) at (0,2.95);
\draw (0,-0.9) to (q);
\draw (q) to (c);
\draw (c) to[out=15,in=-90] (X);
\draw (c) to[out=165,in=-90] (c2);
\draw (c2) to[out=15,in=-90] (g);
\draw (g2) to (f); 
\draw (f) to (Y1); 
\draw (g) to (Y2);
\draw (c2) to[out=165,in=-90] (g2);
\end{tikzpicture}}}
\quad,
\tag{\Radioactivity}
\ee
as suggested by the end of Example~\ref{exa:strictpositive}. 
One might think, then, to change the definition of a.e.\ modularity so that the right-hand-side of Definition~\ref{defn:aemodularity} would read 
\be
\label{eq:aemodularalternative}
\vcenter{\hbox{
\begin{tikzpicture}[font=\small]
\node[arrow box] (q) at (0,-0.3) {$q$};
\node[copier] (c) at (0,0.3) {};
\node[copier] (c2) at (-0.5,1.55) {};
\node[arrow box] (g) at (-0.5,0.95) {$g$};
\node[arrow box] (f) at (-1,2.25) {$f$};
\coordinate (X) at (0.7,2.95);
\coordinate (Y1) at (-1,2.95);
\coordinate (Y2) at (0,2.95);
\draw (0,-0.9) to (q);
\draw (q) to (c);
\draw (c) to[out=15,in=-90] (X);
\draw (c) to[out=165,in=-90] (g);
\draw (c2) to[out=165,in=-90] (f);
\draw (f) to (Y1);
\draw (c2) to[out=15,in=-90] (Y2);
\draw (g) to (c2);
\end{tikzpicture}}}
\quad=\quad
\vcenter{\hbox{
\begin{tikzpicture}[font=\small]
\node[arrow box] (q) at (0,-0.3) {$q$};
\node[copier] (c) at (0,0.3) {};
\node[copier] (c2) at (-0.5,1.55) {};
\node[arrow box] (g) at (0,2.25) {$g$};
\coordinate (X) at (0.7,2.95);
\coordinate (Y1) at (-1,2.95);
\coordinate (Y2) at (0,2.95);
\draw (0,-0.9) to (q);
\draw (q) to (c);
\draw (c) to[out=15,in=-90] (X);
\draw (c) to[out=165,in=-90] (c2);
\draw (c2) to[out=165,in=-90] (Y1);
\draw (g) to (Y2);
\draw (c2) to[out=15,in=-90] (g);
\end{tikzpicture}}}
\;.
\tag{\Stopsign}
\ee
The reason is because then, if we used~(\ref{eq:strictpositivealternative}) for the definition of strict positivity, we would get \emph{both} S-positivity (the proof is almost the same) and a.e.\ modularity in the sense of~(\ref{eq:aemodularalternative}). Indeed,~(\ref{eq:aemodularalternative}) follows from
\[
\vcenter{\hbox{
\begin{tikzpicture}[font=\small]
\node[arrow box] (q) at (0,-0.3) {$q$};
\node[copier] (c) at (0,0.3) {};
\node[copier] (c2) at (-0.5,1.55) {};
\node[arrow box] (g) at (-0.5,0.95) {$g$};
\node[arrow box] (f) at (-1,2.25) {$f$};
\coordinate (X) at (0.7,2.95);
\coordinate (Y1) at (-1,2.95);
\coordinate (Y2) at (0,2.95);
\draw (0,-0.9) to (q);
\draw (q) to (c);
\draw (c) to[out=15,in=-90] (X);
\draw (c) to[out=165,in=-90] (g);
\draw (c2) to[out=165,in=-90] (f);
\draw (f) to (Y1);
\draw (c2) to[out=15,in=-90] (Y2);
\draw (g) to (c2);
\end{tikzpicture}}}
\;
\overset{\text{(\ref{eq:strictpositivealternative})}}{=\joinrel=}
\;
\vcenter{\hbox{
\begin{tikzpicture}[font=\small]
\node[arrow box] (q) at (0,-0.3) {$q$};
\node[copier] (c) at (0,0.3) {};
\node[copier] (c2) at (-0.5,0.95) {};
\node[arrow box] (g) at (0,1.65) {$g$};
\coordinate (X) at (0.7,2.95);
\node[arrow box] (g2) at (-1,1.65) {$g$};
\node[arrow box] (f) at (-1,2.4) {$f$};
\coordinate (Y1) at (-1,2.95);
\coordinate (Y2) at (0,2.95);
\draw (0,-0.9) to (q);
\draw (q) to (c);
\draw (c) to[out=15,in=-90] (X);
\draw (c) to[out=165,in=-90] (c2);
\draw (c2) to[out=15,in=-90] (g);
\draw (g2) to (f); 
\draw (f) to (Y1); 
\draw (g) to (Y2);
\draw (c2) to[out=165,in=-90] (g2);
\end{tikzpicture}}}
\;=\;
\vcenter{\hbox{
\begin{tikzpicture}[font=\small]
\node[arrow box] (q) at (0,-0.3) {$q$};
\node[copier] (c) at (0,0.3) {};
\node[copier] (c2) at (0.5,0.95) {};
\node[arrow box] (g) at (0,1.65) {$g$};
\coordinate (X) at (1,2.95);
\node[arrow box] (g2) at (-0.8,1.65) {$g$};
\node[arrow box] (f) at (-0.8,2.4) {$f$};
\coordinate (Y1) at (-0.8,2.95);
\coordinate (Y2) at (0,2.95);
\draw (0,-0.9) to (q);
\draw (q) to (c);
\draw (c) to[out=15,in=-90] (c2);
\draw (c) to[out=165,in=-90] (g2);
\draw (c2) to[out=165,in=-90] (g);
\draw (g2) to (f); 
\draw (f) to (Y1); 
\draw (g) to (Y2);
\draw (c2) to[out=15,in=-90] (X);
\end{tikzpicture}}}
\;=\;
\vcenter{\hbox{
\begin{tikzpicture}[font=\small]
\node[arrow box] (q) at (0,-0.3) {$q$};
\node[copier] (c) at (0,0.3) {};
\node[copier] (c2) at (0.5,0.95) {};
\node[arrow box] (g) at (0,1.65) {$g$};
\coordinate (X) at (1,2.95);
\coordinate (Y1) at (-0.8,2.95);
\coordinate (Y2) at (0,2.95);
\draw (0,-0.9) to (q);
\draw (q) to (c);
\draw (c) to[out=15,in=-90] (c2);
\draw (c) to[out=165,in=-90] (Y1);
\draw (c2) to[out=165,in=-90] (g);
\draw (g) to (Y2);
\draw (c2) to[out=15,in=-90] (X);
\end{tikzpicture}}}
\;=\;
\vcenter{\hbox{
\begin{tikzpicture}[font=\small]
\node[arrow box] (q) at (0,-0.3) {$q$};
\node[copier] (c) at (0,0.3) {};
\node[copier] (c2) at (-0.5,1.55) {};
\node[arrow box] (g) at (0,2.25) {$g$};
\coordinate (X) at (0.7,2.95);
\coordinate (Y1) at (-1,2.95);
\coordinate (Y2) at (0,2.95);
\draw (0,-0.9) to (q);
\draw (q) to (c);
\draw (c) to[out=15,in=-90] (X);
\draw (c) to[out=165,in=-90] (c2);
\draw (c2) to[out=165,in=-90] (Y1);
\draw (g) to (Y2);
\draw (c2) to[out=15,in=-90] (g);
\end{tikzpicture}}}
\;.
\]
But if we chose our axiom of a.e.\ modularity to be~(\ref{eq:aemodularalternative}), then Theorem~\ref{thm:aemodbayesdisint} would fail! This shows that the axioms we have chosen are quite sensitive. We suspect that for these reasons, we have chosen the most reasonable option among its potential alternatives.%
\footnote{Note that none of these subtleties appear for classical Markov categories since all alternatives are equivalent.}
\end{rmk}

The last class of subcategories we discuss are causal subcategories~\cite[Definition~11.31]{Fr19}. 

\begin{defn}[Causal subcategory]{defn:causal}
An even subcategory $\mC$ of a quantum Markov category is \define{right causal} iff for all quadruples $f,g,h,k$ of morphisms in $\mC$ such that
\[
\vcenter{\hbox{%
\begin{tikzpicture}[font=\small]
\node[copier] (copier) at (0,0.3) {};
\node[arrow box] (h) at (0.5,0.95) {$h$};
\node[arrow box] (g) at (0,-0.3) {$g$};
\node[arrow box] (f) at (0,-1.2) {$f$};
\coordinate (X) at (-0.5,1.6);
\coordinate (Y) at (0.5,1.6);
\draw (0,-1.9) to (f);
\draw (f) to (g);
\draw (g) to (copier);
\draw (copier) to[out=165,in=-90] (X);
\draw (copier) to[out=15,in=-90] (h);
\draw (h) to (Y);
\end{tikzpicture}}}
\quad=\quad
\vcenter{\hbox{%
\begin{tikzpicture}[font=\small]
\node[copier] (copier) at (0,0.3) {};
\node[arrow box] (k) at (0.5,0.95) {$k$};
\node[arrow box] (g) at (0,-0.3) {$g$};
\node[arrow box] (f) at (0,-1.2) {$f$};
\coordinate (X) at (-0.5,1.6);
\coordinate (Y) at (0.5,1.6);
\draw (0,-1.9) to (f);
\draw (f) to (g);
\draw (g) to (copier);
\draw (copier) to[out=165,in=-90] (X);
\draw (copier) to[out=15,in=-90] (k);
\draw (k) to (Y);
\end{tikzpicture}}}
\quad,\quad\text{ then }\quad
\vcenter{\hbox{
\begin{tikzpicture}[font=\small]
\node[arrow box] (f) at (0,-0.3) {$f$};
\node[copier] (c) at (0,0.3) {};
\node[copier] (c2) at (0.5,1.55) {};
\node[arrow box] (g) at (0.5,0.95) {$g$};
\node[arrow box] (h) at (1,2.25) {$h$};
\coordinate (X) at (-0.7,2.95);
\coordinate (Y1) at (1,2.95);
\coordinate (Y2) at (0,2.95);
\draw (0,-0.9) to (f);
\draw (f) to (c);
\draw (c) to[out=165,in=-90] (X);
\draw (c) to[out=15,in=-90] (g);
\draw (c2) to[out=15,in=-90] (h);
\draw (h) to (Y1);
\draw (c2) to[out=165,in=-90] (Y2);
\draw (g) to (c2);
\end{tikzpicture}}}
\quad=\quad
\vcenter{\hbox{
\begin{tikzpicture}[font=\small]
\node[arrow box] (f) at (0,-0.3) {$f$};
\node[copier] (c) at (0,0.3) {};
\node[copier] (c2) at (0.5,1.55) {};
\node[arrow box] (g) at (0.5,0.95) {$g$};
\node[arrow box] (k) at (1,2.25) {$k$};
\coordinate (X) at (-0.7,2.95);
\coordinate (Y1) at (1,2.95);
\coordinate (Y2) at (0,2.95);
\draw (0,-0.9) to (f);
\draw (f) to (c);
\draw (c) to[out=165,in=-90] (X);
\draw (c) to[out=15,in=-90] (g);
\draw (c2) to[out=15,in=-90] (k);
\draw (k) to (Y1);
\draw (c2) to[out=165,in=-90] (Y2);
\draw (g) to (c2);
\end{tikzpicture}}}
.
\]
An analogous definition is made for \define{left causal}. 
When $\mC$ is $*$-preserving and right or left causal, it will be called \define{causal}. 
\end{defn}

\begin{prop}[$\fdCAlgSPU$ is causal]{prop:SPUcausal}
$\fdCAlgSPU$ is a causal subcategory of $\fdCAlgUY$. 
\end{prop}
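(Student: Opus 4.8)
The plan is to translate the two string-diagram conditions of Definition~\ref{defn:causal} into one-variable algebraic identities via Theorem~\ref{thm:ncaeequivalence}, and then to climb a short chain of Kadison--Schwarz inequalities. Since every SPU map is positive and hence $*$-preserving (Example~\ref{a007}), left and right causality coincide by the argument behind Corollary~\ref{cor:leftequalsrightaeequivalence}, so it suffices to verify right causality. Writing the abstract morphisms $f,g,h,k$ of Definition~\ref{defn:causal} as SPU maps $F\colon\mathcal X\to\mathcal W$, $G\colon\mathcal Y\to\mathcal X$, and $H,K\colon\mathcal Z\to\mathcal Y$ in $\fdCAlgY$ (arrows reverse in the Heisenberg picture and the copier becomes multiplication), the hypothesis diagram asserts that $h$ and $k$ are right $(g\circ f)$-a.e.\ equivalent. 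By Theorem~\ref{thm:ncaeequivalence} this is equivalent to
\[
D(C):=H(C)-K(C)\in\mathcal N_{F\circ G}\qquad\forall\,C\in\mathcal Z,
\]
where $\mathcal N_{F\circ G}$ is the right nullspace of $F\circ G$. Reading the conclusion diagram the same way, what must be shown is
\[
F\big(A\cdot G(B\cdot D(C))\big)=0\qquad\forall\,A\in\mathcal X,\;B\in\mathcal Y,\;C\in\mathcal Z.
\]

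The key structural inputs are three. First, a composite of SPU maps is again SPU (by the same Kadison--Schwarz chain as in the proof of Theorem~\ref{thm:fdcalgcpupositive}), so $F\circ G$ is SPU and $\mathcal N_{F\circ G}$ is the largest \emph{left} ideal contained in $\ker(F\circ G)$ by the footnote to Definition~\ref{defn:nullspace}; likewise $\mathcal N_F$ is a left ideal. Second, $F$ and $G$ each satisfy the Kadison--Schwarz inequality (Example~\ref{exa:positivehierarchy}). Third, positivity of $F$ propagates operator inequalities through $F$, and in a $C^*$-algebra $0\le X\le 0$ forces $X=0$; this is exactly the ``squeezing'' technique used in Lemmas~\ref{lem:Attalslemmaaegeneral} and~\ref{lem:disintaehomproperty}.

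The argument then runs in two steps. Starting from $D(C)\in\mathcal N_{F\circ G}$, the left-ideal property gives $B\,D(C)\in\mathcal N_{F\circ G}$ for every $B\in\mathcal Y$; setting $E:=B\,D(C)$ this says $(F\circ G)(E^*E)=0$. The chain
\[
0=F\big(G(E^*E)\big)\ge F\big(G(E)^*G(E)\big)\ge F\big(G(E)\big)^*F\big(G(E)\big)\ge0,
\]
which uses Kadison--Schwarz for $G$, positivity of $F$, and Kadison--Schwarz for $F$, forces every term to vanish; in particular $F\big(G(E)^*G(E)\big)=0$, i.e.\ $G(E)\in\mathcal N_F$. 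Applying the left-ideal property of $\mathcal N_F$ gives $A\,G(E)\in\mathcal N_F$, and one further step,
\[
0=F\big((A\,G(E))^*A\,G(E)\big)\ge F(A\,G(E))^*F(A\,G(E))\ge0,
\]
yields $F(A\,G(E))=0$. Since $G(E)=G(B\,D(C))$, this is precisely the required identity, completing the verification of right causality and hence of the proposition.

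The main obstacle is not the inequality bookkeeping, which is routine once everything is set up, but the faithful translation of the two diagrams into the correct one-variable algebraic statements: one must track the Heisenberg-picture reversal of arrows, identify the copier with multiplication, and invoke Theorem~\ref{thm:ncaeequivalence} correctly to replace the two-input a.e.\ conditions by membership in the \emph{right} nullspace. The conceptual crux is recognizing that the hypothesis lands $D(C)$ in a left ideal contained in $\ker(F\circ G)$, and that two Kadison--Schwarz squeezes then suffice to carry this annihilation past the additional copied wire appearing in the conclusion.
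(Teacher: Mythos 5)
Your proof is correct and is essentially the paper's argument in repackaged form: both reduce the causality diagrams via Theorem~\ref{thm:ncaeequivalence} and then propagate nullity through a Kadison--Schwarz squeeze, and your appeal to the left-ideal property of $\mathcal{N}_{F\circ G}$ (to pass from $D(C)$ to $B\,D(C)$) is precisely the paper's one-line substitution $A':=\big(H(B)-K(B)\big)^*A^*A$ into the hypothesis $F\big(G(A'(H(B)-K(B)))\big)=0$. The only cosmetic difference is that the paper performs a single squeeze (KS for $G$) and then cites Theorem~\ref{thm:ncaeequivalence} to convert $F(C^*C)=0$ into the conclusion, whereas you unpack that final conversion by hand with a second squeeze and the left-ideal property of $\mathcal{N}_F$.
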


\bprf
Suppose that $F,G,H,K$ are SPU maps that satisfy
\be
\label{eq:causalityassumption}
F\bigg(G\Big(A'\big(H(B)-K(B)\big)\Big)\bigg)=0
\ee
for all $A'$ and $B$ in the appropriate domains. 
By Theorem~\ref{thm:ncaeequivalence}, it suffices to prove 
\be
\label{eq:provingcausality}
F\bigg(\Big(G\big(AH(B)\big)-G\big(AK(B)\big)\Big)^*\Big(G\big(AH(B)\big)-G\big(AK(B)\big)\Big)\bigg)=0
\ee
for all $A$ and $B$. Set $C:=G\big(AH(B)\big)-G\big(AK(B)\big)=G\Big(A\big(H(B)-K(B)\big)\Big)$. Then 
\be
\begin{split}
0\le F(C^*C)&\le F\Bigg(G\bigg(\Big(A\big(H(B)-K(B)\big)\Big)^*\Big(A\big(H(B)-K(B)\big)\Big)\bigg)\Bigg)\quad\text{ by KS for $G$}\\
&=F\bigg(G\Big(\big(H(B)-K(B)\big)^*A^*A\big(H(B)-K(B)\big)\Big)\bigg)=0
\end{split}
\ee
since $A':=\big(H(B)-K(B)\big)^*A^*A$ is just some element of the domain of $G$ and~(\ref{eq:causalityassumption}) applies.  Thus, $F(C^*C)=0$ and~(\ref{eq:provingcausality}) holds.
\eprf

\begin{rmk}[$\fdCAlgPU$ is not causal]{rmk:PUnotcausal}
The category $\fdCAlgPU$ of positive unital maps is not causal. Though there may be a simpler counterexample, consider the maps $H,K:\mM_{3}(\C)\stoch\mM_{2}(\C),$ $G:\mM_{2}(\C)\stoch\mM_{2}(\C)$, and $F:\mM_{2}(\C)\stoch\C$, where $G$ is the transpose map, and the other maps are defined as follows. Set $P$ to be the rank $1$ projection given by 
\[
P:=\frac{1}{2}\begin{bmatrix}1&-i\\i&1\end{bmatrix}
\quad\text{ and let }\quad
B:=\begin{bmatrix}b_{11}&b_{12}&b_{13}\\b_{21}&b_{22}&b_{23}\\b_{31}&b_{32}&b_{33}\end{bmatrix}
\]
be an arbitrary element of $\mM_{3}(\C)$. Set $F:=\tr(P\;\cdot\;)$, 
\[
H(B):=b_{11}P^{\perp}+\frac{1}{2}(b_{22}+b_{33})P
\quad\text{ and }\quad
K(B):=b_{11}P^{\perp}+\big(\l b_{22}+(1-\l)b_{33}\big)P
\]
with $\l\in[0,1]\setminus\{\frac{1}{2}\}$. Hence, $F, H,$ and $K$ are CPU. 
With these definitions, one can check that 
\[
F\Big(G\big(AH(B)\big)\Big)=F\Big(G\big(AK(B)\big)\Big)\quad\forall\;A\in\mM_{2}(\C),\;B\in\mM_{3}(\C).
\]
In fact, since $F\circ G$ is a state with support $P^{T}=P^{\perp}$ (by the first calculation in Remark~\ref{rmk:disintdnibayesforpu}), checking this equality is equivalent to (by Theorem~\ref{thm:ncaeequivalence}~\ref{item:aeP}) checking 
\[
H(B)P^{\perp}=K(B)P^{\perp}\qquad\forall\;B\in\mM_{3}(\C), 
\]
which is much more easily seen to hold. 
However, the condition 
\[
F\Big(CG\big(AH(B)\big)\Big)=F\Big(CG\big(AK(B)\big)\Big)\quad\forall\;A,C\in\mM_{2}(\C),\;B\in\mM_{3}(\C)
\]
fails. One can check this more simply by using Theorem~\ref{thm:ncaeequivalence}~\ref{item:aeP} again, and noting that the projection associated to the state $F$ is $P$ itself. Indeed, 
\[
\xy0;/r.30pc/:
(28,-10)*+{\left(b_{11}P+\big(\l b_{22}+(1-\l)b_{33}\big)P^{\perp}\right)A^{T}P}="6";
(-28,-10)*+{\left(b_{11}P+\frac{1}{2}(b_{22}+b_{33})P^{\perp}\right)A^{T}P}="5";
(-34,3)*+{H(B)^{T}A^{T}P}="4";
(-18,14)*+{G\big(AH(B)\big)P}="3";
(18,14)*+{G\big(AK(B)\big)P}="2";
(34,3)*+{K(B)^{T}A^{T}P}="1";
{\ar@{=}@/_0.75pc/"1";"2"};
{\ar@{=}@/_0.75pc/"3";"4"};
{\ar@{=}@/_0.75pc/"4";"5"};
{\ar@{=}@/_0.75pc/"6";"1"};
\endxy
\]
and the bottom two terms are now more easily seen to not be equal for all $A\in\mM_{2}(\C)$ and $B\in\mM_{3}(\C)$. 
\end{rmk}

\begin{cor}[A.e.\ equivalence classes of state-preserving SPU maps form a category]{cor:aeclassesofstatepreserving}
Given morphisms $\Theta\xstoch{p}X$, $f,f':X\stoch Y$, and $g,g':Y\stoch Z$ in a causal subcategory $\mC$, suppose that $f$ is right (left) $p$-a.e.\ equal to $f'$ and $g$ is right (left) $q$-a.e.\ equal to $g$', where $q:=f\circ p\equiv f'\circ p$. Then $g\circ f$ is right (left) $p$-a.e.\ equal to $g'\circ f'$.%
\footnote{If $\mC$ is $*$-preserving, the adjective right (left) can be dropped.}
Furthermore, if $\mC$ is also $*$-preserving, then $\mC$ is a.e.\ well-defined with respect to states (cf.\ Definition~\ref{defn:statepreservingcat}). 
In particular, $\fdCAlgSPU$ is a.e.\ well-defined with respect to states. 
\end{cor}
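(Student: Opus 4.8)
The plan is to reduce the compositionality statement to a two–step argument, treating post-composition and pre-composition separately, since right $p$-a.e.\ equivalence is literally the equality of two parallel morphisms $\Theta\xstoch{}X\times Y$ and is therefore an equivalence relation for which transitivity is automatic. Concretely, to prove $g\circ f\aeequals{p}g'\circ f'$ I would first establish $g\circ f\aeequals{p}g'\circ f$ and then $g'\circ f\aeequals{p}g'\circ f'$, and conclude by transitivity. Before doing so I would note that $q$ is genuinely well-defined: composing the right $p$-a.e.\ equation $f\aeequals{p}f'$ with $!_{X}\times\id_{Y}$ and using the counit law (the first identity in~(\ref{eq:markovcatfirstconditions})) collapses the copy and yields $f\circ p=f'\circ p=:q$.

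The post-composition step is the easy one and needs no causality. Right $p$-a.e.\ equivalence of $f$ and $f'$ is the equality of $(\id_{X}\times f)\circ\Delta_{X}\circ p$ and $(\id_{X}\times f')\circ\Delta_{X}\circ p$ as morphisms $\Theta\xstoch{}X\times Y$; post-composing both sides with $\id_{X}\times g'$ immediately gives the right $p$-a.e.\ equality of $g'\circ f$ and $g'\circ f'$, because the left copy-leg is untouched.

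The pre-composition step is where the causality axiom (Definition~\ref{defn:causal}) does the real work, and this is the step I expect to be the main obstacle: since $f$ need not be deterministic, one cannot push $f$ through the copy map, so $g\aeequals{q}g'$ does not transport across $f$ by a naive calculation. The idea is to instantiate right causality with the roles $(f,g,h,k)\mapsto(p,f,g,g')$. With this assignment the hypothesis of Definition~\ref{defn:causal} is exactly right $(f\circ p)$-a.e.\ equivalence of $g$ and $g'$, i.e.\ $g\aeequals{q}g'$, which we are given. Its conclusion is an equality of morphisms $\Theta\xstoch{}X\times Y\times Z$ carrying an extra intermediate $Y$-output produced by the inner copy after $f$. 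I would then compose this conclusion with $\id_{X}\times\,!_{Y}\times\id_{Z}$ to discard that middle $Y$-leg; by the counit law the inner copy followed by a discard collapses to $\id_{Y}$, so the right branch becomes simply $f$ followed by $g$ (resp.\ $g'$). What remains is precisely $(\id_{X}\times(g\circ f))\circ\Delta_{X}\circ p=(\id_{X}\times(g'\circ f))\circ\Delta_{X}\circ p$, that is $g\circ f\aeequals{p}g'\circ f$. The left-a.e.\ version is proved identically using left causality.

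Finally I would assemble the two steps by transitivity to obtain $g\circ f\aeequals{p}g'\circ f'$, which is the first assertion. For the second assertion, when $\mC$ is in addition $*$-preserving, Corollary~\ref{cor:leftequalsrightaeequivalence} lets me drop the adjectives ``left'' and ``right,'' so the statement becomes exactly the defining condition of being a.e.\ well-defined with respect to states (Definition~\ref{defn:statepreservingcat}). The special case of $\fdCAlgSPU$ then follows since it is causal by Proposition~\ref{prop:SPUcausal} and $*$-preserving (every positive, hence every SPU, map is $*$-preserving, cf.\ Example~\ref{a007}).
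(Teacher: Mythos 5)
Your proposal is correct and takes essentially the same route as the paper: both proofs reduce the claim to (i) post-composing the a.e.\ equality $f\aeequals{p}f'$ with $g'$, which needs no causality, and (ii) an application of right causality with exactly your instantiation $(f,g,h,k)\mapsto(p,f,g,g')$, using the counit law to insert and then discard the middle $Y$-leg so that the three-legged causality conclusion collapses to $g\circ f\aeequals{p}g'\circ f$; the paper merely performs these two steps in the opposite order inside a single chain of diagram equalities rather than invoking transitivity explicitly. Your concluding reductions via Corollary~\ref{cor:leftequalsrightaeequivalence}, Definition~\ref{defn:statepreservingcat}, and Proposition~\ref{prop:SPUcausal} likewise match the paper's.
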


\bprf
The first claim was proved in~\cite[Proposition~13.9]{Fr19}, but we include the proof to illustrate that the commutativity axiom of a classical Markov category is not needed. Indeed, 
\[
\vcenter{\hbox{%
\begin{tikzpicture}[font=\small]
\node[arrow box] (q) at (0,-0.3) {$p$};
\node[copier] (copier) at (0,0.3) {};
\coordinate (f) at (-0.5,0.91) {};
\node[arrow box] (g) at (0.5,0.95) {$f'$};
\node[arrow box] (e) at (0.5,1.75) {$g'$};
\coordinate (X) at (-0.5,2.3);
\coordinate (Y) at (0.5,2.3);
\draw (0,-0.9) to (q);
\draw (q) to (copier);
\draw (copier) to[out=165,in=-90] (f);
\draw (f) to (X);
\draw (copier) to[out=15,in=-90] (g);
\draw (g) to (e);
\draw (e) to (Y);
\end{tikzpicture}}}
\quad
=
\quad
\vcenter{\hbox{%
\begin{tikzpicture}[font=\small]
\node[arrow box] (q) at (0,-0.3) {$p$};
\node[copier] (copier) at (0,0.3) {};
\coordinate (f) at (-0.5,0.91) {};
\node[arrow box] (g) at (0.5,0.95) {$f$};
\node[arrow box] (e) at (0.5,1.75) {$g'$};
\coordinate (X) at (-0.5,2.3);
\coordinate (Y) at (0.5,2.3);
\draw (0,-0.9) to (q);
\draw (q) to (copier);
\draw (copier) to[out=165,in=-90] (f);
\draw (f) to (X);
\draw (copier) to[out=15,in=-90] (g);
\draw (g) to (e);
\draw (e) to (Y);
\end{tikzpicture}}}
\quad
=
\quad
\vcenter{\hbox{
\begin{tikzpicture}[font=\small]
\node[arrow box] (p) at (0,-0.3) {$p$};
\node[copier] (c) at (0,0.3) {};
\node[copier] (c2) at (0.5,1.55) {};
\node[arrow box] (f) at (0.5,0.95) {$f$};
\node[arrow box] (g) at (1,2.25) {$g'$};
\coordinate (X) at (-0.7,2.95);
\coordinate (Y1) at (1,2.95);
\node[discarder] (Y2) at (0,2.35) {};
\draw (0,-0.9) to (p);
\draw (p) to (c);
\draw (c) to[out=165,in=-90] (X);
\draw (c) to[out=15,in=-90] (f);
\draw (c2) to[out=15,in=-90] (g);
\draw (g) to (Y1);
\draw (c2) to[out=165,in=-90] (Y2);
\draw (f) to (c2);
\end{tikzpicture}}}
\quad
\overset{\text{Defn~\ref{defn:causal}}}{=\joinrel=\joinrel=\joinrel=\joinrel=\joinrel=}
\quad
\vcenter{\hbox{
\begin{tikzpicture}[font=\small]
\node[arrow box] (p) at (0,-0.3) {$p$};
\node[copier] (c) at (0,0.3) {};
\node[copier] (c2) at (0.5,1.55) {};
\node[arrow box] (f) at (0.5,0.95) {$f$};
\node[arrow box] (g) at (1,2.25) {$g$};
\coordinate (X) at (-0.7,2.95);
\coordinate (Y1) at (1,2.95);
\node[discarder] (Y2) at (0,2.35) {};
\draw (0,-0.9) to (p);
\draw (p) to (c);
\draw (c) to[out=165,in=-90] (X);
\draw (c) to[out=15,in=-90] (f);
\draw (c2) to[out=15,in=-90] (g);
\draw (g) to (Y1);
\draw (c2) to[out=165,in=-90] (Y2);
\draw (f) to (c2);
\end{tikzpicture}}}
\quad
=
\quad
\vcenter{\hbox{%
\begin{tikzpicture}[font=\small]
\node[arrow box] (q) at (0,-0.3) {$p$};
\node[copier] (copier) at (0,0.3) {};
\coordinate (f) at (-0.5,0.91) {};
\node[arrow box] (g) at (0.5,0.95) {$f$};
\node[arrow box] (e) at (0.5,1.75) {$g$};
\coordinate (X) at (-0.5,2.3);
\coordinate (Y) at (0.5,2.3);
\draw (0,-0.9) to (q);
\draw (q) to (copier);
\draw (copier) to[out=165,in=-90] (f);
\draw (f) to (X);
\draw (copier) to[out=15,in=-90] (g);
\draw (g) to (e);
\draw (e) to (Y);
\end{tikzpicture}}}
\quad,
\]
where the first equality follows from the assumption that $f'$ is right $p$-a.e.\ equivalent to $f$. Note that causality applies in the third equality because $g'$ is right $q$-a.e.\ equivalent to $g$. 
The last claim of the corollary is a consequence of this together with Proposition~\ref{prop:SPUcausal}. 
\eprf

\begin{ques}[Is there a diagrammatic proof of a.e.\ composition for PU maps?]{ques:diagramaePUcompose}
Corollary~\ref{cor:aeclassesofstatepreserving} was already known~\cite[Proposition~3.106]{PaRu19}. In fact, the original proof using Kadison--Schwarz is more direct than the proof we have given for Corollary~\ref{cor:aeclassesofstatepreserving}. In fact, it is even true for $\fdCAlgPU$, as was shown in~\cite[Theorem~3.113]{PaRu19} (though the proof of this is more involved). 
The fact that $\fdCAlgPU$ is not causal in the sense of Definition~\ref{defn:causal}, but is nevertheless a.e.\ well-defined with respect to states (cf.\  Definition~\ref{defn:statepreservingcat}), shows that causality is not necessary for this latter property. Is there a string diagrammatic axiom weaker than causality (which presumably $\fdCAlgPU$ satisfies) that guarantees that a.e.\ equivalence classes of state-preserving morphisms form a category? 
\end{ques}

\begin{ques}[A non-commutative generalization of the Hewitt--Savage zero-one law]{a022}
Recently, Fritz and Rischel proved an abstract version of the Hewitt--Savage zero-one law in causal Markov categories~\cite[Theorem~5.4]{FrRi20}. Since $\fdCAlgSPU$ is causal, one might wonder if there is a quantum generalization of this theorem. We leave this as an open question but mention that its answer is beyond the formalism of quantum Markov categories as defined here.%
\footnote{As for most results in this paper, the finite-dimensionality assumption in Proposition~\ref{prop:SPUcausal} does not \emph{seem} crucial. It only seems needed to make rigorous sense of a monoidal structure (cf.\ Remark~\ref{ex:allCAlg} and Question~\ref{ques:multilinearCAlg}).}
\end{ques}

We end with a summary of the main consequences of earlier results specific to $C^*$-algebras. 

\begin{cor}[Disintegrations and Bayesian inversion in $\fdCAlgCPU$]{cor:summary}
Let $\mA$ and $\mB$ be finite-dimensional $C^*$-algebras, let $\mA\xstoch{\w}\C$ be a state on $\mA$, let $\mB\xstoch{F}\mA$ be a CPU map, and set $\xi:=\w\circ F$. 
\begin{enumerate}[i.]
\itemsep0pt
\item
If a CPU Bayesian inverse of $(F,\w,\xi)$ exists, it is $\xi$-a.e.\ unique. 
\item
Suppose $G$ is a CPU Bayesian inverse of $(F,\w,\xi)$. Then $F$ is $\w$-a.e.\ deterministic if and only if $G$ is a disintegration of $(F,\w,\xi)$.
\item
If $(F,\w,\xi)$ has a CPU disintegration $G$, then $F$ is $\w$-a.e.\ deterministic and $G$ is a Bayesian inverse of $(F,\w,\xi)$.
\item
If $F$ is $\w$-a.e.\ deterministic, then a CPU Bayesian inverse exists if and only if a disintegration exists. 
\item
Any two CPU disintegrations of $(F,\w,\xi)$ are $\xi$-a.e.\ equivalent. 
\item
If $\mA$ is commutative and $F$ is $\omega$-a.e.\ deterministic, a CPU disintegration of $(F,\w,\xi)$ always exists. If either $\mA$ is not commutative or $F$ is not $\omega$-a.e.\ deterministic, a CPU disintegration need not exist. 
\item
If both $\mA$ and $\mB$ are commutative, a CPU Bayesian inverse always exists. If at least one of $\mA$ or $\mB$ is not commutative, then a CPU Bayesian inverse of $(F,\omega,\xi)$ need not exist. 
\end{enumerate}
\end{cor}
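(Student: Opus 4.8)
The plan is to read this corollary as a translation of the abstract diagrammatic results of the preceding section into the language of $C^*$-algebras, so that items (i)--(v) reduce to bookkeeping once two observations are in place. First, every CPU map is positive and hence $*$-preserving (Example~\ref{a007}), so $\fdCAlgCPU$ is an even $*$-preserving subcategory of $\fdCAlgUY$. Second, $\xi:=\w\circ F$ is genuinely a state, since it is positive and $\xi(1_{\mB})=\w(F(1_{\mB}))=\w(1_{\mA})=1$. Combined with Theorem~\ref{thm:SPUmodular}, which says $\fdCAlgCPU$ is a.e.\ modular, this places us exactly in the hypotheses of Corollary~\ref{cor:binversiondisint} and Theorem~\ref{thm:aemodbayesdisint}, under the dictionary $p=\w$, $q=\xi$, $f=F$, $g=G$.

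With this setup I would dispatch the structural claims as follows. For (i), two CPU Bayesian inverses are $*$-preserving, so Lemma~\ref{lem:bayesaeunique} gives $\xi$-a.e.\ equality. For (iii), $G$ is a disintegration of $(F,\w,\xi)$ and $\fdCAlgCPU$ is a.e.\ modular, so $G$ is left $\xi$-a.e.\ modular with respect to $F$; the characterization of Theorem~\ref{thm:aemodbayesdisint} then yields simultaneously that $F$ is $\w$-a.e.\ deterministic and that $G$ is a Bayesian inverse (equivalently, this is the implication (a)$\Rightarrow$(b) of Corollary~\ref{cor:binversiondisint}). Item (v) is the ``in particular'' clause of Corollary~\ref{cor:binversiondisint}. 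For (ii), the forward direction ($F$ a.e.\ deterministic $\Rightarrow$ $G$ a disintegration) is Proposition~\ref{thm:Bayesianinverseofdeterministicisadisint}, or its specialization Corollary~\ref{cor:bayesdetCAlg}, while the reverse direction is exactly (iii) applied to the disintegration $G$. Finally, (iv) is the equivalence (a)$\Leftrightarrow$(b) of Corollary~\ref{cor:binversiondisint} with the hypothesis ``$F$ is $\w$-a.e.\ deterministic'' already assumed, so that ``disintegration exists'' and ``Bayesian inverse exists'' collapse onto one another.

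The existence statements (vi) and (vii) are of a different nature and are where the genuine difficulty lies, since they cannot be obtained by diagrammatic manipulation alone; they are the main obstacle. Two pieces do come for free: the negative half of (vi) when $F$ is not $\w$-a.e.\ deterministic is immediate from (iii), because a CPU disintegration would force $F$ to be $\w$-a.e.\ deterministic and so none can exist; and when both $\mA$ and $\mB$ are commutative everything restricts to the image of $\FinStoch$, recovering the classical Theorem~\ref{thm:classicalBayestheorem} and Example~\ref{thm:disintsexistfordeterministicmaps}. The remaining assertions---the positive existence of CPU disintegrations (commutative $\mA$, arbitrary $\mB$) and of CPU Bayesian inverses, and the counterexamples witnessing that non-commutativity of $\mA$ (for disintegrations) or of $\mA$ or $\mB$ (for Bayesian inverses) can obstruct existence even when $F$ is $\w$-a.e.\ deterministic---are precisely the content of the companion works~\cite{PaRu19,PaRuBayes}, with illustrative obstructions already visible in Example~\ref{exa:aedniaeequivdet} and Remark~\ref{rmk:aedeterministicDNIequivtodeterministic}. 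Thus the effort is not in the categorical bookkeeping of (i)--(v), which follows formally from the a.e.\ modularity of Theorem~\ref{thm:SPUmodular}, but in correctly invoking these external existence and non-existence results while taking care that every map involved is genuinely \emph{completely} positive and unital rather than merely positive or Schwarz-positive.
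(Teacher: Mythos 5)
Your treatment of items (i)--(v) and (vii) is correct and follows the paper's own route exactly: (i) via Lemma~\ref{lem:bayesaeunique}, (ii) via Proposition~\ref{thm:Bayesianinverseofdeterministicisadisint} in one direction and Theorems~\ref{thm:aemodbayesdisint} and~\ref{thm:SPUmodular} in the other, (iii)--(v) from the same pair of theorems through Corollary~\ref{cor:binversiondisint}, the negative half of (vi) from (iii), and (vii) from Theorem~\ref{thm:classicalBayestheorem} in the commutative case together with the counterexamples of~\cite{PaRuBayes} otherwise.

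There is, however, a genuine gap in the positive half of (vi): the existence of a CPU disintegration when $\mA$ is commutative and $F$ is merely $\w$-a.e.\ deterministic. You defer this entirely to~\cite{PaRu19,PaRuBayes}, but those works treat disintegrations of genuinely \emph{deterministic} $F$, i.e.\ $*$-homomorphisms, and this case does not reduce to that one: by Example~\ref{exa:aedniaeequivdet} an $\w$-a.e.\ deterministic CPU map need not be $\w$-a.e.\ equivalent to \emph{any} $*$-homomorphism (indeed, when $\mB=\mM_{n}(\C)$ no $*$-homomorphism into the target may exist at all), so one cannot replace $F$ by a deterministic representative and quote the companion results. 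The paper instead supplies a self-contained constructive argument at exactly this point, and something like it is required. Writing $\mA\cong\C^{X}$ and $\mB\cong\bigoplus_{y\in Y}\mM_{n_{y}}(\C)$, the a.e.\ determinism of $F$ (via Theorem~\ref{thm:ncaeequivalence} and Corollary~\ref{cor:Attalslemmaae}) forces each composite $\mM_{n_{y}}(\C)\hookrightarrow\mB\xstoch{F}\mA\xrightarrow{\ev_{x}}\C$ with $x\in X\setminus N_{p}$ to be a (possibly non-unital) $*$-homomorphism, which must vanish unless $n_{y}=1$; unitality of $F$ then singles out, for each $x\in X\setminus N_{p}$, a unique $y$ with a nonzero composite, yielding a function $f:X\setminus N_{p}\to Y$ whose image carries full measure. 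One then invokes the classical existence of disintegrations (Example~\ref{thm:disintsexistfordeterministicmaps}) for $(f,p,q)$ and extends the resulting CPU map arbitrarily (by any CPU map) on the matrix blocks indexed by $Y\setminus f(X\setminus N_{p})$, which sit over the null set, to obtain a CPU disintegration of $(F,\w,\xi)$. Without this block-decomposition step your proof of (vi) rests on a citation that does not cover the stated hypothesis.
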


Hence, not every deterministic map is invertible nor does every deterministic map equipped with a state have a disintegration (cf.\ \cite[Section~5.2]{PaRu19}). Nevertheless, there are more deterministic morphisms admitting disintegrations than inverses. In the larger category of $C^*$-algebras and (state-preserving) CPU maps (as opposed to just state-preserving deterministic maps like those that appear in~\cite{PaEntropy}), even more morphisms have Bayesian inverses. 

\bprf
[Proof of Corollary~\ref{cor:summary}]
{\color{white}{you found me!}}

\begin{enumerate}[i.]
\item
This follows from Lemma~\ref{lem:bayesaeunique}. 
\item
The forward direction was proved in Proposition~\ref{thm:Bayesianinverseofdeterministicisadisint}. The reverse direction is a consequence of Theorem~\ref{thm:aemodbayesdisint} and Theorem~\ref{thm:SPUmodular}. 
\item
This follows from Theorem~\ref{thm:aemodbayesdisint} and Theorem~\ref{thm:SPUmodular}
\item
This follows from the previous statements. 
\item
This follows from the previous statements. 
\item
First suppose that $\mA$ is commutative and $F$ is $\omega$-a.e.\ deterministic. Since $\mA$ is commutative, it is of the form $\C^{X}$ for some finite set $X$ (up to $*$-isomorphism). Similarly, write $\mB$ in the form $\mB=\bigoplus_{y\in Y}\mM_{n_{y}}(\C)$ for some finite set $Y$. In this case, $\omega$ corresponds to a probability measure $\{\bullet\}\xstoch{p}X$. As such, the support $P_{\omega}=1_{X\setminus N_{p}}$ is the function whose value is $1$ at $x\in X\setminus N_{p}$ and zero elsewhere. Since $F$ is $\omega$-a.e.\ deterministic, $F(B_{1}^*B_{2})P_{\omega}=F(B_{1})^*F(B_{2})P_{\omega}$ for all $B_{1},B_{2}\in\mB$. In particular the composite 
\be
\mM_{n_{y}}(\C)\hookrightarrow\mB\xstoch{F}\mA\xrightarrow{\mathrm{ev}_{x}}\C
\ee
is a (not necessarily unital) $*$-homomorphism for all $x\in X\setminus N_{p}$ and all $y\in Y$. However, the only (not necessarily unital) $*$-homomorphism $\mM_{n_{y}}(\C)\to\C$ is the zero map \emph{unless} $n_{y}=1$ (in which case the map could be the zero map or the identity map). However, since $F$ is unital, for every $x\in X$, there necessarily exists a $y\in Y$ such that $\mM_{n_{y}}(\C)\hookrightarrow\mB\xstoch{F}\mA\xrightarrow{\mathrm{ev}_{x}}\C$ is nonzero. Hence, for $x\in X\setminus N_{p}$, there exists a \emph{unique} $y\in Y$ such that $\mM_{n_{y}}(\C)\hookrightarrow\mB\xstoch{F}\mA\xrightarrow{\mathrm{ev}_{x}}\C$ is nonzero. Let $X\setminus N_{p}\xrightarrow{f}Y$ be the corresponding function sending $x$ to this unique $y$. Therefore, $F$ decomposes as 
\be
\mB\cong\C^{f(X\setminus N_{p})}\oplus\left(\bigoplus_{y\in Y\setminus f(X\setminus N_{p})}\mM_{n_{y}}(\C)\right)\xstoch{F}\C^{X\setminus N_{p}}\oplus\C^{N_{p}}\cong\mA,
\ee
where the image of $\bigoplus_{y\in Y\setminus f(X\setminus N_{p})}\mM_{n_{y}}(\C)$ lands exclusively in $\C^{N_{p}}$. Therefore, the induced functional $\xi$ restricted to $\mM_{n_{y}}(\C)$ vanishes, i.e.\
\be
\xy0;/r.25pc/:
(-15,-7.5)*+{\mM_{n_{y}}(\C)}="1";
(-15,7.5)*+{\mB}="2";
(15,7.5)*+{\mA}="3";
(15,-7.5)*+{\C}="4";
{\ar@{^{(}->}"1";"2"};
{\ar@{~>}"2";"3"^{F}};
{\ar@{~>}"2";"4"^{\;\xi:=\omega\circ F}};
{\ar@{~>}"3";"4"^{\omega}};
{\ar"1";"4"_{0}};
\endxy
\ee
commutes for all $y\in Y\setminus f(X\setminus N_{p})$. This just corresponds to the fact that the map $f$ is surjective onto a set of full measure, namely $f(X\setminus N_{p})$ and the remaining $Y\setminus f(X\setminus N_{p})$ has measure zero. Let $q$ be this associated probability measure on $Y$ so that $N_{q}=Y\setminus f(X\setminus N_{p})$. Since classical disintegrations exist, there is a disintegration $Y\setminus f(X\setminus N_{p})\xstoch{g} X\setminus N_{p}$ of $(f,p,q)$ (where the codomain of $f$ is restricted to the image). Let $G$ be the associated CPU map $\C^{X\setminus N_{p}}\xstoch{G}\C^{f(X\setminus N_{p})}$. Choosing any CPU map $\C^{X}\xstoch{G'}\bigoplus_{y\in Y\setminus f(X\setminus N_{p})}\mM_{n_{y}}(\C)$, one can extend $G$ to a CPU map $\mA\xstoch{\overline{F}}\mB$. This map satisfies $\overline{F}\circ F\aeequals{\xi}\id_{\mB}$ and $\xi\circ\overline{F}=\omega$. 

Finally, if $\mA$ is not commutative, there are examples where CPU disintegrations do not exist~\cite{PaRu19}, and if $F$ is not $\omega$-a.e.\ deterministic, then the previous facts imply a CPU disintegration need not exist. 
\item
When both $\mA$ and $\mB$ are commutative, this is Bayes' theorem (Theorem~\ref{thm:classicalBayestheorem}). When $\mA$ or $\mB$ are not commutative, counterexamples are provided in~\cite{PaRuBayes}. \qedhere
\end{enumerate}
\eprf

\appendix

\section[Notation tables]{a023}
\label{sec:notationtables}
\vspace{-12mm}
\noindent
\begin{tikzpicture}
\coordinate (L) at (-8.75,0);
\coordinate (R) at (8.75,0);
\draw[line width=2pt,orange!20] (L) -- node[anchor=center,rectangle,fill=orange!20]{\strut \Large \textcolor{black}{\textbf{A\;\; Notation tables}}} (R);
\end{tikzpicture}

This section contains some tables for reference. 
The first table contains names of certain categories and subcategories used. For many such subcategories, the same notation $\mC$ is used. This should not cause confusion since the specific property is always written out explicitly in all claims. 

\begin{center}
{\large Certain subcategories of (or related to) quantum Markov categories}
\begin{longtable}{ccc}
\hline
Condition/axiom&Common Notation&First appearance\\
\hline
Quantum Markov category&$\mM_{\text{\Yinyang}}$&Def'n~\ref{defn:qmc} (page~\pageref{defn:qmc})\\
Subcategory of even morphisms&$\mM$ or $\mM_{\mathrm{even}}$&Def'n~\ref{defn:qmc} (page~\pageref{defn:qmc})\\
Subcategory of even $*$-preserving morphisms&$\mM_{*}$&Def'n~\ref{defn:selfadjointmorphism} (page~\pageref{defn:selfadjointmorphism})\\
S-positive subcategory&$\mC$&Def'n~\ref{defn:positivecategory} (page~\pageref{defn:positivecategory})\\
Deterministically reasonable subcategory&$\mC$&Def'n~\ref{defn:detreason} (page~\pageref{defn:detreason})\\
A.e.\ modular subcategory&$\mC$&Def'n~\ref{defn:aemodularity} (page~\pageref{defn:aemodularity})\\
Category of state-preserving morphisms in $\mC$&$I_{/\mC}$&Def'n~\ref{defn:statepreservingcat} (page~\pageref{defn:statepreservingcat})\\
A.e.\ well-defined with respect to states&$\mC$&Def'n~\ref{defn:statepreservingcat} (page~\pageref{defn:statepreservingcat})\\
Strictly positive subcategory&$\mC$&Def'n~\ref{defn:strictpositive} (page~\pageref{defn:strictpositive})\\
Causal subcategory&$\mC$&Def'n~\ref{defn:causal} (page~\pageref{defn:causal})\\
\hline
\end{longtable}
\end{center}

In the following list of categories, \emph{all} $C^*$-algebras are assumed to be unital and finite-dimensional. If it makes sense to do so, appending a \Yingyang\, symbol as a subscript means that one is also including conjugate-linear maps in the case of $C^*$-algebras. 

\begin{center}
{\large Frequently used categories}
\begin{longtable}{cc}
\hline
Notation&Description\\
\hline
$\FinStoch$&finite sets and stochastic maps\\
$\FinMeas$&finite sets and transition maps\\
$\FinMeasP$&finite sets and non-negative transition maps\\
$\Stoch$&measurable spaces and Markov kernels\\
$\fdCAlgY$&$C^*$-algebras and linear \& conjugate-linear maps\\
$\fdCAlgP$&$C^*$-algebras and positive (linear) maps\\
$\fdCAlgU$&$C^*$-algebras and unital linear linear maps\\
$\fdCAlgPU$&$C^*$-algebras and positive unital (linear) maps\\
$\fdCAlgSPU$&$C^*$-algebras and Schwarz-positive unital (linear) maps\\
$\fdCAlgCPU$&$C^*$-algebras and completely positive unital (linear) maps\\
$\fdCAlg_{\mathrm{det}}$&$C^*$-algebras and (unital) $*$-homomorphisms\\
\hline
\end{longtable}
\end{center}

\begin{center}
{\large Frequently used morphisms}\\
\begin{longtable}{cc}
\hline
Name/Notation&First appearance/defined\\
\hline
stochastic&Def'n~\ref{defn:stochasticmaps} (page~\pageref{defn:stochasticmaps})\\
probability-preserving&Def'n~\ref{defn:stochasticmaps} (page~\pageref{defn:stochasticmaps})\\
even&Def'n~\ref{defn:gradedmonoidalcategories} (page~\pageref{defn:gradedmonoidalcategories})\\
conjugate-linear&Exa~\ref{ex:linearandantilineartensor} (page~\pageref{ex:linearandantilineartensor})\\
copy $\Delta_{X}$, ground $!_{X}$, involve $*_{X}$&Def'n~\ref{defn:qmc} (page~\pageref{defn:qmc})\\
unital&Def'n~\ref{defn:qmc} (page~\pageref{defn:qmc})\\
positive, Schwarz-positive, completely positive&Def'n~\ref{defn:positivemaps} (page~\pageref{defn:positivemaps})\\
$*$-preserving&Def'n~\ref{defn:selfadjointmorphism} (page~\pageref{defn:selfadjointmorphism})\\
deterministic&Def'n~\ref{eq:deterministicmap} (page~\pageref{eq:deterministicmap})\\
states and effects&Def'n~\ref{defn:state} (page~\pageref{defn:state})\\
a.e.\ equivalent&Def'n~\ref{defn:aeequivalence} (page~\pageref{defn:aeequivalence})\\
a.e.\ deterministic&Def'n~\ref{defn:aeequivalentdeterministic} (page~\pageref{defn:aeequivalentdeterministic})\\
a.e.\ unital&Def'n~\ref{eq:aecausal} (page~\pageref{eq:aecausal})\\
state-preserving&Def'n~\ref{defn:disintegration} (page~\pageref{defn:disintegration})\\
disintegration&Def'n~\ref{defn:disintegration} (page~\pageref{defn:disintegration})\\
Bayes map and Bayesian inverse&Def'n~\ref{defn:bayesianinverse} (page~\pageref{defn:bayesianinverse})\\
a.e.\ modular&Def'n~\ref{defn:aemodularity} (page~\pageref{defn:aemodularity})\\
\hline
\end{longtable}
\end{center}

\section*{Acknowledgements}

I gratefully acknowledge support from the Simons Center for Geometry and Physics, Stony Brook University and for the opportunity to participate in the workshop ``Operator Algebras and Quantum Physics'' in June 2019. Specifically, I thank Luca Giorgetti for encouraging me to pursue what happens to the notion of a disintegration when the morphism you are disintegrating is stochastic as opposed to deterministic. 
I thank Tobias Fritz for additional insight and for answering many of my questions on Markov categories and I thank Kenta Cho for informing me of Fritz' work on Markov categories. 
I thank Chris Heunen for pointing out an inconsistency in an earlier version of the definition of a quantum Markov category. 
I thank Christian Carmellini, Aaron Fenyes, Philip Parzygnat, Alessio Ranallo, and Ir\`ene Ren for discussions.
I thank St\'ephan Attal, whose notes on quantum channels~\cite{Attal} (particularly Theorem~6.38) have influenced some of the most important proofs in this work. 
I also thank Kenta Cho, Chris Heunen, and Bart Jacobs for sharing their \LaTeX\;\! code for the string diagrams used in this paper and I thank Tom for his \LaTeX\, code on frames, which was accessed from
\url{https://texblog.org/2015/09/30/fancy-boxes-for-theorem-lemma-and-proof-with-mdframed/} on November 1st, 2018.
Most importantly, I thank Benjamin Russo for many years of fruitful discussions, and I thank two anonymous referees whose insight, comments, questions, and suggestions have substantially improved the quality and scope of this work. 
This project began while 
I was an Assistant Research Professor at the University of Connecticut. 
This research has also received funding from the European Research Council (ERC) under the European Union's Horizon 2020 research and innovation program (QUASIFT grant agreement 677368).

\addcontentsline{toc}{section}{\numberline{}Bibliography}
\bibliographystyle{hplainParzygnatv1}
\bibliography{Bayesbib}

\Addresses
\end{document}